\renewcommand\vec{\mathbf}  
\tikzset{>=latex}
\newtheorem{example}{Example}
\newtheorem{lemma}{Lemma}
\newtheorem{claim}{Claim}
\newtheorem{proposition}{Proposition}
\newtheorem{definition}{Definition}
\newtheorem{remark}{Remark}
\def\aicbody{\mathsf{body}}
\def\aiclits{\mathsf{lits}}
\def\aicup{\mathsf{upd}}
\def\fix{\mi{fix}}
\newcommand{\ans}{\vec{a}}
\newcommand{\inds}{{\bf{C}}}
\newcommand{\vars}{{\bf{V}}}
\newcommand{\preds}{{\bf{S}}}
\newcommand{\domain}[1]{\mn{dom}(#1)}
\newcommand{\constraints}{\Cmc}
\newcommand{\factset}{\mi{Facts}^\preds_\Dmc}
\newcommand{\litset}{\mi{Lits}^\preds_\Dmc}
\newcommand{\conf}{\Emc}
\newcommand{\confgraph}[2]{\Gmc^{#2}_{#1}}
\newcommand{\tofacts}{\mi{facts}}
\newcommand{\mingr}{\mi{min}_{g}}
\newcommand{\minnongr}{\mi{min}}
\newcommand{\bravemodels}[1]{\models_{\text{brave}}^{#1}}
\newcommand{\armodels}[1]{\models_{\text{CQA}}^{#1}}
\newcommand{\iarmodels}[1]{\models_{\cap}^{#1}}
\newcommand{\conflicts}[1]{\mi{Conf}(#1)}
\newcommand{\deltaconflicts}[1]{\mi{Conf}(#1)}
\newcommand{\mhs}[1]{\mi{MHS}(#1)}
\newcommand{\deltareps}[1]{\mi{SRep}(#1)}
\newcommand{\deltagreps}[1]{\mi{GRep}(#1)}
\newcommand{\deltapreps}[1]{\mi{PRep}(#1)}
\newcommand{\deltacreps}[1]{\mi{CRep}(#1)}
\newcommand{\deltaxreps}[1]{\mi{XRep}(#1)}
\newcommand{\deltascorereps}[1]{\mi{LRep}(#1)}
\newcommand{\comp}[2]{\mi{Int}_{#1}(#2)}
\newcommand{\restr}[2]{\mi{Dat}_{#1}(#2)}
\newcommand{\ups}[1]{\mi{Up}(#1)}
\newcommand{\xups}[1]{\mi{XUp}(#1)}
\newcommand{\foundups}[1]{\mi{FoundUp}(#1)}
\newcommand{\justifups}[1]{\mi{JustUp}(#1)}
\newcommand{\foundreps}[1]{\mi{FoundRep}(#1)}
\newcommand{\justifreps}[1]{\mi{JustRep}(#1)}
\newcommand{\wellfoundups}[1]{\mi{WellFoundUp}(#1)}
\newcommand{\wellfoundreps}[1]{\mi{WellFoundRep}(#1)}
\newcommand{\groundups}[1]{\mi{GroundUp}(#1)}
\newcommand{\groundreps}[1]{\mi{GroundRep}(#1)}
\def\ptime{\textsc{P}\xspace}
\def\np{\textsc{NP}\xspace}
\def\conp{co\textsc{NP}\xspace}
\def\piptwo{\ensuremath{\Pi^{p}_{2}}\xspace}
\def\sigmaptwo{\ensuremath{\Sigma^{p}_{2}}\xspace}
\def\deltaptwo{\ensuremath{\Delta^{p}_{2}}\xspace}
\def\no{\neg}
\newcommand{\mn}[1]{\ensuremath{\mathsf{#1}}}
\newcommand{\mi}[1]{\ensuremath{\mathit{#1}}}
\newcommand{\Bmc}{\ensuremath{\mathcal{B}}}
\newcommand{\Cmc}{\ensuremath{\mathcal{C}}}
\newcommand{\Dmc}{\ensuremath{\mathcal{D}}}
\newcommand{\Emc}{\ensuremath{\mathcal{E}}}
\newcommand{\Gmc}{\ensuremath{\mathcal{G}}}
\newcommand{\Hmc}{\ensuremath{\mathcal{H}}}
\newcommand{\Imc}{\ensuremath{\mathcal{I}}}
\newcommand{\Mmc}{\ensuremath{\mathcal{M}}}
\newcommand{\Rmc}{\ensuremath{\mathcal{R}}}
\newcommand{\Smc}{\ensuremath{\mathcal{S}}}
\newcommand{\Umc}{\ensuremath{\mathcal{U}}}
\newcommand{\Vmc}{\ensuremath{\mathcal{V}}}
\newcommand{\eg}{e.g.,~}
\newcommand{\ie}{i.e.,~}
\newcommand{\wrt}{w.r.t.~}
\newcommand{\cf}{cf.~}
\title{Inconsistency Handling in Prioritized Databases with Universal Constraints:\\ Complexity Analysis and Links with Active Integrity Constraints}
\author{%
Meghyn Bienvenu$^1$\and
Camille Bourgaux$^2$
\affiliations
$^1$ CNRS, University of Bordeaux, Bordeaux INP, LaBRI, Talence, France\\
$^2$ DI ENS, ENS, CNRS, PSL University \& Inria, Paris, France\\  
\emails
meghyn.bienvenu@labri.fr,
camille.bourgaux@ens.fr
}
\begin{document}
\maketitle
\thispagestyle{firstpage}
\begin{abstract}
This paper revisits the problem of repairing and querying inconsistent databases equipped with universal constraints. 
We adopt symmetric difference repairs, in which both deletions and additions of facts can be used to restore consistency, and suppose that preferred repair actions are specified via a binary priority relation over (negated) facts. 
Our first contribution is to show how existing notions of optimal repairs, defined for simpler denial constraints and repairs solely based on fact deletion, can be suitably extended to our richer setting. We next study the computational properties of the resulting repair notions, in particular, the data complexity of repair checking and inconsistency-tolerant query answering. 
Finally, we clarify the relationship between optimal repairs of prioritized databases and repair notions introduced in the framework of active integrity constraints. 
In particular, we show that Pareto-optimal repairs in our setting correspond to founded, grounded and justified repairs \wrt the active integrity constraints obtained by translating the prioritized database. Our study also yields useful insights into the behavior of active integrity constraints. 
\end{abstract}

\section{Introduction}\label{sec:intro}
When a database is inconsistent w.r.t.\ the 
integrity constraints, 
it is possible to obtain meaningful query answers by adopting the \emph{consistent query answering} (CQA) approach \cite{ArenasBC99}.
In a nutshell, the idea is to consider a set of 
\emph{repairs}, which correspond to those databases that satisfy the constraints and are as close as possible to the original database. 
An answer is then considered true w.r.t.\ \emph{CQA semantics} if it holds no matter which repair is chosen, thus embodying 
the cautious mode of reasoning employed in many KR contexts. 
The CQA approach was subsequently extended to the setting of ontology-mediated query answering, 
which led to the proposal of other natural repair-based semantics, such as 
the \emph{brave semantics}, 
which considers as true those answers that hold in at least one repair \cite{Bienvenu_TractableApproximation_long}, 
and the \emph{intersection (or IAR) semantics},
which evaluates queries w.r.t.\ the intersection of the repairs \cite{LemboLRRS10}. 
There is now an extensive literature on CQA and other forms of inconsistency-tolerant query answering, 
\cite{DBLP:conf/pods/Bertossi19} and \cite{DBLP:journals/ki/Bienvenu20} provide recent surveys for the database and ontology settings. 

Several different notions of 
repair have been considered, depending on the considered class of constraints and the allowed repair actions.
For denial constraints (such as functional dependencies, FDs)
and constraints given by ontologies, consistency can be restored only by removing information, 
so subset repairs based upon fact deletions are the most common choice. 
For richer 
classes of constraints, however, 
it makes sense to consider 
\emph{symmetric difference repairs} obtained using both 
fact additions and deletions. 
This is the case for the \emph{universal constraints} considered in the present paper, which 
can be used to express data completeness assumptions and other kinds of domain knowledge. 
For example, in a hospital setting, a universal constraint can be used to capture expert knowledge 
that a patient cannot receive a certain treatment without a positive test for a given mutation, 
with violations indicating either an erroneous treatment record or missing test result. 
Universal constraints 
are one of the most expressive classes of first-order constraints 
for which CQA with symmetric difference repairs is decidable, albeit intractable: \piptwo-complete \wrt data complexity 
\cite{DBLP:journals/is/StaworkoC10,DBLP:conf/icdt/ArmingPS16}. 
Despite this high complexity, there have been some prototype implementations using 
 logic programming \cite{DBLP:journals/tods/EiterFGL08,DBLP:journals/dke/MarileoB10}.

Repairs can be further refined by taking into account information about the relative reliability of the database facts. 
In the framework of \emph{prioritized databases} \cite{DBLP:journals/amai/StaworkoCM12}, a binary 
\emph{priority relation} 
indicates preferences between pair of facts involved in some violation of a denial constraint. 
Three kinds of \emph{optimal repair} (Pareto-, globally-, and completion-optimal) are then 
defined to select the most preferred subset repairs according to the priority relation. 
The complexity of reasoning with these three kinds of optimal repair has been investigated, primarily focusing on databases with FDs 
 \cite{DBLP:conf/pods/FaginKK15,DBLP:conf/icdt/KimelfeldLP17,DBLP:conf/pods/LivshitsK17}, but also in the context of description logic knowledge bases \cite{DBLP:conf/kr/BienvenuB20}. 
A recent system implements SAT-based algorithms for optimal repair-based semantics having (co)\np-complete data complexity
\cite{DBLP:conf/kr/BienvenuB22}. 
 
To the best of our knowledge, there has been no work addressing how to define fact-level preferences for databases with universal constraints and how to exploit such preferences to single out the optimal symmetric difference repairs. 
Our first contribution is thus an extension of the framework of prioritized databases to the case of universal constraints and symmetric difference repairs. 
By carefully defining a suitable notion of conflict (which may involve negative facts), 
we are able to faithfully lift existing notions of optimal repairs and optimal repair-based semantics, 
while retaining many properties of the original framework. 

We next study the computational properties of optimal repairs of prioritized databases with universal constraints. 
We provide an almost-complete picture of the data complexity of repair checking and inconsistency-tolerant query answering 
for each of the three notions of optimal repair (Pareto, global, and completion) and 
three repair-based semantics (CQA, brave, and intersection). 
Our results show that adopting optimal repairs does not increase the complexity of inconsistency-tolerant query answering.

Our third contribution is to establish connections with active integrity constraints (AICs), a framework in which universal constraints are enriched with information on what are the allowed update actions (fact deletions or additions) to resolve a given constraint violation \cite{DBLP:conf/ppdp/FlescaGZ04,DBLP:conf/iclp/CaropreseGSZ06,DBLP:journals/tkde/CaropreseGZ09}. 
More precisely, we provide a natural translation from prioritized databases to AICs and observe that Pareto-optimal repairs 
coincide with three kinds of repairs (founded, grounded and justified) that have been defined for AICs. 
This leads us to explore more general conditions under which AIC repair notions coincide, which we subsequently exploit to exhibit
a translation of certain `well-behaved' sets of AICs into prioritized databases.

Proofs can be found
in the appendix.

\section{Preliminaries}
We assume familiarity with propositional and first-order logic (FOL) and provide here terminology and notation for databases, conjunctive queries, constraints, and repairs.

\subsubsection*{Relational databases} 
Let $\inds$ and $\vars$ be two disjoint countably infinite sets of constants and variables respectively. A (relational) \emph{schema} $\preds$ is a finite set of relation names (or \emph{predicates}), each with an associated arity $n > 0$. 
A \emph{fact} over $\preds$ is an expression of the form $P(c_1,\dots, c_n)$ where $P\in\preds$ has arity $n$ and $c_1,\dots, c_n\in\inds$. A \emph{database (instance)} over $\preds$ is a finite set $\Dmc$ of facts over $\preds$. The \emph{active domain} of $\Dmc$, denoted $\domain{\Dmc}$, is the set of constants occurring in $\Dmc$.  

A database $\Dmc$ can also be viewed as a finite relational structure whose domain is $\domain{\Dmc}$ and which interprets each predicate $P\in\preds$ as the set $\{\vec{c} \mid P(\vec{c}) \in \Dmc\}$. We shall use the standard notation $\Dmc \models \Phi$ to indicate that a (set of) FOL 
sentence(s) $\Phi$
is satisfied in this structure. 

\subsubsection*{Conjunctive queries} 
A \emph{conjunctive query} (CQ) is a conjunction of \emph{relational atoms} $P(t_1, \ldots, t_n)$ (with each 
$t_i\in\vars\cup\inds$
), where some variables may be existentially quantified. A \emph{Boolean} CQ (BCQ) has no free variables. 
Given a query $q(\vec{x})$, with free variables~$\vec{x}=(x_1, \ldots, x_k)$,
and a tuple of constants $\vec{a}=(a_1, \ldots, a_k)$,
$q(\vec{a})$ denotes the BCQ obtained by replacing each variable in~$\vec{x}$
by the corresponding constant in~$\vec{a}$. An \emph{answer} to $q(\vec{x})$ 
over a database  
 $\Dmc$ is a tuple of constants~$\vec{a}$ from $\domain{\Dmc}$ 
 such that $\Dmc \models q(\vec{a})$. 

\subsubsection*{Constraints} A \emph{universal constraint} over a schema $\preds$ is a 
FOL sentence of the form 
$\forall\vec{x}(R_1(\vec{t_1})\wedge\dots\wedge R_n(\vec{t_n})\wedge\neg P_1(\vec{u_1})\wedge\dots\wedge\neg P_m(\vec{u_m})\wedge \varepsilon\rightarrow \bot)$,
where each $R_i(\vec{t_i})$ (resp.\ $P_i(\vec{u_i})$) is a relational atom over $\preds$, 
$\varepsilon$ is a (possibly empty) conjunction of inequality atoms, 
and $\vec{u_1}\cup\dots\cup\vec{u_m}\subseteq \vec{t_1}\cup\dots\cup\vec{t_n}$ (safety condition). 
Universal constraints can also be written  in the form 
$\forall\vec{x}(R_1(\vec{t_1})\wedge\dots\wedge R_n(\vec{t_n})\wedge \varepsilon\rightarrow P_1(\vec{u_1})\vee\dots\vee P_m(\vec{u_m}))$. For simplicity, we shall often omit the universal quantification 
and will sometimes use the generic term \emph{constraint} to mean universal constraint. 

\emph{Denial constraints} are universal constraints of the form $\forall\vec{x}(R_1(\vec{t_1})\wedge\dots\wedge R_n(\vec{t_n})\wedge \varepsilon\rightarrow \bot)$, which capture the well-known class of functional dependencies. 

We say that a database $\Dmc$ is \emph{consistent} \wrt a set of constraints $\constraints$ 
if $\Dmc\models \constraints$. 
Otherwise, 
$\Dmc$ is \emph{inconsistent} (\wrt\ $\constraints$).

A constraint is \emph{ground} if it contains no variables. 
Given a 
constraint $\tau$ and database $\Dmc$, we use
$gr_\Dmc(\tau)$ for the set of
all ground constraints obtained by (i) replacing variables with constants from $\domain{\Dmc}$, (ii) removing all true $c \neq d$ atoms, and (iii)
removing all constraints that contain an atom $c \neq c$.  We let $gr_\Dmc(\constraints):=\bigcup_{\tau\in\constraints}gr_\Dmc(\tau)$, and 
note that $\Dmc\models \tau$ iff $\Dmc\models \tau_g$ for every $\tau_g\in gr_\Dmc(\tau)$. 

\subsubsection*{Repairs} 
A \emph{symmetric difference repair}, or $\Delta$-repair, of $\Dmc$ \wrt $\constraints$ is a database $\Rmc$ such that (i) $\Rmc\models\constraints$ and (ii) there is no $\Rmc'$ such that $\Rmc'\models\constraints$ and $\Rmc'\Delta\Dmc\subsetneq \Rmc\Delta\Dmc$, where $\Delta$ is the symmetric difference operator: $S_1\Delta S_2=(S_1\setminus S_2)\cup (S_2\setminus S_1)$. 
If only fact deletions are permitted, we obtain \emph{subset repairs} ($\subseteq$-repairs), 
and if only fact additions are permitted, \emph{superset repairs} ($\supseteq$-repairs). 
We denote the set of $\Delta$-repairs 
of $\Dmc$ \wrt $\constraints$ by $\deltareps{\Dmc,\constraints}$. 

Because of the safety condition, an empty database satisfies any set of universal constraints, so every database has at least one $\subseteq$-repair (which is also a $\Delta$-repair), while  
it may be the case that no $\supseteq$-repair exists. 
Moreover, for the subclass of denial constraints, $\Delta$-repairs and $\subseteq$-repairs coincide since adding facts cannot resolve a violation of a denial constraint. 

\section{Optimal Repairs for Universal Constraints}\label{sec:opti-repairs}
In this section, we show how existing notions of optimal repairs, defined for $\subseteq$-repairs w.r.t.\ denial constraints,
can be 
 lifted to the broader setting of $\Delta$-repairs w.r.t.\ universal constraints. We then use the resulting repair notions
to define inconsistency-tolerant semantics for query answering. 

\subsection{Conflicts for Universal Constraints}\label{subsec:conflicts}
In the setting of denial constraints, a conflict is a minimal subset of the database 
that is inconsistent w.r.t.\ the constraints. 
Conflicts and the associated notion of conflict (hyper)graph
underpin many results and algorithms for consistent query answering, and in particular, 
they appear in the definition of prioritized databases \cite{DBLP:journals/amai/StaworkoCM12}. 
Our first task will thus be to 
define a suitable
notion of conflict for universal constraints. 

An important observation is that the absence of a fact may contribute
to the violation of a universal constraint. For this reason, 
conflicts will contain
both facts and negated facts, where $\no P(\vec{c})$ indicates 
that $P(\vec{c})$ is absent. 
We use $\factset$ for the set of facts over $\preds$ with constants from $\domain{\Dmc}$,
and let $\litset=\Dmc\cup\{\no\alpha\mid \alpha\in \factset\setminus\Dmc\}$ be the set of \emph{literals} of $\Dmc$. 
Conflicts can then be defined as minimal sets of literals that necessarily lead to a constraint violation.

\begin{definition}
Given a database $\Dmc$ and set of (universal) constraints $\constraints$,
the set
$\conflicts{\Dmc,\constraints}$ of \emph{conflicts of $\Dmc$ w.r.t.\ $\constraints$} 
contains all $\subseteq$-minimal sets $\conf\subseteq\litset$ such that for every database $\Imc$, 
if $\Imc\models \conf$, then $\Imc\not\models\constraints$.

\end{definition}

\begin{example}\label{ex:conflicts}
Let $\Dmc= \{A(a), B(a)\}$ and $\constraints=\{\tau_1,\tau_2, \tau_3\}$, where 
$\tau_1:= A(x)\rightarrow C(x)$, $\tau_2:=B(x)\rightarrow D(x)$, and $\tau_3:=C(x)\wedge D(x)\rightarrow \bot$.  
It can be verified that 
\begin{align*}
\deltareps{\Dmc,\constraints}=  \{&\emptyset, \{A(a), C(a)\} , \, \{B(a),D(a)\} \}
\end{align*}
and that the set $\conflicts{\Dmc,\constraints}$ is as follows:
$$ \{\{A(a), \no C(a)\}, \{B(a), \no D(a)\},   \{A(a), B(a)\}\} $$
The first (resp.\ second) conflict directly violates $\tau_1$ (resp.\ $\tau_2$). 
To see why $\{A(a), B(a)\}$ is also a conflict, 
consider any database $\Imc$ such that $\{A(a), B(a)\}\subseteq\Imc$. 
Then either $C(a)\notin\Imc$ or $D(a)\notin\Imc$, in which case $\Imc$ violates $\tau_1$ or $\tau_2$, or $\Imc$ contains both $C(a)$ and $D(a)$, 
hence 
violates $\tau_3$. 
\end{example}

We also provide two alternative characterizations of conflicts,
in terms of the hitting sets of literals removed from $\Delta$-repairs and 
the prime implicants\footnote{We recall that a prime implicant of a propositional formula $\psi$
is a minimal conjunction of propositional literals $\kappa$ that entails $\psi$.  } 
of the propositional formula 
stating that there is a constraint violation (treating the elements of $\litset$ as propositional literals):
\begin{restatable}{proposition}{DefConflicts}\label{prop:defconflicts} 
For every database $\Dmc$ and constraint set $\constraints$: 
\begin{enumerate}
\item $\conflicts{\Dmc,\constraints}=
\{ \Hmc\cap\Dmc\cup\{\no\alpha \mid \alpha\in\Hmc\setminus\Dmc\} \mid \Hmc\in \mhs{\Dmc,\constraints} \}$ where $\mhs{\Dmc,\constraints}$ is the set of all minimal hitting sets of $\{\Rmc\Delta\Dmc\mid \Rmc \in\deltareps{\Dmc,\constraints}\}$. 
\item $\conflicts{\Dmc,\constraints}=
\{ \{\lambda_1, \ldots, \lambda_k\} \subseteq \litset \mid  
\lambda_1 \wedge \ldots \wedge \lambda_k \text{ is a prime implicant of } \bigvee_{\varphi\rightarrow \bot\in gr_\Dmc(\constraints)} \varphi \}$. 
\end{enumerate}
\end{restatable}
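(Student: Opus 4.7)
The plan is to reduce both characterizations to a common propositional viewpoint. Treating ground facts over $\domain{\Dmc}$ as propositional atoms, one verifies that for every database $\Imc$ over $\domain{\Dmc}$, $\Imc\not\models\constraints$ holds iff $\Imc$ satisfies the body of some ground constraint in $gr_\Dmc(\constraints)$, equivalently iff $\Imc\models\psi$ where $\psi := \bigvee_{\varphi\to\bot \in gr_\Dmc(\constraints)}\varphi$. The restriction to databases over $\domain{\Dmc}$ in the conflict definition is harmless: if $\Imc$ is any database with $\Imc\models\conf$, then $\Imc\cap\factset$ still satisfies $\conf$ (since every literal of $\conf$ mentions only constants of $\Dmc$), and any violation of $\constraints$ by $\Imc\cap\factset$ lifts to $\Imc$, because the negative atoms in the violated body involve only constants of $\domain{\Dmc}$ and hence cannot be made true by facts over fresh constants. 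Consequently, a conflict is exactly a $\subseteq$-minimal subset of $\litset$ that propositionally entails $\psi$.

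Part 2 then drops out almost immediately: a prime implicant is by definition a subset-minimal implicant, and since every proper subset of a $\conf\subseteq\litset$ is itself a subset of $\litset$, the two notions of minimality coincide. This yields the second equality.

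For Part 1, I establish that the map $\Hmc \mapsto \conf_\Hmc := (\Hmc\cap\Dmc)\cup\{\no\alpha\mid\alpha\in \Hmc\setminus\Dmc\}$ is a bijection between $\mhs{\Dmc,\constraints}$ and $\conflicts{\Dmc,\constraints}$. The pivotal auxiliary lemma is: for every consistent finite database $\Imc$, there exists $\Rmc\in\deltareps{\Dmc,\constraints}$ with $\Rmc\Delta\Dmc\subseteq\Imc\Delta\Dmc$. This follows by choosing $\Rmc$ to be any consistent database whose symmetric difference with $\Dmc$ is $\subseteq$-minimal under the constraint $\Rmc\Delta\Dmc\subseteq\Imc\Delta\Dmc$ (which exists by finiteness); such $\Rmc$ is then automatically a $\Delta$-repair. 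Given a minimal hitting set $\Hmc$: to show $\conf_\Hmc$ is a conflict, assume $\Imc\models\conf_\Hmc$ and $\Imc\models\constraints$, apply the lemma to obtain $\Rmc$ with $\Rmc\Delta\Dmc\subseteq\Imc\Delta\Dmc$, and pick $\alpha\in \Hmc\cap(\Rmc\Delta\Dmc)$. A case analysis yields a contradiction: if $\alpha\in\Dmc\setminus\Rmc$, then $\alpha\in\conf_\Hmc$ forces $\alpha\in\Imc$, but $\alpha\in\Imc\Delta\Dmc$ together with $\alpha\in\Dmc$ forces $\alpha\notin\Imc$; the case $\alpha\in\Rmc\setminus\Dmc$ is dual. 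Minimality of $\conf_\Hmc$ is transferred contrapositively from that of $\Hmc$: a strictly smaller $\conf'\subsetneq\conf_\Hmc$ inconsistent with $\constraints$ would give a strictly smaller hitting set $\Hmc'\subsetneq \Hmc$, because any repair $\Rmc$ missed by $\Hmc'$ would agree with $\Dmc$ on all literals in $\conf'$ and thus satisfy both $\conf'$ and $\constraints$. The reverse direction of the bijection (every conflict arises as $\conf_\Hmc$ for some minimal hitting set $\Hmc$) is fully symmetric.

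The main obstacle is the careful sign-tracking of literals in the correspondence between $\Hmc$ and $\conf_\Hmc$, since the two halves of $\Rmc\Delta\Dmc$ interact dually with the two halves of $\conf_\Hmc$. The existence lemma for $\Rmc$ with $\Rmc\Delta\Dmc\subseteq\Imc\Delta\Dmc$ is a short well-foundedness argument, but it is the engine that converts the universal statement in the definition of conflict into a concrete statement about repairs, so it deserves explicit statement. Beyond these bookkeeping points, the proof is conceptually straightforward once the propositional viewpoint is in place.
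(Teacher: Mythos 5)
Your proof is correct and follows essentially the same route as the paper's: part (2) via the same propositional reformulation (passing to $\Imc\cap\factset$ to justify restricting attention to $gr_\Dmc(\constraints)$), and part (1) via the same two ingredients, namely the observation that every consistent database $\Imc$ yields some $\Rmc\in\deltareps{\Dmc,\constraints}$ with $\Rmc\Delta\Dmc\subseteq\Imc\Delta\Dmc$, plus the sign-tracking correspondence between hitting sets of symmetric differences and sets of literals. The only presentational difference is that you state and justify that existence lemma explicitly, whereas the paper uses it without comment.
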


We can show that our notion of conflicts enjoy similar properties to conflicts w.r.t.\ denial constraints, 
but to formulate them, we must 
first introduce some useful terminology and notation for moving between databases and sets 
of literals.

Given a database $\Dmc$ over schema $\preds$, 
a \emph{candidate repair} for $\Dmc$ is a database $\Bmc$ with 
$\Bmc\subseteq \factset$. 
For every candidate repair $\Bmc$ for $\Dmc$, we define its corresponding set of literals $\mi{Lits}_\Dmc(\Bmc)=\Bmc\cup\{\no\alpha\mid \alpha\in \factset\setminus\Bmc\}$ and the set of literals $\comp{\Dmc}{\Bmc}=\mi{Lits}_\Dmc(\Bmc)\cap \litset=(\Bmc\cap\Dmc)\cup\{\no\alpha\mid \alpha\in\factset\setminus(\Bmc\cup\Dmc)\}$ upon which $\Bmc$ and $\Dmc$ agree. 
Furthermore, with every subset $\Bmc \subseteq\litset$ we can associate a candidate repair $\restr{\Dmc}{\Bmc}=\Bmc\cap\Dmc\cup\{\alpha\mid\no\alpha\in \litset\setminus\Bmc\}$. 
Note that if $\Bmc$ is a candidate repair, $\restr{\Dmc}{\comp{\Dmc}{\Bmc}}=\Bmc$.

\begin{restatable}{proposition}{CharacterizationsRepairsComp}\label{prop:characterizations-repairs-comp}
Let $\Dmc$ be a database, $\constraints$ a set of universal constraints, and $\Rmc$ a candidate repair for $\Dmc$. 
\begin{enumerate}
\item $\Rmc\in\deltareps{\Dmc,\constraints}$ iff $\comp{\Dmc}{\Rmc}$ is a maximal subset of $\litset$ such that $\restr{\Dmc}{\comp{\Dmc}{\Rmc}}\models\constraints$, \ie $\Rmc\models\constraints$.
\item $\Rmc\in\deltareps{\Dmc,\constraints}$ iff $\comp{\Dmc}{\Rmc}$ is a maximal subset of $\litset$ such that 
$\conf \not \subseteq\comp{\Dmc}{\Rmc}$ for every $\conf\in\conflicts{\Dmc,\constraints}$. 
\item  $\Rmc\in\deltareps{\Dmc,\constraints}$ iff $\comp{\Dmc}{\Rmc}$ is a maximal independent set (MIS) of the \emph{conflict hypergraph} $\confgraph{\Dmc}{\constraints}$, 
whose vertices are the literals from $\litset$ and whose hyperedges are the conflicts of $\Dmc$ \wrt $\constraints$. 
\end{enumerate}
\end{restatable}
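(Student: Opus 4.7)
The proof hinges on showing that $\Rmc \mapsto \comp{\Dmc}{\Rmc}$ and $\Bmc \mapsto \restr{\Dmc}{\Bmc}$ form an \emph{order-reversing} bijection between candidate repairs for $\Dmc$ and subsets of $\litset$. The paper already records these maps as mutually inverse; the new ingredient is that a fact $\alpha \in \Rmc \Delta \Dmc$ is in natural bijection, via $\alpha \mapsto \alpha$ if $\alpha \in \Dmc$ and $\alpha \mapsto \no\alpha$ otherwise, with a literal in $\litset \setminus \comp{\Dmc}{\Rmc}$. Hence $\Rmc \Delta \Dmc \subsetneq \Rmc' \Delta \Dmc$ iff $\comp{\Dmc}{\Rmc'} \subsetneq \comp{\Dmc}{\Rmc}$. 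Part~1 then follows by unwinding: $\Rmc$ is $\Delta$-minimal among candidates satisfying $\constraints$ iff $\comp{\Dmc}{\Rmc}$ is $\subseteq$-maximal among subsets of $\litset$ whose associated candidate repair (via $\restr{\Dmc}{\cdot}$) satisfies $\constraints$, using the already-noted $\restr{\Dmc}{\comp{\Dmc}{\Rmc}} = \Rmc$.

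For Part~2, I would first check the identity $\restr{\Dmc}{\Bmc} \models \conf$ iff $\conf \subseteq \Bmc$ for any $\conf \in \conflicts{\Dmc,\constraints}$, by a short case split on whether each literal in $\conf$ is positive (requiring its fact to be in $\Dmc$) or negative (requiring it to be outside $\Dmc$). Together with the defining property of conflicts, this gives the easy inclusion: any $\Bmc$ containing a conflict has $\restr{\Dmc}{\Bmc} \not\models \constraints$, so subsets of $\litset$ with consistent restriction are contained in those avoiding all conflicts. The main obstacle is the converse \emph{extension lemma}: if $\Bmc \subseteq \litset$ contains no conflict, then $\Bmc \subseteq \comp{\Dmc}{\Rmc}$ for some $\Rmc \in \deltareps{\Dmc,\constraints}$. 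Here I would invoke the minimal-hitting-set characterization from Proposition~1, Part~1, together with the bijection above and the classical duality ``a set is a hitting set iff it contains a minimal one''. Translating through the bijection, ``no conflict in $\Bmc$'' becomes ``no minimal hitting set of $\{\Rmc \Delta \Dmc : \Rmc \in \deltareps\}$ lies in the facts where $\restr{\Dmc}{\Bmc}$ agrees with $\Dmc$''; by duality, some $\Rmc \Delta \Dmc$ is disjoint from these facts, yielding $\comp{\Dmc}{\Rmc} \supseteq \Bmc$. Combining with Part~1 then gives Part~2.

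Part~3 follows from Part~2 via two further observations. First, any literal in $\litset \setminus \bigcup_{\conf \in \conflicts{\Dmc,\constraints}} \conf$ belongs to no conflict and can be freely added to any ``no conflict'' subset; hence the maximal such subsets decompose as $S \cup (\litset \setminus \bigcup_{\conf \in \conflicts{\Dmc,\constraints}} \conf)$ where $S$ ranges over the MISs of the conflict hypergraph. Second, a further application of Proposition~1, Part~1 shows that every $\Rmc \in \deltareps{\Dmc,\constraints}$ satisfies $\comp{\Dmc}{\Rmc} \supseteq \litset \setminus \bigcup_{\conf \in \conflicts{\Dmc,\constraints}} \conf$: the sets $\Rmc \Delta \Dmc$ for distinct $\Delta$-repairs are pairwise $\subseteq$-incomparable, so every element of any such set participates in some minimal hitting set (obtained by selecting, from each other $\Rmc' \Delta \Dmc$, some element outside the fixed set, and pruning). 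Intersecting $\comp{\Dmc}{\Rmc}$ with $\bigcup_{\conf \in \conflicts{\Dmc,\constraints}} \conf$ then yields precisely the MIS characterization.
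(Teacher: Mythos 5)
Your proposal is correct and follows the same overall skeleton as the paper's proof: part~1 from the order-reversing correspondence between $\Rmc\mapsto\comp{\Dmc}{\Rmc}$ and $\Bmc\mapsto\restr{\Dmc}{\Bmc}$, part~2 from an extension lemma resting on the hitting-set characterization of conflicts in Proposition~\ref{prop:defconflicts}(1), and part~3 from part~2 by splitting off the literals occurring in no conflict. The one place you genuinely diverge is the key lemma for part~2: the paper (Lemma~\ref{lem:max-no-conf}) shows that any \emph{maximal} conflict-free $\Bmc$ yields a repair via a contradiction argument involving the auxiliary set $\bigcup_{\Rmc}(\Rmc\Delta\Dmc)\setminus(\restr{\Dmc}{\Bmc}\Delta\Dmc)$, whereas you derive the equivalent statement---every conflict-free $\Bmc$ extends to $\comp{\Dmc}{\Rmc}$ for some $\Rmc\in\deltareps{\Dmc,\constraints}$---directly from the duality ``a set is a hitting set iff it contains a minimal hitting set'': since no conflict sits inside $\Bmc$, the corresponding fact set is not a hitting set of $\{\Rmc\Delta\Dmc\mid\Rmc\in\deltareps{\Dmc,\constraints}\}$, so some $\Rmc\Delta\Dmc$ is disjoint from it and $\Bmc\subseteq\comp{\Dmc}{\Rmc}$. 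This is arguably cleaner than the paper's argument, though it buys nothing beyond presentation since both routes lean on the same hitting-set characterization. In part~3, your second observation (that $\comp{\Dmc}{\Rmc}\supseteq\litset\setminus\bigcup_{\conf\in\conflicts{\Dmc,\constraints}}\conf$ for every repair $\Rmc$) is correct but over-engineered: it is immediate from part~2, because a maximal conflict-free subset of $\litset$ must contain every literal that occurs in no conflict, so the pairwise-incomparability and minimal-hitting-set construction you sketch is unnecessary.
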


The first property states that $\Delta$-repairs correspond to the consistent databases 
that preserve a maximal set of the original literals, while the second rephrases consistency in terms of conflicts. 
The third generalizes a well-known hypergraph-based characterization of $\subseteq$-repairs. As the next remark explains,
an earlier attempt at defining conflicts for universal constraints failed to obtain such a property.

\begin{remark}
\citeauthor{DBLP:journals/is/StaworkoC10} \shortcite{DBLP:journals/is/StaworkoC10} define a conflict as a set of literals obtained by grounding a universal constraint, and the hyperedges of their extended conflict hypergraph $ECG(\Dmc,\constraints)$  are either conflicts or `relevant' pairs of literals $\{\alpha,\neg\alpha\}$. 
For instance, if we take 
$\Dmc$ and 
$\Cmc$ as in Example 1, 
then $ECG(\Dmc,\constraints)$ 
has hyperedges 
$ \{A(a), \no C(a)\}, \{B(a), \no D(a)\}, \{C(a), D(a)\}$, $\{C(a),\neg C(a)\}$ and $\{D(a), \neg D(a)\}$. 

Every repair gives rise to a MIS of $ECG(\Dmc,\constraints)$, but a MIS need not correspond to any repair. 
Proposition 4 in \cite{DBLP:journals/is/StaworkoC10} claims a weaker converse: for every MIS $M$ of $ECG(\Dmc,\constraints)$, either its positive projection $M^+=M\cap\factset$ is a $\Delta$-repair of $\Dmc$ \wrt $\constraints$, or there exists 
a MIS 
$N$ of $ECG(\Dmc,\constraints)$ such that 
$N^+\Delta\Dmc\subsetneq M^+\Delta\Dmc$. 
However, our example disproves this claim, 
as $M=\{A(a), B(a), C(a)\}$ is a MIS of $ECG(\Dmc,\constraints)$, but $M^+=M$ is not a $\Delta$-repair (it violates $\tau_2$), and there is no MIS $N$
with $N^+\Delta\Dmc\subsetneq M^+\Delta\Dmc$.  
Essentially, the problem is that 
their notion of conflicts does not take into account implicit constraints ($A(x) \wedge B(x) \rightarrow \bot$ in this example).  
\end{remark}

To clarify the relationship between the universal and denial constraint settings,
we translate the former into the latter. 
Take a database $\Dmc$ and set of universal constraints $\Cmc$ over schema $\preds$.
To represent negative literals, we introduce an extended schema $\preds'=\preds \cup \{\widetilde{P} \mid P \in \preds\}$
and a function $\tofacts$ that maps sets of literals over $\preds$ into sets of facts over $\preds'$ 
by replacing each negative literal $\neg P(\vec{c})$ by $\widetilde{P}(\vec{c})$.
We then consider the database $\Dmc_d=\tofacts(\litset)=\Dmc\cup\{\widetilde{P}(\vec{c})\mid P(\vec{c})\in\factset\setminus\Dmc\}$,
and the set of ground denial constraints 
$\constraints_{d,\Dmc}=\{ (\bigwedge_{\alpha \in \tofacts(\conf)} \alpha) \rightarrow \bot \mid \conf\in\conflicts{\Dmc,\constraints}\}$. 

\begin{restatable}{proposition}{ReductionUCDenials}\label{prop:reduction-UC-denials}For every database $\Dmc$ and constraint set $\constraints$: 
$\conflicts{\Dmc_d,\constraints_{d,\Dmc}}=\{ \tofacts(\conf)\mid \conf\in\conflicts{\Dmc,\constraints}\}$ and $\deltareps{\Dmc_d,\constraints_{d,\Dmc}}=\{\tofacts(\comp{\Dmc}{\Rmc})\mid \Rmc\in\deltareps{\Dmc,\constraints}\}$.
\end{restatable}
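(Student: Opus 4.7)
The plan is to prove the two equalities separately, leveraging part~2 of Proposition~\ref{prop:characterizations-repairs-comp} together with the explicit form of $\Dmc_d$ and $\constraints_{d,\Dmc}$. A preliminary observation I would make is that $\tofacts$ restricts to a bijection between $\litset$ and $\Dmc_d$ (mapping each positive literal to itself and each $\neg P(\vec{c})$ to $\widetilde{P}(\vec{c})$), so in particular it induces an inclusion-preserving bijection between subsets of $\litset$ and subsets of $\Dmc_d$. Moreover, by construction the denial constraints in $\constraints_{d,\Dmc}$ are in one-to-one correspondence with the conflicts in $\conflicts{\Dmc,\constraints}$ via $\tofacts$.

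For the first equality, I would prove both inclusions by minimality arguments. Given $\conf \in \conflicts{\Dmc,\constraints}$, the set $\tofacts(\conf) \subseteq \Dmc_d$ satisfies the body of its associated denial constraint in $\constraints_{d,\Dmc}$, hence is inconsistent \wrt $\constraints_{d,\Dmc}$; and any strict subset $S \subsetneq \tofacts(\conf)$ violating some $\bigwedge_{\alpha \in \tofacts(\conf')} \alpha \to \bot$ would force $\tofacts(\conf') \subseteq S \subsetneq \tofacts(\conf)$, thus $\conf' \subsetneq \conf$ by injectivity, contradicting minimality of $\conf$. Conversely, any $\Emc \in \conflicts{\Dmc_d,\constraints_{d,\Dmc}}$ must contain the body $\tofacts(\conf)$ of some violated constraint in $\constraints_{d,\Dmc}$; since $\tofacts(\conf)$ is itself already inconsistent, minimality of $\Emc$ forces $\Emc = \tofacts(\conf)$.

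For the second equality, I would combine the first equality with part~2 of Proposition~\ref{prop:characterizations-repairs-comp}. Since $\constraints_{d,\Dmc}$ contains only denial constraints, its $\Delta$-repairs and $\subseteq$-repairs coincide, and are precisely the maximal subsets of $\Dmc_d$ that contain no conflict of $(\Dmc_d,\constraints_{d,\Dmc})$, i.e., no $\tofacts(\conf)$. Via the bijection $\tofacts$, these correspond exactly to the maximal subsets $L \subseteq \litset$ that contain no $\conf \in \conflicts{\Dmc,\constraints}$, which by part~2 of Proposition~\ref{prop:characterizations-repairs-comp} are precisely the sets $\comp{\Dmc}{\Rmc}$ for $\Rmc \in \deltareps{\Dmc,\constraints}$. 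Translating back through $\tofacts$ then yields the claimed identification of $\deltareps{\Dmc_d,\constraints_{d,\Dmc}}$ with $\{\tofacts(\comp{\Dmc}{\Rmc}) \mid \Rmc \in \deltareps{\Dmc,\constraints}\}$.

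I do not anticipate a serious obstacle: the proof is essentially a careful translation along the bijection $\tofacts$ combined with the characterizations already established in Proposition~\ref{prop:characterizations-repairs-comp}. The subtlest point is the minimality argument for the first equality, which crucially uses that $\constraints_{d,\Dmc}$ is built from \emph{all} conflicts of $(\Dmc,\constraints)$, so a putative smaller violation corresponds to a genuine conflict whose strict containment contradicts the minimality of $\conf$.
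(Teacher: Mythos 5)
Your proposal is correct and follows essentially the same route as the paper's proof: the same minimality argument for the conflict equality (a smaller violation inside $\tofacts(\conf)$ would yield a conflict $\conf'\subsetneq\conf$, and conversely any conflict of $\Dmc_d$ must contain, hence equal, some $\tofacts(\conf)$), and the same use of part~2 of Proposition~\ref{prop:characterizations-repairs-comp} to transfer the ``maximal conflict-free subset'' characterization along the bijection $\tofacts$ for the repair equality. No gaps.
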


One may naturally wonder whether 
a set of denial constraints $\constraints_d$ which does not depend on $\Dmc$ could be used in place of $\constraints_{d,\Dmc}$ in Proposition \ref{prop:reduction-UC-denials}. The answer is no: the existence of such a set 
$\constraints_d$ would imply 
a 
data-independent bound on the size of conflicts that may appear in any set $\conflicts{\Dmc_d,\constraints_{d}}$, and hence in 
$\conflicts{\Dmc,\constraints}$. However, as the next example illustrates, 
universal constraints differ from denial constraints in that 
the size of the conflicts cannot be bounded independently from the database. 
\begin{example}
Let $\constraints$ consist of $R(x,y)\wedge A(x)\rightarrow A(y)$ and $A(x)\wedge B(x) \rightarrow \bot$. Then for every $n\geq1$, we can build a database
$\{A(a_0), R(a_0,a_1),\dots,R(a_{n-1}, a_n), B(a_n)\}$ of size $n+2$ which is a conflict (of itself) \wrt $\constraints$. 
\end{example}

\subsection{Prioritized Databases \& Optimal Repairs}\label{subsec:optimalrepairs}
With the definition of conflicts in place, we can extend the notion of prioritized database \cite{DBLP:journals/amai/StaworkoCM12} to the setting 
of universal constraints. 

\begin{definition}
A \emph{priority relation} $\succ$ for a database $\Dmc$ \wrt a set of universal constraints $\constraints$ is an acyclic binary relation over the literals of $\conflicts{\Dmc, \constraints}$ such that if $\lambda\succ\mu$, then there exists $\conf\in\conflicts{\Dmc, \constraints}$ such that $\{\lambda,\mu\}\subseteq \conf$. 
We say that~$\succ$ is \emph{total} if for every pair $\lambda\neq\mu$ such that $\{\lambda,\mu\}\subseteq \conf$ for some $\conf\in\conflicts{\Dmc, \constraints}$, either $\lambda\succ\mu$ or $\mu\succ\lambda$. 
A \emph{completion} of $\succ$ is a total priority relation $\succ'\ \supseteq \ \,\succ$.

A priority relation $\succ$ is \emph{score-structured} if there is 
a scoring function $s:\bigcup_{\conf\in\conflicts{\Dmc, \constraints}} \conf\rightarrow \mathbb{N}$ such that 
for every 
$\{\lambda,\mu\}\subseteq \conf$ with $\conf\in\conflicts{\Dmc, \constraints}$, 
$\lambda\succ\mu$ iff $s(\lambda)>s(\mu)$. 
\end{definition}

\begin{definition}
A \emph{prioritized database} $\Dmc^\constraints_\succ=(\Dmc,\constraints,\succ)$ consists of a database $\Dmc$, a set of universal constraints $\constraints$, and a priority relation $\succ$ for $\Dmc$ \wrt $\constraints$.  
\end{definition}

We now extend the definitions of optimal repairs to the case of universal constraints.

\begin{definition} Consider a prioritized database $\Dmc^\constraints_\succ=(\Dmc,\constraints,\succ)$, and let $\Rmc \in \deltareps{\Dmc,\constraints}$.
\begin{itemize}
\item A \emph{Pareto improvement} of $\Rmc$ 
 is a database $\Bmc$ consistent \wrt $\constraints$ such that 
 there is $\mu\in \comp{\Dmc}{\Bmc}\setminus \comp{\Dmc}{\Rmc}$ with $\mu\succ\lambda$ for every $\lambda\in \comp{\Dmc}{\Rmc}\setminus \comp{\Dmc}{\Bmc}$. 
 \item A \emph{global improvement} of $\Rmc$ 
 is a database $\Bmc$ consistent \wrt $\constraints$ such that $\comp{\Dmc}{\Bmc}\neq \comp{\Dmc}{\Rmc}$ and for every $\lambda\in \comp{\Dmc}{\Rmc}\setminus \comp{\Dmc}{\Bmc}$, there exists $\mu\in \comp{\Dmc}{\Bmc}\setminus \comp{\Dmc}{\Rmc}$ such that $\mu\succ\lambda$.
\end{itemize}
We say that 
$\Rmc$ is:
\begin{itemize}
\item \emph{Pareto-optimal} if there is no Pareto improvement of $\Rmc$. 
\item \emph{globally-optimal} if there is no global improvement of $\Rmc$. 
\item \emph{completion-optimal}  
if $\Rmc$ is a globally-optimal  $\Delta$-repair of $\Dmc^\constraints_{\succ'}$, for some completion $\succ'$ of $\succ$.
\end{itemize}
We denote by $\deltagreps{\Dmc^\constraints_\succ}$, 
$\deltapreps{\Dmc^\constraints_\succ}$ and $\deltacreps{\Dmc^\constraints_\succ}$ the sets of globally-, 
Pareto- and completion-optimal $\Delta$-repairs.
\end{definition}

A Pareto improvement is also a global improvement, so $\deltagreps{\Dmc^\constraints_\succ}\subseteq \deltapreps{\Dmc^\constraints_\succ}$, and a global improvement \wrt $\succ$ is a global improvement \wrt any completion $\succ'$ of $\succ$, 
so $\deltacreps{\Dmc^\constraints_\succ}\subseteq \deltagreps{\Dmc^\constraints_\succ}$. 
Hence, as in the denial constraints case, $\deltacreps{\Dmc^\constraints_\succ}\subseteq \deltagreps{\Dmc^\constraints_\succ}\subseteq \deltapreps{\Dmc^\constraints_\succ}$. 
Moreover, there always exists at least one completion-(hence Pareto- and globally-)optimal $\Delta$-repair, which can be obtained from $\confgraph{\Dmc}{\constraints}$ by the following greedy procedure: while some literal from $\litset$ has not been considered, pick a literal 
that is maximal \wrt $\succ$ among those not yet considered, and add it to the current set if it does not introduce a conflict from $\conflicts{\Dmc,\constraints}$. If $\Bmc$ is a subset of $\litset$ obtained by this procedure, we show that $\restr{\Dmc}{\Bmc}\in\deltacreps{\Dmc^\constraints_\succ}$. 
This procedure requires us to compute $\conflicts{\Dmc,\constraints}$, hence does not run in polynomial time (unlike the denial constraint case).
However, as for denial constraints, we have: 
\begin{restatable}{proposition}{Categoricity}\label{prop:categoricity}
If $\succ$ is total, then $| \deltapreps{\Dmc^\constraints_\succ}|=1$. 
\end{restatable}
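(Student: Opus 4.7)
The plan is a proof by contradiction: assume there exist two distinct Pareto-optimal $\Delta$-repairs $\Rmc_1,\Rmc_2$ of $\Dmc^{\constraints}_\succ$, and construct an explicit Pareto improvement of $\Rmc_1$. By Proposition~\ref{prop:characterizations-repairs-comp}(1), $\comp{\Dmc}{\Rmc_1}$ and $\comp{\Dmc}{\Rmc_2}$ are both $\subseteq$-maximal subsets of $\litset$ whose induced database satisfies $\constraints$, so neither contains the other; set $X_1:=\comp{\Dmc}{\Rmc_1}\setminus\comp{\Dmc}{\Rmc_2}$ and $X_2:=\comp{\Dmc}{\Rmc_2}\setminus\comp{\Dmc}{\Rmc_1}$, both of which are nonempty. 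Since $X_1\cup X_2$ is finite and $\succ$ is acyclic, I pick a $\succ$-maximal element $\nu$ in $X_1\cup X_2$ and assume without loss of generality that $\nu\in X_2$.

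Set $\Bmc^0:=\comp{\Dmc}{\Rmc_1}\cup\{\nu\}$ and let $\mathcal{P}$ be the set of conflicts $\conf\in\conflicts{\Dmc,\constraints}$ with $\conf\subseteq\Bmc^0$. By Proposition~\ref{prop:characterizations-repairs-comp}(2), $\comp{\Dmc}{\Rmc_1}$ contains no conflict, so every $\conf\in\mathcal{P}$ contains $\nu$. Fix such a $\conf$. Because $\Rmc_2$ is a repair, $\conf\not\subseteq\comp{\Dmc}{\Rmc_2}$, and since $\nu\in\comp{\Dmc}{\Rmc_2}$, there is some $\lambda_\conf\in(\conf\setminus\{\nu\})\cap X_1$. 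From $\{\lambda_\conf,\nu\}\subseteq\conf$ and totality of $\succ$, either $\nu\succ\lambda_\conf$ or $\lambda_\conf\succ\nu$; the latter is ruled out by $\succ$-maximality of $\nu$ in $X_1\cup X_2$, hence $\nu\succ\lambda_\conf$.

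Now let $L:=\{\lambda_\conf\mid\conf\in\mathcal{P}\}$ and $\Bmc^*:=\Bmc^0\setminus L$. The set $\Bmc^*$ is conflict-free: any conflict contained in $\Bmc^*\subseteq\Bmc^0$ belongs to $\mathcal{P}$, and its chosen $\lambda_\conf\in L$ has been removed. Extend $\Bmc^*$ greedily to a $\subseteq$-maximal conflict-free set $\Bmc^{**}\subseteq\litset$; by Proposition~\ref{prop:characterizations-repairs-comp}, $\Rmc^*:=\restr{\Dmc}{\Bmc^{**}}$ is a $\Delta$-repair with $\comp{\Dmc}{\Rmc^*}=\Bmc^{**}$. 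Since $\nu\in\Bmc^*\subseteq\Bmc^{**}$ and $\nu\notin\comp{\Dmc}{\Rmc_1}$, we have $\nu\in\comp{\Dmc}{\Rmc^*}\setminus\comp{\Dmc}{\Rmc_1}$. Moreover, any $\lambda\in\comp{\Dmc}{\Rmc_1}\setminus\comp{\Dmc}{\Rmc^*}$ satisfies $\lambda\in\Bmc^0$ and $\lambda\notin\Bmc^{**}\supseteq\Bmc^*$, so $\lambda\in\Bmc^0\setminus\Bmc^*=L$ and thus $\nu\succ\lambda$. Hence $\Rmc^*$ is a Pareto improvement of $\Rmc_1$, contradicting $\Rmc_1\in\deltapreps{\Dmc^{\constraints}_\succ}$.

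The main subtle point is the last step of bookkeeping: the greedy completion $\Bmc^{**}\supseteq\Bmc^*$ might re-add some literals from $L$, but this cannot hurt, because any literal of $\comp{\Dmc}{\Rmc_1}$ missing from $\Bmc^{**}$ must already have been missing from $\Bmc^*$ (since extension only adds), and $\Bmc^*$ differs from $\comp{\Dmc}{\Rmc_1}\cup\{\nu\}$ only by the removal of elements of $L$. The essential use of the totality hypothesis is in the second paragraph, when orienting the comparison between $\nu$ and each $\lambda_\conf$: without totality, $\nu$ and $\lambda_\conf$ might be incomparable despite sharing a conflict, and the maximal witness $\nu$ would fail to dominate $\lambda_\conf$.
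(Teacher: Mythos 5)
Your uniqueness argument is correct, and it takes a genuinely different route from the paper's. The paper first proves a claim that, when $\succ$ is total, every literal in $\comp{\Dmc}{\Rmc}\setminus\comp{\Dmc}{\Rmc'}$ (for $\Rmc'$ Pareto-optimal) is strictly dominated by some literal of $\comp{\Dmc}{\Rmc'}\setminus\comp{\Dmc}{\Rmc}$, and then derives the contradiction by building an infinite strictly $\succ$-increasing chain alternating between the two repairs, impossible since $\litset$ is finite and $\succ$ acyclic. You instead short-circuit the chain by going directly to a $\succ$-maximal element $\nu$ of the symmetric difference $X_1\cup X_2$ and exhibiting an explicit Pareto improvement of the repair not containing $\nu$; this is a one-shot extremal argument rather than infinite descent, and your construction (add $\nu$, delete one dominated witness per conflict, complete to a maximal conflict-free set and apply Proposition~\ref{prop:characterizations-repairs-comp}) is essentially the paper's Lemma~\ref{lem:pareto-improvement-if-one-beat-all} re-proved inline. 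The details check out: each $\lambda_\conf$ indeed lies in $X_1$ because $\nu\in\comp{\Dmc}{\Rmc_2}$ and $\conf\setminus\{\nu\}\subseteq\comp{\Dmc}{\Rmc_1}$, totality plus maximality of $\nu$ forces $\nu\succ\lambda_\conf$, and your bookkeeping on the greedy completion correctly shows that every literal of $\comp{\Dmc}{\Rmc_1}$ lost in $\Rmc^*$ belongs to $L$ and is hence dominated by $\nu$. Arguably your argument is the more transparent of the two.

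The one gap is that you only establish $|\deltapreps{\Dmc^\constraints_\succ}|\le 1$, whereas the statement asserts equality with $1$, so nonemptiness is also required. This is not immediate: a Pareto improvement need not itself be a repair and the improvement relation is not obviously well-founded, so one cannot simply iterate improvements starting from an arbitrary $\Delta$-repair. The paper discharges this with a separate argument (Lemma~\ref{existenceCompletionOpti}): the greedy construction along any completion of $\succ$ yields a completion-optimal, hence Pareto-optimal, repair, again using finiteness and acyclicity. You should either invoke that existence result or reproduce the greedy argument to complete the proof.
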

\noindent In particular, this means $\deltagreps{\Dmc^\constraints_\succ}= \deltapreps{\Dmc^\constraints_\succ}$ when $\succ$ is total,
so we may replace globally-optimal by Pareto-optimal in the definition of completion-optimal $\Delta$-repairs.

\begin{example}\label{ex:optimal-reps}
Let $\Dmc=\{S(a,b), S(a,c), R(d,b), R(d,c)\}$, where 
$R(d,b)\succ S(a,b)$, $S(a,b)\succ\no A(a)$, $S(a,c)\succ R(d,c)$, $S(a,c)\succ \no B(a)$,
and $\Cmc$ contains the constraints:
\begin{align*}
&S(x,y)\wedge S(x,z)\wedge y\neq z\rightarrow \bot && S(x,y)\rightarrow A(x)  \\
& R(x,y)\wedge R(x,z)\wedge y\neq z\rightarrow \bot & &S(x,y)\rightarrow B(x)  \\
&R(y,x)\wedge S(z,x)\rightarrow\bot&&
\end{align*}
The conflicts are all binary, so the conflict hypergraph is a graph, pictured below.
We use an arrow $\lambda \rightarrow \mu$ when $\lambda\succ\mu$ and dotted lines for conflicting literals with no priority. 
\begin{center}
\begin{tikzpicture} 
\node (1) [above] at (0,0.25) {$R(d,b)$};
\node (2) [above] at (2,0.25) {$S(a,b)$};
\node (3) [above] at (4,0.25) {$\no A(a)$};
\node (4) [below] at (0,0) {$R(d,c)$};
\node (5) [below] at (2,0) {$S(a,c)$};
\node (6)  [below] at (4,0) {$\no B(a)$};
\draw[->] (1) -- (2);
\draw[->] (2) -- (3);
\draw[->] (5) -- (4);
\draw[->] (5) -- (6);
\draw[dotted] (1) -- (4);
\draw[dotted] (2) -- (5);
\draw[dotted] (2) -- (6);
\draw[dotted] (3) -- (5);
\end{tikzpicture}
\end{center}
It can be verified that the optimal repairs are as follows:
\begin{align*}
\deltacreps{\Dmc^\constraints_\succ}=&\{\{R(d,b), S(a,c), A(a), B(a)\}\}
\\
 \deltagreps{\Dmc^\constraints_\succ}=&\deltacreps{\Dmc^\constraints_\succ}\cup\{\{R(d,b)\},\{R(d,c)\}\}
\\
\deltapreps{\Dmc^\constraints_\succ}=&\deltagreps{\Dmc^\constraints_\succ}\cup\\
& \{\{R(d,c), S(a,b), A(a), B(a)\}\}
\end{align*}
and that $\deltareps{\Dmc,\constraints} = \deltapreps{\Dmc^\constraints_\succ}$. 
\end{example}

When $\succ$ is score-structured with scoring function $s$, we define the \emph{prioritization} of $\bigcup_{\conf\in\conflicts{\Dmc,\Cmc}}\conf$ as the partition $\Smc_1,\dots,\Smc_n$ such that for every $1\leq i\leq n$, there exists $m\in \mathbb{N}$ such that $\Smc_i=\{\lambda\mid s(\lambda)=m\}$, and for every $\{\lambda_i,\lambda_j\}\subseteq\conf\in\conflicts{\Dmc,\constraints}$, $\lambda_i\succ \lambda_j$ iff $\lambda_i\in\Smc_i$, $\lambda_j\in\Smc_j$ and $ i< j$. 
Intuitively, 
the more reliable a literal $\lambda$ the smaller the index of $\Smc_i$ that contains $\lambda$. 
 \citeauthor{DBLP:conf/aaai/BienvenuBG14} \shortcite{DBLP:conf/aaai/BienvenuBG14} introduced a notion of 
$\subseteq_P$-repair based upon such prioritizations, which we adapt below to $\Delta$-repairs. 

\begin{definition}
Let $\Dmc^\constraints_\succ$ be a prioritized database such that $\succ$ is score-structured and $\Smc_1,\dots,\Smc_n$ is the prioritization of $\bigcup_{\conf\in\conflicts{\Dmc,\Cmc}}\conf$. 
A \emph{$\Delta_P$-repair} of $\Dmc^\constraints_\succ$ is a candidate repair $\Rmc$ such that (i) $\Rmc\models\constraints$ and (ii) there is no $\Rmc'\models\constraints$ such that 
 there is some $1\leq i\leq n$ such that 
 \begin{itemize}
 \item $\comp{\Dmc}{\Rmc}\cap\Smc_i\subsetneq \comp{\Dmc}{\Rmc'}\cap\Smc_i $ and 
 \item for all $1\leq j<i$, $\comp{\Dmc}{\Rmc}\cap\Smc_j= \comp{\Dmc}{\Rmc'}\cap\Smc_j $. 
\end{itemize}
We denote by $\deltascorereps{\Dmc^\constraints_\succ}$ the set of $\Delta_P$-repairs of $\Dmc^\constraints_\succ$. 
\end{definition}

As in the case of denial constraints, all four notions of optimal $\Delta$-repairs coincide when $\succ$ is score-structured.
\begin{restatable}{proposition}{ScoreStructuredCollapse}\label{prop:score-structured-collapse}
If $\succ$ is score-structured, then 
$\deltacreps{\Dmc^\constraints_\succ} = \deltagreps{\Dmc^\constraints_\succ} = \deltapreps{\Dmc^\constraints_\succ}=\deltascorereps{\Dmc^\constraints_\succ}$.
\end{restatable}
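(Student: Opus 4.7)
The plan is to leverage the inclusions $\deltacreps{\Dmc^\constraints_\succ} \subseteq \deltagreps{\Dmc^\constraints_\succ} \subseteq \deltapreps{\Dmc^\constraints_\succ}$ already observed in the text and close the loop by establishing $\deltapreps{\Dmc^\constraints_\succ} \subseteq \deltascorereps{\Dmc^\constraints_\succ}$ and $\deltascorereps{\Dmc^\constraints_\succ} \subseteq \deltacreps{\Dmc^\constraints_\succ}$. Throughout, let $\Smc_1, \ldots, \Smc_n$ denote the prioritization of $\bigcup_{\conf \in \conflicts{\Dmc, \constraints}} \conf$ induced by the scoring function $s$.

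For $\deltapreps{\Dmc^\constraints_\succ} \subseteq \deltascorereps{\Dmc^\constraints_\succ}$, I argue contrapositively. Assume $\Rmc$ is a $\Delta$-repair that fails to be a $\Delta_P$-repair, witnessed by a consistent $\Rmc'$ and the smallest layer $i^*$ at which $\comp{\Dmc}{\Rmc}\cap\Smc_{i^*} \subsetneq \comp{\Dmc}{\Rmc'}\cap\Smc_{i^*}$, with $\comp{\Dmc}{\Rmc}\cap\Smc_j = \comp{\Dmc}{\Rmc'}\cap\Smc_j$ for every $j < i^*$. Pick $\mu \in \comp{\Dmc}{\Rmc'}\cap\Smc_{i^*} \setminus \comp{\Dmc}{\Rmc}$ and let $V$ collect the conflicts $\conf \in \conflicts{\Dmc,\constraints}$ with $\mu \in \conf$ and $\conf\setminus\{\mu\}\subseteq \comp{\Dmc}{\Rmc}$, i.e., those that adding $\mu$ to $\comp{\Dmc}{\Rmc}$ would trigger. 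The crux is that every $\conf \in V$ contains some $\lambda \in \conf\setminus\{\mu\}$ with $s(\lambda) < s(\mu)$: otherwise each element of $\conf\setminus\{\mu\}$ would sit in a layer $\Smc_k$ with $k \leq i^*$, hence---by the layer-by-layer agreement for $j < i^*$ together with $\comp{\Dmc}{\Rmc}\cap\Smc_{i^*}\subseteq\comp{\Dmc}{\Rmc'}$---would also belong to $\comp{\Dmc}{\Rmc'}$, giving $\conf \subseteq \comp{\Dmc}{\Rmc'}$ and contradicting consistency of $\Rmc'$ via Proposition~\ref{prop:characterizations-repairs-comp}. Choose one such low-score $\lambda_\conf$ per $\conf\in V$, let $H = \{\lambda_\conf \mid \conf \in V\}$, and set $\Bmc = \restr{\Dmc}{\comp{\Dmc}{\Rmc}\cup\{\mu\}\setminus H}$. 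A short case split over conflicts containing or missing $\mu$ (using that $\Rmc \models \constraints$ and that $H$ hits every $\conf \in V$) shows $\Bmc \models \constraints$, and since $\mu \succ \lambda$ for every $\lambda \in H = \comp{\Dmc}{\Rmc}\setminus\comp{\Dmc}{\Bmc}$, $\Bmc$ is a Pareto improvement of $\Rmc$, contradicting $\Rmc \in \deltapreps{\Dmc^\constraints_\succ}$.

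For $\deltascorereps{\Dmc^\constraints_\succ} \subseteq \deltacreps{\Dmc^\constraints_\succ}$, fix $\Rmc \in \deltascorereps{\Dmc^\constraints_\succ}$ and construct a completion $\succ'$ of $\succ$ by choosing, on each layer $\Smc_i$, an arbitrary total order that places all literals of $\comp{\Dmc}{\Rmc}\cap\Smc_i$ strictly before those of $\Smc_i\setminus\comp{\Dmc}{\Rmc}$, then declaring $\lambda \succ' \mu$ whenever $\lambda,\mu$ share a conflict and $\lambda$ precedes $\mu$ in the resulting layer-index-then-intra-layer order. One checks that $\succ'$ is acyclic, total, and extends $\succ$, so by Proposition~\ref{prop:categoricity} $\deltapreps{\Dmc^\constraints_{\succ'}}$ is a singleton, forcing $\deltagreps{\Dmc^\constraints_{\succ'}} = \deltapreps{\Dmc^\constraints_{\succ'}}$; hence it suffices to show $\Rmc$ admits no Pareto improvement under $\succ'$. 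Given any candidate $\Bmc$ with witness $\mu \in \comp{\Dmc}{\Bmc}\setminus\comp{\Dmc}{\Rmc}$, the intra-layer rule rules out $\mu \succ' \lambda$ whenever $\mu$ and $\lambda$ share a layer (because $\mu$ lies outside $\Rmc$-agreement while $\lambda \in \comp{\Dmc}{\Rmc}$), so every lost $\lambda$ must sit in a strictly later layer than $\mu \in \Smc_{i^*}$. Thus $\comp{\Dmc}{\Rmc}\cap\Smc_j \subseteq \comp{\Dmc}{\Bmc}\cap\Smc_j$ for every $j \leq i^*$, with strict inclusion at $j = i^*$ thanks to $\mu$; taking the smallest index $k^* \leq i^*$ where $\comp{\Dmc}{\Rmc}$ and $\comp{\Dmc}{\Bmc}$ disagree on $\Smc_{k^*}$ then yields $\comp{\Dmc}{\Rmc}\cap\Smc_{k^*}\subsetneq\comp{\Dmc}{\Bmc}\cap\Smc_{k^*}$ with agreement on all earlier layers, a direct contradiction to $\Rmc \in \deltascorereps{\Dmc^\constraints_\succ}$.

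The main obstacle is the hitting-set construction in the first direction: the whole argument hinges on showing that every conflict created by adding $\mu$ admits a strictly lower-score literal to delete, which combines the consistency of the witness $\Rmc'$ with the precise shape of the $\Delta_P$-violation (agreement below $i^*$ and strict gain at $i^*$). Once this is secured, both inclusions reduce to routine verifications about which conflicts survive and how priorities compare along the layer order.
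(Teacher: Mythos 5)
Your overall architecture is sound and genuinely different from the paper's: you close the cycle $\deltacreps{\Dmc^\constraints_\succ}\subseteq\deltagreps{\Dmc^\constraints_\succ}\subseteq\deltapreps{\Dmc^\constraints_\succ}\subseteq\deltascorereps{\Dmc^\constraints_\succ}\subseteq\deltacreps{\Dmc^\constraints_\succ}$, whereas the paper proves $\deltapreps{\Dmc^\constraints_\succ}\subseteq\deltacreps{\Dmc^\constraints_\succ}$ and $\deltapreps{\Dmc^\constraints_\succ}=\deltascorereps{\Dmc^\constraints_\succ}$ separately. Your second inclusion ($\deltascorereps{\Dmc^\constraints_\succ}\subseteq\deltacreps{\Dmc^\constraints_\succ}$) is essentially the paper's completion argument transplanted from Pareto-optimal to $\Delta_P$-repairs, and it goes through: the intra-layer ordering forces every literal lost in a would-be Pareto improvement into a strictly later layer, which contradicts the $\Delta_P$-property at the first layer of disagreement.

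The gap is in the first inclusion, at the step ``a short case split \ldots shows $\Bmc\models\constraints$.'' Your case split only establishes that no conflict is contained in the literal set $\comp{\Dmc}{\Rmc}\cup\{\mu\}\setminus H$. For universal constraints this does \emph{not} imply that $\restr{\Dmc}{\cdot}$ applied to that set satisfies $\constraints$: conflicts are only those prime implicants of the violation formula that lie inside $\litset$, and $\restr{\Dmc}{}$ flips every literal of $\litset$ outside the given set, which can trigger a violation whose prime implicant is not a conflict. Concretely, with $\Dmc=\{A(a),C(a)\}$ and $\constraints=\{A(x)\wedge\neg B(x)\rightarrow\bot,\ B(x)\wedge C(x)\rightarrow\bot\}$, the conflicts are $\{A(a),\neg B(a)\}$ and $\{A(a),C(a)\}$; the set $\{C(a)\}\subseteq\litset$ contains no conflict, yet $\restr{\Dmc}{\{C(a)\}}=\{B(a),C(a)\}$ violates the second constraint. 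This is exactly why the paper's Lemma~\ref{lem:max-no-conf} requires \emph{maximal} conflict-free sets. The fix is routine: extend $\comp{\Dmc}{\Rmc}\cup\{\mu\}\setminus H$ to a maximal conflict-free subset of $\litset$ before applying $\restr{\Dmc}{}$; the literals lost relative to $\comp{\Dmc}{\Rmc}$ remain within $H$, so the Pareto-improvement property survives. This saturation step is packaged in the appendix as Lemma~\ref{lem:pareto-improvement-if-one-beat-all}, whose hypothesis your ``crux'' observation establishes (taking the full set $\{\nu\mid\mu\succ\nu\}$ in place of your selective $H$). Note also that the paper dispatches this direction far more cheaply: the $\Delta_P$-violation witness $\Rmc'$ is argued to be \emph{itself} a Pareto improvement, since every literal in $\comp{\Dmc}{\Rmc}\setminus\comp{\Dmc}{\Rmc'}$ lies in a layer of index greater than $i^*$ and is therefore dominated by $\mu$ --- no hitting-set construction is needed.
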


We can now define variants of existing inconsistency-tolerant semantics based upon our optimal repairs. 
\begin{definition}
Fix X $\in \{S,P,G,C\}$ and 
consider a prioritized database $\Dmc^\constraints_\succ$, query $q(\vec{x})$, and tuple of constants $\ans$ 
with $|\vec{x}|=|\ans|$. 
Then $\ans$ is an answer to $q(\vec{x})$ over $\Dmc^\constraints_\succ$ 
\begin{itemize}
\item under \emph{X-brave semantics}, denoted $\Dmc^\constraints_\succ\bravemodels{X} q(\ans)$, if $\Rmc\models q(\ans)$ for some $\Rmc \in \deltaxreps{\Dmc^\constraints_\succ}$;
\item under \emph{X-CQA semantics}, denoted $\Dmc^\constraints_\succ \armodels{X} q(\ans)$, if $\Rmc\models q(\ans)$ for every $\Rmc \in \deltaxreps{\Dmc^\constraints_\succ}$;
\item under \emph{X-intersection semantics}, denoted $\Dmc^\constraints_\succ \iarmodels{X} q(\ans)$, if $\Bmc \models q(\ans)$ where $\Bmc=\bigcap_{\Rmc \in \deltaxreps{\Dmc^\constraints_\succ} }\Rmc$.
\end{itemize}
\end{definition}

Just as in the case of denial constraints, these semantics are related as follows:
$$\Dmc^\constraints_\succ \iarmodels{X} q(\ans) \Rightarrow \Dmc^\constraints_\succ \armodels{X} q(\ans)  \Rightarrow \Dmc^\constraints_\succ \bravemodels{X} q(\ans)$$ 
Unlike the denial constraint case, 
the intersection of the optimal $\Delta$-repairs may be inconsistent \wrt $\constraints$. For example, if $\Dmc=\{A(a)\}$, $\constraints=\{A(x)\rightarrow B(x) \vee C(x)\}$, $A(a)\succ\neg B(a)$ and $A(a)\succ \neg C(a)$, then $\bigcap_{\Rmc \in \deltapreps{\Dmc^\constraints_\succ}}\Rmc=\{A(a)\}$ violates the constraint. This is not a problem since we consider conjunctive queries, which are monotone, meaning that if the intersection of the optimal $\Delta$-repairs yields a query answer, then the tuple is an answer in every optimal $\Delta$-repair. 

\begin{example}[Example \ref{ex:optimal-reps} cont'd]
Considering the different semantics based upon Pareto-optimal repairs:
\begin{itemize}
\item $\Dmc^\constraints_\succ\bravemodels{P} A(a)$ but $\Dmc^\constraints_\succ\not\armodels{P} A(a)$;
\item $\Dmc^\constraints_\succ\armodels{P} \exists y R(d,y)$ but $\Dmc^\constraints_\succ\not\iarmodels{P} \exists y R(d,y)$.
\end{itemize}
If we consider now 
semantics for the different kinds of optimal repairs, we find that, \eg:
\begin{itemize}
\item $\Dmc^\constraints_\succ\armodels{C} A(a)$ but $\Dmc^\constraints_\succ\not\armodels{G} A(a)$;
\item 
$\Dmc^\constraints_\succ\bravemodels{P} S(a,b)$ but $\Dmc^\constraints_\succ\not\bravemodels{G} S(a,b)$.
\end{itemize}
\end{example}

\section{Complexity Analysis}\label{sec:complexity}
In this section, we analyze the data complexity of the central computational tasks related to optimal repairs. 
We consider the following decision problems: 
\begin{itemize}
\item X-repair checking:  
given a prioritized database $\Dmc^\constraints_\succ$ 
and a candidate repair $\Rmc$, decide whether $\Rmc \in \deltaxreps{\Dmc^\constraints_\succ}$;
\item Query answering under X-Sem semantics: 
given a prioritized database $\Dmc^\constraints_\succ$,
a query $q$, and a candidate answer $\vec{a}$, decide whether 
 $\Dmc^\constraints_\succ\models^\text{X}_\text{Sem} q(\ans)$;
\end{itemize}
where X $\in \{S,P,G, C\}$ and $\text{Sem}\in\{$brave, CQA, $\cap\}$. 
We focus on data complexity, which is measured in terms of the size of the database 
$\Dmc$,
treating the 
constraints $\Cmc$ and query $q$ as fixed and of constant size (
under the latter assumption, $\Rmc$ and $\ans$ are of polynomial size w.r.t.\ $\Dmc$).
Table \ref{tab:complexity} summarizes our new results for optimal repairs w.r.t.\ universal constraints
alongside existing results for denial constraints. 

\citeauthor{DBLP:journals/is/StaworkoC10} \shortcite{DBLP:journals/is/StaworkoC10} showed that $S$-repair checking is \conp-complete in data complexity. 
We show that the same holds for Pareto- and globally-optimal repairs: 

\begin{restatable}{theorem}{ThComplexityPGrepairchecking}
X-repair checking is \conp-complete in data complexity for X $\in \{P,G\}$.  
\end{restatable}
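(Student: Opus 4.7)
The plan is to prove membership in \conp{} and \conp-hardness separately, for both $X=P$ and $X=G$ simultaneously.

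For the upper bound, I show that the complement of $X$-repair checking is in \np{}. A candidate $\Rmc$ fails to be $X$-optimal iff either $\Rmc\notin\deltareps{\Dmc,\constraints}$ or some $X$-improvement of $\Rmc$ exists. A nondeterministic polynomial-time procedure guesses the appropriate witness: either a consistent $\Rmc'\subseteq\factset$ with $\Rmc'\Delta\Dmc\subsetneq\Rmc\Delta\Dmc$ (certifying that $\Rmc$ is not a $\Delta$-repair), or a consistent $\Bmc\subseteq\factset$ satisfying the Pareto (resp.\ global) improvement condition w.r.t.\ $\Rmc$ and $\succ$. Restricting both kinds of witnesses to $\factset$ keeps them of polynomial size in $|\Dmc|$, since for fixed schema and arities $|\factset|$ is polynomial; this restriction is safe because universal constraints have no existential quantifiers in the head, so constants outside $\domain{\Dmc}$ can be discarded from any consistent database without losing consistency or decreasing symmetric difference with $\Dmc$. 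Since $\constraints$ is fixed, verifying $\Bmc\models\constraints$ is polynomial, and the improvement conditions are straightforward polynomial-time checks once $\Bmc$ is given.

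For the lower bound, I reduce from $S$-repair checking (i.e., checking membership in $\deltareps{\Dmc,\constraints}$), known to be \conp-hard for universal constraints by \cite{DBLP:journals/is/StaworkoC10}. Given an instance $(\Dmc,\constraints,\Rmc)$, I map it in polynomial time to the prioritized database $(\Dmc,\constraints,\emptyset)$ with empty priority relation. The key lemma is that $\deltapreps{(\Dmc,\constraints,\emptyset)}=\deltagreps{(\Dmc,\constraints,\emptyset)}=\deltareps{\Dmc,\constraints}$. Indeed, when $\succ=\emptyset$, no literals satisfy $\mu\succ\lambda$, so the existential condition in the definitions of Pareto and global improvement can only be met vacuously: an $X$-improvement $\Bmc$ of $\Rmc$ must have $\comp{\Dmc}{\Rmc}\setminus\comp{\Dmc}{\Bmc}=\emptyset$ together with $\comp{\Dmc}{\Bmc}\setminus\comp{\Dmc}{\Rmc}\neq\emptyset$, i.e., $\comp{\Dmc}{\Rmc}\subsetneq\comp{\Dmc}{\Bmc}$. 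Unfolding the definition of $\comp{\Dmc}{\cdot}$ shows that $\comp{\Dmc}{\Rmc}\subsetneq\comp{\Dmc}{\Bmc}$ is equivalent to $\Bmc\Delta\Dmc\subsetneq\Rmc\Delta\Dmc$. Hence some $X$-improvement of $\Rmc$ exists iff $\Rmc\notin\deltareps{\Dmc,\constraints}$, giving $\Rmc\in\deltaxreps{(\Dmc,\constraints,\emptyset)}$ iff $\Rmc\in\deltareps{\Dmc,\constraints}$, and establishing \conp-hardness for both $X=P$ and $X=G$.

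The main obstacle is the upper bound, specifically justifying the restriction of all guessed witnesses to the polynomial-size set $\factset$; once the standard observation for universal constraints (new constants are never useful) is in place, the argument is routine. A pleasant consequence of the empty-priority reduction is that $P$ and $G$ are handled uniformly, so no separate hardness construction is needed.
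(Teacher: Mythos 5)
Your overall strategy matches the paper's: the lower bound is inherited from S-repair checking via the empty priority relation (the paper just says it is ``inherited from $\Delta$-repairs''; your explicit verification that the Pareto- and globally-optimal repairs coincide with $\deltareps{\Dmc,\constraints}$ when $\succ=\emptyset$ is correct and is the intended instantiation), and the upper bound is the same guess-and-check \np{} procedure for the complement. There is, however, one concrete hole in your upper bound: your two witness types do not cover the case where the input $\Rmc$ itself violates $\constraints$. If $\Rmc\not\models\constraints$ then $\Rmc\notin\deltareps{\Dmc,\constraints}$, but a consistent $\Rmc'$ with $\Rmc'\Delta\Dmc\subsetneq\Rmc\Delta\Dmc$ need not exist --- take $\Rmc=\Dmc$ with $\Dmc$ inconsistent, so that $\Rmc\Delta\Dmc=\emptyset$ --- and no Pareto or global improvement exists either, since $\comp{\Dmc}{\Rmc}=\litset$ leaves no possible $\mu\in\comp{\Dmc}{\Bmc}\setminus\comp{\Dmc}{\Rmc}$. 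Your procedure would therefore wrongly accept such an $\Rmc$ as optimal. The paper's proof avoids this by including a third branch that simply verifies $\Rmc\not\models\constraints$ deterministically in polynomial time; adding that check repairs your argument. The remaining points you make (restricting all witnesses to $\factset$, and the equivalence between $\comp{\Dmc}{\Rmc}\subsetneq\comp{\Dmc}{\Bmc}$ and $\Bmc\Delta\Dmc\subsetneq\Rmc\Delta\Dmc$ for candidate repairs) are sound and consistent with the lemmas the paper establishes for this purpose.
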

\begin{proof}[Proof Sketch]
The lower bound is inherited from $\Delta$-repairs. For the upper bounds,  
we sketch \np\ procedures
for checking whether $\Rmc \not \in \deltaxreps{\Dmc^\constraints_\succ}$ for a given candidate repair~$\Rmc$. 
In a nutshell, we guess either  (i) `inconsistent', (ii) `not maximal' together with another candidate repair $\Rmc'$,
or (iii) `improvement' together with a candidate (Pareto or global) improvement $\Bmc$. 
In case (i), it suffices to verify in \ptime\ that $\Rmc \not \models \Cmc$, returning yes if so. 
In case (ii), we test in \ptime\ whether $\Rmc'\Delta\Dmc\subsetneq \Rmc\Delta\Dmc$ and $\Rmc' \models \Cmc$, returning yes if both conditions hold. 
In case (iii), we check in \ptime whether $\Bmc$ is indeed a (Pareto / global) improvement of $\Rmc$, returning yes if so. 
\end{proof}

Interestingly, we observe that P-repair checking is hard 
even if we already know the input is a $\Delta$-repair:

\begin{restatable}{lemma}{ParetoHard}\label{pareto-hard}
Deciding whether a given $\Delta$-repair is Pareto-optimal is 
\conp-complete in data complexity. 
\end{restatable}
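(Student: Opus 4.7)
The \conp\ upper bound is immediate from the preceding theorem: its \np\ algorithm for the complement of P-repair checking uses a uniform guess of ``inconsistent'', ``not maximal'', or ``Pareto improvement'', and therefore remains correct even under the promise that $\Rmc$ is a $\Delta$-repair.

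For \conp-hardness, I would reduce from \textsc{3-sat} to the complement of the lemma's problem, taking care to stay within data complexity by using a fixed schema and constraint set. The schema has a $0$-ary predicate $G$, unary predicates $\mn{Var},\mn{Cls},T,F$, and binary predicates $\mn{PLit},\mn{NLit}$. The fixed constraint set $\constraints$ contains (i)~$T(x)\wedge F(x)\to\bot$; (ii)~$G\wedge\mn{Var}(x)\wedge\neg T(x)\wedge\neg F(x)\to\bot$, enforcing that $G$ implies a total assignment on every variable; and (iii)~for each sign pattern $(s_1,s_2,s_3)\in\{p,n\}^3$, the clause-violation constraint
$G\wedge\mn{Cls}(c)\wedge L_{s_1}(c,x_1)\wedge L_{s_2}(c,x_2)\wedge L_{s_3}(c,x_3)\wedge U_{s_1}(x_1)\wedge U_{s_2}(x_2)\wedge U_{s_3}(x_3)\wedge\bigwedge_{i\neq j}x_i\neq x_j\to\bot$, where $L_p:=\mn{PLit}$, $L_n:=\mn{NLit}$, $U_p:=F$, $U_n:=T$. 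Given a 3CNF $\phi$ with variables $x_1,\ldots,x_n$ and clauses $c_1,\ldots,c_m$, the database $\Dmc$ consists of $G$ together with the natural encoding of $\phi$ (facts $\mn{Var}(x_i)$, $\mn{Cls}(c_j)$, and $\mn{PLit}/\mn{NLit}(c_j,x_i)$ reflecting each clause's literals). I take $\Rmc := \Dmc\setminus\{G\}$ and $\succ := \{(G,\neg T(x_i)),(G,\neg F(x_i))\mid 1\le i\le n\}$.

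Three points then need to be established. (a)~$\Rmc\in\deltareps{\Dmc,\constraints}$: $\Dmc$ is inconsistent because constraint~(ii) is violated for each $x_i$, while $\Rmc$ satisfies $\constraints$ (every premise mentioning $G$ fails), so since $\Rmc\Delta\Dmc=\{G\}$ the only candidate with a strictly smaller symmetric difference is $\Dmc$ itself, which is inconsistent. (b)~$\succ$ is a valid priority relation: for each $x_i$, constraint~(ii) yields the conflict $\{G,\mn{Var}(x_i),\neg T(x_i),\neg F(x_i)\}$, which contains both pairs involving $x_i$; acyclicity is immediate since $G$ dominates everything. (c)~$\Rmc$ admits a Pareto improvement iff $\phi$ is satisfiable: any Pareto improvement $\Bmc$ must take $\mu=G$ (the unique literal in $\comp{\Dmc}{\Bmc}\setminus\comp{\Dmc}{\Rmc}$ when $G\in\Bmc$, and this set is empty when $G\notin\Bmc$); the priority relation then forces $\Bmc$ to preserve $\Dmc\setminus\{G\}$ and to add only literals of the form $T(x_i)$ or $F(x_i)$; consistency of such a $\Bmc$ with $\constraints$ then encodes, via the total-assignment and clause-violation constraints, a satisfying assignment of $\phi$, and conversely every satisfying assignment yields such a consistent $\Bmc$.

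The main obstacle is reconciling the promise that $\Rmc$ is a $\Delta$-repair with the \conp-hardness goal: many natural reductions produce a candidate whose very status as a $\Delta$-repair depends on the \textsc{sat} instance. The construction above sidesteps this by making the promise trivially verifiable (the symmetric difference is a single fact), while pushing all the combinatorial content into Pareto improvements $\Bmc$, which range over arbitrary consistent databases rather than $\Delta$-repairs.
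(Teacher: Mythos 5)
Your proof is correct and follows essentially the same strategy as the paper's: a reduction from 3SAT in which the candidate $\Delta$-repair is obtained by deleting a single trigger fact, the priority relation lets that trigger dominate exactly the negated valuation literals, and a Pareto improvement re-inserting the trigger exists iff the formula is satisfiable (the paper uses $Init(0,1)$ as the trigger, a disjunctive constraint $Init(x,y)\wedge Var(z)\wedge x\neq y\rightarrow Val(z,x)\vee Val(z,y)$ to force a total assignment, and a $6$-ary $Clause$ relation in place of your sign-pattern-indexed constraints). The only cosmetic adjustments needed are that the paper requires predicates to have arity $>0$, so your flag $G$ should be a unary fact over a dummy constant, and your clause constraints with pairwise inequalities presuppose the standard normalization that each clause mentions three distinct variables.
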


We next turn to C-repair checking. A first idea would be to guess a completion $\succ'$
and check (using an \np\ oracle) that the input database is Pareto-optimal w.r.t.~$\succ'$. However, determining whether the guessed binary relation 
is a completion is not straightforward, as we must make sure that 
we relate all and only those literals that appear together in some conflict. As the following result shows, 
even identifying conflicts is a challenging task for universal contraints:

\begin{table}[t]
\setlength{\tabcolsep}{3pt}
\begin{tabular}{l@{\quad}lcccc}
&& $S$ & $P$ & $C$ & $G$\\
\midrule
\multirow{3}{*}{\rotatebox[origin=c]{90}{Univ.}}& \sc{RC} &\conp   
& \conp & \conp-h, in \sigmaptwo &\conp   \\
&\sc{Brave} &\sigmaptwo  
&\sigmaptwo&\sigmaptwo & \sigmaptwo\\
&CQA, \sc{Int} & \piptwo &\piptwo&\piptwo
& \piptwo \\ \midrule 
\multirow{3}{*}{\rotatebox[origin=c]{90}{Denial}}& \sc{RC} &in \ptime  &in \ptime  &in \ptime  & \conp \\
&\sc{Brave} &in \ptime  
&\np  & \np & \sigmaptwo\\
&\sc{CQA}/\sc{Int}&  \conp/in \ptime & \conp 
&\conp& \piptwo\\
\bottomrule
\end{tabular}
\caption{Data complexity of X-repair checking (\textsc{RC}) and query answering under X-brave (\textsc{Brave}), X-CQA, 
and X-intersection (\textsc{Int}) semantics (X $\in \{S,P,G, C\}$) w.r.t.\ universal or denial constraints. Completeness results except where indicated otherwise. 
} 
\label{tab:complexity}
\end{table}

\begin{restatable}{lemma}{LemConflictChecking}\label{lem:conflict-checking}
Deciding whether a set of literals 
belongs to $\conflicts{\Dmc,\constraints}$ is $BH_2$-complete \wrt data complexity.
\end{restatable}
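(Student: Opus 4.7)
The plan is to show membership in $BH_2$ by a two-part decomposition, and to establish matching hardness by a reduction from \textsc{Sat-Unsat}. Recall $BH_2 = \mathrm{DP}$ is the class of intersections $L_1 \cap L_2$ with $L_1 \in \np$ and $L_2 \in \conp$, whose canonical complete problem \textsc{Sat-Unsat} accepts $(\varphi, \psi)$ iff $\varphi \in \textsc{Sat}$ and $\psi \in \textsc{Unsat}$.

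For the upper bound, I would decompose $\conf \in \conflicts{\Dmc, \constraints}$ into two complementary subproblems: (a) every database $\Imc$ with $\Imc \models \conf$ violates $\constraints$; and (b) for every $\lambda \in \conf$, some database $\Imc_\lambda$ satisfies both $\conf \setminus \{\lambda\}$ and $\constraints$. Using Proposition~\ref{prop:defconflicts}(ii), condition (a) reduces to the propositional entailment $\bigwedge_{\lambda \in \conf} \lambda \models \bigvee_{\varphi \rightarrow \bot \in gr_\Dmc(\constraints)} \varphi$, which is in $\conp$: its negation is witnessed by a polynomially-sized assignment over $\litset$ extending $\conf$ that falsifies every disjunct. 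Condition (b) is in $\np$: the witness lists, for each $\lambda \in \conf$, a polynomially-sized assignment describing a $\constraints$-satisfying extension of $\conf \setminus \{\lambda\}$. Hence conflict-membership lies in $BH_2$.

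For the lower bound, I would reduce from \textsc{Sat-Unsat}. Given $(\varphi, \psi)$, I would construct $\Dmc, \constraints, \conf$ so that (a) captures unsatisfiability of $\psi$ and (b) captures satisfiability of $\varphi$ over two disjoint sub-signatures. The $\psi$-gadget would encode each clause of $\psi$ as a universal constraint whose triggering literals appear in $\conf$, so that $\conf$ admits a consistent extension iff $\psi$ has a satisfying assignment; the $\varphi$-gadget would be guarded by a distinguished literal $\lambda^\star \in \conf$, arranged so that $\conf \setminus \{\lambda^\star\}$ admits a consistent extension iff $\varphi$ is satisfiable. Finally, each of the other literals of $\conf$ would be a \emph{release} literal whose removal yields a canonical consistent extension (obtained by turning off the clause-enforcing constraint it guards), trivializing minimality for all removals except $\lambda^\star$.

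The main obstacle will be the minimality side of the hardness reduction: since $\conf$ must be minimal, the gadget must isolate the nontrivial check -- satisfiability of $\varphi$ -- to the single removal of $\lambda^\star$, while all other removals must admit consistent extensions. This forces a careful separation between the two sub-gadgets (disjoint predicates, sharing only $\lambda^\star$) together with dedicated release literals for each clause-enforcing constraint on the $\psi$-side, and the bulk of the proof will consist of verifying that these two properties are simultaneously achieved. A secondary technicality is ensuring that $\conf \subseteq \litset$, i.e., that the positive literals of $\conf$ are facts of $\Dmc$ and the negative ones are not; this is handled by choosing $\Dmc$ to contain exactly the positive literals of $\conf$. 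Combining both bounds will yield $BH_2$-completeness.
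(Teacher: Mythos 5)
Your upper bound is essentially the paper's argument: split the definition into the \conp\ condition ``every database satisfying $\conf$ violates $\constraints$'' and the \np\ condition ``for each $\lambda\in\conf$ some database satisfies $\conf\setminus\{\lambda\}$ and $\constraints$'' (single-element removals suffice by monotonicity, and witnesses can be restricted to $\factset$, hence are polynomial). That half is fine.

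The lower bound is where there is a genuine gap. The paper does not reduce from \textsc{Sat-Unsat}; it reduces from the $BH_2$-complete problem of deciding whether a 3CNF is a \emph{minimal unsatisfiable subset} (Liberatore), and this choice is not cosmetic. In any encoding of a CNF $\psi$ by universal constraints under data complexity, the constraints are fixed and the clauses must be carried by facts of $\Dmc$; since a conflict must force a violation in \emph{every} database satisfying it (databases that need not contain any fact of $\Dmc$ outside $\conf$), those clause facts must all lie in $\conf$. Minimality of $\conf$ then requires that for \emph{each} clause fact $F_j$, the set $\conf\setminus\{F_j\}$ admit a consistent extension, i.e.\ that $\psi$ minus clause $j$ be satisfiable --- a property of $\psi$ that a \textsc{Sat-Unsat} instance does not control. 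Your proposed ``release literals'' do not escape this: deactivating the one clause-enforcing constraint guarded by the removed literal is semantically the same as deleting that clause, and the remaining clauses of an unsatisfiable $\psi$ may still be jointly unsatisfiable, in which case $\conf$ fails minimality and is rejected even when $\psi$ is unsatisfiable and $\varphi$ is satisfiable. A single global guard does not help either, since the per-clause facts still sit in $\conf$ and each of their removals must be individually extendable. There is also a secondary wrinkle you gloss over: when $\psi$ is unsatisfiable, $\conf\setminus\{\lambda^\star\}$ still contains the entire $\psi$-encoding, so for its consistent extension to exist at all, $\lambda^\star$ must simultaneously guard the whole $\psi$-gadget (positively) while its complement triggers the $\varphi$-gadget. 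The clean way out is exactly the paper's: take the source problem to be MUS, whose definition (globally unsatisfiable, every proper subformula satisfiable) lines up literal-for-literal with the two halves of the conflict definition, so that the clause facts of $\Dmc_\varphi$ themselves form the candidate conflict and minimality is the MUS condition rather than an uncontrolled side effect.
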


With a more careful approach, we can 
show that C-repair checking does belong to $\Sigma^p_2$. 
The exact complexity is open. 

\begin{restatable}{theorem}{ThCompletionRepairChecking}\label{th:completion-repair-checking}
C-repair checking is \conp-hard and in $\Sigma^p_2$ w.r.t.\ data complexity. 
\end{restatable}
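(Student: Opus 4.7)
The plan has two components: a reduction from $\Delta$-repair checking for the \conp lower bound, and a guess-and-verify algorithm over total linear orders on $\litset$ for the \sigmaptwo upper bound.

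For the lower bound, I would take $\succ = \emptyset$ and show $\deltacreps{\Dmc^\constraints_\emptyset} = \deltareps{\Dmc,\constraints}$, which reduces $\Delta$-repair checking (whose \conp-hardness is inherited from the argument used for P/G-repair checking) to C-repair checking. The inclusion $\subseteq$ is immediate. For $\supseteq$, given $\Rmc \in \deltareps{\Dmc,\constraints}$, Proposition~\ref{prop:characterizations-repairs-comp}(2) says $\comp{\Dmc}{\Rmc}$ is a maximal conflict-free subset of $\litset$. I would then pick any total ordering $\succ'$ of the conflict literals that lists those of $\comp{\Dmc}{\Rmc}$ before the remainder: the greedy procedure driven by $\succ'$ adds exactly these literals (the prefix is conflict-free, and each subsequent literal would complete a conflict by maximality), and by Proposition~\ref{prop:categoricity} this unique output is the globally-optimal repair of $\Dmc^\constraints_{\succ'}$, so $\Rmc \in \deltacreps{\Dmc^\constraints_\emptyset}$.

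For the upper bound, instead of guessing a completion $\succ'$ directly, I would guess a total linear order $\pi$ on $\litset$ and verify that the greedy procedure driven by $\pi$ produces $\comp{\Dmc}{\Rmc}$. Any such $\pi$ extending $\succ$ induces a total completion $\succ'$ of $\succ$ as its restriction to $\bigcup_{\conf \in \conflicts{\Dmc, \constraints}} \conf$, and by Proposition~\ref{prop:categoricity} the greedy output is the unique globally-optimal repair of $\Dmc^\constraints_{\succ'}$; conversely, every C-repair arises from some such $\pi$ by linearly extending its witnessing completion. After polynomial-time checks that $\pi$ extends $\succ$ and $\Rmc \models \constraints$ (the latter ensures $\comp{\Dmc}{\Rmc}$ is conflict-free, so no element of $\comp{\Dmc}{\Rmc}$ is ever rejected by the greedy), the remaining task is: for every $\lambda \in \litset \setminus \comp{\Dmc}{\Rmc}$, verify that $S_\lambda \cup \{\lambda\}$ contains some conflict, where $S_\lambda$ is the prefix of $\comp{\Dmc}{\Rmc}$ appearing before $\lambda$ in $\pi$. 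Equivalently, no database agreeing with $S_\lambda \cup \{\lambda\}$ satisfies $\constraints$. Since universal constraints are preserved under $\Imc \mapsto \Imc \cap \factset$ (positive and negated atoms over active-domain constants retain their truth under this restriction), it suffices to quantify over $\Imc \subseteq \factset$, which yields a \conp check. Combined with the existential guess of $\pi$, this places C-repair checking in \sigmaptwo.

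The main obstacle I anticipate is precisely what the detour through $\pi$ avoids: conflict recognition is already $BH_2$-hard by Lemma~\ref{lem:conflict-checking}. A direct guess-and-verify for a completion $\succ'$ would need to certify that $\succ'$ relates all and only pairs of literals co-occurring in some conflict; the ``all'' direction requires a \conp check for each unrelated pair, and its interaction with the rest of the verification does not obviously stay within \sigmaptwo. Working with a total order on all of $\litset$ bypasses the question entirely: the greedy ever only asks whether $S_\lambda \cup \{\lambda\}$ forces a constraint violation, a single \conp predicate that never requires conflicts to be enumerated or co-occurrences to be decided in isolation.
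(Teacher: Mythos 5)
Your proposal is correct, and it follows the same overall skeleton as the paper's proof (lower bound inherited from $\Delta$-repair checking via the empty priority; upper bound by guessing a total order on $\litset$ and certifying, for each rejected literal $\lambda$, that the earlier-accepted literals together with $\lambda$ force a violation). The interesting divergence is in how that per-literal certificate is verified. The paper guesses, for each $\lambda\in\litset\setminus\comp{\Dmc}{\Rmc}$, an explicit set $\conf_\lambda$ and then checks $\conf_\lambda\in\conflicts{\Dmc,\constraints}$, which by Lemma~\ref{lem:conflict-checking} is $BH_2$-complete; the polynomially many oracle calls put the verification in \deltaptwo, and the whole procedure in \sigmaptwo. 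You instead observe that the greedy procedure never needs to identify a minimal conflict: it only needs to know whether $S_\lambda\cup\{\lambda\}$ \emph{contains} some conflict, which (by the definition of conflicts, plus the restriction to $\Imc\cap\factset$ that you correctly justify) is a single \conp{} predicate. This sidesteps the minimality half of conflict recognition and makes Lemma~\ref{lem:conflict-checking} unnecessary for the upper bound --- a genuine simplification of the verification, though it does not improve the \sigmaptwo{} bound. Two places where you are terse but the claims are sound: the statement that the greedy output is \emph{the} globally-optimal repair of the induced completion needs the paper's Lemma~\ref{existenceCompletionOpti} (that the greedy output is completion-optimal) in addition to Proposition~\ref{prop:categoricity}; and the converse direction, that every completion-optimal $\Rmc$ is the greedy output of some linear extension $\pi$ of a witnessing completion, is exactly the nontrivial soundness step the paper discharges with Lemma~\ref{lem:pareto-improvement-if-one-beat-all} (if some $S_\lambda\cup\{\lambda\}$ contained no conflict, $\Rmc$ would admit a Pareto improvement w.r.t.\ that completion). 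With those two citations filled in, your argument is complete.
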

\begin{proof}[Proof Sketch]
We use a non-deterministic version of the greedy procedure sketched in Section \ref{subsec:optimalrepairs}: to decide if 
$\Rmc \in \deltacreps{\Dmc^\constraints_\succ}$, we guess the order in which literals of $\litset$ will be considered, 
and for each $\lambda\in\litset\setminus\comp{\Dmc}{\Rmc}$, we guess a set of literals $L \subseteq \comp{\Dmc}{\Rmc}$ that
precede $\lambda$ in the order and such that $L \cup \{\lambda\}$ forms a conflict. 
\end{proof}

Leveraging our results for repair checking, 
we can establish the precise data complexity of query answering for all combinations of semantics and optimality notions: 

\begin{restatable}{theorem}{ThComplexityQueryAnswering}\label{th:complexityQueryAnswering}
Query answering under X-brave (resp.\ X-CQA and X-intersection) semantics is 
\sigmaptwo-complete (resp.\ \piptwo-complete) in data complexity, for X $\in \{P,G, C\}$.  
\end{restatable}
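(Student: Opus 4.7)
The plan is to establish matching upper and lower bounds, leveraging the repair-checking complexities obtained in the previous theorems. For the X-brave upper bound, I would guess a candidate repair $\Rmc\subseteq\factset$ and verify that (a)~$\Rmc\in\deltaxreps{\Dmc^\constraints_\succ}$ and (b)~$\Rmc\models q(\ans)$. Check (a) lies in \conp for $X\in\{P,G\}$ and in \sigmaptwo for $X=C$, while (b) is polynomial in data complexity; since $\exists\cdot\sigmaptwo=\sigmaptwo$, the whole procedure is in \sigmaptwo. The X-CQA upper bound follows by complementation: the non-answer case asks for an X-repair $\Rmc$ with $\Rmc\not\models q(\ans)$, whose evaluation is polynomial, so the same argument places the complement in \sigmaptwo and hence X-CQA in \piptwo.

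The X-intersection upper bound needs a little more care to stay in \piptwo. The key reformulation is that, for a conjunctive query, $\Dmc^\constraints_\succ\iarmodels{X} q(\ans)$ holds iff there exists a ground homomorphic image $T=h(q(\ans))$ of the query, with constants from $\domain{\Dmc}$, such that $T\subseteq\Rmc$ for every $\Rmc\in\deltaxreps{\Dmc^\constraints_\succ}$. Since $q$ is fixed, there are only polynomially many candidate $T$'s. For a single $T$, the test ``$T$ is contained in every X-repair'' is the complement of ``there exists an X-repair $\Rmc$ with $T\not\subseteq\Rmc$'', which combines the X-repair check with a polynomial containment test; this existential is in \sigmaptwo (absorbing $\exists\cdot\sigmaptwo=\sigmaptwo$ when $X=C$), so the positive test is in \piptwo. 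Taking the disjunction of these tests over the polynomially many $T$'s keeps us in \piptwo by closure of \piptwo under polynomial-size disjunction, via the standard equivalence $\bigvee_i \forall x_i\exists y_i P_i(x_i,y_i)\equiv \forall \vec{x}\exists \vec{y}\bigvee_i P_i(x_i,y_i)$.

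For the lower bounds, I would inherit hardness from the non-prioritized setting. When $\succ=\emptyset$, no improvement is ever possible, so $\deltapreps{\Dmc^\constraints_\succ}=\deltagreps{\Dmc^\constraints_\succ}=\deltacreps{\Dmc^\constraints_\succ}=\deltareps{\Dmc,\constraints}$ and the three semantics collapse to standard brave, CQA, and intersection with $\Delta$-repairs w.r.t.\ universal constraints. The \sigmaptwo-hardness of brave and \piptwo-hardness of CQA in this setting are the classical results of \cite{DBLP:journals/is/StaworkoC10,DBLP:conf/icdt/ArmingPS16}, so they transfer directly. For X-intersection, restricting to a single-atom query makes intersection and CQA coincide, so the \piptwo-hardness of atomic CQA lifts to CQ-intersection, yielding the desired lower bound.

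The main obstacle will be the X-intersection upper bound, specifically the argument that a polynomial disjunction of \piptwo tests stays in \piptwo; without this closure property, the naive $\exists h\cdot\piptwo$ composition would only place the problem in \sigmapthree. A secondary subtlety is that C-repair checking is only known to lie in \sigmaptwo rather than \conp, so the bounds for $X=C$ must rely on the absorption identities $\exists\cdot\sigmaptwo=\sigmaptwo$ and its dual inside $\piptwo$, rather than on an elementary guess-and-check against a \conp oracle.
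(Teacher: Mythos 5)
Your proposal is correct and follows essentially the same route as the paper: guess-and-check upper bounds built on the X-repair-checking complexities (with the certificate for $X=C$ absorbed into the existential guess), the reduction of X-intersection to polynomially many ground homomorphic images of $q$ each tested against all X-repairs, and lower bounds inherited from the unprioritized setting via an empty priority relation, using a ground atomic query so that CQA and intersection coincide. The only small divergence is that the paper does not treat \sigmaptwo-hardness of brave semantics for $\Delta$-repairs as an off-the-shelf result; it extracts it from the \piptwo-hardness construction for S-CQA, whose query is a fact $\bar r$ whose absence from a repair is equivalent to the presence of a designated fact $r$, so brave entailment of $r$ is the complement of CQA entailment of $\bar r$ — a one-line argument you would need to add if the brave bound is not directly citable.
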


The lower bounds that are higher for universal constraints than denial constraints 
involve databases whose conflicts are difficult to compute. 
This is no coincidence, 
as we show that if the set of conflicts are available, the complexity drops: 

\begin{restatable}{theorem}{ComplexityWithConflictsGiven}\label{thm:conflictsgiven}
If $\conflicts{\Dmc,\constraints}$ is 
given and considered as part of the input, then all complexity results for denial constraints listed in Table \ref{tab:complexity}
hold also for universal constraints. 
\end{restatable}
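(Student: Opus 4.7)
The proof proceeds by a polynomial-time reduction to the denial-constraint setting. By Proposition~\ref{prop:reduction-UC-denials}, given $\Dmc$, $\constraints$, and the explicit set $\conflicts{\Dmc,\constraints}$, we can construct in polynomial time a database $\Dmc_d$ and a set of ground denial constraints $\constraints_{d,\Dmc}$ such that $\conflicts{\Dmc_d,\constraints_{d,\Dmc}}=\{\tofacts(\conf)\mid\conf\in\conflicts{\Dmc,\constraints}\}$ and $\deltareps{\Dmc_d,\constraints_{d,\Dmc}}=\{\tofacts(\comp{\Dmc}{\Rmc})\mid\Rmc\in\deltareps{\Dmc,\constraints}\}$. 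The priority relation $\succ$ transfers via $\tofacts$ into a priority relation $\succ_d$ on the facts of $\Dmc_d$ whose conflict hypergraph coincides with that of $\succ$.

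Since each $X$-optimal repair notion, for $X\in\{S,P,G,C\}$, is determined entirely by the conflict hypergraph and the priority relation, the translation preserves optimal repairs: $\Rmc$ is an $X$-optimal repair of $\Dmc^\constraints_\succ$ iff $\tofacts(\comp{\Dmc}{\Rmc})$ is an $X$-optimal repair of the translated prioritized database $(\Dmc_d,\constraints_{d,\Dmc},\succ_d)$. Because the input candidate repair $\Rmc$ and candidate answer $\ans$ are polynomial in size, so are their translations. For query answering, we rewrite each ground atom $P(\vec{t})$ obtained by grounding the fixed query $q$ over $\domain{\Dmc}$ into an equivalent test on the translated $\subseteq$-repair: either $P(\vec{t})\in\Dmc$ and $P(\vec{t})$ is retained, or $P(\vec{t})\not\in\Dmc$ and $\widetilde{P}(\vec{t})$ is removed. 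Since $q$ is fixed, the resulting disjunction of tests is polynomial in the input.

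The upper bounds of the denial-constraint entries of Table~\ref{tab:complexity} then transfer, provided these bounds continue to hold when the (ground) denial constraints are supplied as part of the input. A case-by-case inspection of the standard algorithms confirms this: each is driven by the conflict hypergraph, whose size is polynomial in the input, and none exploits a fixed arity of the constraints. The matching lower bounds are inherited directly: denial constraints form a subclass of universal constraints for which $\conflicts{\Dmc,\constraints}$ is computable in polynomial time from $\Dmc$ alone, so every hardness result for denial constraints in Table~\ref{tab:complexity} transfers verbatim to the universal-constraint setting with given conflicts.

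The main obstacle is verifying uniformly, for each cell of the table, that the underlying denial-constraint algorithm remains within its complexity class when the denial constraint set is polynomially sized rather than fixed; this is a routine but necessary bookkeeping step. A secondary subtlety concerns the query rewriting, which must correctly encode the fact that a literal added in a $\Delta$-repair is witnessed, in the denial-constraint translation, by the \emph{absence} of its $\widetilde{P}$ marker rather than by its presence.
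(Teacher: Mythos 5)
Your proposal is correct and follows essentially the same route as the paper: translate via Proposition~\ref{prop:reduction-UC-denials} to a ground denial-constraint instance with the induced priority relation $\succ_d$, verify that the $X$-optimal repairs correspond under $\tofacts$, and inherit the denial-constraint upper bounds, while the lower bounds transfer because denial constraints are universal constraints whose conflicts are polynomial-time computable and can thus be assumed given. The only minor differences are that the paper verifies the correspondence of Pareto and global improvements explicitly (including the step of extending a conflict-free literal set to a maximal one when pulling an improvement back to the universal side), whereas you assert it at the level of the conflict hypergraph, and that for query answering the paper simply evaluates $q$ on the guessed repair of the original instance rather than rewriting $q$ over the translated schema.
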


The lower complexities apply in particular to sets of constraints whose conflicts have bounded size, such as universal constraints with
at most two relational atoms. Unfortunately, we show that it is impossible in general to determine whether a given set of constraints has bounded conflicts:

\begin{restatable}{theorem}{BoundedConflictsUndec}\label{th:boundedconflictsUndec}
Given a set of universal constraints $\constraints$, it is undecidable to determine whether there exists $k \in \mathbb{N}$ such that for every database $\Dmc$, $\max_{\conf\in\conflicts{\Dmc,\constraints}}(|\conf|)\leq k$. 
\end{restatable}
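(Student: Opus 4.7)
The plan is to reduce from the uniform boundedness problem for Datalog, a classical undecidable problem (Gaifman, Mairson, Sagiv and Vardi, 1993). Given a Datalog program $P$ with a distinguished $0$-ary IDB goal $G$ appearing only in rule heads, I construct a set of universal constraints $\constraints_P$ by translating each Datalog rule $h \leftarrow b_1 \wedge \dots \wedge b_n$ into the universal constraint $b_1 \wedge \dots \wedge b_n \rightarrow h$ and adding the denial constraint $G \rightarrow \bot$. The goal is to show that $\constraints_P$ has uniformly bounded conflict size iff $P$ is uniformly bounded, which together with the undecidability of the latter yields the theorem.

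The technical heart is a characterization of the conflicts of $\constraints_P$ via Proposition~\ref{prop:defconflicts}(2). After rewriting each constraint into $\varphi \rightarrow \bot$ form, the relevant propositional formula on the ground Herbrand base of $\Dmc$ is
$\Phi_\Dmc = \bigvee_{r \in gr_\Dmc(P)} (body_r \wedge \neg head_r) \,\vee\, G$.
By a case analysis on a candidate prime implicant $L = L^+ \cup \{\neg\alpha_1, \dots, \neg\alpha_k\}$, distinguishing whether the Datalog closure of $L^+$ under $P$ meets the set $\{\alpha_1, \dots, \alpha_k\}$, I aim to show that the prime implicants of $\Phi_\Dmc$ fall into exactly two classes: (A) $L^+ \cup \{\neg\alpha\}$ where $L^+$ is a $\subseteq$-minimal set of atoms from which $P$ derives $\alpha$; or (B) $L^+$ where $L^+$ is a $\subseteq$-minimal set of atoms from which $P$ derives $G$. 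Intersecting with $\litset$ then identifies exactly which prime implicants are conflicts of $\Dmc$ (those with $L^+ \subseteq \Dmc$ and, in case (A), $\alpha \in \factset \setminus \Dmc$).

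It then suffices to show that uniformly bounded minimal $P$-derivations (over all databases and all derivable atoms) is equivalent to uniform Datalog boundedness of $P$. The forward direction is routine: if $P$ reaches fixpoint in $\leq k$ iterations on every database, then every atom admits a derivation tree of depth $\leq k$ with at most $b^k$ leaves, where $b$ bounds the rule body size. For the converse, if every derivable atom admits a proof from $\leq K$ facts, then the number of fixpoint iterations needed on any database is bounded by a function of $K$, since the Herbrand base determined by $\leq K$ facts is polynomially bounded in $K$ (so the fixpoint from any minimal deriving set is reached in boundedly many steps, and this can be lifted to the full database). I expect the main obstacle to be the prime-implicant characterization, in particular ruling out prime implicants with multiple negative literals and carefully verifying minimality in the remaining cases, which relies on the monotonicity of Datalog closure.
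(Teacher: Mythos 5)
Your reduction is, at its core, the same as the paper's: reduce from uniform Datalog boundedness, characterize the conflicts of the translated constraints as a minimal derivation support together with the negated derived atom via the prime-implicant view of Proposition~\ref{prop:defconflicts}, and relate bounded supports to bounded fixpoint iteration through proof trees (depth $\leq k$ gives at most $b^k$ leaves in one direction; a support of size $\leq K$ generates a Herbrand base of size polynomial in $K$, hence non-recursive proof trees of bounded height, in the other). Your closure-based case analysis for why prime implicants carry at most one negative literal is exactly the right justification for the step the paper states tersely, and the ``lifting to the full database'' works by monotonicity of the immediate consequence operator.

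The one genuine problem is the extra goal predicate $G$ and the denial constraint $G \rightarrow \bot$: they are not only unnecessary but can break the backward direction of your equivalence. With $G \rightarrow \bot$ present, a set $L^+ \cup \{\neg\alpha\}$ with $L^+$ a minimal support for $\alpha$ is a \emph{prime} implicant only if no subset of $L^+$ already derives $G$; otherwise it is strictly subsumed by a small type-(B) implicant and never becomes a conflict. Concretely, take $P = \{\, G \leftarrow E(x,y),\ T(x,y) \leftarrow E(x,y),\ T(x,z) \leftarrow T(x,y) \wedge E(y,z) \,\}$, where $G$ occurs only in a head: $P$ is not uniformly bounded (it computes transitive closure), yet every singleton $\{E(a,b)\}$ is already a conflict, so no conflict properly contains an $E$-fact and all conflicts have size $1$; hence ``bounded conflicts'' does not imply ``$P$ uniformly bounded'' for this input, and the reduction as formulated is unsound. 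The fix is simply to drop $G$ and $G \rightarrow \bot$ and use the rules alone as the constraint set, which is what the paper does: then purely positive implicants cannot exist (the Datalog closure of any positive set is a countermodel), every prime implicant has exactly one negative literal, and every unboundedly large minimal support yields an unboundedly large conflict over the database consisting of that support.
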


\section{Links with Active Integrity Constraints}\label{sec:aics}

Active integrity constraints define which update operations are allowed to solve a constraint violation  \cite{DBLP:conf/ppdp/FlescaGZ04,DBLP:conf/iclp/CaropreseGSZ06,DBLP:journals/tkde/CaropreseGZ09}, 
in the same spririt that prioritized databases express preferred ways of solving conflicts. 
This section investigates how these two frameworks relate.

\subsection{Preliminaries on Active Integrity Constraints}\label{subsec:aics}
We briefly recall the basics of 
active integrity constraints, directing 
readers to \cite{DBLP:journals/ai/BogaertsC18} for a good overview of the area.

\subsubsection*{Update actions} 
An \emph{update atom} is of the form $+P(\vec{x})$ or $-P(\vec{x})$ where $P(\vec{x})$ is a relational atom. 
We use $\fix$ to map relational literals to the corresponding update atoms: $\fix(P(\vec{x}))= -P(\vec{x})$
and $\fix(\neg P(\vec{x}))= +P(\vec{x})$. 
An \emph{update action} is a ground update atom, \ie is of the form $-\alpha$ or $+\alpha$ with $\alpha$ a fact. 
A set of update actions $\Umc$ is \emph{consistent} if $\Umc$ does not contain both $-\alpha$ and $+\alpha$ for some fact $\alpha$. 
The result of applying a consistent set of update actions $\Umc$ on a database  $\Dmc$ is $\Dmc\circ\Umc:=\Dmc\setminus\{\alpha\mid -\alpha\in \Umc\}\cup\{\alpha\mid+\alpha\in\Umc\}$.

\subsubsection*{Active integrity constraints} 
An \emph{active integrity constraint} (AIC) takes the form $r=\ell_1\wedge\dots\wedge \ell_n\rightarrow \{A_1, \dots, A_k\}$, where
$\aicbody(r)=\ell_1\wedge\dots\wedge \ell_n$ is such that $\tau_r := \aicbody(r) \rightarrow \bot$ is a universal constraint,
 $\aicup(r)=\{A_1, \dots, A_k\}$ is non-empty, and 
every $A_j$ is equal to $\fix(\ell_i)$ for some 
$\ell_i$. 
We use 
$\aiclits(r)$ 
for the set of literals appearing in $\aicbody(r)$,
and say that $\ell \in \aiclits(r)$ 
is  \emph{non-updatable} if $\fix(\ell) \not \in \aicup(r)$. 
A database 
$\Dmc$ \emph{satisfies} $r$, denoted $\Dmc\models r$, if it satisfies $\tau_r$. It satisfies a set of AICs $\eta$, denoted $\Dmc\models\eta$, if $\Dmc\models r$ for every $r\in\eta$. 
A set of AICs is \emph{consistent} if there exists a database $\Dmc$ such that $\Dmc\models\eta$. 

A \emph{ground} AIC is an AIC that contains no variables. 
The set $gr_\Dmc(r)$ 
contains all  
ground AICs obtained from $r$ by (i) replacing variables by constants from $\domain{\Dmc}$, 
(ii) removing all true $c \neq d$ atoms, and (iii) removing all ground AICs with an atom $c \neq c$. 
We let $gr_\Dmc(\eta):=\bigcup_{r\in\eta}gr_\Dmc(r)$, and 
observe that $\Dmc\models r$ iff $\Dmc\models r_g$ for every $r_g\in gr_\Dmc(r)$. 

An AIC is called \emph{normal} if $|\aicup(r)|=1$. 
The \emph{normalization} of an AIC 
$r$ is the set of AICs $N(r)= \{\aicbody(r) \rightarrow \{A\} \mid A \in \aicup(r)\}$. 
The normalization of a set of AICs $\eta$ is $N(\eta)=\bigcup_{r\in\eta}N(r)$. Note that $gr_\Dmc(N(\eta))=N(gr_\Dmc(\eta))$.

\subsubsection*{Repair updates} 
A \emph{repair update (r-update)}\footnote{Repair updates are usually called repairs in the AIC literature, we use this term to avoid confusion with the other repair notions.} of a database  $\Dmc$ \wrt a set of AICs $\eta$ is a consistent subset-minimal set of update actions $\Umc$ such that $\Dmc\circ\Umc\models \eta$. We denote the set of r-updates of $\Dmc$ \wrt $\eta$ by $\ups{\Dmc,\eta}$. 
It is easy to check that $\{\Dmc\circ\Umc\mid\Umc\in\ups{\Dmc,\eta}\}=\deltareps{\Dmc,\constraints_\eta}$ where $\constraints_\eta$ is the set of universal constraints that correspond to AICs in $\eta$. 

To take into account the restrictions on the possible update actions expressed by the AICs, several classes of r-updates have been defined. 
The first one, \emph{founded} r-updates \cite{DBLP:conf/iclp/CaropreseGSZ06}, was criticized for exhibiting circularity of support, leading to the introduction of more restrictive \emph{justified} \cite{DBLP:journals/tplp/CaropreseT11}, \emph{well-founded} \cite{DBLP:conf/tase/Cruz-FilipeGEN13}, and \emph{grounded} r-updates \cite{DBLP:conf/iclp/Cruz-Filipe16}. 
The latter were motivated by arguably unexpected behaviors of justified and well-founded r-updates. 
In particular, justified r-updates are criticized for being too complicated and for excluding some r-updates that seem reasonable. 
Moreover, they are sensitive to normalization, unlike founded, well-founded and grounded r-updates.  

\begin{definition}
An r-update $\Umc$ of $\Dmc$ \wrt $\eta$ is:
\begin{itemize}
\item \emph{founded} if for every $A\in\Umc$, there exists $r\in gr_\Dmc(\eta)$ such that $A \in \aicup(r)$ 
and $\Dmc\circ\Umc\setminus\{A\}\not\models r$. 
\item \emph{well-founded} if there exists a sequence of actions $A_1,\dots,A_n$ such that $\Umc=\{A_1,\dots,A_n\}$, and for every $1\leq i\leq n$, there exists $r_i\in gr_\Dmc(\eta)$ such that $A_i \in \aicup(r_i)$  
and $\Dmc\circ\{A_1,\dots, A_{i-1}\}\not\models r_i$. 
\item \emph{grounded} if for every $\Vmc\subsetneq\Umc$, there exists $r\in gr_\Dmc(N(\eta))$ such that $\Dmc\circ\Vmc\not\models r$ and
$\aicup(r) \subseteq \Umc\setminus\Vmc$. 
\item \emph{justified} if $ ne(\Dmc,\Dmc\circ\Umc)\cup\Umc$ is a minimal set of update actions closed under $\eta$ that contains $ne(\Dmc,\Dmc\circ\Umc)$ where 
\begin{itemize}
\item $ne(\Dmc,\Dmc\circ\Umc)=\{+\alpha\mid \alpha\in\Dmc\cap(\Dmc\circ\Umc)\}\cup\{-\alpha \mid \alpha\notin\Dmc\cup (\Dmc\circ\Umc), \alpha\in\factset\}$ 
(set of \emph{no-effect actions}) 
\item 
$\Umc$ is \emph{closed under $\eta$} if for every $r\in gr_\Dmc(\eta)$, if $\Umc$ satisfies all the non-updatable literals of $r$, 
 then $\Umc$ contains an update action from $r$. 
\end{itemize}
\end{itemize}
We denote by $\foundups{\Dmc,\eta}$, $\wellfoundups{\Dmc,\eta}$, $\groundups{\Dmc,\eta}$ and $\justifups{\Dmc,\eta}$ respectively the sets of founded, well-founded, grounded and justified r-updates of $\Dmc$ \wrt $\eta$ and let $\deltaxreps{\Dmc,\eta}=\{\Dmc\circ\Umc\mid\Umc\in\xups{\Dmc,\eta}\}$ be the set of corresponding repairs. 
\end{definition}

\citeauthor{DBLP:journals/eswa/CalauttiCGMTZ21} \shortcite{DBLP:journals/eswa/CalauttiCGMTZ21} recently redefined founded r-updates. In fact, we show that their definition coincides with grounded r-updates, yielding the following characterization.

\begin{restatable}{proposition}{NewFoundedGrounded}\label{prop:new-founded-grounded}
For every $\Umc \in \ups{\Dmc,\eta}$, 
$\Umc$ is grounded iff 
$\Umc \in \ups{\Dmc,\eta[\Umc]}$, 
where $\eta[\Umc]$ is the set of AICs 
obtained from $gr_\Dmc(\eta)$ by deleting update actions not occurring in $\Umc$ and AICs whose update actions have all been deleted. 
\end{restatable}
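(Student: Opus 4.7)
The plan is to prove both directions by exploiting a single key observation: for any consistent set of update actions $\Vmc$ and any ground AIC $r$, if $\aicup(r) \cap \Vmc \neq \emptyset$, then $\Dmc \circ \Vmc \models r$. This holds because any $A = \fix(\ell) \in \aicup(r) \cap \Vmc$ forces the literal $\ell$, which appears in $\aicbody(r)$, to be falsified in $\Dmc \circ \Vmc$ (if $\ell = P(\vec{c})$, then $-P(\vec{c}) \in \Vmc$ and consistency excludes $+P(\vec{c})$; symmetrically for negative $\ell$). Hence the body is unsatisfied and the AIC trivially holds. I will also freely use the identity $gr_\Dmc(N(\eta)) = N(gr_\Dmc(\eta))$ to move between ground normalized AICs and pairs (ground AIC in $gr_\Dmc(\eta)$, chosen action from its update set).

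For the forward direction, assume $\Umc$ is grounded. Consistency of $\Umc$ is inherited from $\Umc \in \ups{\Dmc,\eta}$. For $\Dmc \circ \Umc \models \eta[\Umc]$, I note that every $r'' \in \eta[\Umc]$ has the same body as some $r \in gr_\Dmc(\eta)$, and $\Dmc \circ \Umc \models \eta$ implies $\Dmc \circ \Umc \models gr_\Dmc(\eta)$. For subset-minimality, given $\Vmc \subsetneq \Umc$, groundedness supplies a normalized AIC $r \in gr_\Dmc(N(\eta))$ with $\aicup(r) = \{A\}$, $A \in \Umc \setminus \Vmc$, and $\Dmc \circ \Vmc \not\models r$. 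Let $r_g \in gr_\Dmc(\eta)$ be the AIC that $r$ was normalized from, so $\aicbody(r_g) = \aicbody(r)$ and $A \in \aicup(r_g)$. Since $A \in \Umc$, the construction of $\eta[\Umc]$ yields an AIC $r'' \in \eta[\Umc]$ with $\aicbody(r'') = \aicbody(r_g)$ and $\aicup(r'') = \aicup(r_g) \cap \Umc$, and this $r''$ is violated by $\Dmc \circ \Vmc$, establishing $\Dmc \circ \Vmc \not\models \eta[\Umc]$.

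For the backward direction, assume $\Umc \in \ups{\Dmc,\eta[\Umc]}$ and fix $\Vmc \subsetneq \Umc$. Subset-minimality in $\ups{\Dmc,\eta[\Umc]}$ yields some $r'' \in \eta[\Umc]$ with $\Dmc \circ \Vmc \not\models r''$. Since $\Vmc$ is consistent (as a subset of $\Umc$), the key observation applied contrapositively forces $\aicup(r'') \cap \Vmc = \emptyset$. Combined with $\aicup(r'') \subseteq \Umc$, which holds by the very construction of $\eta[\Umc]$, we conclude $\aicup(r'') \subseteq \Umc \setminus \Vmc$. Picking any $A \in \aicup(r'')$ and letting $r \in gr_\Dmc(N(\eta))$ be the normalized AIC with $\aicbody(r) = \aicbody(r'')$ and $\aicup(r) = \{A\}$, we obtain $\aicup(r) \subseteq \Umc \setminus \Vmc$ and $\Dmc \circ \Vmc \not\models r$, which is exactly the groundedness condition for $\Vmc$.

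The proof is primarily a matter of careful bookkeeping between $gr_\Dmc(N(\eta))$, $gr_\Dmc(\eta)$, and $\eta[\Umc]$; I expect no serious obstacle. The only delicate point is the key observation, which drives both directions and requires a case split on the sign of the literal $\ell$ with $A = \fix(\ell)$; once this is isolated as a lemma, each direction reduces to matching definitions.
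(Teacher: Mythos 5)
Your proof is correct and follows essentially the same route as the paper's: both directions rest on the same bookkeeping between $gr_\Dmc(N(\eta))$, $gr_\Dmc(\eta)$ and $\eta[\Umc]$, and your ``key observation'' is exactly the step the paper uses inline (``if $A\in\Umc'$ then $\Dmc\circ\Umc'\models r$''). The only difference is presentational: you argue both directions directly and isolate the observation as a lemma, whereas the paper phrases them as proofs by contradiction.
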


The relationships between the various kinds of repairs are represented below, where a plain arrow from $X$ to $Y$ means $X\subseteq Y$ and the dotted arrow represents an inclusion that only holds when $\eta$ is a set of normal AICs. All inclusions may be strict 
\cite{DBLP:journals/tplp/CaropreseT11,DBLP:conf/tase/Cruz-FilipeGEN13,DBLP:conf/iclp/Cruz-Filipe16}.

{\centering{
\begin{tikzpicture} 
\node [left] at (0,1.5) {$\groundreps{\Dmc,\eta}$};
\node [right] at (1,1.5) {$\foundreps{\Dmc,\eta}$};
\node [left] at (0,0.75) {$\justifreps{\Dmc,\eta}$};
\node [right] at (1,0.75) {$\wellfoundreps{\Dmc,\eta}$};

\draw[<-] (1,1.5) -- (0,1.5);
\draw[<-] (1,1.5) -- (0,0.75);
\draw[dotted, <-] (0,1.5) -- (0,0.75);
\draw[<-] (1,0.75) -- (0,1.5);
\draw[<-] (1,0.75) -- (0,0.75);
\end{tikzpicture}
}}

\subsection{From Prioritized Databases to AICs}\label{subsec:prio-to-aics}
Given a prioritized database $\Dmc^\constraints_\succ$ we define the following set of ground AICs: $\eta^\constraints_\succ=\{r_\conf \mid \conf\in\conflicts{\Dmc,\constraints}\}$ where $$ r_\conf:=\bigwedge_{\lambda\in\conf}\lambda \rightarrow\{\mi{fix}(\lambda)\mid \lambda\in\conf,\forall\mu\in\conf, \lambda\not\succ\mu\}. $$  
Intuitively, $\eta^\constraints_\succ$ expresses that conflicts of $\Dmc^\constraints_\succ$ should be fixed by modifying the least preferred literals 
according to $\succ$. 

We can prove that Pareto-optimal repairs of $\Dmc^\constraints_\succ$ coincide with several kinds of repairs of $\Dmc$ \wrt $\eta^\constraints_\succ$. 

\begin{restatable}{proposition}{ReductionPrioToAICs}\label{prop:reductionPrioToAICs}
For every prioritized database $\Dmc^\constraints_\succ$, 
$\deltapreps{\Dmc^\constraints_\succ}=\justifreps{\Dmc,\eta^\constraints_\succ}=\groundreps{\Dmc,\eta^\constraints_\succ}=\foundreps{\Dmc,\eta^\constraints_\succ}\subseteq\wellfoundreps{\Dmc,\eta^\constraints_\succ}$.
\end{restatable}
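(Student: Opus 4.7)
The plan is to establish $\deltapreps{\Dmc^\constraints_\succ}=\foundreps{\Dmc,\eta^\constraints_\succ}$ and $\deltapreps{\Dmc^\constraints_\succ}\subseteq\groundreps{\Dmc,\eta^\constraints_\succ}\cap\justifreps{\Dmc,\eta^\constraints_\succ}$; the three equalities then follow by combination with the general AIC inclusions $\groundreps{\Dmc,\eta},\justifreps{\Dmc,\eta}\subseteq\foundreps{\Dmc,\eta}$ recalled in Section~\ref{subsec:aics}, and the remaining inclusion $\foundreps{\Dmc,\eta^\constraints_\succ}\subseteq\wellfoundreps{\Dmc,\eta^\constraints_\succ}$ is just the general $\groundreps\subseteq\wellfoundreps$ applied to $\Dmc$ and $\eta^\constraints_\succ$. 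As a preliminary observation, the bodies of the AICs in $\eta^\constraints_\succ$ are exactly the conflicts of $\Dmc$ \wrt $\constraints$, so Proposition~\ref{prop:characterizations-repairs-comp} yields that a candidate repair satisfies $\eta^\constraints_\succ$ iff it satisfies $\constraints$, whence $\Umc_\Rmc=\{-\alpha\mid\alpha\in\Dmc\setminus\Rmc\}\cup\{+\alpha\mid\alpha\in\Rmc\setminus\Dmc\}$ sets up a bijection between $\deltareps{\Dmc,\constraints}$ and $\ups{\Dmc,\eta^\constraints_\succ}$ with $\Dmc\circ\Umc_\Rmc=\Rmc$.

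For $\deltapreps=\foundreps$, I would prove both directions by contradiction. Suppose $\Umc$ is founded and $\Rmc=\Dmc\circ\Umc$ admits a Pareto improvement $\Bmc$ witnessed by $\mu^*\in\comp{\Dmc}{\Bmc}\setminus\comp{\Dmc}{\Rmc}$: foundedness applied to $\fix(\mu^*)\in\Umc$ yields a conflict $\conf$ in which $\mu^*$ is $\succ$-minimal and with $\conf\setminus\{\mu^*\}\subseteq\comp{\Dmc}{\Rmc}$, so consistency of $\Bmc$ combined with $\mu^*\in\comp{\Dmc}{\Bmc}$ forces some $\lambda\in\conf\setminus\{\mu^*\}$ into $\comp{\Dmc}{\Rmc}\setminus\comp{\Dmc}{\Bmc}$; the Pareto condition then gives $\mu^*\succ\lambda$, contradicting $\succ$-minimality of $\mu^*$. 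Conversely, for $\Rmc\in\deltapreps$, maximality guarantees that for each $A=\fix(\mu^*)\in\Umc_\Rmc$ some conflict $\conf\ni\mu^*$ has $\conf\setminus\{\mu^*\}\subseteq\comp{\Dmc}{\Rmc}$; if $\mu^*$ failed to be $\succ$-minimal in every such $\conf$, picking for each such $\conf$ a $\mu_\conf\in\conf$ with $\mu^*\succ\mu_\conf$ and taking $\comp{\Dmc}{\Bmc}=(\comp{\Dmc}{\Rmc}\cup\{\mu^*\})\setminus\{\mu_\conf\mid\conf\text{ as above}\}$ would yield a consistent $\Bmc$ (any conflict left inside $\comp{\Dmc}{\Bmc}$ must contain $\mu^*$, hence be one of the chosen $\conf$'s, whose $\mu_\conf$ has been removed) that Pareto-improves $\Rmc$, contradicting Pareto-optimality.

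Once Pareto $=$ Founded is in place, each $A=\fix(\mu^*_A)\in\Umc_\Rmc$ comes with a witness conflict $\conf_A$ whose only literal outside $\comp{\Dmc}{\Rmc}$ is $\mu^*_A$ and in which $\mu^*_A$ is $\succ$-minimal. For $\deltapreps\subseteq\groundreps$ I would invoke Proposition~\ref{prop:new-founded-grounded}: $\Dmc\circ\Umc_\Rmc\models\eta^\constraints_\succ[\Umc_\Rmc]$ is inherited from $\Rmc\models\constraints$, and minimality follows because any $\Umc'\subsetneq\Umc_\Rmc$ omits some $A$, whence $\conf_A\subseteq\comp{\Dmc}{\Dmc\circ\Umc'}$ violates $r_{\conf_A}$, which remains in $\eta^\constraints_\succ[\Umc_\Rmc]$ because $A\in\aicup(r_{\conf_A})\cap\Umc_\Rmc$. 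For $\deltapreps\subseteq\justifreps$, closure of $ne(\Dmc,\Rmc)\cup\Umc_\Rmc$ under $\eta^\constraints_\succ$ follows from $\Rmc\models\constraints$ (any rule $r_\conf$ with all non-updatable literals satisfied has some $\succ$-minimal $\lambda\in\conf$ outside $\comp{\Dmc}{\Rmc}$, so $\fix(\lambda)\in\Umc_\Rmc$); for minimality, any $\Vmc$ with $ne(\Dmc,\Rmc)\subseteq\Vmc\subsetneq ne(\Dmc,\Rmc)\cup\Umc_\Rmc$ omits some $A=\fix(\mu^*_A)$, and then every non-updatable literal of $\conf_A$ lies in $\conf_A\setminus\{\mu^*_A\}\subseteq\comp{\Dmc}{\Rmc}$ (hence is satisfied via $ne(\Dmc,\Rmc)\subseteq\Vmc$), while the only $\succ$-minimal literal of $\conf_A$ whose fix action belongs to $\Umc_\Rmc$ is $\mu^*_A$ itself, so $\Vmc$ contains no update action from $r_{\conf_A}$, contradicting closure.

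The main obstacle will be the justifiedness minimality argument, where the interplay between the no-effect actions $ne(\Dmc,\Rmc)$ and the closure condition depends on the precise observation that the foundedness witness $\conf_A$ isolates $\mu^*_A$ as its \emph{unique} literal outside $\comp{\Dmc}{\Rmc}$; this isolation is exactly what prevents any alternative $\succ$-minimal literal of $\conf_A$ from restoring closure once $A$ is dropped.
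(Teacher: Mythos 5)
There is a genuine gap in your Pareto-optimal $\Rightarrow$ founded direction, at the step where you declare the constructed $\Bmc$ with $\comp{\Dmc}{\Bmc}=(\comp{\Dmc}{\Rmc}\cup\{\mu^*\})\setminus\{\mu_\conf\mid\dots\}$ to be consistent because no conflict is contained in $\comp{\Dmc}{\Bmc}$. For universal constraints this inference is invalid: conflicts are only the prime implicants of the violation formula that lie \emph{inside} $\litset$, and when $S=\comp{\Dmc}{\Bmc}$ is not maximal, the database $\restr{\Dmc}{S}$ flips every literal of $\litset\setminus S$ and may thereby trigger an implicant using literals outside $\litset$. Concretely, in Example~\ref{ex:conflicts} the set $S=\{A(a)\}$ contains no conflict, yet $\restr{\Dmc}{S}=\{A(a),C(a),D(a)\}$ violates $\tau_3$. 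Only \emph{maximal} conflict-free subsets of $\litset$ are guaranteed to correspond to consistent databases (Lemma~\ref{lem:max-no-conf}, Proposition~\ref{prop:characterizations-repairs-comp}). The repair is exactly the paper's Lemma~\ref{lem:pareto-improvement-if-one-beat-all}: extend $S$ to a maximal conflict-free $\Bmc'\supseteq S$ and take $\Bmc=\restr{\Dmc}{\Bmc'}$, which is a $\Delta$-repair and hence consistent; the Pareto-improvement condition survives because $\comp{\Dmc}{\Rmc}\setminus\Bmc'\subseteq\comp{\Dmc}{\Rmc}\setminus S\subseteq\{\mu\mid\mu^*\succ\mu\}$. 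The same conflation of ``conflict-free'' with ``consistent'' appears in your preliminary claim that a candidate repair satisfies $\eta^\constraints_\succ$ iff it satisfies $\constraints$: only the direction from $\constraints$-consistency to $\eta^\constraints_\succ$-satisfaction holds for arbitrary candidate repairs, although the bijection you draw between $\deltareps{\Dmc,\constraints}$ and $\ups{\Dmc,\eta^\constraints_\succ}$ is nonetheless correct, since it only involves maximal literal sets.

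The rest of the proposal is sound. Your founded $\Rightarrow$ Pareto-optimal argument matches the paper's, and your overall decomposition differs from the paper's in an interesting way: the paper observes that $\eta^\constraints_\succ$ is monotone and invokes the general collapse of founded, grounded and justified r-updates for monotone AIC sets (Proposition~\ref{prop:monotone-founded-justified}), proving only $\deltapreps{\Dmc^\constraints_\succ}=\foundreps{\Dmc,\eta^\constraints_\succ}$ directly, whereas you prove groundedness (via Proposition~\ref{prop:new-founded-grounded}) and justifiedness directly from the foundedness witnesses $\conf_A$, each of which isolates $\mu^*_A$ as its unique literal outside $\comp{\Dmc}{\Rmc}$. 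Both of these direct arguments check out; they are more self-contained but specific to $\eta^\constraints_\succ$, while the paper's route yields the reusable general statement about monotone (indeed, resolution-closed) sets of AICs.
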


This result is interesting not only because it provides additional evidence for the naturalness of Pareto-optimal repairs, but also because it identifies a  class of AICs for which justified, grounded, and founded r-updates coincide. The proof in fact shows that these three notions coincide for every set of ground AICs $\eta$ that is \emph{monotone}, i.e. does not contain both a fact $\alpha$ and the complementary literal $\neg \alpha$. 

We remark that the final inclusion in Proposition \ref{prop:reductionPrioToAICs} may be strict.
This is demonstrated on the next example, which suggests that 
well-founded repairs may be too permissive: 

\begin{example}\label{ex:AICsfromPrioKB-well-founded-diff}
It is possible to construct a prioritized database $\Dmc^\constraints_\succ$ where 
$\Dmc=\{\alpha,\beta,\gamma,\delta\}$ and $\eta^\constraints_\succ = \{\alpha\wedge\beta\rightarrow\{-\beta\},\  \alpha\wedge\gamma\rightarrow\{-\alpha\},\ \gamma\wedge\delta\rightarrow\{-\gamma\}\}$. 
For the AICs $\eta^\constraints_\succ$, the r-update $\{-\alpha, -\gamma\}$  is well-founded, but not founded,
as the only founded r-update is $\{-\beta,-\gamma\}$.  
We argue that $\{-\beta,-\gamma\}$ should indeed be preferred to $\{-\alpha, -\gamma\}$, since the first AIC expresses that it is better to remove $\beta$ than $\alpha$. 
\end{example}

The reduction used to show Proposition \ref{prop:reductionPrioToAICs} 
is data-dependent and requires us to create potentially exponentially 
many ground AICs, one for every conflict. 
In the case of \emph{denial constraints}, however, we can give 
an alternative data-independent reduction, provided that 
the priority relation $\succ$ is specified in the database. 
We thus assume for the next result that $P_\succ$ is a predicate in $\preds$,  
that the first attribute of each relation in $\preds\setminus\{P_\succ\}$ stores a unique fact identifier, 
and $P_\succ$ stores pairs of such identifiers. 
Then given a set 
of denial constraints $\constraints$ over $\preds\setminus\{P_\succ\}$, we 
build a set $\minnongr(\constraints)$ that is equivalent to $\constraints$ but has 
the property that the conflicts of $\Dmc$ w.r.t.\ $\Cmc$ are precisely 
the images of constraint bodies of $\minnongr(\constraints)$ on $\Dmc$. 
This can be achieved by replacing each $\varphi \rightarrow \bot \in \Cmc$ with all refinements 
obtaining by (dis)equating variables in $\varphi$ with each other, or with constants mentioned in $\Cmc$, 
then removing any subsumed constraints. 
For example, if $\Cmc = \{R(x,x) \rightarrow \bot, R(x,y) \wedge S(y) \rightarrow \bot\}$, 
then $\minnongr(\constraints)$ contains $R(x,x) \rightarrow \bot$ and 
$R(x,y) \wedge S(y) \wedge x\neq y \rightarrow \bot$, so $\{R(a,a),S(a)\}$ 
is no longer an image of a constraint body. 
We then define $\eta^\constraints$ as the set of all AICs
$$\bigl(\ell_1 \wedge \ldots \wedge \ell_n \wedge \varepsilon \wedge   \bigwedge_{\ell_j \neq \ell_i}   \neg P_\succ (id_i,id_j)\bigr)  \rightarrow \{-\ell_i\}.$$
such that $\ell_1 \wedge \ldots \wedge\ell_n \wedge \varepsilon \rightarrow \bot \in \minnongr(\constraints)$, $i \in \{1, \ldots, n\}$, 
and for every $1 \leq k \leq n$, 
$\ell_k=R(id_k,\vec{t})$ for some $R,\vec{t}$. 

\begin{restatable}{proposition}{ReductionDenialPrioToAICs}\label{prop:reductionDenialPrioToAICs}
For every set of denial constraints $\constraints$, database $\Dmc$ and priority relation $\succ$ of $\Dmc$ \wrt $\constraints$, $\deltapreps{\Dmc^\constraints_\succ}=\justifreps{\Dmc,\eta^\constraints}=\groundreps{\Dmc,\eta^\constraints}=\foundreps{\Dmc,\eta^\constraints}\subseteq\wellfoundreps{\Dmc,\eta^\constraints}$.
\end{restatable}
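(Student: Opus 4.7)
My plan is to reduce the statement to Proposition~\ref{prop:reductionPrioToAICs} by showing that, once grounded with respect to~$\Dmc$, the data-independent set $\eta^\constraints$ behaves during the repair process exactly like the (normalization of the) data-dependent set $\eta^\constraints_\succ$ from that proposition. The key observation is that since no constraint in $\constraints$ mentions $P_\succ$ and no update action in any AIC of $\eta^\constraints$ involves $P_\succ$, every r-update modifies only facts from $\preds\setminus\{P_\succ\}$. Consequently, each literal $\neg P_\succ(c_i,c_j)$ occurring in the body of some ground AIC in $gr_\Dmc(\eta^\constraints)$ is ``frozen'' throughout the process: it is true iff $P_\succ(c_i,c_j)\notin\Dmc$, equivalently $\alpha_i\not\succ\alpha_j$.

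First I would exploit the defining property of $\minnongr(\constraints)$, which ensures that the images of its constraint bodies on $\Dmc$ coincide with the conflicts of $\Dmc$ w.r.t.\ $\constraints$, to set up a bijection at the level of ground AICs. Specifically, for each conflict $\conf=\{\alpha_1,\ldots,\alpha_n\}\in\conflicts{\Dmc,\constraints}$ and each index~$i$, $gr_\Dmc(\eta^\constraints)$ contains exactly one ground AIC of the form
\[
\alpha_1\wedge\dots\wedge\alpha_n\wedge\bigwedge_{j\neq i}\neg P_\succ(id_i,id_j)\rightarrow\{-\alpha_i\}.
\]
By the frozen-literals observation, those ground AICs whose $\neg P_\succ$ conditions hold in $\Dmc$ (i.e., $\alpha_i\not\succ\alpha_j$ for every $j\neq i$) correspond one-to-one to the AICs in $N(gr_\Dmc(\eta^\constraints_\succ))$, while the remaining ones are tautologically satisfied by every database preserving the $P_\succ$ facts of $\Dmc$ and hence contribute nothing to any r-update notion.

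Using this correspondence I would show $\ups{\Dmc,\eta^\constraints}=\ups{\Dmc,\eta^\constraints_\succ}$ and derive the analogous equalities for founded, grounded, and well-founded r-updates, which are insensitive to normalization. For justified r-updates I would observe that $gr_\Dmc(\eta^\constraints)$ is \emph{monotone}: every ground AIC contains only positive $\alpha$ literals and negative $P_\succ$ literals, never a fact together with its complement. The monotonicity argument underpinning Proposition~\ref{prop:reductionPrioToAICs} then yields $\justifreps{\Dmc,\eta^\constraints}=\groundreps{\Dmc,\eta^\constraints}=\foundreps{\Dmc,\eta^\constraints}$. Combining this with $\foundreps{\Dmc,\eta^\constraints}=\foundreps{\Dmc,\eta^\constraints_\succ}$ and Proposition~\ref{prop:reductionPrioToAICs} yields the desired chain of equalities, and the inclusion in $\wellfoundreps{\Dmc,\eta^\constraints}$ follows from $\foundreps=\groundreps\subseteq\wellfoundreps$ (from the diagram of inclusions).

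The main obstacle is the careful handling of the justified notion, whose definition involves no-effect actions and closure under AICs. One must verify that the $\neg P_\succ$ literals, acting as non-updatable literals in the bodies of $gr_\Dmc(\eta^\constraints)$, are automatically ``satisfied'' by any candidate update $\Umc$ over $\preds\setminus\{P_\succ\}$ in exactly the same way they are satisfied by $\Dmc$. This would let one conclude that a set of update actions is closed under $gr_\Dmc(\eta^\constraints)$ iff it is closed under $N(gr_\Dmc(\eta^\constraints_\succ))$, so the remaining work reduces to unwinding definitions and invoking Proposition~\ref{prop:reductionPrioToAICs}.
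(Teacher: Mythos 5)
Your route differs from the paper's. The paper first notes, exactly as you do, that $\eta^\constraints$ is monotone (its negative literals all use $P_\succ$, which never occurs positively), so Proposition~\ref{prop:monotone-founded-justified} yields $\justifreps{\Dmc,\eta^\constraints}=\groundreps{\Dmc,\eta^\constraints}=\foundreps{\Dmc,\eta^\constraints}\subseteq\wellfoundreps{\Dmc,\eta^\constraints}$; but it then proves $\deltapreps{\Dmc^\constraints_\succ}=\foundreps{\Dmc,\eta^\constraints}$ directly, by two inclusions that re-run the argument of Proposition~\ref{prop:reductionPrioToAICs} from scratch, using Lemma~\ref{lem:data-indep-reduction} (conflicts are exactly the injective images in $\Dmc$ of bodies of $\minnongr(\constraints)$). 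You instead factor through Proposition~\ref{prop:reductionPrioToAICs} by matching the ``active'' groundings of $\eta^\constraints$ (positive part a conflict, $\neg P_\succ$ conditions true in $\Dmc$) with $N(\eta^\constraints_\succ)$, exploiting that the remaining groundings are never violated along a deletion-only process and that founded, well-founded and grounded r-updates are insensitive to normalization. This factorization is a legitimate and arguably more economical alternative, and the frozen-$P_\succ$ observation is the right engine for it.

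One intermediate claim is, however, false as stated: it is not true that every r-update of $\Dmc$ w.r.t.\ $\eta^\constraints$ modifies only facts over $\preds\setminus\{P_\succ\}$, and consequently $\ups{\Dmc,\eta^\constraints}\neq\ups{\Dmc,\eta^\constraints_\succ}$ in general. Plain r-updates are merely minimal consistent action sets making the result satisfy the AICs; they are not restricted to actions occurring in the heads, so the absence of $P_\succ$ update atoms in $\eta^\constraints$ does not constrain them. Since the constraint $\tau_r$ associated with $r_{\tau,i}$ contains negated $P_\succ$ atoms, a violated ground instance can be resolved by \emph{adding} a $P_\succ$ fact: with a single conflict $\{\alpha_1,\alpha_2\}$ and $\alpha_1\succ\alpha_2$, the set $\{+P_\succ(id_2,id_1)\}$ is an r-update w.r.t.\ $\eta^\constraints$ but obviously not w.r.t.\ $\eta^\constraints_\succ$. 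Your argument survives because foundedness, well-foundedness and groundedness each force every action of the update to belong to $\aicup(r)$ for some (relevantly violated) $r$, so the spurious $P_\succ$-adding r-updates are filtered out and the identification with $\eta^\constraints_\succ$ does hold for the restricted notions; but you must state and prove the correspondence at that level (or for r-updates containing no $P_\succ$-actions), not for $\ups{\Dmc,\cdot}$ itself, and then derive the plain r-update property needed in each definition from the active-grounding correspondence rather than from the false set equality.
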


This reduction could be used for example to transfer data complexity lower bounds for prioritized databases with denial constraints to the setting of AICs.

\subsection{Towards Well-Behaved AICs}\label{subsec:well-behaved}

When translating a prioritized database into AICs, we obtained monotone sets of AICs, 
for which most of the different kinds of r-update coincide. Can we generalize this idea to obtain 
larger classes of `well-behaved' sets of AICs which share this desirable behavior? This subsection explores 
this question and provides some first insights. 

We start by defining the following condition, which serves to ensure that 
 all constraints are made explicit:

\begin{definition}
We say that a set $\eta$ of ground AICs is \emph{closed under resolution} if it is consistent, 
and for every pair of AICs $r_1, r_2\in \eta$, if there exists $\alpha \in \aiclits(r_1)$ such that $\neg \alpha \in \aiclits(r_2)$,  
and $\aiclits(r_1) \cup \aiclits(r_2) \setminus \{\alpha, \neg \alpha\}$ is a consistent set of literals, 
then there exists $r_3\in\eta$ with $\aiclits(r_3)=\aiclits(r_1) \cup \aiclits(r_2) \setminus \{\alpha, \neg \alpha\}$. 
 A set of AICs $\eta$ is closed under resolution if so is $gr_\Dmc(\eta)$ for every database  $\Dmc$.
\end{definition}

The name `closure under resolution' comes from considering the clauses that correspond to the negation of the rule bodies: 
if we have AICs whose clauses are $\neg\alpha\vee \varphi$ and $\alpha\vee \psi$, then 
we should also have an AIC for their resolvent 
$\varphi\vee\psi'$, corresponding to the implied constraint $\neg \varphi \wedge \neg \psi \rightarrow \bot$. 
This property ensures that $\eta$ captures all potential conflicts: 
for every $\Dmc$, if 
$\constraints_\eta=\{\tau_r \mid r \in \eta\}$, then 
$\conflicts{\Dmc,\constraints_\eta}=\{\aiclits(r)\mid r\in gr_\Dmc(\eta), \Dmc\not\models r,
\text { and there is no }  r'\in gr_\Dmc(\eta) \text{ with } \aiclits(r') \subsetneq \aiclits(r)\}$. 

The following example, given by \citeauthor{DBLP:journals/ai/BogaertsC18} \shortcite{DBLP:journals/ai/BogaertsC18} to show that grounded r-updates do not coincide with the intersection of founded and well-founded r-updates, illustrates that sets of AICs not closed under resolution may exhibit undesirable behaviors. 

\begin{example}
Consider $\Dmc=\emptyset$ and $\eta$ that contains the AICs:
\begin{align*}
r_1&: \neg\alpha\wedge\neg\beta\rightarrow \{+\alpha\} & r_4&: \alpha\wedge\beta\wedge\neg\gamma\rightarrow \{+\gamma\}\\
r_2&: \alpha\wedge\neg\beta\rightarrow \{+\beta\}& r_5&: \alpha\wedge\neg\beta\wedge\gamma\rightarrow \{+\beta\}\\
r_3&:  \neg\alpha\wedge\beta\rightarrow \{-\beta\} &r_6&: \neg\alpha\wedge\beta\wedge\gamma\rightarrow \{+\alpha\}
\end{align*}
$\Umc=\{+\alpha,+\beta,+\gamma\}$ is founded and well-founded but is not grounded: taking $\Vmc=\{+\beta\}$, we have $\Vmc \subsetneq\Umc$ but there is no $r\in\eta$ such that $\{\beta\}\not\models r$ and 
$\aicup(r) \cap \{+\alpha,+\gamma \} \neq \emptyset$. 

However, it can be verified that $\Umc$ is in fact the only r-update of $\Dmc$ \wrt $\eta$. Indeed, the conflicts of $\Dmc$ \wrt the constraints expressed by $\eta$ are $\{\no\alpha\}$, $\{\no\beta\}$ and $\{\no\gamma\}$. 

If $\eta$ were closed under resolution, it would contain $\neg\alpha\rightarrow \{+\alpha\}$, $\neg\beta\rightarrow \{+\beta\}$, and $\neg\gamma\rightarrow \{+\gamma\}$, in which case $\Umc$ would be grounded, as expected for the unique r-update.
\end{example}

It is always possible to transform a set of ground AICs into one that is closed under resolution by adding the required AICs. However this may result in an exponential blowup. 
Moreover, we need to choose the update actions of the added AICs. We advocate for this to be done by propagating the relevant update actions of the rules on which the resolution is done. 
A set of ground AICs obtained in this way will be closed under resolution and will \emph{preserve actions under resolution} according to the following definition.

\begin{definition}
We say that a set $\eta$ of ground AICs \emph{preserves actions under resolution} if for every triple of AICs $r_1$, $r_2$, $r_3$ $\in \eta$, 
if there exists $\alpha$ such that  $\alpha \in \aiclits(r_1)$, $\neg \alpha \in \aiclits(r_2)$, and $\aiclits(r_3)=\aiclits(r_1) \cup \aiclits(r_2) \setminus \{\alpha, \neg \alpha\}$, 
then $\aicup(r_1) \cup \aicup(r_2) \setminus \{+\alpha,-\alpha\} \subseteq \aicup(r_3)$. 
A set of AICs $\eta$ preserves actions under resolution if so does $gr_\Dmc(\eta)$ for every database  $\Dmc$.
\end{definition}

The next example shows that a set of AICs 
which does not preserve actions under resolution may be ambiguous. 
\begin{example}\label{ex:faithful-implicants}
Let $\Dmc=\{\alpha,\beta,\gamma\}$, and 
$\eta$ that contains:
\begin{align*}
r_1&: \alpha\wedge \beta \rightarrow \{-\alpha\}  &
r_3&: \alpha\wedge\neg\delta\rightarrow \{+\delta\} \\
r_2&: \beta\wedge \gamma \rightarrow \{-\gamma\} 
&
r_4&: \beta\wedge\delta\rightarrow \{-\beta\}
\end{align*}
This set of AICs is closed under resolution but does not preserve actions under resolution: due to $r_3$ and $r_4$, $-\beta$ should be an update action of $r_1$. 
Indeed, $r_3$ and $r_4$ together indicate that if $\alpha$ and $\beta$ are present, $\beta$ should be removed (since if $\delta$ is absent, it should be added, due to $r_3$, and $\beta$ should be removed when $\delta$ is present, by $r_4$). 

To make $\eta$ preserve actions under resolution, there are three possibilities: (1) change $r_1$ to $\alpha\wedge \beta \rightarrow \{-\beta\}$ (if $\alpha$ is preferred to $\beta$), or (2) change $r_4$ to $\beta\wedge\delta\rightarrow \{-\delta\}$ (if $\beta$ is preferred to $\alpha$), or (3) change $r_1$  to $\alpha\wedge \beta \rightarrow \{-\alpha,-\beta\}$  (if neither $\alpha$ nor $\beta$ is preferred to the other). 
\end{example}

Sets of AICs that are closed under resolution and preserve actions under resolution are well behaved in the sense that they make most of the r-update notions coincide. The monotone sets of AICs mentioned in relation to Proposition \ref{prop:reductionPrioToAICs} trivially satisfy these two conditions.

\begin{restatable}{proposition}{ClosedFaithfulCollapse}\label{prop:founded-grounded-closed-min-faith}
If $\eta$ is closed under resolution and preserves actions under resolution, then for every database $\Dmc$, $\justifreps{\Dmc,\eta}=\groundreps{\Dmc,\eta}=\foundreps{\Dmc,\eta}\subseteq\wellfoundreps{\Dmc,\eta}$. 
\end{restatable}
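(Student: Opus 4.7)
The plan is to use the inclusion diagram in Section \ref{subsec:aics} to reduce the desired statement to two reverse inclusions $\foundreps{\Dmc,\eta} \subseteq \groundreps{\Dmc,\eta}$ and $\foundreps{\Dmc,\eta} \subseteq \justifreps{\Dmc,\eta}$, since the diagram already supplies $\groundreps{\Dmc,\eta} \subseteq \foundreps{\Dmc,\eta}$, $\justifreps{\Dmc,\eta} \subseteq \foundreps{\Dmc,\eta}$, and $\groundreps{\Dmc,\eta} \subseteq \wellfoundreps{\Dmc,\eta}$ without any hypothesis on $\eta$. For $\foundreps{\Dmc,\eta} \subseteq \groundreps{\Dmc,\eta}$, I would apply Proposition \ref{prop:new-founded-grounded} to rephrase the target as $\Umc \in \ups{\Dmc,\eta[\Umc]}$ for every $\Umc \in \foundups{\Dmc,\eta}$; consistency of $\Umc$ and the satisfaction $\Dmc\circ\Umc \models \eta[\Umc]$ are immediate from $\Umc \in \ups{\Dmc,\eta}$, so the crux is minimality of $\Umc$ in $\ups{\Dmc,\eta[\Umc]}$. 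For $\foundreps{\Dmc,\eta} \subseteq \justifreps{\Dmc,\eta}$, I would show that $\Umc' = ne(\Dmc,\Dmc\circ\Umc) \cup \Umc$ is a minimal set closed under $\eta$ containing $ne(\Dmc,\Dmc\circ\Umc)$: closedness follows straightforwardly from $\Dmc\circ\Umc \models \eta$, while minimality amounts to showing that any $\Umc^\dagger = ne(\Dmc,\Dmc\circ\Umc) \cup \Vmc$ with $\Vmc \subsetneq \Umc$ fails to be closed.

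Both reverse inclusions reduce to the following key technical lemma, to be proved by induction on $|\Umc \setminus \Vmc|$: for every $\Umc \in \foundups{\Dmc,\eta}$ and every $\Vmc \subsetneq \Umc$, there exists $r \in gr_\Dmc(\eta)$ such that $\Dmc\circ\Vmc \models \aicbody(r)$ and $\aicup(r) \cap \Umc \neq \emptyset$. The base case $|\Umc\setminus\Vmc| = 1$ is immediate: if $\Umc = \Vmc \cup \{A\}$, the foundedness witness $r$ for $A$ satisfies $A \in \aicup(r) \cap \Umc$ and has $\aicbody(r)$ true in $\Dmc\circ(\Umc\setminus\{A\}) = \Dmc\circ\Vmc$. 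For the inductive step, I would pick some $A \in \Umc\setminus\Vmc$ with foundedness witness $r'$; if $\aicbody(r')$ already holds in $\Dmc\circ\Vmc$ we are done, otherwise some literal $\ell \in \aiclits(r')$ is true in $\Dmc\circ(\Umc\setminus\{A\})$ but false in $\Dmc\circ\Vmc$, which forces $\fix(\neg\ell) \in (\Umc\setminus\Vmc)\setminus\{A\}$. The foundedness witness $r''$ for $\fix(\neg\ell)$ contains $\neg\ell$ in its body, and (whenever the resulting body is consistent) closure under resolution on $\ell$ yields $r''' \in gr_\Dmc(\eta)$ with $\aiclits(r''') = \aiclits(r') \cup \aiclits(r'') \setminus \{\ell,\neg\ell\}$; preservation of actions gives $\aicup(r''') \supseteq (\aicup(r') \cup \aicup(r'')) \setminus \{\fix(\ell),\fix(\neg\ell)\}$. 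Applying the IH to $\Vmc \cup \{\fix(\neg\ell)\}$, whose symmetric difference with $\Umc$ has strictly smaller cardinality, then allows us to combine the obtained rules into a single witness for $\Vmc$ with an action in $\Umc$.

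To close out, I would use the key lemma together with two elementary observations: (i) whenever $\Dmc\circ\Vmc \models \aicbody(r)$, one has $\aicup(r) \cap \Vmc = \emptyset$, because an action in $\Vmc$ would falsify the corresponding body literal; and (ii) under the same premise, $\aicup(r) \cap ne(\Dmc,\Dmc\circ\Umc) = \emptyset$, by case analysis on the status of each fact in $\Dmc$ versus $\Dmc\circ\Umc$. Combining (i) with the lemma immediately delivers the violated rule of $\eta[\Umc]$ needed for groundedness; combining (i) and (ii) establishes the non-closedness of $\Umc^\dagger = ne(\Dmc,\Dmc\circ\Umc) \cup \Vmc$ needed for justifiedness, after one also verifies that $\Umc^\dagger$ does satisfy the non-updatable literals of the rule supplied by the lemma. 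The main obstacle will be the inductive step of the key lemma, specifically the subcase in which the chosen $A$ coincides with $\fix(\ell^*)$ for a literal $\ell^*$ that gets resolved away: action preservation explicitly removes both $\fix(\ell^*)$ and $\fix(\neg\ell^*)$ from the resolvent's action set, so ensuring that some action of $\Umc$ nevertheless survives in the resolvent may require revisiting the initial choice of $A$, or iterating the resolution over several literals while maintaining the induction invariant.
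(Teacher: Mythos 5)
Your top-level decomposition is sound, and your ``key lemma'' is in fact a true statement (it is a consequence of the paper's Lemma~\ref{lem:founded-conf-closed-min-faith}), but the proposal has two genuine gaps. The first is in the inductive proof of the key lemma. In your inductive step you resolve the foundedness witness $r'$ of $A$ with the foundedness witness $r''$ of $B=\fix(\neg\ell)$. The resolvent inherits all of $\aiclits(r'')\setminus\{\neg\ell\}$, and these literals are only guaranteed to hold in $\Dmc\circ(\Umc\setminus\{B\})$ --- not in $\Dmc\circ\Vmc$, since $(\Umc\setminus\{B\})\setminus\Vmc$ may contain many other actions that falsify them. Eliminating those in turn drags in yet more foundedness witnesses with the same defect, and your measure $|\Umc\setminus\Vmc|$ does not control this process (the IH applied to $\Vmc\cup\{B\}$ produces a rule that bears no useful relation to $r'''$). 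The paper avoids this by not resolving with foundedness witnesses at all: it uses the fact that $(\Dmc\circ\Umc)\Delta\Dmc$ is a minimal hitting set of the conflicts (Lemma~\ref{lem:symdifminhittingset}), so each auxiliary action $B_i$ admits a conflict $\conf_i$ hit by $B_i$ \emph{alone}; closure under resolution turns $\conf_i$ into a minimal-body rule $r_i$ all of whose literals are true in $\Dmc$ and untouched by $\Umc\setminus\{B_i\}$. Resolving with these rules introduces no new problematic literals (and makes the consistency side-condition of closure under resolution, which you flag but do not discharge, verifiable, since everything lives in the single interpretation $\Dmc$).

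The second gap is that the key lemma, even granted, is too weak for the justified case. To refute closedness of $\Umc^\dagger=ne(\Dmc,\Dmc\circ\Umc)\cup\Vmc$ you must exhibit a rule whose non-updatable literals are all \emph{satisfied by the action set} $\Umc^\dagger$, and this is strictly stronger than having them true in $\Dmc\circ\Vmc$. Concretely, if $r$ has a non-updatable literal $\alpha$ with $\alpha\in\Dmc$ and $-\alpha\in\Umc\setminus\Vmc$, then $\alpha$ holds in $\Dmc\circ\Vmc$ (so $r$ can be the rule your lemma supplies), yet $+\alpha\notin ne(\Dmc,\Dmc\circ\Umc)$ (the atom changes between $\Dmc$ and $\Dmc\circ\Umc$) and $+\alpha\notin\Vmc$ (by consistency of $\Umc$), so $\Umc^\dagger$ does not satisfy $\alpha$ and the argument stalls. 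What is needed is the stronger conclusion of the paper's Lemma~\ref{lem:founded-conf-closed-min-faith}: a witness $r_A$ with $A\in\aicup(r_A)$, $\Dmc\circ\Umc\setminus\{A\}\not\models r_A$ \emph{and} $\Dmc\not\models r_A$, so that $A$ is the only action of $\Umc$ touching any literal of $r_A$ and every non-updatable literal is automatically satisfied by $ne(\Dmc,\Dmc\circ\Umc)$. (Your observation that the case $A=\fix(\ell)$ cannot occur is correct --- it would contradict consistency of $\Umc$ --- so the ``main obstacle'' you name at the end is not the real one; the real obstacles are the two above.)
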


The next example 
shows that 
both conditions 
are necessary for obtaining Proposition~\ref{prop:founded-grounded-closed-min-faith}.

\begin{example}\label{ex:difference-founded-grounded}
Consider $\Dmc=\{\alpha,\beta,\gamma\}$ and the two sets 
\begin{align*}
\eta_1=&\{
\alpha\wedge\neg\beta\rightarrow \{-\alpha\}, \quad
\neg\alpha\wedge\beta\rightarrow\{-\beta\}, \\&\quad
\alpha\wedge\beta\wedge\gamma\rightarrow \{-\gamma\}
\}\\
\eta_2=&\eta_1\cup\{
\alpha\wedge\gamma\rightarrow\{-\gamma\}, \quad
\beta\wedge\gamma\rightarrow\{-\gamma\}\}.
\end{align*} 
$\eta_1$ is not closed under resolution but (trivially) preserves actions under resolution, while $\eta_2$ is closed under resolution but does not preserve actions under resolution. 

In both cases, there are two founded r-updates: $\{-\gamma\}$ and $\{-\alpha,-\beta\}$. However, $\{-\alpha,-\beta\}$ is not well-founded, hence not grounded nor justified. 
Indeed, $\Dmc$ violates only AICs whose only update action is $-\gamma$.
\end{example}

Even if a set of AICs is such that justified, grounded and founded repairs are guaranteed to exist and coincide, its behavior may still be puzzling, as illustrated next. 

\begin{example}\label{ex:faithfulstronger}
Let $\Dmc=\{\alpha,\beta,\gamma,\delta\}$ and $\eta$ be the monotone set of AICs comprising the following AICs:
\begin{align*}
r_1&:\alpha\wedge\delta \rightarrow \{-\delta\}
&
r_3&:\alpha\wedge\beta\wedge\gamma\wedge\delta \rightarrow \{-\beta\}\\
r_2&: \alpha\wedge\beta\wedge\delta \rightarrow \{-\alpha\}
&
r_4&:\beta\wedge\gamma \rightarrow \{-\gamma\}
\end{align*}
There are four r-updates:
\begin{align*}
\Umc_1=&\{-\alpha,-\gamma\}\text{ and }\Umc_2=\{-\delta,-\gamma\}\text{ are founded}\\
\Umc_3=&\{-\delta,-\beta\}\text{ is not founded but is well-founded}\\
\Umc_4=&\{-\alpha,-\beta\} \text{ is not founded nor well-founded} 
\end{align*}
There are two conflicts: $\{\alpha,\delta\}$ and $\{\beta,\gamma\}$. 
It is natural to prefer removing $\gamma$ rather than $\beta$ to resolve the latter conflict (due to $r_4$), which would justify to preferring $\Umc_1$ and $\Umc_2$ over $\Umc_4$ and $\Umc_3$ respectively. However, the exact same argument applied to $r_1$ should lead us to prefer removing $\delta$ to solve the first conflict, thus to prefer $\Umc_2$ over $\Umc_1$. It is therefore not clear why both $\Umc_1$ and $\Umc_2$ should be the preferred r-updates. 
The intention of a user specifying the preceding AICs is probably quite far from their actual behavior. 
\end{example}

We thus believe that a reasonable property for sets of AICs is to respect the principle that adding atoms to a rule body can only restrict the possible update actions. We call the \emph{anti-normalization} of a set $\eta$ of AICs the set $AN(\eta)$ of AICs that replace all the AICs $r_1,\dots,r_n\in\eta$ that share the same body by a single AIC whose update actions are the union of the update actions of $r_1,\dots,r_n$. 

\begin{definition}
We say that a set $\eta$ of ground AICs \emph{preserves actions under strengthening} if for every pair of AICs $r_1$, $r_2$ in $AN(\eta)$, if 
$\aiclits(r_1) \subseteq \aiclits(r_2)$, then $\aicup(r_2) \subseteq \aicup(r_1)$. 
A set of AICs $\eta$ preserves actions under strengthening if so does $gr_\Dmc(\eta)$ for every database  $\Dmc$.
\end{definition}

The following proposition shows that if $\eta$ preserves actions under strengthening, then constraints that have non-minimal bodies have no influence on the r-updates.

\begin{restatable}{proposition}{StrongerMinBod}\label{prop:stronger-min-bod}
Let $\eta$ be a set of ground AICs and $\minnongr(\eta)$ be the set of AICs from $AN(\eta)$ that have (subset-)minimal bodies. 
If $\eta$ 
preserves actions under strengthening, then for every $\Dmc$, for $X\in\{\mi{Found},\mi{WellFound},\mi{Ground}\}$ 
$\xups{\Dmc,\eta}=\xups{\Dmc,AN(\eta)}=\xups{\Dmc,\minnongr(\eta)}$, and $\justifups{\Dmc,AN(\eta)}=\justifups{\Dmc,\minnongr(\eta)}$. 
\end{restatable}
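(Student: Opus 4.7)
The plan is to prove the equalities by chaining through two easier comparisons: first $\eta$ with $AN(\eta)$, then $AN(\eta)$ with $\minnongr(\eta)$. In both cases the underlying r-updates $\ups{\Dmc,\cdot}$ coincide because anti-normalization only unions the update-action sets of same-body AICs, and dropping AICs with non-minimal bodies does not change the set of models (they are entailed by the minimal-body AICs). So the real work lies in tracking how the founded, well-founded, grounded, and justified witnesses transfer across the three sets.

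For the step $\eta \leftrightarrow AN(\eta)$, I argue directly that a witness $r \in \eta$ in the founded or well-founded definition can be replaced by its anti-normalization $r^{AN} \in AN(\eta)$ (same body, with $\aicup(r) \subseteq \aicup(r^{AN})$), and conversely any $A \in \aicup(r^{AN})$ traces back to some $r \in \eta$ with the same body and $A \in \aicup(r)$. For grounded, I use the identity $N(\eta) = N(AN(\eta))$ (anti-normalization is undone by normalization, modulo duplicates), so the grounded definitions become syntactically identical. Justified is not claimed at this step, matching the known sensitivity of that notion to normalization.

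For the step $AN(\eta) \leftrightarrow \minnongr(\eta)$, the inclusion $\minnongr(\eta) \subseteq AN(\eta)$ makes one direction trivial. For the other, the key tool is preservation of actions under strengthening: any $r \in AN(\eta) \setminus \minnongr(\eta)$ admits some $r' \in \minnongr(\eta)$ with $\aiclits(r') \subsetneq \aiclits(r)$, and $\aicup(r) \subseteq \aicup(r')$. For founded, well-founded, and grounded, a witness $r$ can then be swapped for $r'$: the smaller body $\aicbody(r')$ is satisfied whenever $\aicbody(r)$ is, and the relevant update action still belongs to $\aicup(r')$ (for grounded, the single normalized action in $\aicup(r)$ lies in $\aicup(r')$ by preservation).

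The main obstacle is the justified case. I plan to prove a key lemma: for any consistent set $\Wmc$ of update actions, closure under $AN(\eta)$ coincides with closure under $\minnongr(\eta)$. Once this is established, $\justifups{\Dmc,AN(\eta)} = \justifups{\Dmc,\minnongr(\eta)}$ follows, since the minimization in the justified definition is tested against consistent subsets of the consistent set $ne(\Dmc,\Dmc\circ\Umc)\cup\Umc$. The non-trivial direction of the lemma uses the pair $(r,r')$ above together with two observations. First, the non-updatable literals of $r'$ form a subset of those of $r$: any $\ell \in \aiclits(r')$ with $\fix(\ell) \notin \aicup(r')$ also satisfies $\fix(\ell) \notin \aicup(r)$ (since $\aicup(r) \subseteq \aicup(r')$), so it is likewise non-updatable in $r$. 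Thus, whenever $\Wmc$ satisfies the non-updatable literals of $r$, it satisfies those of $r'$ as well, and closure under $r'$ supplies some $A \in \aicup(r') \cap \Wmc$. Second, $A$ must actually lie in $\aicup(r)$: were $A = \fix(\ell^*)$ in $\aicup(r') \setminus \aicup(r)$, then $\ell^*$ would be non-updatable in $r$ and hence satisfied by $\Wmc$, forcing $\Wmc$ to contain both $A$ and the opposite action witnessing satisfaction of $\ell^*$, contradicting consistency.
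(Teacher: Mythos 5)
Your proposal is correct and takes essentially the same route as the paper: the substantive work is the comparison of $AN(\eta)$ with $\minnongr(\eta)$ via preservation of actions under strengthening, and your closure lemma for the justified case is exactly the paper's argument (non-updatable literals of the minimal-body AIC are non-updatable in the larger one, and consistency forces the supplied action into $\aicup(r)$). The only cosmetic difference is that the paper dispatches the $\eta$-versus-$AN(\eta)$ step by invoking insensitivity of founded/well-founded/grounded r-updates to normalization together with $N(AN(\eta))=N(\eta)$, where you argue founded and well-founded directly.
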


\subsection{From AICs to Prioritized Databases}\label{subsec:aics-to-prio}

We next study the possibility of reducing well-behaved sets of AICs to prioritized databases and discuss the differences between the two settings.

\subsubsection*{Binary conflicts case} 
We first consider the case where the size of the conflicts is at most two (this covers, for example, AIC bodies corresponding to functional dependencies or class disjointness). 
In this case, given a set $\eta$ of AICs \emph{closed under resolution that preserves actions under resolution and under strengthening} and a database $\Dmc$, we build a set of constraints $\constraints_\eta$ and a binary relation $\succ_\eta$ such that if $\succ_\eta$ is acyclic, the Pareto-optimal repairs of $\Dmc^{\constraints_\eta}_{\succ_\eta}$ coincide with the founded, grounded and justified repairs of $\Dmc$ \wrt $\eta$. 
We take $\constraints_\eta=\{\tau_r \mid r \in \eta\}$ and define $\succ_\eta$ so that 
$\lambda\succ_\eta\mu$ iff 
\begin{itemize}
\item there exists $r\in \mingr(\eta)$ such that $\Dmc\not\models r$, $\{\lambda,\mu\}\subseteq  \aiclits(r)$, and $\mi{fix}(\mu) \in \aicup(r)$; 
and
\item for every $r\in \mingr(\eta)$ such that $\Dmc\not\models r$ and $\{\lambda,\mu\}\subseteq \aiclits(r)$, $\mi{fix}(\lambda) \not \in \aicup(r)$, 
\end{itemize} 
where $\mingr(\eta) = \{r \in gr_\Dmc(\eta) \mid \text{ there is no }   r'\in gr_\Dmc(\eta) $ $ \text{with } \aiclits(r') \subsetneq \aiclits(r)\}$. 
As $\eta$ is closed under resolution, $\conflicts{\Dmc,\constraints_\eta}=\{\aiclits(r)\mid r\in \mingr(\eta), \Dmc\not\models r\}$.

\begin{restatable}{proposition}{ReductionAICsPrioBinary}\label{prop:reduction-AICs-prio-binary}
If $\eta$ is closed under resolution, preserves actions under resolution and under strengthening, the size of the conflicts of $\Dmc$ \wrt $\eta$ is bounded by 2, and $\succ_\eta$ is acyclic, then 
$\deltapreps{\Dmc^{\constraints_\eta}_{\succ_\eta}}=\justifreps{\Dmc,\eta}=\groundreps{\Dmc,\eta}$ $=\foundreps{\Dmc,\eta}\subseteq\wellfoundreps{\Dmc,\eta}$.
\end{restatable}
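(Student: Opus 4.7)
\medskip
\noindent\textbf{Overall strategy.} Proposition~\ref{prop:founded-grounded-closed-min-faith} already yields $\foundreps{\Dmc,\eta}=\groundreps{\Dmc,\eta}=\justifreps{\Dmc,\eta}\subseteq\wellfoundreps{\Dmc,\eta}$ from closure under resolution and preservation under resolution. So the entire content of the statement is the single equality $\deltapreps{\Dmc^{\constraints_\eta}_{\succ_\eta}}=\foundreps{\Dmc,\eta}$, on which I would focus.

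\medskip
\noindent\textbf{Preparatory facts.} Closure under resolution gives $\conflicts{\Dmc,\constraints_\eta}=\{\aiclits(r)\mid r\in\mingr(\eta),\ \Dmc\not\models r\}$; combined with the size-$2$ bound, every conflict contains one or two literals. Preservation under strengthening, together with Proposition~\ref{prop:stronger-min-bod}, lets me assume without loss of generality that each conflict is the body of a \emph{unique} rule of $\mingr(\eta)$ (after anti-normalization). These observations also make it straightforward to check that $\succ_\eta$ is a priority relation for $\Dmc$ w.r.t.~$\constraints_\eta$: acyclicity is assumed, and if $\lambda\succ_\eta\mu$ then by definition $\{\lambda,\mu\}\subseteq\aiclits(r)$ for some $r\in\mingr(\eta)$ violated by $\Dmc$, so $\{\lambda,\mu\}$ sits inside a conflict.

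\medskip
\noindent\textbf{Step 1: Pareto $\subseteq$ founded.} Let $\Rmc\in\deltapreps{\Dmc^{\constraints_\eta}_{\succ_\eta}}$ and $\Umc$ be the induced r-update. Given $A=\fix(\mu)\in\Umc$, the $\Delta$-repair property forces $\mi{Dat}_\Dmc(\comp{\Dmc}{\Rmc}\cup\{\mu\})$ to violate some rule $r\in\mingr(\eta)$, necessarily with $\mu\in\aiclits(r)$. If $|\aiclits(r)|=1$ then $\aicup(r)=\{A\}$ so $r$ is a foundedness witness; if $\aiclits(r)=\{\mu,\nu\}$ with $A\in\aicup(r)$, same conclusion. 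The delicate remaining case is $\aiclits(r)=\{\mu,\nu\}$ with $A\notin\aicup(r)$, so $\aicup(r)=\{\fix(\nu)\}$. Then the definition of $\succ_\eta$ yields $\nu\succ_\eta\mu$, and I would construct a Pareto improvement of $\Rmc$ by toggling $\mu$ back in and $\nu$ out. Checking consistency of this toggled database uses the binary-conflict bound together with preservation under strengthening to rule out any new conflict involving $\mu$ or $\nu$, contradicting the Pareto-optimality of $\Rmc$.

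\medskip
\noindent\textbf{Step 2: founded $\subseteq$ Pareto.} Assume $\Rmc\in\foundreps{\Dmc,\eta}$ admits a Pareto improvement $\Bmc$ witnessed by $\mu\in\comp{\Dmc}{\Bmc}\setminus\comp{\Dmc}{\Rmc}$. Then $\fix(\mu)\in\Umc$, so foundedness supplies a rule $r$ with $\fix(\mu)\in\aicup(r)$, $\mu\in\aiclits(r)$, and $\aiclits(r)\subseteq\comp{\Dmc}{\Rmc}\cup\{\mu\}$. If $\aiclits(r)=\{\mu\}$, then $\mu\in\comp{\Dmc}{\Bmc}$ already forces $\Bmc\not\models r$, contradicting $\Bmc\models\constraints_\eta$. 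Otherwise $\aiclits(r)=\{\mu,\nu\}$ with $\nu\in\comp{\Dmc}{\Rmc}$, and $\Bmc\models r$ forces $\nu\notin\comp{\Dmc}{\Bmc}$, so $\nu$ lies in the set $\comp{\Dmc}{\Rmc}\setminus\comp{\Dmc}{\Bmc}$ over which $\mu$ is supposed to dominate. But $\fix(\mu)\in\aicup(r)$, which together with the uniqueness of rules on body $\{\mu,\nu\}$ means the definition of $\succ_\eta$ forbids $\mu\succ_\eta\nu$, a contradiction.

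\medskip
\noindent\textbf{Main obstacle.} I expect the real work to be the ``mixed'' sub-case in Step~1 where the rule violated after reverting $\mu$ has $\fix(\nu)$, not $\fix(\mu)=A$, as its sole update action. Converting this local information into a genuine, globally consistent Pareto improvement requires ruling out any further conflict created by toggling $\nu$, and this is exactly where all three extra hypotheses of the proposition must be combined: closure under resolution ensures there are no hidden conflicts beyond the ones described by $\mingr(\eta)$, preservation under strengthening prevents another rule on the same or a stronger body from prescribing different actions, and the size-$2$ bound keeps the case analysis tractable by excluding three-way interactions.
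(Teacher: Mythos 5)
Your high-level decomposition is the same as the paper's: everything reduces to $\deltapreps{\Dmc^{\constraints_\eta}_{\succ_\eta}}=\foundreps{\Dmc,\eta}$, with the other equalities imported from Proposition~\ref{prop:founded-grounded-closed-min-faith}. However, both of your steps have genuine gaps. In Step~1 (Pareto $\subseteq$ founded), your case analysis hinges on \emph{one} rule $r$ violated by $\comp{\Dmc}{\Rmc}\cup\{\mu\}$, but several conflicts may contain $\mu$ and sit inside $\comp{\Dmc}{\Rmc}\cup\{\mu\}$. If the particular $r$ you picked has $A\notin\aicup(r)$ while a \emph{different} conflict $\{\mu,\nu'\}$ carries a rule with $A\in\aicup$, then that other rule is a foundedness witness, and the contradiction you aim for is simply not derivable: the set $\comp{\Dmc}{\Rmc}\cup\{\mu\}\setminus\{\nu\}$ still contains the conflict $\{\mu,\nu'\}$ and $\mu\not\succ_\eta\nu'$, so no Pareto improvement exists (take $r_1:\mu\wedge\nu\to\{\fix(\nu)\}$, $r_2:\mu\wedge\nu'\to\{\fix(\mu)\}$ for a concrete dead end). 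The correct structure, which is the paper's, is to assume $A$ has \emph{no} foundedness witness at all, show that then \emph{every} conflict through $\ell_A$ either is not contained in $\comp{\Dmc}{\Rmc}\cup\{\ell_A\}$ or contains some $\lambda$ with $\ell_A\succ_\eta\lambda$, and only then remove all dominated literals at once and invoke the Pareto-improvement lemma (which also handles the point you gloss over: a conflict-free literal set must be extended to a maximal one before it yields a consistent database). Also note the direction of your priority claim is reversed: $\aicup(r)=\{\fix(\nu)\}$ yields $\mu\succ_\eta\nu$, not $\nu\succ_\eta\mu$ (your subsequent construction implicitly uses the correct direction).

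In Step~2 (founded $\subseteq$ Pareto), you assert that foundedness supplies a witness $r$ with $\aiclits(r)\subseteq\comp{\Dmc}{\Rmc}\cup\{\mu\}$ and $|\aiclits(r)|\le 2$. Neither is automatic: the definition only gives some $r\in gr_\Dmc(\eta)$ with $\Dmc\circ\Umc\setminus\{A\}\not\models r$, whose body may contain literals falsified by $\Dmc$ (made true by \emph{other} actions of $\Umc$) and may be arbitrarily large. Upgrading this to a witness whose body is a conflict (hence in $\litset$ and of size at most~$2$, and in $\mingr(\eta)$) is exactly where closure under resolution and preservation of actions under resolution do real work; the paper isolates this as a separate lemma about founded r-updates, and it is the technical heart of this direction (indeed the paper obtains founded $\subseteq$ Pareto from the general, non-binary Proposition~\ref{prop:AICs-prio-non-binary} rather than reproving it). Your ``main obstacle'' paragraph gestures at the right hypotheses but does not identify this witness-upgrading step, which is the actual missing ingredient.
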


The following examples show that the three first conditions are necessary.

\begin{example}
Let $\Dmc=\{\alpha,\beta,\gamma\}$ and $\eta=\{\alpha\wedge\beta\rightarrow\{-\beta\},$ $\neg\beta\wedge\gamma\rightarrow\{-\gamma\}\}$, which preserves actions under resolution and strengthening but is not closed under resolution. We have $\conflicts{\Dmc,\constraints_\eta}=\{\{\alpha,\beta\},\{\alpha,\gamma\}\}$ and $\alpha\succ_\eta\beta$. 
Both $\{\alpha\}$ and $\{\beta,\gamma\}$ are Pareto-optimal, 
but the only founded r-update (which is also grounded and justified) is $\{-\beta,-\gamma\}$. 
\end{example}

\begin{example}[Example \ref{ex:faithful-implicants} cont'd]
In Example~\ref{ex:faithful-implicants}, $\eta$ is closed under resolution and preserves actions under strengthening but not under resolution. We have $\conflicts{\Dmc,\constraints_\eta}=\{\{\alpha,\beta\},\{\beta,\gamma\}, \{\alpha,\no\delta\}\}$ and 
 $\beta\succ_\eta\alpha$, $\beta\succ_\eta\gamma$, $\alpha\succ_\eta\no\delta$. 
The only Pareto-optimal repair is $\{\beta\}$, 
but $\{-\beta,+\delta\}$ is a founded, grounded and justified r-update. 
\end{example}

\begin{example}[Example \ref{ex:faithfulstronger} cont'd]
In Example~\ref{ex:faithfulstronger}, $\eta$ is closed under resolution and 
preserves actions under resolution but not under strengthening. We have $\conflicts{\Dmc,\constraints_\eta}=\{\{\alpha,\delta\},\{\beta,\gamma\}\}$ and 
 $\alpha\succ_\eta\delta$, $\beta\succ_\eta\gamma$. 
The only Pareto-optimal repair is $\{\alpha,\beta\}$,  
but $\{-\alpha,-\gamma\}$ is a founded, grounded and justified r-update. 
\end{example}

Note that $\succ_\eta$ may be cyclic: if $\eta=\{A(x)\wedge B(x)\rightarrow \{-A(x)\}, B(x)\wedge C(x)\rightarrow \{-B(x)\}, C(x)\wedge A(x)\rightarrow \{-C(x)\}\}$ and $\Dmc=\{A(a),B(a),C(a)\}$, we obtain $A(a)\succ_\eta C(a)\succ_\eta B(a)\succ_\eta A(a)$.

\subsubsection*{General case} 
Let us now consider the case where the size of the conflicts is not bounded. If we apply the same reduction, we can only show the following inclusions between repairs of $\Dmc$ \wrt $\eta$ and Pareto-optimal repairs of $\Dmc^{\constraints_\eta}_{\succ_\eta}$:

\begin{restatable}{proposition}{ReductionAICsPrioGeneral}\label{prop:AICs-prio-non-binary}
If $\eta$ is closed under resolution, preserves actions under resolution and under strengthening,  and $\succ_\eta$ is acyclic, then 
$\justifreps{\Dmc,\eta}=\groundreps{\Dmc,\eta}=\foundreps{\Dmc,\eta}\subseteq\deltapreps{\Dmc^{\constraints_\eta}_{\succ_\eta}}$.
\end{restatable}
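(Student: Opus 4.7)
The hypotheses on $\eta$ subsume those of Proposition~\ref{prop:founded-grounded-closed-min-faith}, which already gives $\justifreps{\Dmc,\eta}=\groundreps{\Dmc,\eta}=\foundreps{\Dmc,\eta}$, so the only remaining task is the inclusion $\foundreps{\Dmc,\eta}\subseteq\deltapreps{\Dmc^{\constraints_\eta}_{\succ_\eta}}$. Let $\Rmc=\Dmc\circ\Umc$ for some founded $\Umc$; the observation in Section~\ref{subsec:aics} already puts $\Rmc$ in $\deltareps{\Dmc,\constraints_\eta}$, so I only need to rule out Pareto improvements. Suppose for contradiction that $\Bmc$ Pareto-improves $\Rmc$ with witness $\mu\in\comp{\Dmc}{\Bmc}\setminus\comp{\Dmc}{\Rmc}$: then $\fix(\mu)\in\Umc$, $\Bmc\models\mu$, and $\mu\succ_\eta\lambda$ for every $\lambda\in\comp{\Dmc}{\Rmc}\setminus\comp{\Dmc}{\Bmc}$.

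The heart of the argument is to build a minimal AIC $r^\star\in\mingr(\eta)$ satisfying $\Dmc\not\models r^\star$, $\mu\in\aiclits(r^\star)$, $\fix(\mu)\in\aicup(r^\star)$, and $\aiclits(r^\star)\setminus\{\mu\}\subseteq\comp{\Dmc}{\Rmc}$. I start from the foundedness witness $r_0$ of $\fix(\mu)$, reduced to $\mingr(\eta)$ using preservation of actions under strengthening (which keeps $\fix(\mu)$ in the update of the refinement). Whenever the current AIC has a body literal $\ell$ unsatisfied by $\Dmc$, the action $\fix(\neg\ell)$ that creates $\ell$ in $\Dmc\circ(\Umc\setminus\{\fix(\mu)\})$ necessarily lies in $\Umc\setminus\{\fix(\mu)\}$, and so admits its own minimal foundedness witness containing $\neg\ell$; closure under resolution together with preservation of actions under resolution then yield a fresh minimal AIC that removes the pair $\{\ell,\neg\ell\}$ while still carrying $\fix(\mu)$ in its update. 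A direct inspection shows that every literal added by such a resolution step either already lies in $\comp{\Dmc}{\Rmc}$ (hence is preserved by $\Rmc$) or is created by a further distinct action of $\Umc\setminus\{\fix(\mu)\}$ to be resolved away in turn; the finiteness of $\Umc$ then supports both termination and the invariant that every surviving non-$\mu$ body literal lies in $\comp{\Dmc}{\Rmc}$.

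Once $r^\star$ is obtained, closure under resolution makes $\aiclits(r^\star)$ a conflict of $\Dmc$ \wrt $\constraints_\eta$. Since $\Bmc\models\constraints_\eta$ and $\mu\in\comp{\Dmc}{\Bmc}$, some $\lambda\in\aiclits(r^\star)\setminus\{\mu\}$ satisfies $\Bmc\not\models\lambda$; together with $\lambda\in\comp{\Dmc}{\Rmc}$, this places $\lambda$ in $\comp{\Dmc}{\Rmc}\setminus\comp{\Dmc}{\Bmc}$ and so $\mu\succ_\eta\lambda$. But $r^\star\in\mingr(\eta)$ satisfies $\Dmc\not\models r^\star$ and $\{\mu,\lambda\}\subseteq\aiclits(r^\star)$, while $\fix(\mu)\in\aicup(r^\star)$; this directly contradicts the second clause in the definition of $\succ_\eta$, which requires $\fix(\mu)\notin\aicup(r)$ for every such $r$. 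The main difficulty lies in the termination and invariant bookkeeping of the iterated resolution, specifically arguing that each step consumes a fresh action of $\Umc\setminus\{\fix(\mu)\}$ so the process closes in at most $|\Umc|$ iterations while keeping $\fix(\mu)$ in the update throughout.
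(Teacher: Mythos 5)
Your overall architecture matches the paper's: the equalities come from Proposition~\ref{prop:founded-grounded-closed-min-faith}, the target is a minimal AIC $r^\star\in\mingr(\eta)$ with $\Dmc\not\models r^\star$, $\fix(\mu)\in\aicup(r^\star)$ and $\aiclits(r^\star)\setminus\{\mu\}\subseteq\comp{\Dmc}{\Rmc}$, and your endgame (the consistent improvement $\Bmc$ must drop some $\lambda\in\aiclits(r^\star)\setminus\{\mu\}$, forcing $\mu\succ_\eta\lambda$ and contradicting the second clause of the definition of $\succ_\eta$) is sound. The genuine gap is in how you produce $r^\star$. You resolve away each $\Dmc$-falsified body literal $\ell$ of the current AIC against a \emph{foundedness witness} of the action $\fix(\overline{\ell})\in\Umc$. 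But such a witness $r'$ only has its body satisfied in $\Dmc\circ\Umc\setminus\{\fix(\overline{\ell})\}$, not in $\Dmc$, so (i) $r'$ may contain $\overline{\mu}$ (which is satisfied in $\Dmc\circ\Umc\setminus\{\fix(\overline{\ell})\}$ because $\fix(\mu)\in\Umc$), in which case the union of the two bodies minus $\{\ell,\overline{\ell}\}$ contains both $\mu$ and $\overline{\mu}$, is inconsistent, and closure under resolution yields no resolvent at all; (ii) even when the resolvent exists, a newly introduced $\Dmc$-falsified literal can be the one created by $\fix(\mu)$ itself, so your invariant that every new bad literal is created by a further action of $\Umc\setminus\{\fix(\mu)\}$ fails; and (iii) a literal resolved away at one step can be reintroduced at a later one, so the claimed bound of $|\Umc|$ iterations is not justified.

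The paper's proof sidesteps all three problems by choosing the resolution partners differently (this is its Lemma~\ref{lem:founded-conf-closed-min-faith}): because $\Umc$ is an r-update, $\{\lambda\mid\fix(\lambda)\in\Umc\}$ is a minimal hitting set of $\conflicts{\Dmc,\constraints_\eta}$ (Lemma~\ref{lem:symdifminhittingset}), so each action $B_i$ creating a bad literal $\ell_i$ of the original witness is the \emph{unique} action of $\Umc$ hitting some conflict, which by closure under resolution equals $\aiclits(r_i)$ for some $r_i\in\mingr(\eta)$ with $\Dmc\not\models r_i$ and $\overline{\ell_i}\in\aiclits(r_i)$. Since every body literal of every $r_i$ is satisfied by $\Dmc$, the chained resolvents are always consistent, no new $\Dmc$-falsified literals ever appear, and a single pass over $\ell_1,\dots,\ell_n$ terminates; preservation of actions under resolution keeps $\fix(\mu)$ in the update set, and the uniqueness of $B_i$ within its conflict gives $\aiclits(r_i)\setminus\{\overline{\ell_i}\}\subseteq\comp{\Dmc}{\Rmc}$. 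Replacing your recursive foundedness witnesses by these minimal-hitting-set conflicts is the missing ingredient; without it the construction of $r^\star$ does not go through.
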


The next example shows that the inclusion may be strict. 

\begin{example}
Let $\Dmc=\{\alpha,\beta,\gamma,\delta,\epsilon\}$ and $\eta$ consist of: 
\begin{align*}
r_1&:  \alpha\wedge\beta\wedge\gamma \rightarrow \{-\beta\}  
&
r_3&:  \delta\wedge\epsilon\rightarrow \{-\delta\}\\
r_2&:  \alpha\wedge\beta\wedge\delta\rightarrow \{-\alpha,-\beta\} 
\end{align*}
We obtain $\conflicts{\Dmc,\constraints_\eta}=\{\{\alpha,\beta,\gamma\},\{\alpha,\beta,\delta\}, \{\delta,\epsilon\}\}$ and 
 $\gamma\succ_\eta\beta$, $\delta\succ_\eta\alpha$, $\delta\succ_\eta\beta$ and $\epsilon\succ_\eta\delta$ (note that $\alpha\not\succ_\eta\beta$ because $-\alpha$ is an update action of $r_2$). 
The 
repair $\{\beta,\gamma,\epsilon\}$ is Pareto-optimal, but the corresponding r-update $\{-\alpha,-\delta\}$ is not founded, as 
$-\alpha$ appears only in $r_2$ 
and $\Dmc\circ\{-\delta\}\models r_2$. 

One might try to 
modify the definition of $\succ_\eta$ by 
dropping the second condition and adding $\mi{fix}(\lambda) \not \in \aicup(r)$ to the first. In this case, 
$\{\beta,\gamma,\epsilon\}$ is no longer Pareto-optimal. However, now if we take 
$\eta' = \eta\setminus\{r_3\}$, then $\{-\alpha\}$ would be founded, but the corresponding repair 
  $\{\beta,\gamma,\delta,\epsilon\}$ would not be Pareto-optimal, violating the 
 inclusion of Proposition \ref{prop:AICs-prio-non-binary}.
\end{example}

This example shows that even for AICs corresponding to denial constraints, 
there is no clear way to define a priority relation 
that captures the preferences expressed  by the AICs. 

\section{Conclusion and Future Work}\label{sec:conclusion}
We studied how to incorporate preferences into repair-based query answering
for an expressive setting in which databases are equipped with universal constraints, 
and both fact additions and deletions are used to restore consistency. 
We showed that the existing framework of prioritized databases 
could be faithfully adapted to 
this richer setting, 
although the proofs are more involved and crucially rely upon finding
the right definition of what constitutes a conflict. 
While these results focus on databases, we expect that they will also 
prove useful for exploring  
symmetric difference repairs in related KR settings, 
e.g.\ ontologies with closed predicates. 

Our complexity analysis showed that adopting optimal repairs in place of 
symmetric difference repairs does not increase the complexity of repair-based
query answering. A major difference between denial and universal constraints 
is that the latter may lead to conflicts of unbounded size. We showed that 
it is intractable to recognize a conflict and that several problems 
drop in complexity if we assume that the conflicts are available. This suggests
the interest of developing structural conditions on constraint sets that ensure
easy-to-compute conflicts, as well as 
practical algorithms for computing 
and updating the set of conflicts, which could enable an integration with
existing SAT-based approaches. 

Intrigued by the high-level similarities between prioritized databases and active integrity constraints,
we explored 
how the two formalisms relate. We exhibited a natural translation 
of prioritized databases into AICs whereby Pareto-optimal repairs coincide with 
founded, grounded and justified repairs \wrt the generated set of AICs.
We take this as further evidence that Pareto-optimal repairs are an especially natural 
notion 
(we previously showed 
that Pareto-optimal (subset) repairs 
correspond to stable extensions in argumentation  \cite{DBLP:conf/kr/BienvenuB20}). 
It would be of interest to extend our comparison 
to other more recent notions of repair updates 
for AICs 
(\citeauthor{DBLP:journals/fuin/FeuilladeHR19} \citeyear{DBLP:journals/fuin/FeuilladeHR19},
\citeauthor{DBLP:journals/ai/BogaertsC18} \citeyear{DBLP:journals/ai/BogaertsC18,DBLP:journals/tocl/BogaertsC21}).

Our work also provided new insights into AICs. Existing examples used to 
distinguish different notions of r-update often 
seem unnatural in some respect. This led us to devise a set of criteria for `well-behaved' AICs, 
which provide sufficient conditions for founded, grounded and justified repairs to coincide 
(Example \ref{ex:AICsfromPrioKB-well-founded-diff} suggests that well-founded repairs 
are too permissive). 
Even for such restricted AICs, it is not always clear what user intentions are being captured. 
We thus believe that there is still work to be done to develop user-friendly 
formalisms for expressing constraints and preferences on how to handle
constraint violations. 

\section*{Acknowledgements}
This work was supported by the ANR AI Chair INTENDED (ANR-19-CHIA-0014).

\bibliographystyle{kr}
\bibliography{ms}

\begin{thebibliography}{}

\bibitem[\protect\citeauthoryear{Arenas, Bertossi, and
  Chomicki}{1999}]{ArenasBC99}
Arenas, M.; Bertossi, L.~E.; and Chomicki, J.
\newblock 1999.
\newblock Consistent query answers in inconsistent databases.
\newblock In {\em Proceedings of the 18th {ACM} {SIGMOD-SIGACT-SIGAI} Symposium
  on Principles of Database Systems {(PODS)}}.

\bibitem[\protect\citeauthoryear{Arming, Pichler, and
  Sallinger}{2016}]{DBLP:conf/icdt/ArmingPS16}
Arming, S.; Pichler, R.; and Sallinger, E.
\newblock 2016.
\newblock Complexity of repair checking and consistent query answering.
\newblock In {\em Proceedings of the 19th International Conference on Database
  Theory ({ICDT})}.

\bibitem[\protect\citeauthoryear{Bertossi}{2019}]{DBLP:conf/pods/Bertossi19}
Bertossi, L.~E.
\newblock 2019.
\newblock Database repairs and consistent query answering: Origins and further
  developments.
\newblock In {\em Proceedings of the 38th {ACM} {SIGMOD-SIGACT-SIGAI} Symposium
  on Principles of Database Systems, ({PODS})},  48--58.

\bibitem[\protect\citeauthoryear{Bienvenu and
  Bourgaux}{2020}]{DBLP:conf/kr/BienvenuB20}
Bienvenu, M., and Bourgaux, C.
\newblock 2020.
\newblock Querying and repairing inconsistent prioritized knowledge bases:
  Complexity analysis and links with abstract argumentation.
\newblock In {\em Proceedings of the 17th International Conference on
  Principles of Knowledge Representation and Reasoning ({KR})}.

\bibitem[\protect\citeauthoryear{Bienvenu and
  Bourgaux}{2022}]{DBLP:conf/kr/BienvenuB22}
Bienvenu, M., and Bourgaux, C.
\newblock 2022.
\newblock Querying inconsistent prioritized data with {ORBITS:} algorithms,
  implementation, and experiments.
\newblock In {\em Proceedings of the 19th International Conference on
  Principles of Knowledge Representation and Reasoning ({KR})}.

\bibitem[\protect\citeauthoryear{Bienvenu and
  Rosati}{2013}]{Bienvenu_TractableApproximation_long}
Bienvenu, M., and Rosati, R.
\newblock 2013.
\newblock Tractable approximations of consistent query answering for robust
  ontology-based data access.
\newblock In {\em Proceedings of the 23rd International Joint Conference on
  Artificial Intelligence ({IJCAI})}.

\bibitem[\protect\citeauthoryear{Bienvenu, Bourgaux, and
  Goasdou{\'{e}}}{2014}]{DBLP:conf/aaai/BienvenuBG14}
Bienvenu, M.; Bourgaux, C.; and Goasdou{\'{e}}, F.
\newblock 2014.
\newblock Querying inconsistent description logic knowledge bases under
  preferred repair semantics.
\newblock In {\em Proceedings of the 28th {AAAI} Conference on Artificial
  Intelligence ({AAAI})}.

\bibitem[\protect\citeauthoryear{Bienvenu}{2020}]{DBLP:journals/ki/Bienvenu20}
Bienvenu, M.
\newblock 2020.
\newblock A short survey on inconsistency handling in ontology-mediated query
  answering.
\newblock {\em K{\"{u}}nstliche Intelligenz} 34(4):443--451.

\bibitem[\protect\citeauthoryear{Bogaerts and
  Cruz{-}Filipe}{2018}]{DBLP:journals/ai/BogaertsC18}
Bogaerts, B., and Cruz{-}Filipe, L.
\newblock 2018.
\newblock Fixpoint semantics for active integrity constraints.
\newblock {\em Artif. Intell.} 255:43--70.

\bibitem[\protect\citeauthoryear{Bogaerts and
  Cruz{-}Filipe}{2021}]{DBLP:journals/tocl/BogaertsC21}
Bogaerts, B., and Cruz{-}Filipe, L.
\newblock 2021.
\newblock Stratification in approximation fixpoint theory and its application
  to active integrity constraints.
\newblock {\em {ACM} Trans. Comput. Log.} 22(1):6:1--6:19.

\bibitem[\protect\citeauthoryear{Bourgaux \bgroup et al\mbox.\egroup
  }{2022}]{DBLP:conf/kr/BourgauxBPT22}
Bourgaux, C.; Bourhis, P.; Peterfreund, L.; and Thomazo, M.
\newblock 2022.
\newblock Revisiting semiring provenance for datalog.
\newblock In {\em Proceedings of the 19th International Conference on
  Principles of Knowledge Representation and Reasoning ({KR})}.

\bibitem[\protect\citeauthoryear{Bourgaux}{2016}]{DBLP:phd/hal/Bourgaux16}
Bourgaux, C.
\newblock 2016.
\newblock {\em Inconsistency Handling in Ontology-Mediated Query Answering.
  (Gestion des incoh{\'{e}}rences pour l'acc{\`{e}}s aux donn{\'{e}}es en
  pr{\'{e}}sence d'ontologies)}.
\newblock Ph.D. Dissertation, University of Paris-Saclay, France.

\bibitem[\protect\citeauthoryear{Calautti \bgroup et al\mbox.\egroup
  }{2021}]{DBLP:journals/eswa/CalauttiCGMTZ21}
Calautti, M.; Caroprese, L.; Greco, S.; Molinaro, C.; Trubitsyna, I.; and
  Zumpano, E.
\newblock 2021.
\newblock Existential active integrity constraints.
\newblock {\em Expert Syst. Appl.} 168:114297.

\bibitem[\protect\citeauthoryear{Caroprese and
  Truszczynski}{2011}]{DBLP:journals/tplp/CaropreseT11}
Caroprese, L., and Truszczynski, M.
\newblock 2011.
\newblock Active integrity constraints and revision programming.
\newblock {\em Theory Pract. Log. Program.} 11(6):905--952.

\bibitem[\protect\citeauthoryear{Caroprese \bgroup et al\mbox.\egroup
  }{2006}]{DBLP:conf/iclp/CaropreseGSZ06}
Caroprese, L.; Greco, S.; Sirangelo, C.; and Zumpano, E.
\newblock 2006.
\newblock Declarative semantics of production rules for integrity maintenance.
\newblock In {\em Logic Programming, 22nd International Conference, {ICLP}
  2006, Seattle, WA, USA, August 17-20, 2006, Proceedings}.

\bibitem[\protect\citeauthoryear{Caroprese, Greco, and
  Zumpano}{2009}]{DBLP:journals/tkde/CaropreseGZ09}
Caroprese, L.; Greco, S.; and Zumpano, E.
\newblock 2009.
\newblock Active integrity constraints for database consistency maintenance.
\newblock {\em {IEEE} Trans. Knowl. Data Eng.} 21(7):1042--1058.

\bibitem[\protect\citeauthoryear{Cruz{-}Filipe \bgroup et al\mbox.\egroup
  }{2013}]{DBLP:conf/tase/Cruz-FilipeGEN13}
Cruz{-}Filipe, L.; Gaspar, G.; Engr{\'{a}}cia, P.; and Nunes, I.
\newblock 2013.
\newblock Computing repairs from active integrity constraints.
\newblock In {\em Seventh International Symposium on Theoretical Aspects of
  Software Engineering, {TASE} 2013, 1-3 July 2013, Birmingham, {UK}}.

\bibitem[\protect\citeauthoryear{Cruz{-}Filipe}{2016}]{DBLP:conf/iclp/Cruz-Filipe16}
Cruz{-}Filipe, L.
\newblock 2016.
\newblock Grounded fixpoints and active integrity constraints.
\newblock In {\em Technical Communications of the 32nd International Conference
  on Logic Programming, {ICLP} 2016 TCs, October 16-21, 2016, New York City,
  {USA}}.

\bibitem[\protect\citeauthoryear{Eiter \bgroup et al\mbox.\egroup
  }{2008}]{DBLP:journals/tods/EiterFGL08}
Eiter, T.; Fink, M.; Greco, G.; and Lembo, D.
\newblock 2008.
\newblock Repair localization for query answering from inconsistent databases.
\newblock {\em {ACM} Trans. Database Syst.} 33(2):10:1--10:51.

\bibitem[\protect\citeauthoryear{Fagin, Kimelfeld, and
  Kolaitis}{2015}]{DBLP:conf/pods/FaginKK15}
Fagin, R.; Kimelfeld, B.; and Kolaitis, P.~G.
\newblock 2015.
\newblock Dichotomies in the complexity of preferred repairs.
\newblock In {\em Proceedings of the 34th {ACM} {SIGMOD-SIGACT-SIGAI} Symposium
  on Principles of Database Systems ({PODS})}.

\bibitem[\protect\citeauthoryear{Feuillade, Herzig, and
  Rantsoudis}{2019}]{DBLP:journals/fuin/FeuilladeHR19}
Feuillade, G.; Herzig, A.; and Rantsoudis, C.
\newblock 2019.
\newblock A dynamic logic account of active integrity constraints.
\newblock {\em Fundam. Informaticae} 169(3):179--210.

\bibitem[\protect\citeauthoryear{Flesca, Greco, and
  Zumpano}{2004}]{DBLP:conf/ppdp/FlescaGZ04}
Flesca, S.; Greco, S.; and Zumpano, E.
\newblock 2004.
\newblock Active integrity constraints.
\newblock In {\em Proceedings of the 6th International {ACM} {SIGPLAN}
  Conference on Principles and Practice of Declarative Programming, 24-26
  August 2004, Verona, Italy}.

\bibitem[\protect\citeauthoryear{Gaifman \bgroup et al\mbox.\egroup
  }{1993}]{DBLP:journals/jacm/GaifmanMSV93}
Gaifman, H.; Mairson, H.~G.; Sagiv, Y.; and Vardi, M.~Y.
\newblock 1993.
\newblock Undecidable optimization problems for database logic programs.
\newblock {\em J. {ACM}} 40(3):683--713.

\bibitem[\protect\citeauthoryear{Kimelfeld, Livshits, and
  Peterfreund}{2017}]{DBLP:conf/icdt/KimelfeldLP17}
Kimelfeld, B.; Livshits, E.; and Peterfreund, L.
\newblock 2017.
\newblock Detecting ambiguity in prioritized database repairing.
\newblock In {\em Proceedings of the 20th International Conference on Database
  Theory ({ICDT})}.

\bibitem[\protect\citeauthoryear{Lembo \bgroup et al\mbox.\egroup
  }{2010}]{LemboLRRS10}
Lembo, D.; Lenzerini, M.; Rosati, R.; Ruzzi, M.; and Savo, D.~F.
\newblock 2010.
\newblock Inconsistency-tolerant semantics for description logics.
\newblock In {\em Proceedings of the 4th International Conference on Web
  Reasoning and Rule Systems ({RR})}.

\bibitem[\protect\citeauthoryear{Liberatore}{2005}]{DBLP:journals/ai/Liberatore05}
Liberatore, P.
\newblock 2005.
\newblock Redundancy in logic {I:} {CNF} propositional formulae.
\newblock {\em Artif. Intell.} 163(2):203--232.

\bibitem[\protect\citeauthoryear{Livshits and
  Kimelfeld}{2017}]{DBLP:conf/pods/LivshitsK17}
Livshits, E., and Kimelfeld, B.
\newblock 2017.
\newblock Counting and enumerating (preferred) database repairs.
\newblock In {\em Proceedings of the 36th {ACM} {SIGMOD-SIGACT-SIGAI} Symposium
  on Principles of Database Systems ({PODS})}.

\bibitem[\protect\citeauthoryear{Marileo and
  Bertossi}{2010}]{DBLP:journals/dke/MarileoB10}
Marileo, M.~C., and Bertossi, L.~E.
\newblock 2010.
\newblock The consistency extractor system: Answer set programs for consistent
  query answering in databases.
\newblock {\em Data Knowl. Eng.} 69(6):545--572.

\bibitem[\protect\citeauthoryear{Staworko and
  Chomicki}{2010}]{DBLP:journals/is/StaworkoC10}
Staworko, S., and Chomicki, J.
\newblock 2010.
\newblock Consistent query answers in the presence of universal constraints.
\newblock {\em Inf. Syst.} 35(1):1--22.

\bibitem[\protect\citeauthoryear{Staworko, Chomicki, and
  Marcinkowski}{2012}]{DBLP:journals/amai/StaworkoCM12}
Staworko, S.; Chomicki, J.; and Marcinkowski, J.
\newblock 2012.
\newblock Prioritized repairing and consistent query answering in relational
  databases.
\newblock {\em Annals of Mathematics and Artificial Intelligence ({AMAI})}
  64(2-3):209--246.

\end{thebibliography}

\newpage
\onecolumn
\appendix

\section{Proofs for Section \ref{sec:opti-repairs}}

\subsection{Proofs for Section \ref{subsec:conflicts}}

Before proceeding with the proof of Proposition~\ref{prop:defconflicts}, let us illustrate the alternative characterizations of conflicts on Example~\ref{ex:conflicts}.

\begin{example}[Example \ref{ex:conflicts} cont'd]
Recall that $\Dmc= \{A(a), B(a)\}$, $\constraints=\{\tau_1,\tau_2, \tau_3\}$, where 
$\tau_1:= A(x)\rightarrow C(x)$, $\tau_2:=B(x)\rightarrow D(x)$, and $\tau_3:=C(x)\wedge D(x)\rightarrow \bot$, and 
\begin{align*}
\deltareps{\Dmc,\constraints}=  \{&\emptyset, \{A(a), C(a)\} , \, \{B(a),D(a)\} \}\\
\conflicts{\Dmc,\constraints}=  \{& \{A(a), \no C(a)\}, \{B(a), \no D(a)\},   \{A(a), B(a)\}\}.
\end{align*}
\begin{enumerate}
\item $\{\Rmc\Delta\Dmc\mid \Rmc \in\deltareps{\Dmc,\constraints}\}=\{ \{A(a), B(a)\},$ $\{B(a), C(a)\} ,$ $\{A(a),D(a)\} \}$ so  
the minimal hitting sets of the symmetric differences between $\Dmc$ and each $\Rmc \in\deltareps{\Dmc,\constraints}$ are  
$\mhs{\Dmc,\constraints}=  \{\{A(a), C(a)\}, \, \{B(a), D(a)\},\, \{A(a), B(a)\} \}$. Transforming these sets of facts into sets of literals by negating $C(a)$ and $D(a)$ which do not belong to $\Dmc$ give the conflicts  of $\Dmc$ \wrt $\constraints$. 

\item The prime implicants of $c_1\vee c_2\vee c_3$ where $c_1:=A(a)\wedge\neg C(a)$, $c_2:=B(a)\wedge\neg D(a)$, and $c_3:=C(a)\wedge D(a)$, are $c_1$, $c_2$, $c_3$, $c_4:=A(a)\wedge D(a)$, $c_5:=B(a)\wedge C(a)$, and $c_6:=A(a)\wedge B(a)$. 
Since $\litset=\{A(a), B(a),\neg C(a),\neg D(a)\}$, it follows that the sets of literals of $c_1$, $c_2$, and $c_6$ are the conflicts of $\Dmc$ \wrt $\constraints$. 
\end{enumerate}
\end{example}

\DefConflicts*
\begin{proof}
\noindent $(1)$ 
Let $\conf\in\conflicts{\Dmc,\constraints}$ and assume for a contradiction that $\conf\notin \{ \Hmc\cap\Dmc\cup\{\no\alpha \mid \alpha\in\Hmc\setminus\Dmc\} \mid \Hmc\in \mhs{\Dmc,\constraints} \}$. This means that either (i) $\Hmc=\{\alpha\mid \alpha\text{ or }\neg\alpha\in\conf\}$ is not a hitting set of $\{\Rmc\Delta\Dmc\mid \Rmc \in\deltareps{\Dmc,\constraints}\}$, or (ii) there exists $\Hmc'\subsetneq\Hmc$ which is a hitting set of $\{\Rmc\Delta\Dmc\mid \Rmc \in\deltareps{\Dmc,\constraints}\}$.
\begin{itemize}
\item In case (i), there exists $\Rmc\in\deltareps{\Dmc,\constraints}$ such that $\Hmc\cap(\Rmc\Delta\Dmc)=\emptyset$. 
Let $\lambda\in\conf$. If $\lambda=\alpha$, $\alpha\in\Dmc$ so $\alpha\notin\Rmc\Delta\Dmc$ implies $\alpha\in\Rmc$. 
If $\lambda=\neg\alpha$, $\alpha\notin\Dmc$ so $\alpha\notin\Rmc\Delta\Dmc$ implies $\alpha\notin\Rmc$. 
Since in both cases $\alpha\in\Hmc$, then $\alpha\notin\Rmc\Delta\Dmc$ so $\Rmc\models\lambda$. 
It follows that $\Rmc\models\conf$, so $\Rmc\not\models\constraints$: a contradiction.
\item In case (ii), let $\conf'=\Hmc'\cap\Dmc\cup\{\no\alpha\mid\alpha\in\Hmc'\setminus\Dmc\}$. Since $\Hmc'\subsetneq\Hmc$, then $\conf'\subsetneq\conf$. By definition of $\conf$ there exists a database $\Imc$ such that $\Imc\models \conf'$ and $\Imc\models\constraints$. Hence there exists $\Rmc\in\deltareps{\Dmc,\constraints}$ such that $\Rmc\Delta\Dmc\subseteq\Imc\Delta\Dmc$. 
Since $\Hmc'$ is a hitting set of $\{\Rmc\Delta\Dmc\mid \Rmc \in\deltareps{\Dmc,\constraints}\}$, there exists $\alpha\in\Hmc'\cap\Rmc\Delta\Dmc$, thus $\alpha\in\Hmc'\cap\Imc\Delta\Dmc$. 
Hence there exists $\lambda\in\conf'$ such that either $\lambda=\alpha$ and $\alpha\in\Dmc\setminus\Imc$ or $\lambda=\no\alpha$ and $\alpha\in\Imc\setminus\Dmc$. In both cases, $\Imc\not\models \lambda$: a contradiction.
\end{itemize}
In the other direction,  let $\conf\in \{ \Hmc\cap\Dmc\cup\{\no\alpha \mid \alpha\in\Hmc\setminus\Dmc\} \mid \Hmc\in \mhs{\Dmc,\constraints} \}$ 
and assume for a contradiction that $\conf\notin\conflicts{\Dmc,\constraints}$. This means that either (i) there exists $\Imc$ such that $\Imc\models \conf$ and $\Imc\models\constraints$ or (ii) there exists $\conf'\subsetneq\conf$ such that $\Imc\models \conf'$ implies $\Imc\not\models\constraints$.
\begin{itemize}
\item In case (i), since $\Imc\models\constraints$, there exists $\Rmc\in\deltareps{\Dmc,\constraints}$ such that $\Rmc\Delta\Dmc\subseteq\Imc\Delta\Dmc$. Hence, by definition of $\conf$, there exists $\alpha\in\Rmc\Delta\Dmc$ such that $\lambda\in\conf$ where $\lambda=\alpha$ if $\alpha\in\Dmc$ and $\lambda=\neg\alpha$ if $\alpha\notin\Dmc$. 
Since $\Rmc\Delta\Dmc\subseteq\Imc\Delta\Dmc$, then $\alpha\in\Imc\Delta\Dmc$, so $\lambda=\alpha$ if $\alpha\in\Dmc\setminus\Imc$ and $\lambda=\neg\alpha$ if $\alpha\in\Imc\setminus\Dmc$. 
It follows that $\Imc\not\models \lambda$, thus $\Imc\not\models\conf$: a contradiction. 
\item In case (ii), let $\Hmc=\{\alpha\mid \alpha\text{ or }\neg\alpha\in\conf\}$. It is easy to check that $\Hmc$ is the minimal hitting set of $\{\Rmc\Delta\Dmc\mid \Rmc \in\deltareps{\Dmc,\constraints}\}$ that corresponds to $\conf$. Let $\Hmc'=\{\alpha\mid \alpha\text{ or }\neg\alpha\in\conf'\}$. Since $\conf'\subsetneq\conf$, then $\Hmc'\subsetneq\Hmc$.  
For every $\Rmc\in\deltareps{\Dmc,\constraints}$, since $\Rmc\models \constraints$, then $\Rmc\not\models\conf'$ so there exists $\lambda\in\conf'$ such that $\Rmc\not\models\lambda$. If $\lambda=\alpha$, since $\alpha\in\Dmc$ by construction of $\conf$, it follows that $\alpha\in\Dmc\setminus\Rmc$. If $\lambda=\neg\alpha$, since $\alpha\notin\Dmc$ by construction of $\conf$, it follows that $\alpha\in\Rmc\setminus\Dmc$. Hence in both cases there exists $\alpha\in\Hmc'$ such that $\alpha\in\Rmc\Delta\Dmc$. It follows that $\Hmc'$ is a hitting set of $\{\Rmc\Delta\Dmc\mid \Rmc \in\deltareps{\Dmc,\constraints}\}$: a contradiction.
\end{itemize}

\noindent $(2)$ Recall that $\bigwedge_{\lambda\in\conf}\lambda$ is a prime implicant of a propositional formula $\psi$ if it is a minimal conjunction of propositional literals that entails $\psi$, \ie for every database $\Imc$, $\Imc\models\conf$ implies $\Imc\models\psi$ and for every $\conf'\subsetneq\conf$, there exists a database $\Imc$ such that $\Imc\models\conf'$ and $\Imc\not\models\psi$. 
Moreover, note that $\Imc\models\constraints$ iff $\Imc\models gr_\Imc(\constraints)$ iff $\Imc\not\models \bigvee_{\varphi\rightarrow \bot\in gr_\Imc(\constraints)} \varphi$ and that for every $\Imc\subseteq\factset$, $gr_\Imc(\constraints)\subseteq gr_\Dmc(\constraints)$ since the domain of $\Imc$ is included in the domain of $\Dmc$. 

Let $\conf\in\conflicts{\Dmc,\constraints}$. By definition of $\conf$, it holds that $\conf\subseteq\litset$. 
\begin{itemize}
\item Let $\Imc$ be a database such that $\Imc\models\conf$ and let $\Imc'=\Imc\cap\factset$. $\Imc'\models\conf$ so $\Imc'\not\models\constraints$. Since $gr_{\Imc'}(\constraints)\subseteq gr_\Dmc(\constraints)$ it follows that $\Imc'\models \bigvee_{\varphi\rightarrow \bot\in gr_\Dmc(\constraints)} \varphi$. Since facts that are not in $\factset$ are irrelevant to the satisfaction of $\bigvee_{\varphi\rightarrow \bot\in gr_\Dmc(\constraints)} \varphi$, we obtain that $\Imc\models \bigvee_{\varphi\rightarrow \bot\in gr_\Dmc(\constraints)} \varphi$.
\item For every $\conf'\subsetneq\conf$, there exists $\Imc$ such that $\Imc\models\conf'$ and $\Imc\models\constraints$, so that $\Imc\not\models \bigvee_{\varphi\rightarrow \bot\in gr_\Dmc(\constraints)} \varphi $.
\end{itemize}
Hence  $\bigwedge_{\lambda\in\conf}\lambda$ is a prime implicant of $\bigvee_{\varphi\rightarrow \bot\in gr_\Dmc(\constraints)} \varphi $.

\noindent In the other direction, let $\conf\subseteq \litset$ be such that $\bigwedge_{\lambda\in\conf}\lambda$ is a prime implicant of $\bigvee_{\varphi\rightarrow \bot\in gr_\Dmc(\constraints)} \varphi $. 
\begin{itemize}
\item For every database $\Imc$, if $\Imc\models\conf$, then $\Imc\models \bigvee_{\varphi\rightarrow \bot\in gr_\Dmc(\constraints)} \varphi $ so $\Imc\not\models\constraints$. 

\item For every $\conf'\subsetneq\conf$, there exists $\Imc$ such that $\Imc\models\conf'$ and $\Imc\not\models \bigvee_{\varphi\rightarrow \bot\in gr_\Dmc(\constraints)} \varphi$. Let $\Imc'=\Imc\cap\factset$. Since $\conf'\subseteq\litset$ and facts that are not in $\factset$ are irrelevant to the satisfaction of $\bigvee_{\varphi\rightarrow \bot\in gr_\Dmc(\constraints)} \varphi$, $\Imc'\models \conf'$ and $\Imc'\not\models \bigvee_{\varphi\rightarrow \bot\in gr_\Dmc(\constraints)} \varphi$. Since $gr_{\Imc'}(\constraints)\subseteq gr_\Dmc(\constraints)$, it follows that $\Imc'\models\constraints$. 
\end{itemize}
Hence $\conf\in\conflicts{\Dmc,\constraints}$. 
\end{proof}

The next few basic lemmas allow us to move between databases and sets of literals using functions $\comp{\Dmc}{}$ and $\restr{\Dmc}{}$. 
\begin{align*}
\text{Recall that  }\comp{\Dmc}{\Rmc}=&\mi{Lits}_\Dmc(\Rmc)\cap \litset=(\Rmc\cap\Dmc)\cup\{\no\alpha\mid \alpha\in\factset\setminus(\Rmc\cup\Dmc)\}\\
\text{and  }\restr{\Dmc}{\Bmc}=&\Bmc\cap\Dmc\cup\{\alpha\mid\no\alpha\in \litset\setminus\Bmc\}.
\end{align*}

\begin{lemma}\label{lem:compl-restr}
The following assertions hold:
\begin{enumerate}
\item If $\Rmc$ is a candidate repair for $\Dmc$, \ie $\Rmc\subseteq \factset$, then $\restr{\Dmc}{\comp{\Dmc}{\Rmc}}=\Rmc$. 

\item If $\Bmc\subseteq\litset$, then $\comp{\Dmc}{\restr{\Dmc}{\Bmc}}=\Bmc$.
\end{enumerate}
\end{lemma}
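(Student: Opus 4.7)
Both parts are routine set-theoretic unfoldings of the definitions of $\comp{\Dmc}{\cdot}$ and $\restr{\Dmc}{\cdot}$. The only point requiring care is to keep track of the distinction between positive and negative literals, and in particular the fact that $\litset$ contains $\no\alpha$ only when $\alpha\in\factset\setminus\Dmc$, so a negative literal $\no\alpha$ lies in $\litset$ exactly when $\alpha\notin\Dmc$.

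For part (1), I would set $\Bmc:=\comp{\Dmc}{\Rmc}=(\Rmc\cap\Dmc)\cup\{\no\alpha\mid\alpha\in\factset\setminus(\Rmc\cup\Dmc)\}$ and compute $\restr{\Dmc}{\Bmc}$ by splitting it into its two pieces. The first piece $\Bmc\cap\Dmc$ reduces to $\Rmc\cap\Dmc$, since the negative literals in $\Bmc$ do not lie in $\Dmc$. For the second piece, I would show that $\no\alpha\in\litset\setminus\Bmc$ iff $\alpha\in\factset\setminus\Dmc$ and $\alpha\in\Rmc\cup\Dmc$, which combined gives $\alpha\in\Rmc\setminus\Dmc$. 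Putting the two pieces together yields $(\Rmc\cap\Dmc)\cup(\Rmc\setminus\Dmc)=\Rmc$, using that $\Rmc\subseteq\factset$.

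For part (2), I would symmetrically set $\Rmc:=\restr{\Dmc}{\Bmc}=(\Bmc\cap\Dmc)\cup\{\alpha\mid\no\alpha\in\litset\setminus\Bmc\}$ and compute $\comp{\Dmc}{\Rmc}$. The positive part $\Rmc\cap\Dmc$ reduces to $\Bmc\cap\Dmc$, since any $\alpha$ added from the second clause of $\restr{\Dmc}{\Bmc}$ satisfies $\no\alpha\in\litset$ and hence $\alpha\notin\Dmc$. For the negative part, I would argue that for $\alpha\in\factset$, we have $\alpha\in\factset\setminus(\Rmc\cup\Dmc)$ iff both $\alpha\notin\Dmc$ (equivalently $\no\alpha\in\litset$) and $\no\alpha\notin\litset\setminus\Bmc$, which together mean $\no\alpha\in\Bmc$. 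Thus the set of negative literals in $\comp{\Dmc}{\Rmc}$ is exactly the set of negative literals of $\Bmc$, and combining with the positive part gives $\Bmc$, since $\Bmc\subseteq\litset$ implies $\Bmc$ consists solely of facts in $\Dmc$ together with negations of facts outside $\Dmc$.

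I do not anticipate any obstacle: the statement is a sanity check asserting that the two translations between candidate repairs and subsets of $\litset$ are mutually inverse, and a careful case split on positive versus negative literals suffices.
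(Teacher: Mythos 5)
Your proof is correct and is essentially identical to the paper's own argument: both parts are established by unfolding the definitions of $\comp{\Dmc}{\cdot}$ and $\restr{\Dmc}{\cdot}$, splitting into positive and negative literals, and using exactly the two key observations that $\alpha\in\Dmc$ implies $\no\alpha\notin\litset$ and that $\Rmc\subseteq\factset$ (resp.\ $\Bmc\subseteq\litset$) guarantees the relevant negative literals (resp.\ positive literals) are accounted for. No gaps.
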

\begin{proof}
\noindent$(1)$ Let $\Rmc\subseteq \factset$. 
 \begin{align*}
 \restr{\Dmc}{\comp{\Dmc}{\Rmc}}=&(\comp{\Dmc}{\Rmc}\cap\Dmc)\cup\{\alpha\mid\no\alpha\in\litset\setminus\comp{\Dmc}{\Rmc}\}
 \\
 =&(\Rmc\cap\Dmc)\cup\{\alpha\mid\no\alpha\in(\litset\setminus\{\no\alpha\mid \alpha\in\factset\setminus(\Rmc\cup\Dmc)\})\}
 \\
  =&(\Rmc\cap\Dmc)\cup\{\alpha\mid\no\alpha\in\litset,\alpha\in\Rmc\cup\Dmc\}\\
    =&(\Rmc\cap\Dmc)\cup\{\alpha\mid\no\alpha\in\litset,\alpha\in\Rmc\setminus\Dmc\}\text{ since }\alpha\in\Dmc\text{ implies }\no\alpha\notin\litset\\
        =&(\Rmc\cap\Dmc)\cup(\Rmc\setminus\Dmc)\text{ since }\alpha\in\Rmc\setminus\Dmc\text{ implies }\no\alpha\in\litset\text{ because }\Rmc\subseteq \factset\\
 =&\Rmc
 \end{align*}
 
 \noindent$(2)$ Let $\Bmc\subseteq\litset$.
  \begin{align*}
 \comp{\Dmc}{\restr{\Dmc}{\Bmc}}
 =&(\restr{\Dmc}{\Bmc}\cap\Dmc)\cup\{\no\alpha\mid \alpha\in\factset\setminus(\restr{\Dmc}{\Bmc}\cup\Dmc)\}\\
 =&(\Bmc\cap\Dmc)\cup\{\no\alpha\mid \no\alpha\in\litset,\alpha\notin \restr{\Dmc}{\Bmc})\}
 \\
  =&(\Bmc\cap\Dmc)\cup\{\no\alpha\mid \no\alpha\in\Bmc\}
  \\
 =&\Bmc\text{ since all positive literals of }\Bmc\text{ are also in }\Dmc\text{ as }\Bmc\subseteq\litset\qedhere
 \end{align*}
 
\end{proof}

\begin{lemma}\label{lem:compl-delta-1}
If $\Rmc_2$ is a candidate repair and $\Rmc_1\Delta\Dmc\subseteq\Rmc_2\Delta\Dmc$ then $\comp{\Dmc}{\Rmc_2}\subseteq\comp{\Dmc}{\Rmc_1}$. If the first inclusion is strict, so is the second. 
\end{lemma}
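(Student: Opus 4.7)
The plan is to exploit a natural bijection between the facts of $\factset$ and the literals of $\litset$, under which the symmetric difference $\Rmc \Delta \Dmc$ corresponds to the set of literals \emph{not} preserved by $\Rmc$, i.e.\ $\litset \setminus \comp{\Dmc}{\Rmc}$. Once that correspondence is made precise, both parts of the lemma reduce to elementary set-theoretic bookkeeping.

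Concretely, for each $\alpha \in \factset$ I would let $\lambda_\alpha := \alpha$ if $\alpha \in \Dmc$ and $\lambda_\alpha := \neg \alpha$ otherwise; by construction $\alpha \mapsto \lambda_\alpha$ is a bijection $\factset \to \litset$. The key identity to verify, by case analysis on whether $\alpha \in \Dmc$, is that for any candidate repair $\Rmc$ and any $\alpha \in \factset$,
\[
 \lambda_\alpha \in \comp{\Dmc}{\Rmc} \quad \Longleftrightarrow \quad \alpha \notin \Rmc \Delta \Dmc .
\]
Each case is a one-line unfolding of the definition $\comp{\Dmc}{\Rmc}=(\Rmc\cap\Dmc)\cup\{\neg\alpha\mid \alpha\in\factset\setminus(\Rmc\cup\Dmc)\}$ against the characterization of $\Rmc \Delta \Dmc$. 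Before applying this to $\Rmc_1$, I would observe that $\Rmc_1$ is itself a candidate repair: since $\Rmc_1 \setminus \Dmc \subseteq \Rmc_1 \Delta \Dmc \subseteq \Rmc_2 \Delta \Dmc \subseteq \factset$ (the last inclusion because $\Rmc_2, \Dmc \subseteq \factset$) and $\Rmc_1 \cap \Dmc \subseteq \Dmc \subseteq \factset$, we get $\Rmc_1 \subseteq \factset$, so $\comp{\Dmc}{\Rmc_1}$ is well defined.

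The lemma then follows by contraposition through the bijection: the hypothesis $\Rmc_1 \Delta \Dmc \subseteq \Rmc_2 \Delta \Dmc$ says that $\alpha \notin \Rmc_2 \Delta \Dmc$ implies $\alpha \notin \Rmc_1 \Delta \Dmc$, which via the identity above is exactly $\lambda_\alpha \in \comp{\Dmc}{\Rmc_2} \Rightarrow \lambda_\alpha \in \comp{\Dmc}{\Rmc_1}$, i.e.\ $\comp{\Dmc}{\Rmc_2} \subseteq \comp{\Dmc}{\Rmc_1}$. For strictness, any witness $\alpha \in (\Rmc_2 \Delta \Dmc) \setminus (\Rmc_1 \Delta \Dmc)$ yields $\lambda_\alpha \in \comp{\Dmc}{\Rmc_1} \setminus \comp{\Dmc}{\Rmc_2}$. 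There is no real obstacle here; the only mildly subtle point is noting that $\Rmc_1$ is automatically a candidate repair, which the inclusion hypothesis hands us for free.
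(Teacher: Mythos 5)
Your proof is correct and follows essentially the same route as the paper's: the key identity $\lambda_\alpha\in\comp{\Dmc}{\Rmc}\Leftrightarrow\alpha\notin\Rmc\Delta\Dmc$ is exactly the case analysis (on whether $\alpha\in\Dmc$) that the paper carries out inline, and both parts of the lemma then follow by the same definition-chasing. Your additional observations --- that $\Rmc_1$ is automatically a candidate repair, and that the candidate-repair hypothesis on $\Rmc_2$ is what guarantees $\neg\alpha\in\litset$ for the witness in the strictness step --- match precisely where the paper invokes that hypothesis.
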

\begin{proof}
Let $\lambda\in\comp{\Dmc}{\Rmc_2}$. If $\lambda=\alpha\in\Dmc$, then $\alpha\in\Rmc_2$. Thus since $\Rmc_1\Delta\Dmc\subseteq\Rmc_2\Delta\Dmc$, then $\alpha\in\Rmc_1$. If $\lambda=\no\alpha$ for $\alpha\notin\Dmc$, then $\alpha\notin\Rmc_2$. Thus since $\Rmc_1\Delta\Dmc\subseteq\Rmc_2\Delta\Dmc$, then $\alpha\notin\Rmc_1$. In both cases, $\lambda\in\comp{\Dmc}{\Rmc_1}$. Hence $\comp{\Dmc}{\Rmc_2}\subseteq\comp{\Dmc}{\Rmc_1}$. 

If the inclusion is strict, $\Rmc_1\Delta\Dmc\subsetneq\Rmc_2\Delta\Dmc$, there is $\alpha\in (\Rmc_2\Delta\Dmc)\setminus(\Rmc_1\Delta\Dmc)$. If $\alpha\in\Dmc$ it is clear that $\alpha\in \comp{\Dmc}{\Rmc_1}$ while $\alpha\notin \comp{\Dmc}{\Rmc_2}$. If $\alpha\notin\Dmc$, since $\alpha\in\Rmc_2$ and $\Rmc_2$ is a candidate repair, then $\alpha\in\factset$ so $\no\alpha\in\litset$ so $\no\alpha\in \comp{\Dmc}{\Rmc_1}$ while $\no\alpha\notin \comp{\Dmc}{\Rmc_2}$ and $\comp{\Dmc}{\Rmc_2}\subsetneq\comp{\Dmc}{\Rmc_1}$.  
\end{proof}

\begin{lemma}\label{lem:compl-delta-2}
If $\Bmc_1\subseteq\Bmc_2\subseteq\litset$ then $\restr{\Dmc}{\Bmc_2}\Delta\Dmc\subseteq\restr{\Dmc}{\Bmc_1}\Delta\Dmc$. If the first inclusion is strict, so is the second. 
\end{lemma}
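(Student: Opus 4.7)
The plan is to unfold the definition of $\restr{\Dmc}{\Bmc}$ to obtain an explicit formula for $\restr{\Dmc}{\Bmc}\Delta\Dmc$ expressed solely in terms of $\Bmc$, and then observe that each component of this formula is anti-monotone in $\Bmc$. Since the statement is purely set-theoretic once one sorts out which literals are in $\litset$, I do not anticipate any real obstacle beyond careful bookkeeping; the main thing to be careful about is distinguishing the positive part (inside $\Dmc$) from the negative part (outside $\Dmc$).

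First, I would verify the following decomposition. By definition, $\restr{\Dmc}{\Bmc}=(\Bmc\cap\Dmc)\cup\{\alpha\mid\no\alpha\in\litset\setminus\Bmc\}$. Since every positive literal in $\Bmc\subseteq\litset$ is already a fact of $\Dmc$, we get $\restr{\Dmc}{\Bmc}\cap\Dmc=\Bmc\cap\Dmc$, and since $\no\alpha\in\litset$ exactly when $\alpha\in\factset\setminus\Dmc$, we get $\restr{\Dmc}{\Bmc}\setminus\Dmc=\{\alpha\in\factset\setminus\Dmc\mid\no\alpha\notin\Bmc\}$. Combining these yields
\[
\restr{\Dmc}{\Bmc}\,\Delta\,\Dmc \;=\; (\Dmc\setminus\Bmc)\;\cup\;\{\alpha\in\factset\setminus\Dmc\mid\no\alpha\notin\Bmc\}.
\]

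For the inclusion part, assume $\Bmc_1\subseteq\Bmc_2\subseteq\litset$. Applied to the right-hand side above, the first component gives $\Dmc\setminus\Bmc_2\subseteq\Dmc\setminus\Bmc_1$, and for the second component, if $\no\alpha\notin\Bmc_2$ then $\no\alpha\notin\Bmc_1$ because $\Bmc_1\subseteq\Bmc_2$. Taking the union of these two inclusions gives $\restr{\Dmc}{\Bmc_2}\Delta\Dmc\subseteq\restr{\Dmc}{\Bmc_1}\Delta\Dmc$.

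For strictness, suppose $\Bmc_1\subsetneq\Bmc_2$ and pick $\lambda\in\Bmc_2\setminus\Bmc_1$. Since $\lambda\in\litset$, it is either a fact $\alpha\in\Dmc$ or a negative literal $\no\alpha$ for some $\alpha\in\factset\setminus\Dmc$. In the first case, $\alpha\in\Dmc\setminus\Bmc_1$ but $\alpha\notin\Dmc\setminus\Bmc_2$, so $\alpha$ lies in the first component for $\Bmc_1$ but not for $\Bmc_2$, and (being in $\Dmc$) cannot reappear via the second component. In the second case, $\alpha\in\factset\setminus\Dmc$ with $\no\alpha\notin\Bmc_1$ but $\no\alpha\in\Bmc_2$, so $\alpha$ belongs to the second component for $\Bmc_1$ but not for $\Bmc_2$. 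Either way, we exhibit an element of $\restr{\Dmc}{\Bmc_1}\Delta\Dmc$ that is not in $\restr{\Dmc}{\Bmc_2}\Delta\Dmc$, proving strict inclusion. This completes the plan.
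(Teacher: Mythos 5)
Your proof is correct and takes essentially the same route as the paper's: both reduce to the same case split on whether a literal of $\litset$ is a fact of $\Dmc$ or a negated fact outside $\Dmc$. Your version merely packages the element-chasing into an explicit formula for $\restr{\Dmc}{\Bmc}\Delta\Dmc$ first and then invokes anti-monotonicity of each component, which is a slightly cleaner presentation of the identical argument.
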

\begin{proof}
Let $\alpha\in\restr{\Dmc}{\Bmc_2}\Delta\Dmc$. If $\alpha\in\Dmc$, then $\alpha\notin\restr{\Dmc}{\Bmc_2}$ so $\alpha\notin\Bmc_2$ which implies that $\alpha\notin\Bmc_1$ and $\alpha\notin\restr{\Dmc}{\Bmc_1}$. Hence $\alpha\in\Dmc\setminus\restr{\Dmc}{\Bmc_1}$. 
If $\alpha\notin\Dmc$, then $\alpha\in\restr{\Dmc}{\Bmc_2}$ so $\no\alpha\notin\Bmc_2$ which implies that $\no\alpha\notin\Bmc_1$ and $\alpha\in\restr{\Dmc}{\Bmc_1}$. Hence $\alpha\in\restr{\Dmc}{\Bmc_1}\setminus\Dmc$. 
In both cases $\alpha\in\restr{\Dmc}{\Bmc_1}\Delta\Dmc$ so $\restr{\Dmc}{\Bmc_2}\Delta\Dmc\subseteq\restr{\Dmc}{\Bmc_1}\Delta\Dmc$. 

If inclusion is strict, $\Bmc_1\subsetneq\Bmc_2$, there exists $\lambda\in\Bmc_2\setminus\Bmc_1$: If $\lambda=\alpha\in\Dmc$, then $\alpha\in\restr{\Dmc}{\Bmc_2}$ while $\alpha\notin \restr{\Dmc}{\Bmc_1}$; and if $\lambda=\no\alpha$ with $\alpha\notin\Dmc$, then $\alpha\notin\restr{\Dmc}{\Bmc_2}$ while $\alpha\in \restr{\Dmc}{\Bmc_1}$. In both cases, $\alpha\in \restr{\Dmc}{\Bmc_1}\Delta\Dmc$ while $\alpha\notin\restr{\Dmc}{\Bmc_2}\Delta\Dmc$ so $\restr{\Dmc}{\Bmc_2}\Delta\Dmc\subsetneq\restr{\Dmc}{\Bmc_1}\Delta\Dmc$. 
\end{proof}

\begin{lemma}\label{lem:max-no-conf}
If $\Bmc$ is a maximal subset of $\litset$ such that there is no conflict $\conf\in\deltaconflicts{\Dmc,\constraints}$ such that $\conf\subseteq\Bmc$, then $\restr{\Dmc}{\Bmc}\in\deltareps{\Dmc,\constraints}$.
 \end{lemma}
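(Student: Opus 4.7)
The plan is to verify both defining conditions of $\Delta$-repair for $\Rmc := \restr{\Dmc}{\Bmc}$: namely, $\Rmc\models\constraints$ and $\subseteq$-minimality of $\Rmc\Delta\Dmc$, handling consistency first and then minimality.

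For consistency, I would first invoke the contrapositive of the conflict definition: since no conflict is contained in $\Bmc$, there must exist a database $\Jmc$ with $\Jmc\models\Bmc$ and $\Jmc\models\constraints$, for otherwise some $\subseteq$-minimal subset of $\Bmc$ whose satisfiers all violate $\constraints$ would itself be a conflict contained in $\Bmc$. I would then reduce to $\Jmc\subseteq\factset$ by replacing $\Jmc$ with $\Jmc\cap\factset$: satisfaction of $\Bmc$ transfers trivially, and satisfaction of $\constraints$ survives thanks to the safety condition on universal constraints, which forces any grounding using a constant outside $\domain{\Dmc}$ to contain a positive atom outside $\factset$ that $\Jmc\cap\factset$ automatically falsifies. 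Next, since $\Jmc\models\Bmc$ and $\Bmc\subseteq\litset$, we obtain $\Bmc\subseteq\comp{\Dmc}{\Jmc}$, and $\comp{\Dmc}{\Jmc}$ contains no conflict because $\Jmc\models\comp{\Dmc}{\Jmc}\cup\constraints$. The maximality of $\Bmc$ then forces $\Bmc=\comp{\Dmc}{\Jmc}$, and Lemma~\ref{lem:compl-restr}(1) applied to the candidate repair $\Jmc$ gives $\Rmc=\restr{\Dmc}{\Bmc}=\restr{\Dmc}{\comp{\Dmc}{\Jmc}}=\Jmc$, hence $\Rmc\models\constraints$.

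For minimality, I would argue by contradiction: assume there is $\Rmc'\models\constraints$ with $\Rmc'\Delta\Dmc\subsetneq\Rmc\Delta\Dmc$. Projecting $\Rmc'$ to $\factset$ preserves consistency by the same safety argument and keeps the strict inclusion, so I may assume $\Rmc'$ is a candidate repair. Lemma~\ref{lem:compl-delta-1} then gives $\Bmc=\comp{\Dmc}{\Rmc}\subsetneq\comp{\Dmc}{\Rmc'}\subseteq\litset$, and $\comp{\Dmc}{\Rmc'}$ contains no conflict (any conflict inside would force $\Rmc'\not\models\constraints$), which contradicts the maximality of $\Bmc$.

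The main obstacle is the consistency step: the conflict definition only guarantees a satisfier $\Jmc$ that is an arbitrary database, possibly using fresh constants outside $\domain{\Dmc}$, so the delicate point is truncating $\Jmc$ to $\factset$ via the safety condition while preserving both $\Bmc$ and $\constraints$, so that the bijection between candidate repairs and maximal conflict-free subsets of $\litset$ embodied in Lemma~\ref{lem:compl-restr} can identify this truncation with the specific candidate repair $\Rmc=\restr{\Dmc}{\Bmc}$.
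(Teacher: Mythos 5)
Your proof is correct, but it follows a genuinely different route from the paper's. The paper derives the lemma from the hitting-set characterization of conflicts (Proposition~\ref{prop:defconflicts}(1)): it first observes that $\litset\setminus\Bmc$ hits every conflict, translates this into the statement that $\restr{\Dmc}{\Bmc}\Delta\Dmc$ is a hitting set of $\mhs{\Dmc,\constraints}$, then runs a hitting-set duality argument to produce some $\Rmc\in\deltareps{\Dmc,\constraints}$ with $\Rmc\Delta\Dmc\subseteq\restr{\Dmc}{\Bmc}\Delta\Dmc$, and finally uses the maximality of $\Bmc$ to force $\Rmc=\restr{\Dmc}{\Bmc}$. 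You instead work directly from the semantic definition of conflicts: the absence of a conflict inside $\Bmc$ yields a model $\Jmc$ of $\Bmc\cup\constraints$ — your minimal-bad-subset argument is sound, since a subset of $\Bmc$ that is minimal \emph{within $\Bmc$} for the property ``every satisfier violates $\constraints$'' is automatically minimal within $\litset$, hence a conflict — and the truncation of $\Jmc$ to $\factset$ via the safety condition is exactly the step the paper itself performs in the proof of Proposition~\ref{prop:defconflicts}(2), so it is legitimate here. Maximality then pins down $\comp{\Dmc}{\Jmc}=\Bmc$, giving $\restr{\Dmc}{\Bmc}=\Jmc\models\constraints$, and your separate minimality check via Lemmas~\ref{lem:compl-restr} and~\ref{lem:compl-delta-1} goes through (the strict inclusion does survive intersecting $\Rmc'$ with $\factset$, since a subset of a proper subset of $\Rmc\Delta\Dmc$ is still a proper subset). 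What each approach buys: yours is more elementary and self-contained, bypassing Proposition~\ref{prop:defconflicts}(1) and the hitting-set duality entirely at the cost of re-doing the safety/truncation bookkeeping; the paper's reuses machinery it has already established and keeps the connection between conflicts and minimal hitting sets of repair differences explicit.
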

 \begin{proof}
Assume that $\Bmc$ is a maximal subset of $\litset$ such that there is no conflict $\conf\in\deltaconflicts{\Dmc,\constraints}$ such that $\conf\subseteq\Bmc$. 
\begin{itemize}
\item For every $\conf\in\deltaconflicts{\Dmc,\constraints}$, $\conf\subseteq\litset$ and  $\conf\not\subseteq\Bmc$ so $\conf\cap(\litset\setminus\Bmc)\neq \emptyset$. Thus $\litset\setminus\Bmc$ is a hitting set of $\deltaconflicts{\Dmc,\constraints}$. 

Let $\Hmc\in \mhs{\Dmc,\constraints}$ and let $\conf=\Hmc\cap\Dmc\cup\{\no\alpha\mid\alpha\in\Hmc\setminus\Dmc\}$ be the corresponding conflict. Since $\litset\setminus\Bmc$ is a hitting set of $\deltaconflicts{\Dmc,\constraints}$, there is $\lambda \in(\litset\setminus\Bmc)\cap\conf$. 
If $\lambda=\alpha\in\Dmc$, $\alpha\in\Hmc$ and $\alpha\in\Dmc\setminus\restr{\Dmc}{\Bmc}$. 
If $\lambda=\no\alpha$ for some $\alpha\notin\Dmc$, $\alpha\in\Hmc$ and $\alpha\in\restr{\Dmc}{\Bmc}\setminus\Dmc$. 
In both cases, $\alpha\in\Hmc\cap(\restr{\Dmc}{\Bmc}\Delta\Dmc)$. 

It follows that $\restr{\Dmc}{\Bmc}\Delta\Dmc$ is a hitting set of $\mhs{\Dmc,\constraints}$. 

\item Assume for a contradiction that there is no $\Rmc\in\deltareps{\Dmc}$ such that $\Rmc\Delta\Dmc\subseteq \restr{\Dmc}{\Bmc}\Delta\Dmc$. Let $\Mmc=\bigcup_{\Rmc\in \deltareps{\Dmc}} (\Rmc\Delta\Dmc)\setminus (\restr{\Dmc}{\Bmc}\Delta\Dmc)$. 

Since for every $\Rmc\in \deltareps{\Dmc}$, $\Rmc\Delta\Dmc\not\subseteq \restr{\Dmc}{\Bmc}\Delta\Dmc$, so that $(\Rmc\Delta\Dmc)\setminus (\restr{\Dmc}{\Bmc}\Delta\Dmc)\neq\emptyset$, it follows that $\Mmc$ is a hitting set of $\{\Rmc\Delta\Dmc\mid\Rmc\in \deltareps{\Dmc}\}$. Hence there is some $\Mmc'\in \mhs{\Dmc,\constraints}$ such that $\Mmc'\subseteq\Mmc$. 

Since $\restr{\Dmc}{\Bmc}\Delta\Dmc$ is a hitting set of $\mhs{\Dmc,\constraints}$, then $(\restr{\Dmc}{\Bmc}\Delta\Dmc)\cap\Mmc'\neq\emptyset$. Hence $(\restr{\Dmc}{\Bmc}\Delta\Dmc)\cap\Mmc\neq\emptyset$. However, by definition of $\Mmc$, $(\restr{\Dmc}{\Bmc}\Delta\Dmc)\cap\Mmc=\emptyset$: contradiction. 

It follows that there exists  $\Rmc\in\deltareps{\Dmc}$ such that $\Rmc\Delta\Dmc\subseteq \restr{\Dmc}{\Bmc}\Delta\Dmc$. 

\item Assume for a contradiction that $\restr{\Dmc}{\Bmc}\neq\Rmc$, \ie $\Rmc\Delta\Dmc\subsetneq \restr{\Dmc}{\Bmc}\Delta\Dmc$. By Lemma \ref{lem:compl-delta-1}, $\comp{\Dmc}{\restr{\Dmc}{\Bmc}}\subsetneq \comp{\Dmc}{\Rmc}$, so by Lemma \ref{lem:compl-restr}, $\Bmc\subsetneq \comp{\Dmc}{\Rmc}$. Hence, by assumption on $\Bmc$, there must be some $\conf\in\deltaconflicts{\Dmc,\constraints}$ such that $\conf\subseteq \comp{\Dmc}{\Rmc}$, hence $\Rmc\models\conf$. It follows that $\Rmc\not\models\constraints$: contradiction. 
Hence $\restr{\Dmc}{\Bmc}=\Rmc$ so $\restr{\Dmc}{\Bmc}\in\deltareps{\Dmc,\constraints}$. \qedhere
\end{itemize}
 \end{proof}

\CharacterizationsRepairsComp*
\begin{proof}
Let $\Rmc$ be a candidate repair for $\Dmc$. By definition, $\comp{\Dmc}{\Rmc}=\mi{Lits}_\Dmc(\Rmc)\cap \litset$ is a subset of $\litset$. 
\smallskip

\noindent$(1)$ 
First note that by Lemma \ref{lem:compl-restr}, $\restr{\Dmc}{\comp{\Dmc}{\Rmc}}\models \constraints$ is indeed equivalent to $\Rmc\models \constraints$. 
 
Assume that $\Rmc\in\deltareps{\Dmc,\constraints}$. By definition $\Rmc\models\constraints$. 
Assume for a contradiction that there exists $\Bmc$ such that $\comp{\Dmc}{\Rmc}\subsetneq\Bmc\subseteq\litset$ and $\restr{\Dmc}{\Bmc}\models\constraints$. 
By Lemma \ref{lem:compl-delta-2}, $\restr{\Dmc}{\Bmc}\Delta\Dmc\subsetneq\restr{\Dmc}{\comp{\Dmc}{\Rmc}}\Delta\Dmc$, \ie 
by Lemma \ref{lem:compl-restr}, $\restr{\Dmc}{\Bmc}\Delta\Dmc\subsetneq\Rmc\Delta\Dmc$. Since $\restr{\Dmc}{\Bmc}\models\constraints$, this contradicts $\Rmc\in\deltareps{\Dmc,\constraints}$. 

In the other direction, assume that $\comp{\Dmc}{\Rmc}$ is a maximal subset of $\litset$ such that $\restr{\Dmc}{\comp{\Dmc}{\Rmc}}\models \constraints$ and assume for a contradiction that $\Rmc\notin \deltareps{\Dmc,\constraints}$. Since $\Rmc\models\constraints$, this means that there exists $\Rmc'$ such that $\Rmc'\Delta\Dmc\subsetneq\Rmc\Delta\Dmc$ and $\Rmc'\models\constraints$. 
Since $\Rmc$ is a candidate repair for $\Dmc$, by Lemma \ref{lem:compl-delta-1}, 
 $\comp{\Dmc}{\Rmc}\subsetneq\comp{\Dmc}{\Rmc'}\subseteq\litset$. 
Since by Lemma \ref{lem:compl-restr} $\restr{\Dmc}{\comp{\Dmc}{\Rmc'}}=\Rmc'$ and $\Rmc'\models \constraints$, this contradicts our assumption on $\comp{\Dmc}{\Rmc}$.
\smallskip

\noindent$(2)$ 
Assume that $\Rmc\in\deltareps{\Dmc,\constraints}$. Since $\Rmc\models\constraints$, then for every conflict $\conf\in\conflicts{\Dmc,\constraints}$, $\Rmc\not\models\conf$, so that $\conf\not\subseteq\comp{\Dmc}{\Rmc}$. 
Let $\Bmc$ be a maximal subset of $\litset$ such that $\comp{\Dmc}{\Rmc}\subseteq\Bmc$ and for every conflict $\conf\in\conflicts{\Dmc,\constraints}$, $\conf\not\subseteq\Bmc$. By Lemma \ref{lem:max-no-conf}, $\restr{\Dmc}{\Bmc}\in\deltareps{\Dmc,\constraints}$ and 
by Lemmas \ref{lem:compl-delta-2} and \ref{lem:compl-restr}, $\restr{\Dmc}{\Bmc}\Delta\Dmc\subseteq\Rmc\Delta\Dmc$. Since both $\restr{\Dmc}{\Bmc}$ and $\Rmc$ are in $\deltareps{\Dmc,\constraints}$, it follows that $\Rmc=\restr{\Dmc}{\Bmc}$. Hence $\comp{\Dmc}{\Rmc}=\Bmc$ (by Lemma \ref{lem:compl-restr}) is a maximal subset of $\litset$ such that for every conflict $\conf\in\conflicts{\Dmc,\constraints}$, $\conf\not\subseteq\Bmc$. 

In the other direction, assume that $\comp{\Dmc}{\Rmc}$ is a maximal subset of $\litset$ such that for every conflict $\conf\in\conflicts{\Dmc,\constraints}$, $\conf\not\subseteq\Bmc$. 
By Lemma \ref{lem:max-no-conf}, $\restr{\Dmc}{\comp{\Dmc}{\Rmc}}\in\deltareps{\Dmc,\constraints}$ so by Lemma \ref{lem:compl-restr}, $\Rmc\in\deltareps{\Dmc,\constraints}$.
\smallskip

\noindent$(3)$ 
We know from point (2) that $\Rmc\in\deltareps{\Dmc,\constraints}$ iff $\comp{\Dmc}{\Rmc}$ is a maximal subset of $\litset$ such that there is no conflict $\conf\in\conflicts{\Dmc,\constraints}$ such that $\conf\subseteq\comp{\Dmc}{\Rmc}$. It follows that $\Rmc\in\deltareps{\Dmc,\constraints}$ iff $\comp{\Dmc}{\Rmc}$ is a maximal independent set of the hypergraph $\confgraph{\Dmc}{\constraints}$ with vertices $\litset$ and hyperedges conflicts in $\conflicts{\Dmc,\constraints}$. 
\end{proof}

\begin{remark}
Each maximal independent set $\Mmc$ of $\confgraph{\Dmc}{\constraints}$ corresponds to a maximal independent set $\Mmc'$ of the hypergraph $\Gmc\subseteq \confgraph{\Dmc}{\constraints}$ whose vertices are $\bigcup_{\conf\in \conflicts{\Dmc,\constraints}} \conf$ and edges are the same as $\confgraph{\Dmc}{\constraints}$, as follows: $\Mmc=\Mmc'\cup (\litset\setminus \bigcup_{\conf\in \conflicts{\Dmc,\constraints}} \conf)$. Hence, we can use the maximal independent sets of the ``compact version'' $\Gmc$ of the conflict hypergraph to obtain repairs. Note that the original formulation of point (3) of Proposition 2 in the published KR'23 paper used this compact graph instead, but the formulation did not properly account for the literals in $\litset\setminus \bigcup_{\conf\in \conflicts{\Dmc,\constraints}} \conf$. The updated formulation, given in the present long version, corrects this omission and provides  an arguably simpler and more intuitive characterization by defining the hypergraph over $\litset$ rather than over $\bigcup_{\conf\in \conflicts{\Dmc,\constraints}} \conf$. \end{remark}

\paragraph{Reduction to denial constraints} 
Recall that $\preds'=\preds \cup \{\widetilde{P} \mid P \in \preds\}$, $\tofacts$ maps sets of literals over $\preds$ into sets of facts over $\preds'$ by replacing each negative literal $\neg P(\vec{c})$ by $\widetilde{P}(\vec{c})$, and that given a database $\Dmc$ and set of universal constraints $\constraints$ over schema $\preds$, we define $\Dmc_d=\tofacts(\litset)=\Dmc\cup\{\widetilde{P}(\vec{c})\mid P(\vec{c})\in\factset\setminus\Dmc\}$,
and 
$\constraints_{d,\Dmc}=\{ (\bigwedge_{\alpha \in \tofacts(\conf)} \alpha) \rightarrow \bot \mid \conf\in\conflicts{\Dmc,\constraints}\}$.

\ReductionUCDenials*
\begin{proof}
\begin{itemize}
\item Let $\conf\in\conflicts{\Dmc,\constraints}$. 
Clearly, $\tofacts(\conf)\not\models (\bigwedge_{\alpha \in \tofacts(\conf)} \alpha) \rightarrow \bot$ so $\tofacts(\conf)\not\models\constraints_{d,\Dmc}$. 
Moreover, there is no proper subset $\Bmc$ of $\tofacts(\conf)$ such that $\Bmc\not\models \constraints_{d,\Dmc}$: Otherwise, $\Bmc$ would violate some $((\bigwedge_{\alpha \in \tofacts(\conf')} \alpha) \rightarrow \bot)\in \constraints_{d,\Dmc}$ with $\tofacts(\conf')\subseteq\Bmc\subsetneq\tofacts(\conf)$, which implies that there is $\conf'\in\conflicts{\Dmc,\constraints}$ with $\conf'\subsetneq\conf$ (which is not possible since $\conf$ and $\conf'$ should both be minimal subsets of $\litset$ such that for every $\Imc$, $\Imc\models\conf^{(')}$ implies $\Imc\not\models\constraints$). 
Hence $\tofacts(\conf)\in\conflicts{\Dmc_d,\constraints_{d,\Dmc}}$. 

In the other direction, let $\Bmc\in \conflicts{\Dmc_d,\constraints_{d,\Dmc}}$: $\Bmc$ is a minimal subset of $\Dmc_d$ inconsistent with $\constraints_{d,\Dmc}$. Since $\Bmc\not\models \constraints_{d,\Dmc}$, there exists $\tau_\conf:=((\bigwedge_{\alpha \in \tofacts(\conf)} \alpha) \rightarrow \bot)$ in $\constraints_{d,\Dmc}$ (corresponding to $\conf\in\conflicts{\Dmc,\constraints}$) such that $\Bmc\not\models\tau_{\conf}$, \ie $\tofacts(\conf)\subseteq\Bmc$. 
Since $\tofacts(\conf)\not\models\tau_{\conf}$ so that $\tofacts(\conf)\not\models\constraints_{d,\Dmc}$, by minimality of $\Bmc$, it must be the case that $\tofacts(\conf)=\Bmc$.

Hence $\conflicts{\Dmc_d,\constraints_{d,\Dmc}}=\{ \tofacts(\conf)\mid \conf\in\conflicts{\Dmc,\constraints}\}$.

\item Let $\Rmc\in\deltareps{\Dmc,\constraints}$. By Proposition~\ref{prop:characterizations-repairs-comp}, $\comp{\Dmc}{\Rmc}$ is a maximal subset of $\litset$ such that there is no conflict $\conf\in\conflicts{\Dmc,\constraints}$ such that $\conf\subseteq\comp{\Dmc}{\Rmc}$. 
Hence, $\tofacts(\comp{\Dmc}{\Rmc})$ is a maximal subset of $\tofacts(\litset)$ such that there is no conflict $\Bmc\in \conflicts{\Dmc_d,\constraints_{d,\Dmc}}$ such that $\Bmc\subseteq \tofacts(\comp{\Dmc}{\Rmc})$. Since $\tofacts(\litset)=\Dmc_d$, it follows that $\tofacts(\comp{\Dmc}{\Rmc})\in \deltareps{\Dmc_d,\constraints_{d,\Dmc}}$.

In the other direction, let $\Bmc\in \deltareps{\Dmc_d,\constraints_{d,\Dmc}}$: $\Bmc$ is a maximal subset of $\Dmc_d$ that does not contain any conflict from $\conflicts{\Dmc_d,\constraints_{d,\Dmc}}$. Thus $\Bmc$ is a maximal subset of $\tofacts(\litset)$
such that there is no conflict $\conf\in \conflicts{\Dmc,\constraints}$ such that $\tofacts(\conf)\subseteq \Bmc$. 
Let $\Rmc=\Bmc\cap\Dmc\cup\{\neg P(\vec{a})\mid\widetilde{P}(\vec{a})\in\Bmc\}$ be the subset of $\litset$ such that $\Bmc=\tofacts(\comp{\Dmc}{\Rmc})$: $\comp{\Dmc}{\Rmc}$ is a maximal subset of $\litset$
such that there is no conflict $\conf\in \conflicts{\Dmc,\constraints}$ such that $\conf\subseteq \comp{\Dmc}{\Rmc}$. 
By Proposition~\ref{prop:characterizations-repairs-comp}, it follows that $\Rmc\in \deltareps{\Dmc,\constraints}$.

Hence $\deltareps{\Dmc_d,\constraints_{d,\Dmc}}=\{\tofacts(\comp{\Dmc}{\Rmc})\mid \Rmc\in\deltareps{\Dmc,\constraints}\}$. 
\qedhere
\end{itemize}
\end{proof}

\subsection{Proofs for Section \ref{subsec:optimalrepairs}}

The following lemma will sometimes be used to show that a $\Delta$-repair is not Pareto-optimal.
\begin{lemma}\label{lem:pareto-improvement-if-one-beat-all}
Let $\Dmc^\constraints_\succ$ be a prioritized database and $\Rmc$ be a database. 
If there exists $\lambda$ such that for every $\conf\in\conflicts{\Dmc,\constraints}$, 
$\conf\not\subseteq\comp{\Dmc}{\Rmc}\cup\{\lambda\}\setminus\{\mu\mid\lambda\succ\mu\}$, then 
$\Rmc\notin\deltapreps{\Dmc^\constraints_\succ}$. 
\end{lemma}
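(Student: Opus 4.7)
The plan is to exhibit a Pareto improvement of $\Rmc$ under the stated hypothesis, forcing $\Rmc \notin \deltapreps{\Dmc^\constraints_\succ}$. I may assume $\Rmc \in \deltareps{\Dmc,\constraints}$, since otherwise $\Rmc$ is trivially not Pareto-optimal. I also focus on the substantive case $\lambda \in \litset \setminus \comp{\Dmc}{\Rmc}$; if $\lambda$ were already in $\comp{\Dmc}{\Rmc}$, the hypothesis would reduce to the mere consistency of $\Rmc$ and carry no new information about its optimality.

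Set $B' := (\comp{\Dmc}{\Rmc} \cup \{\lambda\}) \setminus \{\mu \mid \lambda \succ \mu\}$. By hypothesis, no element of $\conflicts{\Dmc,\constraints}$ is a subset of $B'$. Since $\litset$ is finite, I can extend $B'$ to a $B'' \subseteq \litset$ that is maximal with the property of containing no conflict. By Lemma~\ref{lem:max-no-conf}, $\Bmc := \restr{\Dmc}{B''}$ lies in $\deltareps{\Dmc,\constraints}$, and by Lemma~\ref{lem:compl-restr} we have $\comp{\Dmc}{\Bmc} = B''$.

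The remaining step is to verify that $\Bmc$ is a Pareto improvement of $\Rmc$ with witness $\lambda$. From $\lambda \in B' \subseteq B'' = \comp{\Dmc}{\Bmc}$ and $\lambda \notin \comp{\Dmc}{\Rmc}$, we get $\lambda \in \comp{\Dmc}{\Bmc} \setminus \comp{\Dmc}{\Rmc}$. For any $\nu \in \comp{\Dmc}{\Rmc} \setminus \comp{\Dmc}{\Bmc}$, the conditions $\nu \notin B''$ and $B'' \supseteq B'$ give $\nu \notin B'$; combined with $\nu \in \comp{\Dmc}{\Rmc} \subseteq \comp{\Dmc}{\Rmc} \cup \{\lambda\}$, this forces $\nu \in \{\mu \mid \lambda \succ \mu\}$, i.e.\ $\lambda \succ \nu$. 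Hence $\Bmc$ is a Pareto improvement of $\Rmc$, and $\Rmc$ is not Pareto-optimal.

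The only delicate point is the freedom in the choice of $B''$: the maximal extension might add back some $\mu$ with $\lambda \succ \mu$, but this is harmless, because such a $\mu$ then belongs to $\comp{\Dmc}{\Bmc}$ and does not appear in $\comp{\Dmc}{\Rmc} \setminus \comp{\Dmc}{\Bmc}$, so it does not need to be dominated by $\lambda$ to witness the improvement. The key conceptual move is therefore the definition of $B'$, which simultaneously injects $\lambda$ as the new witness and removes every literal that could obstruct Pareto-dominance; once this set is seen to be conflict-free, the rest is a routine maximality argument combined with Lemmas~\ref{lem:max-no-conf} and~\ref{lem:compl-restr}.
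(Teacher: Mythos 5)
Your proof is correct and follows essentially the same route as the paper's: form the set $(\comp{\Dmc}{\Rmc}\cup\{\lambda\})\setminus\{\mu\mid\lambda\succ\mu\}$, extend it to a maximal conflict-free subset of $\litset$, conclude via Lemma~\ref{lem:max-no-conf} and Lemma~\ref{lem:compl-restr} that the resulting database is a $\Delta$-repair, and check that $\lambda$ witnesses a Pareto improvement. Your explicit restriction to $\lambda\notin\comp{\Dmc}{\Rmc}$ is a reasonable reading (the paper's proof makes the same tacit assumption, and the lemma is only ever invoked in that case), so nothing further is needed.
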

\begin{proof}
Let  $\Rmc_\lambda=\restr{\Dmc}{\comp{\Dmc}{\Rmc}\cup\{\lambda\}\setminus\{\mu\mid\lambda\succ\mu\}}$, so that $\comp{\Dmc}{\Rmc_\lambda}=\comp{\Dmc}{\Rmc}\cup\{\lambda\}\setminus\{\mu\mid\lambda\succ\mu\}$ by Lemma~\ref{lem:compl-restr}. 
Let $\Bmc$ be a maximal subset of $\litset$ such that for every $\conf\in\conflicts{\Dmc,\constraints}$, 
$\conf\not\subseteq\comp{\Dmc}{\Rmc_\lambda}\cup\Bmc$, and let $\Rmc'=\restr{\Dmc}{\comp{\Dmc}{\Rmc_\lambda}\cup\Bmc}$. 

By Lemma~\ref{lem:compl-restr}, $\comp{\Dmc}{\Rmc'}=\comp{\Dmc}{\Rmc_\lambda}\cup\Bmc$. Hence $\Rmc'$ is such that $\comp{\Dmc}{\Rmc'}$ is a maximal subset of $\litset$ such that there is no $\conf\in\deltaconflicts{\Dmc,\constraints}$ such that $\conf\subseteq\comp{\Dmc}{\Rmc'}$. By Proposition~\ref{prop:characterizations-repairs-comp}, $\Rmc'\in\deltareps{\Dmc,\constraints}$ hence $\Rmc'\models\constraints$. 

Moreover, for every $\mu\in\comp{\Dmc}{\Rmc}\setminus\comp{\Dmc}{\Rmc'}$, $\mu\in\comp{\Dmc}{\Rmc}\setminus\comp{\Dmc}{\Rmc_\lambda}\subseteq\{\mu\mid\lambda\succ\mu\}$ so $\lambda\succ\mu$. Thus $\Rmc'$ is a Pareto-improvement of $\Rmc$ and $\Rmc\notin\deltapreps{\Dmc^\constraints_\succ}$.
\end{proof}

The following lemma gives the result we mentioned about the existence of completion-(hence also globally- and Pareto-) optimal $\Delta$-repairs.
\begin{lemma}\label{existenceCompletionOpti}
For every set of universal constraints $\constraints$, database $\Dmc$ and priority relation $\succ$, $\deltacreps{\Dmc^\constraints_\succ}\neq\emptyset$.  
\end{lemma}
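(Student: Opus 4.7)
The plan is to exhibit an explicit construction: apply the greedy procedure sketched above to produce a candidate set of literals $\Bmc \subseteq \litset$, then show that $\restr{\Dmc}{\Bmc}$ is completion-optimal by choosing a suitable completion $\succ'$ of $\succ$ and arguing that $\restr{\Dmc}{\Bmc}$ is globally-optimal (equivalently, Pareto-optimal, by Proposition \ref{prop:categoricity}) with respect to $\succ'$.

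First, I would fix an arbitrary linear order on $\litset$ and use the greedy procedure: initialize $\Bmc_0 = \emptyset$, and at step $i$, among the literals in $\litset$ not yet considered, pick one $\lambda_i$ that is maximal \wrt $\succ$ (breaking ties using the fixed linear order); set $\Bmc_i := \Bmc_{i-1}\cup\{\lambda_i\}$ if no conflict from $\conflicts{\Dmc,\constraints}$ is a subset of $\Bmc_{i-1}\cup\{\lambda_i\}$, otherwise $\Bmc_i := \Bmc_{i-1}$. Let $\Bmc$ denote the final set, so $\Bmc$ is by construction a maximal subset of $\litset$ containing no conflict; by Proposition \ref{prop:characterizations-repairs-comp}(2) and Lemma \ref{lem:compl-restr}, $\Rmc := \restr{\Dmc}{\Bmc} \in \deltareps{\Dmc,\constraints}$ and $\comp{\Dmc}{\Rmc} = \Bmc$.

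Next I define a priority relation $\succ'$ on the literals of $\bigcup_{\conf\in\conflicts{\Dmc,\constraints}}\conf$: for $\lambda\neq\mu$ appearing together in some conflict, put $\lambda\succ'\mu$ iff $\lambda$ was considered before $\mu$ by the greedy procedure. Then $\succ'$ is total, acyclic, and a completion of $\succ$: acyclicity is immediate from the fact that literals are considered in a strict order, totality holds by definition on pairs inside a common conflict, and $\succ'\supseteq\succ$ because whenever $\lambda\succ\mu$ the maximality rule forces the procedure to consider $\lambda$ before $\mu$. It thus suffices to prove $\Rmc\in\deltapreps{\Dmc^\constraints_{\succ'}}$, as then by Proposition \ref{prop:categoricity}, $\Rmc\in\deltagreps{\Dmc^\constraints_{\succ'}}$, hence $\Rmc\in\deltacreps{\Dmc^\constraints_\succ}$.

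The main obstacle, which I expect to be the delicate step, is the Pareto-optimality argument for $\Rmc$ \wrt $\succ'$. Suppose for contradiction that some $\Rmc'\models\constraints$ is a Pareto improvement of $\Rmc$; then there exists $\mu\in\comp{\Dmc}{\Rmc'}\setminus\comp{\Dmc}{\Rmc}$ with $\mu\succ'\lambda$ for every $\lambda\in\comp{\Dmc}{\Rmc}\setminus\comp{\Dmc}{\Rmc'}$. I would look at the step $i$ in the procedure at which $\mu$ was considered: because $\mu\notin\Bmc$, there must exist a conflict $\conf\in\conflicts{\Dmc,\constraints}$ with $\conf\setminus\{\mu\}\subseteq\Bmc_{i-1}\subseteq\Bmc = \comp{\Dmc}{\Rmc}$ and $\mu\in\conf$. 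Since $\Rmc'\models\constraints$, we have $\conf\not\subseteq\comp{\Dmc}{\Rmc'}$, so some $\lambda\in\conf\setminus\{\mu\}$ lies in $\comp{\Dmc}{\Rmc}\setminus\comp{\Dmc}{\Rmc'}$; but such a $\lambda$ was already in $\Bmc_{i-1}$, hence was considered strictly before $\mu$, giving $\lambda\succ'\mu$ and contradicting $\mu\succ'\lambda$. This shows $\Rmc$ has no Pareto improvement \wrt $\succ'$, completing the proof.
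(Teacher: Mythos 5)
Your construction and the key combinatorial observation are exactly the paper's: the greedy set $\Bmc$, the completion $\succ'$ induced by the order of consideration, and the fact that any literal $\mu\notin\Bmc$ is blocked by a conflict $\conf$ with $\conf\setminus\{\mu\}\subseteq\Bmc$ all of whose members were considered before $\mu$ (hence $\lambda\succ'\mu$ for each such $\lambda$). The problem is the last step. Completion-optimality is defined via \emph{globally}-optimal repairs w.r.t.\ some completion, so you must show that $\restr{\Dmc}{\Bmc}$ has no \emph{global} improvement w.r.t.\ $\succ'$; your argument only refutes Pareto improvements (it produces, for a single gained literal $\mu$, one lost $\lambda$ with $\lambda\succ'\mu$, which contradicts the Pareto quantifier "one $\mu$ beats all lost $\lambda$" but not the global quantifier "each lost $\lambda$ is beaten by some gained $\mu$"). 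The bridge you invoke, Proposition~\ref{prop:categoricity}, cannot close this gap: its statement $|\deltapreps{\Dmc^\constraints_{\succ'}}|=1$ only yields $\deltagreps{\Dmc^\constraints_{\succ'}}\subseteq\{\Rmc\}$, which is compatible with $\deltagreps{\Dmc^\constraints_{\succ'}}=\emptyset$; and worse, the paper's proof of that proposition itself cites this very lemma to get $|\deltapreps{}|\geq 1$, so the appeal is circular within the paper's development.

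The fix is essentially already in your hands: run your "blocked by an earlier-considered conflict" observation inside a chain argument against a hypothetical global improvement $\Rmc'$, as the paper does. Any gained $\mu_1\in\comp{\Dmc}{\Rmc'}\setminus\Bmc$ yields (via its blocking conflict, which cannot lie entirely in $\comp{\Dmc}{\Rmc'}$) a lost $\lambda_1\in\Bmc\setminus\comp{\Dmc}{\Rmc'}$ with $\lambda_1\succ'\mu_1$; the global-improvement property then yields a gained $\mu_2\succ'\lambda_1$, and iterating produces an infinite strictly $\succ'$-increasing sequence of distinct literals in the finite set $\litset$, a contradiction. With that replacement the proof is complete and coincides with the paper's.
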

\begin{proof}
Let $\Bmc$ be a set of literals from $\litset$ obtained from $\confgraph{\Dmc}{\constraints}$ by the following greedy procedure: while some literal from $\litset$ has not been considered, pick a literal 
that is maximal \wrt $\succ$ among those not yet considered, and add it to $\Bmc$ if it does not introduce a conflict from $\conflicts{\Dmc,\constraints}$. 
We show that $\Rmc=\restr{\Dmc}{\Bmc}$ belongs to $\deltacreps{\Dmc^\constraints_\succ}$. 

By Lemma \ref{lem:compl-restr}, $\comp{\Dmc}{\Rmc}=\comp{\Dmc}{\restr{\Dmc}{\Bmc}}=\Bmc$, so since 
$\Bmc$ is a maximal subset of $\litset$ that does not contain any conflict from $\conflicts{\Dmc,\constraints}$, by Proposition \ref{prop:characterizations-repairs-comp}, $\Rmc\in\deltareps{\Dmc,\constraints}$. 
Assume for a contradiction that $\Rmc\notin\deltacreps{\Dmc^\constraints_\succ}$. 
\begin{itemize}
\item Let $\succ'$ be the binary relation defined as follows: for every $\lambda,\mu$ such that $\{\lambda,\mu\}\subseteq\conf\in\conflicts{\Dmc,\constraints}$, $\lambda\succ'\mu$ iff $\lambda$ has been picked before $\mu$ by the greedy procedure. 
Since the procedure examines all literals of $\litset$ exactly once, $\succ'$ is a total priority relation. 
Moreover, if $\lambda\succ\mu$, $\lambda$ is picked before $\mu$ so $\lambda\succ'\mu$. Hence $\succ'$ is a completion of $\succ$.

\item Since $\Rmc\notin\deltacreps{\Dmc^\constraints_\succ}$, then $\Rmc\notin\deltagreps{\Dmc^\constraints_{\succ'}}$: there exists a database $\Rmc'$ consistent \wrt $\constraints$ such that $\comp{\Dmc}{\Rmc}\neq \comp{\Dmc}{\Rmc'}$ and for every $\lambda\in \comp{\Dmc}{\Rmc}\setminus \comp{\Dmc}{\Rmc'}$, there exists $\mu\in \comp{\Dmc}{\Rmc'}\setminus \comp{\Dmc}{\Rmc}$ such that $\mu\succ'\lambda$.

\item Let $\Bmc'=\comp{\Dmc}{\Rmc'}$. It holds that
\begin{itemize}
\item[(i)] for every $\lambda\in \Bmc\setminus \Bmc'$, there exists $\mu\in \Bmc'\setminus \Bmc$ such that $\mu\succ'\lambda$;

\item[(ii)]  there does not exist any $\conf\in \conflicts{\Dmc,\constraints}$ such that $\conf\subseteq\Bmc'$ (otherwise, $\Rmc'\models\conf$ and $\Rmc'\not\models\constraints$). 
\end{itemize}
\item We show by induction that we can build an infinite chain $\mu_1\prec'\lambda_1\prec'\mu_2\prec'\lambda_2\prec'\dots$ such that for every $i$, $\mu_i\in\Bmc'\setminus\Bmc$ and $\lambda_i\in\Bmc\setminus\Bmc'$. Since $\succ'$ is acyclic, all $\mu_i,\lambda_i$ must be distinct, contradicting the fact that $\litset$ is finite.
\begin{itemize}
\item Base case: 
Let $\mu_1\in \Bmc'\setminus \Bmc$. Since $\mu_1$ has not been added to $\Bmc$, there exists $\conf\in \conflicts{\Dmc,\constraints}$ such that $\conf\setminus\{\mu_1\}\subseteq\Bmc$ and for every $\lambda\in \conf\setminus\{\mu_1\}$, $\lambda\succ'\mu_1$. 
 Since $\conf\not\subseteq\Bmc'$ (by (ii)), there exists $\lambda_1\in \conf\setminus\{\mu_1\}$ such that $\lambda_1\notin\Bmc'$. Hence $\lambda_1\in\Bmc\setminus\Bmc'$. 
 
We thus have $\mu_1\in \Bmc'\setminus \Bmc$ and $\lambda_1\in\Bmc\setminus\Bmc'$ such that $\lambda_1\succ'\mu_1$. 

\item Induction step:  Assume that we have built $\mu_1\prec'\lambda_1\prec'\dots\prec'\mu_i\prec'\lambda_i$ as required. Since $\lambda_i\in \Bmc\setminus\Bmc'$, by (i), there exists $\mu_{i+1}\in \Bmc'\setminus\Bmc$ such that $\mu_{i+1}\succ'\lambda_i$. 
Since $\mu_{i+1}\in \Bmc'\setminus \Bmc$, we obtain as in the base case $\lambda_{i+1}\in\Bmc\setminus\Bmc'$ such that $\lambda_{i+1}\succ'\mu_{i+1}$.
We thus have $\mu_1\prec'\lambda_1\prec'\dots\prec'\mu_i\prec'\lambda_i\prec'\mu_{i+1}\prec'\lambda_{i+1}$ as required.
\end{itemize}
\end{itemize}
We obtain a contradiction so $\Rmc\in\deltacreps{\Dmc^\constraints_\succ}$.
\end{proof}

\Categoricity*
\begin{proof}
We will use the following claim:
\begin{claim}\label{lem:pareto-chain}
If $\succ$ is total, $\Rmc\in\deltareps{\Dmc,\constraints}$, $\Rmc'\in\deltapreps{\Dmc^\constraints_\succ}$ and $\lambda\in\comp{\Dmc}{\Rmc}\setminus\comp{\Dmc}{\Rmc'}$, then there exists $\lambda'\in\comp{\Dmc}{\Rmc'}\setminus\comp{\Dmc}{\Rmc}$ such that $\lambda\prec\lambda'$. 
\end{claim}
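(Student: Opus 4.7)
The plan is first to establish the claim, and then to use it together with the acyclicity of $\succ$ to rule out having two distinct Pareto-optimal $\Delta$-repairs. Recall that existence of at least one such repair is guaranteed by Lemma~\ref{existenceCompletionOpti}, so we only need uniqueness.

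To prove the claim, fix $\Rmc \in \deltareps{\Dmc,\constraints}$, $\Rmc' \in \deltapreps{\Dmc^\constraints_\succ}$, and $\lambda \in \comp{\Dmc}{\Rmc} \setminus \comp{\Dmc}{\Rmc'}$, and assume for contradiction that no $\lambda' \in \comp{\Dmc}{\Rmc'}\setminus\comp{\Dmc}{\Rmc}$ satisfies $\lambda \prec \lambda'$. I will then apply Lemma~\ref{lem:pareto-improvement-if-one-beat-all} to contradict the Pareto-optimality of $\Rmc'$ by showing that for every conflict $\conf \in \conflicts{\Dmc,\constraints}$, $\conf \not\subseteq \comp{\Dmc}{\Rmc'} \cup \{\lambda\} \setminus \{\mu \mid \lambda \succ \mu\}$. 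Since $\Rmc \models \constraints$, Proposition~\ref{prop:characterizations-repairs-comp} gives some $\nu \in \conf \setminus \comp{\Dmc}{\Rmc}$; note $\nu \neq \lambda$ because $\lambda \in \comp{\Dmc}{\Rmc}$. If $\nu \notin \comp{\Dmc}{\Rmc'}$, then $\nu$ is absent from the modified set and we are done. Otherwise $\nu \in \comp{\Dmc}{\Rmc'} \setminus \comp{\Dmc}{\Rmc}$, and by the contradiction hypothesis $\nu \not\succ \lambda$; since $\{\lambda,\nu\} \subseteq \conf$ and $\succ$ is total, we must have $\lambda \succ \nu$, so $\nu$ is removed from the set. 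Either way $\conf$ is not contained in the set, so Lemma~\ref{lem:pareto-improvement-if-one-beat-all} yields $\Rmc' \notin \deltapreps{\Dmc^\constraints_\succ}$, contradiction. This establishes the claim.

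For uniqueness, suppose $\Rmc_1, \Rmc_2 \in \deltapreps{\Dmc^\constraints_\succ}$ with $\Rmc_1 \neq \Rmc_2$, so $\comp{\Dmc}{\Rmc_1} \neq \comp{\Dmc}{\Rmc_2}$ by Lemma~\ref{lem:compl-restr}. By Proposition~\ref{prop:characterizations-repairs-comp}, both $\comp{\Dmc}{\Rmc_1}$ and $\comp{\Dmc}{\Rmc_2}$ are maximal consistent subsets of $\litset$, so neither can be contained in the other. Pick any $\mu_0 \in \comp{\Dmc}{\Rmc_1} \setminus \comp{\Dmc}{\Rmc_2}$. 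Applying the claim to $\Rmc = \Rmc_1$ and $\Rmc' = \Rmc_2$ yields $\mu_1 \in \comp{\Dmc}{\Rmc_2} \setminus \comp{\Dmc}{\Rmc_1}$ with $\mu_0 \prec \mu_1$. Applying the claim again to $\Rmc = \Rmc_2$ and $\Rmc' = \Rmc_1$ with the literal $\mu_1$ yields $\mu_2 \in \comp{\Dmc}{\Rmc_1} \setminus \comp{\Dmc}{\Rmc_2}$ with $\mu_1 \prec \mu_2$. Iterating produces an infinite chain $\mu_0 \prec \mu_1 \prec \mu_2 \prec \cdots$ of literals from the finite set $\bigcup_{\conf\in\conflicts{\Dmc,\constraints}} \conf$, so two of them must coincide, contradicting the acyclicity of $\succ$.

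The main obstacle is the claim itself: one needs to identify the right ``candidate improvement'' to feed into Lemma~\ref{lem:pareto-improvement-if-one-beat-all}, and to exploit totality of $\succ$ in precisely the case where a conflict-sibling $\nu$ of $\lambda$ survives on the $\Rmc'$ side. Once the claim is in hand, the finiteness/acyclicity argument that converts it into categoricity is a short ping-pong between $\Rmc_1$ and $\Rmc_2$.
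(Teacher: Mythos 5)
Your overall strategy for the claim is the same as the paper's: assume no $\lambda'\in\comp{\Dmc}{\Rmc'}\setminus\comp{\Dmc}{\Rmc}$ dominates $\lambda$, show that no conflict is contained in $\comp{\Dmc}{\Rmc'}\cup\{\lambda\}\setminus\{\mu\mid\lambda\succ\mu\}$, and invoke Lemma~\ref{lem:pareto-improvement-if-one-beat-all}; the subsequent ping-pong argument for categoricity also matches the paper. However, there is a genuine gap in your case analysis. When the literal $\nu\in\conf\setminus\comp{\Dmc}{\Rmc}$ happens to lie in $\comp{\Dmc}{\Rmc'}$, you write ``since $\{\lambda,\nu\}\subseteq\conf$ and $\succ$ is total, we must have $\lambda\succ\nu$.'' But you have not shown that $\lambda\in\conf$: here $\conf$ is an arbitrary conflict and need not contain $\lambda$ at all. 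Totality of $\succ$ only relates pairs of literals that co-occur in \emph{some} conflict, so from $\nu\not\succ\lambda$ you cannot conclude $\lambda\succ\nu$ unless such a conflict exists; if $\lambda$ and $\nu$ never co-occur in a conflict, neither relation holds, $\nu$ survives in the modified set, and your chosen witness fails to exclude $\conf$.

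The desired conclusion is still true in that situation --- if $\lambda\notin\conf$, then containment of $\conf$ in $\comp{\Dmc}{\Rmc'}\cup\{\lambda\}\setminus\{\mu\mid\lambda\succ\mu\}$ would force $\conf\subseteq\comp{\Dmc}{\Rmc'}$, contradicting $\Rmc'\in\deltareps{\Dmc,\constraints}$ --- but your proof never makes this observation. The paper avoids the problem by ordering the two witness choices differently: it first uses $\conf\not\subseteq\comp{\Dmc}{\Rmc'}$ (Proposition~\ref{prop:characterizations-repairs-comp} applied to $\Rmc'$) to deduce that any conflict contained in the modified set must satisfy $\lambda\in\conf$ and $\conf\setminus\{\lambda\}\subseteq\comp{\Dmc}{\Rmc'}$, and only \emph{then} picks $\lambda'\in\conf\setminus\comp{\Dmc}{\Rmc}$, which is now guaranteed to share the conflict $\conf$ with $\lambda$, so that totality applies. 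Reordering your argument in this way, or adding the explicit subcase $\lambda\notin\conf$, closes the gap.
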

\noindent\emph{Proof of the claim}
Assume for a contradiction that for every $\lambda'\in\comp{\Dmc}{\Rmc'}\setminus\comp{\Dmc}{\Rmc}$, $\lambda\not\prec\lambda'$. 
Since $\succ$ is total, this means that for every $\lambda'\in\comp{\Dmc}{\Rmc'}\setminus\comp{\Dmc}{\Rmc}$, either $\lambda\succ\lambda'$ or there is no $\conf\in\deltaconflicts{\Dmc,\constraints}$ such that $\{\lambda,\lambda'\}\subseteq\conf$. 
 Let $\conf\in\deltaconflicts{\Dmc,\constraints}$ and assume for a contradiction that $\conf\subseteq \comp{\Dmc}{\Rmc'}\cup\{\lambda\}\setminus\{\mu\mid\lambda\succ\mu\}$. 
\begin{itemize}
\item Since $\Rmc'$ is a $\Delta$-repair, $\conf\not\subseteq\comp{\Dmc}{\Rmc'}$ by Proposition \ref{prop:characterizations-repairs-comp}.  Hence $\lambda\in\conf$ and $\conf\setminus\{\lambda\}\subseteq\comp{\Dmc}{\Rmc'}$. 
\item Since $\Rmc$ is a $\Delta$-repair, $\conf\not\subseteq \comp{\Dmc}{\Rmc}$ so there is some $\lambda'\in \conf\setminus\{\lambda\}$ such that $\lambda'\in \comp{\Dmc}{\Rmc'}\setminus\comp{\Dmc}{\Rmc}$. 
\item Since $\lambda'\in \comp{\Dmc}{\Rmc'}\setminus\comp{\Dmc}{\Rmc}$ and $\{\lambda,\lambda'\}\subseteq\conf$, then $\lambda\succ\lambda'$ by assumption, so $\lambda'\notin \comp{\Dmc}{\Rmc'}\cup\{\lambda\}\setminus\{\mu\mid\lambda\succ\mu\}$. 
\end{itemize}
Hence $\conf\not\subseteq \comp{\Dmc}{\Rmc'}\cup\{\lambda\}\setminus\{\mu\mid\lambda\succ\mu\}$. This holds for every $\conf\in\deltaconflicts{\Dmc,\constraints}$ so by Lemma~\ref{lem:pareto-improvement-if-one-beat-all}, $\Rmc'\notin\deltapreps{\Dmc^\constraints_\succ}$.
\smallskip

We now show the proposition. By Lemma \ref{existenceCompletionOpti} and the fact that $\deltacreps{\Dmc^\constraints_\succ}\subseteq\deltapreps{\Dmc^\constraints_\succ}$, $|\deltapreps{\Dmc^\constraints_\succ}|\geq 1$. Assume that $\succ$ is total. To show that $|\deltapreps{\Dmc^\constraints_\succ}|=1$, assume for a contradiction that there exist $\Rmc_1\neq\Rmc_2$ in $\deltapreps{\Dmc^\constraints_\succ}$. We will build by induction an infinite chain $\lambda_1\prec\mu_1\prec\lambda_2\prec\mu_2\prec\dots$ where the $\lambda_i$ are in $\comp{\Dmc}{\Rmc_1}\setminus\comp{\Dmc}{\Rmc_2}$ and the $\mu_i$ are in $\comp{\Dmc}{\Rmc_2}\setminus\comp{\Dmc}{\Rmc_1}$. Since $\succ$ is acyclic, all $\lambda_i,\mu_i$ must be distinct, contradicting the fact that $\litset$ is finite.
\begin{itemize}
\item Base case: By Proposition \ref{prop:characterizations-repairs-comp}, for $i\in\{1,2\}$, $\comp{\Dmc}{\Rmc_i}$ is a maximal subset of $\litset$ such that there is no conflict $\conf\in\conflicts{\Dmc,\constraints}$ such that $\conf\subseteq\comp{\Dmc}{\Rmc_i}$. This implies that $\comp{\Dmc}{\Rmc_1}\not\subseteq\comp{\Dmc}{\Rmc_2}$ so that there exists $\lambda_1\in \comp{\Dmc}{\Rmc_1}\setminus\comp{\Dmc}{\Rmc_2}$. 
By Claim \ref{lem:pareto-chain}, there exists $\mu_1\in \comp{\Dmc}{\Rmc_2}\setminus\comp{\Dmc}{\Rmc_1}$ such that $\lambda_1\prec\mu_1$. 
\item Induction step: Assume that we have built $\lambda_1\prec\mu_1\prec\dots\prec\lambda_i\prec\mu_i$ as required. By Claim \ref{lem:pareto-chain}, since $\mu_i\in \comp{\Dmc}{\Rmc_2}\setminus\comp{\Dmc}{\Rmc_1}$, there exists $\lambda_{i+1}\in \comp{\Dmc}{\Rmc_1}\setminus\comp{\Dmc}{\Rmc_2}$ such that $\mu_i\prec\lambda_{i+1}$. Then by Claim \ref{lem:pareto-chain} again, since $\lambda_{i+1}\in \comp{\Dmc}{\Rmc_1}\setminus\comp{\Dmc}{\Rmc_2}$, we got $\mu_{i+1}\in \comp{\Dmc}{\Rmc_2}\setminus\comp{\Dmc}{\Rmc_1}$ such that $\lambda_{i+1}\prec\mu_{i+1}$. \qedhere
\end{itemize}

\end{proof}

\ScoreStructuredCollapse*
\begin{proof}
Let $\Smc_1,\dots,\Smc_n$ be the prioritization of $\bigcup_{\conf\in\conflicts{\Dmc,\Cmc}}\conf$. 

We first show that $ \deltapreps{\Dmc^\constraints_\succ}\subseteq \deltacreps{\Dmc^\constraints_\succ} $, which implies that $\deltacreps{\Dmc^\constraints_\succ} = \deltagreps{\Dmc^\constraints_\succ} = \deltapreps{\Dmc^\constraints_\succ}$. 
Let $\Rmc\in  \deltapreps{\Dmc^\constraints_\succ}$ and let $\succ'$ be a completion of $\succ$ such that for every $1\leq i\leq n$, for all $\lambda_1,\lambda_2 \in\Smc_i$ such that $\{\lambda_1,\lambda_2\}\subseteq\conf\in\deltaconflicts{\Dmc,\constraints}$, if $\lambda_1\in \comp{\Dmc}{\Rmc}$ and $\lambda_2\notin \comp{\Dmc}{\Rmc}$ then $\lambda_1\succ'\lambda_2$.  
Assume for a contradiction that $\Rmc$ is not Pareto-optimal \wrt $\succ'$: There exists a database $\Bmc$ consistent \wrt $\constraints$ such that there is $\mu\in \comp{\Dmc}{\Bmc}\setminus \comp{\Dmc}{\Rmc}$ with $ \mu\succ'\lambda$ for every $\lambda\in \comp{\Dmc}{\Rmc}\setminus \comp{\Dmc}{\Bmc}$. 
\begin{itemize}
\item Since $\Rmc\in\deltapreps{\Dmc^\constraints_\succ}$, $\Bmc$ is not a Pareto-improvement of $\Rmc$ \wrt $\succ$ so there exists $\lambda\in \comp{\Dmc}{\Rmc}\setminus \comp{\Dmc}{\Bmc}$ such that $\mu\not\succ\lambda$. 
\item Since $\succ'$ extends $\succ$, it follows that $\lambda$ and $\mu$ belong to the same $\Smc_i$ (otherwise it must be the case that $\mu\succ\lambda$ or $\lambda\succ\mu$ and since $ \mu\succ'\lambda$ the latter is not possible). 
\item Hence, as $\lambda\in \comp{\Dmc}{\Rmc}$ and $ \mu\notin \comp{\Dmc}{\Rmc}$, by construction of $\succ'$, $\lambda\succ'\mu$, which contradicts $ \mu\succ'\lambda$. 
\end{itemize}
It follows that $\Rmc$ is Pareto-optimal \wrt $\succ'$, so $\Rmc\in  \deltacreps{\Dmc^\constraints_\succ}$. 
\smallskip

We now show that  $\deltapreps{\Dmc^\constraints_\succ}=\deltascorereps{\Dmc^\constraints_\succ}$. 

\noindent($\Rightarrow$) Let $\Rmc\in  \deltareps{\Dmc,\constraints}$ be such that $\Rmc\notin \deltascorereps{\Dmc^\constraints_\succ}$: There exists a database $\Bmc$ consistent \wrt $\constraints$ such that 
 there is some $1\leq i\leq n$ such that $\comp{\Dmc}{\Rmc}\cap\Smc_i \subsetneq \comp{\Dmc}{\Bmc}\cap\Smc_i$ and for all $1\leq j<i$, $\comp{\Dmc}{\Bmc}\cap\Smc_j  = \comp{\Dmc}{\Rmc}\cap\Smc_j $. 
Hence there exists $\mu\in \Smc_i$ such that $\mu\in \comp{\Dmc}{\Bmc}\setminus\comp{\Dmc}{\Rmc}$, and for every $\lambda\in \comp{\Dmc}{\Rmc}\setminus \comp{\Dmc}{\Bmc}$, $\lambda\in\Smc_j$ for some $j> i$, so that $\mu\succ\lambda$.  
Thus $\Bmc$ is a Pareto-improvement of $\Rmc$ and $\Rmc\notin  \deltapreps{\Dmc^\constraints_\succ}$.

\noindent($\Leftarrow$) Let $\Rmc\in  \deltareps{\Dmc,\constraints}$ such that $\Rmc\notin  \deltapreps{\Dmc^\constraints_\succ}$: There exists a database $\Bmc$ consistent \wrt $\constraints$ such that 
 there is $ \mu\in \comp{\Dmc}{\Bmc}\setminus \comp{\Dmc}{\Rmc}$ with $\mu\succ\lambda$ for every $\lambda\in \comp{\Dmc}{\Rmc}\setminus \comp{\Dmc}{\Bmc}$. 
Let $\Smc_i$ be the priority level to which $\mu$ belongs. 
Every $\lambda\in \comp{\Dmc}{\Rmc}\setminus \comp{\Dmc}{\Bmc}$ is such that $\mu\succ\lambda$ so belongs to some $\Smc_j$ with $j>i$. 
Hence, for every $j\leq i$, $\comp{\Dmc}{\Rmc}\cap\Smc_j\subseteq\comp{\Dmc}{\Bmc}\cap\Smc_j $. 
Moreover, $\mu\in\comp{\Dmc}{\Bmc}\setminus\comp{\Dmc}{\Rmc}$ so $\comp{\Dmc}{\Rmc}\cap\Smc_i\subsetneq\comp{\Dmc}{\Bmc}\cap\Smc_i $. 
It follows that $\Rmc\notin \deltascorereps{\Dmc^\constraints_\succ}$.
\end{proof}

\section{Proofs for Section \ref{sec:complexity}}

\ThComplexityPGrepairchecking*
\begin{proof}
The lower bound is inherited from $\Delta$-repairs. 

We use the following \np\ procedure to check whether $\Rmc\notin\deltapreps{\Dmc^\constraints_\succ}$: we guess either  (i) `inconsistent', (ii) `not maximal' together with another candidate repair $\Rmc'$,
or (iii) `improvement' together with a candidate Pareto improvement $\Bmc$. 
In case (i), it suffices to verify in \ptime\ that $\Rmc \not \models \Cmc$, returning yes if so. 
In case (ii), we test in \ptime\ whether $\Rmc'\Delta\Dmc\subsetneq \Rmc\Delta\Dmc$ and $\Rmc' \models \Cmc$, returning yes if both conditions hold. 
In case (iii), we check in \ptime whether $\Bmc$ is indeed a Pareto improvement of $\Rmc$, returning yes if so: we 
check that $\Bmc \models \Cmc$, compute the sets 
$\comp{\Dmc}{\Bmc}\setminus \comp{\Dmc}{\Rmc}$ and 
$\comp{\Dmc}{\Rmc}\setminus \comp{\Dmc}{\Bmc}$,
 and consult the priority relation $\succ$ to determine if 
there is some 
$\mu\in \comp{\Dmc}{\Bmc}\setminus \comp{\Dmc}{\Rmc}$ with $\mu\succ\lambda$ for every $\lambda\in \comp{\Dmc}{\Rmc}\setminus \comp{\Dmc}{\Bmc}$.

For globally-optimal repairs, we proceed analogously, except that (iii) guesses a candidate global improvement $\Bmc$, which is verified in \ptime to be a global improvement.
\end{proof}

\ParetoHard*
\begin{proof} 

The proof is by reduction from 3SAT. We consider a schema that contains a unary relation $Var$, binary relations $Init$ and $Val$, and a 6-ary relation $Clause$. The set $\Cmc_{3SAT}$ will consist of the following constraints:
\begin{itemize}
\item $Init(x,y) \wedge Init(x,z) \wedge y\neq z \rightarrow \bot$ 
\item $Init(x,y) \wedge Var(z) \wedge x\neq y   \rightarrow Val(z,x) \vee Val(z,y)$
\item $Val(z,x) \wedge Val(z,y) \wedge x\neq y \rightarrow \bot$
\item $Clause(z_1, w_1, z_2, w_2, z_3, w_3) \wedge Val(z_1,w_1) \wedge Val(z_2,w_2) \wedge Val(z_3,w_3) \rightarrow \bot$
\end{itemize}
Given a 3SAT instance $\varphi= \lambda_1 \wedge \ldots \wedge \lambda_k$ over variables $u_1, \ldots, u_n$, 
where $\lambda_j= \ell_{j,1} \vee \ell_{j,2} \vee \ell_{j,3}$
we build the database $\Dmc_\varphi$ consisting of the following facts:
\begin{itemize}
\item $Init(0,0), Init(0,1)$;
\item $Var(u_i)$, for $1 \leq i \leq n$;
\item $Clause(u_{j,1}, b_{j,1},u_{j,2}, b_{j,2},u_{j,3}, b_{j,3})$, for $1 \leq j \leq k$, where $u_{j,h}$ is the variable of literal $\ell_{j,h}$,
$b_{j,h}=0$ if $\ell_{j,h}=u_{j,h}$, 
and $b_{j,h}=1$ if $\ell_{j,h}= \neg u_{j,h}$; 
\end{itemize}
The priority relation $\succ$ 
contains $Init(0,1) \succ Init(0,0)$ and $Init(0,1) \succ \neg Val(u_j,b)$ for 
$1 \leq j \leq n$ and $b \in \{0,1\}$. 

Let $\Rmc=\Dmc_\varphi \setminus \{Init(0,1)\}$. It is easy to check that $\Rmc$ is a $\Delta$-repair. 
We show that 
$\Rmc$ is
 Pareto-optimal iff $\varphi$ is unsatisfiable, or equivalently, that $\Rmc$ has a Pareto improvement iff $\varphi$ is satisfiable.

First suppose that $\Rmc = \Dmc_\varphi \setminus \{Init(0,1)\}$ has a Pareto improvement $\Bmc$. We know that $\Bmc$ is consistent and that there exists $\mu\in \comp{\Dmc_\varphi }{\Bmc}\setminus \comp{\Dmc_\varphi }{\Rmc}$ with $\mu\succ\mu'$ for every $\mu'\in \comp{\Dmc_\varphi}{\Rmc}\setminus \comp{\Dmc_\varphi}{\Bmc}$. Clearly, we must have $\mu = Init(0,1)$ as it is the only literal from $Lits^{\preds}_{\Dmc_\varphi}$ which does not occur in $\comp{\Dmc_\varphi}{\Rmc}$. Due to the definition of $\succ$, we know that $\comp{\Dmc_\varphi}{\Rmc}\setminus \comp{\Dmc_\varphi}{\Bmc}$ can only contain $Init(0,0)$ and negated $Val$ facts. In particular, this means that 
$\Bmc$ must contain all of the $Var$ and $Clause$ facts from $\Dmc_\varphi$. Since the constraint $Init(x,y) \wedge Var(z) \wedge x \neq y \rightarrow Val(z,x) \vee Val(z,y)$ is satisfied by $\Bmc$, we can further infer that for every $1 \leq i \leq n$, either $Val(u_i,0)$ or $Val(u_i,1)$ belongs to $\Bmc$. The constraint  $Init(x,y) \wedge Val(z,x) \wedge Val(z,y) \wedge x\neq y \rightarrow \bot$ ensures a single truth value is selected for each $u_j$. 
Thus $\Bmc$ defines a valuation of the variables $u_1, \ldots, u_n$. Further observe that the constraint $Clause(z_1, w_1, z_2, w_2, z_3, w_3) \wedge Val(z_1,w_1) \wedge Val(z_2,w_2) \wedge Val(z_3,w_3) \rightarrow \bot$ ensures that this valuation does not violate any of the clauses. Thus, we may conclude that the formula $\varphi$ is satisfiable. 

Conversely, if $\varphi$ is satisfiable, take some satisfying valuation $\nu$. Define $\Bmc_\nu$ as $\Dmc_\varphi \cup \{Init(0,1)\} \setminus \{Init(0,0)\} \cup \{Val(u_i,0) \mid \nu(u_i)=0\} \cup \{Val(u_i,1) \mid \nu(u_i)=1\}$. It is not hard to see that $\Bmc_\nu$ is consistent. To see why it is Pareto improvement of $\Rmc =\Dmc_\varphi \setminus \{Init(0,1)\}$, observe that $\comp{\Dmc_\varphi}{\Rmc}\setminus \comp{\Dmc_\varphi}{\Bmc_\nu} = \{Init(0,0)\} \cup \{\neg Val(j,0) \mid \nu(u_j)=0\} \cup \{\neg Val(j,1) \mid \nu(u_j)=1\}$ and that $Init(0,1) \in \comp{\Dmc_\varphi}{\Bmc_\nu}\setminus \comp{\Dmc_\varphi}{\Rmc}$ is such that $Init(0,1) \succ \mu'$
for every $\mu'\in \comp{\Dmc_\varphi}{\Rmc}\setminus \comp{\Dmc_\varphi}{\Bmc}$.  
\end{proof}

\LemConflictChecking*
\begin{proof}
Let $\conf\subseteq\litset$. By definition, $\conf\in\conflicts{\Dmc,\constraints}$ iff (i) for every database instance $\Imc$, if $\Imc\models \conf$, then $\Imc\not\models\constraints$ and (ii) for every $\conf'\subsetneq\conf$, there exists $\Imc$ such that $\Imc\models \conf'$ and $\Imc\models\constraints$. 
\begin{itemize}
\item Condition (i) can be checked in \conp (to show that it does not hold, guess $\Imc$ and check in \ptime\ that $\Imc\models \conf$ and $\Imc\models\constraints$).
\item Condition (ii) can be checked in \np: Let $\conf=\{\lambda_1,\dots,\lambda_n\}$. If there exists $\conf'\subsetneq\conf$ such that for every database instance $\Imc$,  $\Imc\models \conf'$ implies $\Imc\not\models\constraints$, then this is also the case for every $\conf''$ such that $\conf'\subseteq\conf''\subseteq\conf$. Hence it is sufficient to check the condition for every $\conf'_i=\conf\setminus\{\lambda_i\}$, $1\leq i\leq n$. This can be done by guessing $\Imc_1,\dots,\Imc_n$ such that $\Imc_i\models \conf'_i$ and $\Imc_i\models\constraints$. 
\end{itemize}
As the \np\ calls in (ii) can be grouped into a single call, we obtain membership in $BH_2$. 

We show $BH_2$-hardness by reduction from the problem of testing whether an input 3CNF is a minimal unsatisfiable subset (MUS), i.e. it is unsatisfiable and removing any clause renders the formula satisfiable. This problem was proven $BH_2$-complete in \cite{DBLP:journals/ai/Liberatore05}. 
We shall use the following set of constraints $\Cmc_{MUS}$: 
\begin{itemize}
\item $Val(z,x) \wedge Val(z,y) \wedge x\neq y \rightarrow \bot$
\item $Clause(z_1, w_1, z_2, w_2, z_3, w_3) \rightarrow Val(z_1,w_1) \vee Val(z_2,w_2) \vee Val(z_3,w_3)$
\end{itemize}
The input 3CNF $\varphi= \lambda_1 \wedge \ldots \wedge \lambda_k$ over variables $u_1, \ldots, u_n$, 
where $\lambda_j= \ell_{j,1} \vee \ell_{j,2} \vee \ell_{j,3}$, is captured by the instance $D_\varphi$ consisting of the facts:
\begin{itemize}
\item $F_j= Clause(u_{j,1}, b_{j,1},u_{j,2}, b_{j,2},u_{j,3}, b_{j,3})$, for $1 \leq j \leq k$, where $u_{j,h}$ is the variable from literal $\ell_{j,h}$,
$b_{j,h}=1$ if $\ell_{j,h}=u_{j,h}$, 
and $b_{j,h}=0$ if $\ell_{j,h}= \neg u_{j,h}$; 
\end{itemize}
Note that the $b_{j,h}$ positions of the $Clause$ relation are defined differently from the reduction in Lemma \ref{pareto-hard}, as here we give the assignments to the variables that can be used to satisfy the clause, rather than those that violate the clause. 

We claim that $\varphi$ is a MUS iff $D_\varphi \in \conflicts{\Dmc_\varphi,\constraints_{MUS}}$:
\begin{itemize}
\item First suppose that $\varphi$ is a MUS. Then $\varphi$ is unsatisfiable, but every formula $\varphi^-$ obtained by removing one or more clauses from $\varphi$ is satisfiable. Suppose for a contradiction that there exists an instance $\Imc$ such that 
$\Imc\models \conf$ and  $\Imc\models\constraints_{MUS}$. Due to the second constraint, for each fact $F_j=Clause(u_{j,1}, b_{j,1},u_{j,2}, b_{j,2},u_{j,3}, b_{j,3}) \in D_\varphi $ there is $h \in \{1,2,3\}$ such that $Val(u_{j,h}, b_{j,h}) \in \Imc$. Moreover, by the first constraint, there does not exist any $u_i$ such that $\Imc$ contains both $Val(u_i,1)$ and $Val(u_i,0)$. We can thus define a valuation $\nu$ such that $Val(u_i, v_i) \in \Imc$ implies $\nu(u_i)=v_i$. By construction, this valuation will satisfy all clauses of $\varphi$, a contradiction. It follows that every $\Imc$ with $\Imc \models D_\varphi$ is such that  $\Imc \not \models\constraints_{MUS}$. It remains to show that $D_\varphi$ is minimal with this property. Take any proper subset $D^-_\varphi \subsetneq D_\varphi$, and let $\varphi^-$ be the corresponding formula.  As $\varphi^-$ is satisfiable,  we can find a satisfying valuation $\nu$. Let $\Imc_\nu= D^-_\varphi \cup \{Val(u,\nu(u)) \mid u \in \mathsf{vars}(\varphi^-)\}$. By definition, we have $\Imc_\nu \models D^-_\varphi$, and it is easily verified that 
$\Imc_\nu \models \constraints_{MUS}$, completing the argument. 
\item For the other direction, suppose that $D_\varphi \in \conflicts{\Dmc_\varphi,\constraints_{MUS}}$. It follows that (i) every $\Imc$ with $\Imc \models D_\varphi$ is such that  $\Imc \not \models\constraints_{MUS}$, and (ii) for every proper subset $D_\varphi^- \subsetneq D_\varphi$, there is some $\Imc $
such that $\Imc \models D^-_\varphi$ and $\Imc \models\constraints_{MUS}$. Let $\nu$ be a valuation of $u_1, \ldots, u_n$, and let $\Imc_\nu = D_\varphi  \cup \{Val(u_i,\nu(u_i)) \mid 1 \leq i \leq n\}$. Since $\Imc_\nu \models D_\varphi$, it follows from (i) that the second constraint is violated, and hence that $\nu$ does not satisfy all of the clauses of $\varphi$. As this is true of any valuation, $\varphi$ is unsatisfiable. Now consider any $\varphi^-$ obtained by removing one or more clauses from $\varphi$, and let $D^-_\varphi \subsetneq D_\varphi$ be the corresponding instance. By (ii), there exists $\Imc $
such that $\Imc \models D^-_\varphi$ and $\Imc \models\constraints_{MUS}$. Following the same argument as in the first item, we can infer that 
$\varphi^-$ is satisfiable. We conclude that $\varphi$ is a MUS. \qedhere
\end{itemize}
\end{proof}

We will use the following lemma in the proof of Theorem \ref{th:completion-repair-checking}.
\begin{lemma}\label{lem:order-from-prio}
For every priority relation $\succ$ for $\Dmc$ \wrt $\constraints$, there exists a total order $>$ over $\litset$ such that if $\lambda\succ\mu$ then $\lambda>\mu$. 
\end{lemma}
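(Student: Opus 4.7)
The plan is to recognize this as a standard topological-sort / linear-extension argument. Since $\Dmc$ is finite and the schema $\preds$ is finite, $\litset$ is a finite set, and $\succ$ is by assumption an acyclic binary relation on it. The approach is to extend $\succ$ first to its transitive closure $\succ^+$, which remains acyclic (and is therefore a strict partial order), and then to extend $\succ^+$ to a total linear order $>$ on $\litset$; the required implication $\lambda\succ\mu \Rightarrow \lambda>\mu$ then follows immediately because $\succ\,\subseteq\,\succ^+\,\subseteq\,>$.

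Concretely, I would prove the existence of such a $>$ by induction on $|\litset|$. The base case $|\litset|=0$ is trivial. For the inductive step, since $\succ$ is acyclic on a finite nonempty set, there exists at least one $\succ$-maximal literal $\lambda^\ast \in \litset$ (\ie no $\mu$ with $\mu \succ \lambda^\ast$); such a $\lambda^\ast$ exists because otherwise one could follow a strictly increasing $\succ$-chain, which on a finite set would eventually revisit a literal and yield a cycle. Apply the induction hypothesis to $\litset \setminus \{\lambda^\ast\}$ together with the restriction of $\succ$, obtaining a total order $>'$ on $\litset \setminus \{\lambda^\ast\}$ compatible with $\succ$. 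Then define $>$ by placing $\lambda^\ast$ above everything: $\lambda^\ast > \mu$ for all $\mu \in \litset\setminus\{\lambda^\ast\}$, and otherwise $\lambda > \mu$ iff $\lambda >' \mu$. This $>$ is total, and compatibility with $\succ$ follows because $\lambda^\ast$ is $\succ$-maximal (so no priority $\mu \succ \lambda^\ast$ needs to be respected), while priorities among the remaining literals are handled by the inductive $>'$.

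There is essentially no main obstacle here: the result is a standard linear-extension (Szpilrajn-type) argument specialised to a finite set, and the only minor point to be careful about is verifying that a $\succ$-maximal element exists, which uses finiteness together with acyclicity as sketched above. No properties specific to conflicts or to $\constraints$ are actually needed beyond the two assumptions on $\succ$ packaged into the definition of a priority relation (acyclicity and finiteness of its domain).
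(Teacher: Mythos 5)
Your proof is correct and follows essentially the same approach as the paper: the paper also builds $>$ by repeatedly peeling off $\succ$-maximal (non-dominated) literals and placing them above the rest, using acyclicity plus finiteness of $\litset$ to guarantee such literals exist at each stage. The only cosmetic difference is that the paper removes all currently maximal literals in one layer per iteration, whereas you remove one at a time via induction; both are the standard linear-extension argument.
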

\begin{proof}
We can build $>$ as follows: 
Let $>=\emptyset$ , $L=\litset$, and repeat the following step until $L=\emptyset$.
\begin{itemize}
\item Let $NonDom=\{\lambda\mid \lambda\in L, \forall \mu\in L, \mu\not\succ\lambda\}$. 
\item Extend $>$ by (1) setting $\mu > \lambda$ for every $\lambda\in NonDom$ and $\mu\in \litset\setminus L$ and (2) arbitrarily ordering $NonDom$. 
\item Let $L\leftarrow L\setminus NonDom$.
\end{itemize} 

Since $\succ$ is acyclic, the procedure terminates: for every $\lambda\in\litset$, there is a step in which $\lambda\in NonDom$ (otherwise, we can build an infinite chain $\lambda\prec \mu_1\prec\mu_2\prec\dots$ and $\litset$ is finite). 
For every $\lambda,\mu\in\litset$, either $\lambda>\mu$ or $\mu>\lambda$ and $>$ is acyclic. 
Moreover, if $\lambda\succ\mu$, since $\mu$ cannot belong to $NonDom$ while $\lambda\in L$, it follows that $\lambda>\mu$ 
\end{proof}

\ThCompletionRepairChecking*

\begin{proof}
The lower bound is inherited from S-repair checking. 
For the upper bound, we rely on the following \sigmaptwo decision procedure to decide whether a given database $\Rmc$ belongs to $\deltacreps{\Dmc^\Cmc_\succ}$:
\begin{enumerate}
\item Check in polynomial time that $\Rmc$ is a candidate repair (\ie $\Rmc\subseteq \factset$) and that $\Rmc\models\constraints$.
\item Let $\litset\setminus\comp{\Dmc}{\Rmc}=\{\lambda_1,\dots,\lambda_n\}$. Guess:
\begin{itemize}
\item a total order $>$ over $\litset$ such that if $\lambda\succ\mu$ then $\lambda>\mu$ (such order exists by Lemma \ref{lem:order-from-prio}), and 
\item for each $1\leq i\leq n$, a set of literals $\conf_i$ such that $\lambda_i\in\conf_i$, $\conf_i\setminus\{\lambda_i\}\subseteq\comp{\Dmc}{\Rmc}$, and $\mu>\lambda_i$ for every $\mu\in\conf_i\setminus\{\lambda_i\}$.
\end{itemize}
\item Check that $>$ is as required in polynomial time, then for each $1\leq i\leq n$, check that $\conf_i\in\conflicts{\Dmc,\constraints}$ in $BH_2$ (by Lemma~\ref{lem:conflict-checking}).
\end{enumerate}
Step (3) makes a polynomial number of calls to a $BH_2$ oracle hence runs in \deltaptwo. Thus the global procedure runs in \sigmaptwo. 
We next show that it indeed decides whether $\Rmc\in\deltacreps{\Dmc^\Cmc_\succ}$. 
\begin{itemize}
\item Assume that $\Rmc\in\deltacreps{\Dmc^\Cmc_\succ}$. 
\begin{itemize}
\item Since $\Rmc\in\deltareps{\Dmc,\constraints}$, then $\Rmc$ is a candidate repair and $\Rmc\models\constraints$ as required in step~(1). 
\item Since $\Rmc\in\deltacreps{\Dmc^\Cmc_\succ}$, there exists a completion $\succ'$ of $\succ$ such that $\Rmc\in\deltapreps{\Dmc^\Cmc_{\succ'}}$. Let $>$ be a total order over $\litset$ such that $\lambda\succ'\mu$ implies $\lambda>\mu$ (such order exists by Lemma \ref{lem:order-from-prio}). Since $\succ'$ is a completion of $\succ$, $\lambda\succ\mu$ implies $\lambda>\mu$ as required in step~(2).
\item 
Assume for a contradiction that there is $\lambda\in\litset\setminus\comp{\Dmc}{\Rmc}$ such that there is no $\conf\in\conflicts{\Dmc,\constraints}$ such that $\lambda\in\conf$, $\conf\setminus\{\lambda\}\subseteq\comp{\Dmc}{\Rmc}$, and for every $\mu\in\conf\setminus\{\lambda\}$, $\mu>\lambda$ (\ie $\mu\succ'\lambda$).
\begin{itemize}  
\item Since $\Rmc\in\deltareps{\Dmc,\constraints}$, by Proposition \ref{prop:characterizations-repairs-comp}, $\comp{\Dmc}{\Rmc}$ is a maximal subset of $\litset$ such that there is no conflict $\conf\in\conflicts{\Dmc,\constraints}$ such that $\conf\subseteq\comp{\Dmc}{\Rmc}$. Hence, since $\lambda\in\litset\setminus\comp{\Dmc}{\Rmc}$, there exists $\conf\in\conflicts{\Dmc,\constraints}$ such that $\lambda\in\conf$ and $\conf\setminus\{\lambda\}\subseteq\comp{\Dmc}{\Rmc}$.

\item By assumption, for every such $\conf$, there exists $\mu\in\conf\setminus\{\lambda\}$ such that $\mu\not\succ'\lambda$, so such that $\lambda\succ'\mu$.
\item Let $\Bmc=\comp{\Dmc}{\Rmc}\cup\{\lambda\}\setminus\{\mu\mid \lambda\succ'\mu\}$. By construction, there is no $\conf\in\conflicts{\Dmc,\constraints}$ such that $\conf\subseteq\Bmc$. 
By Lemma~\ref{lem:pareto-improvement-if-one-beat-all}, it follows that $\Rmc\notin\deltapreps{\Dmc^\Cmc_{\succ'}}$, a contradiction.
\end{itemize}
Hence there exists $\conf_1,\dots,\conf_n$ as required by step (2).
\end{itemize}
\item Assume that $\Rmc\notin\deltacreps{\Dmc^\Cmc_\succ}$. 
\begin{itemize}
\item If $\Rmc\notin\deltareps{\Dmc,\constraints}$, then $\Rmc\notin\deltacreps{\Dmc^\Cmc_\succ}$ is detected by step (1). \item Otherwise, assume for a contradiction that there exists a total order $>$ as required by step (2) such that for every $\lambda\in \litset\setminus\comp{\Dmc}{\Rmc}$, there exists $\conf\in\conflicts{\Dmc,\constraints}$ such that $\lambda\in\conf$, $\conf\setminus\{\lambda\}\subseteq\comp{\Dmc}{\Rmc}$, and for every $\mu\in\conf\setminus\{\lambda\}$, $\mu>\lambda$.
\begin{itemize}
\item Let $\succ'$ be the completion of $\succ$ induced by $>$. Since $\Rmc\notin\deltacreps{\Dmc^\Cmc_\succ}$, then $\Rmc\notin\deltapreps{\Dmc^\Cmc_{\succ'}}$. Thus there exists a database $\Rmc'$ such that $\Rmc'\models\constraints$ and there exists $\lambda\in\comp{\Dmc}{\Rmc'}\setminus\comp{\Dmc}{\Rmc}$ such that for every $\mu\in \comp{\Dmc}{\Rmc}\setminus\comp{\Dmc}{\Rmc'}$, $\lambda\succ'\mu$.
\item Since $\lambda\in\litset\setminus\comp{\Dmc}{\Rmc}$, then by assumption there exists $\conf\in\conflicts{\Dmc,\constraints}$ such that $\lambda\in\conf$, $\conf\setminus\{\lambda\}\subseteq\comp{\Dmc}{\Rmc}$, and for every $\mu\in\conf\setminus\{\lambda\}$, $\mu>\lambda$ so that $\mu\succ'\lambda$. Since $\comp{\Dmc}{\Rmc}\setminus\comp{\Dmc}{\Rmc'}\subseteq\{\mu\mid \lambda\succ'\mu\}$ and $\lambda\in\comp{\Dmc}{\Rmc'}$, it follows that $\conf\subseteq\comp{\Dmc}{\Rmc'}$. 
Hence $\Rmc'\models\conf$ and $\Rmc'\not\models\constraints$, contradicting $\Rmc'\models\constraints$.
\end{itemize}
Hence there does not exist a total order $>$ and $\conf_1,\dots,\conf_n$ as required by step (2).\qedhere
\end{itemize}
\end{itemize}
\end{proof}

\ThComplexityQueryAnswering*
\begin{proof}

The upper bounds for query answering under X-brave (resp.\ X-CQA and X-intersection) follow from the complexity of X-repair checking and standard query answering. 
We use the following procedures:
\begin{itemize}
\item To decide $\Dmc^\constraints_\succ\not\armodels{X} q$ (resp.\ $\Dmc^\constraints_\succ\bravemodels{X} q$), guess $\Rmc\in\deltaxreps{\Dmc^\constraints_\succ}$ such that $\Rmc\not\models q$ (resp. $\Rmc\models q$).  
\item To decide $\Dmc^\constraints_\succ\not\iarmodels{X} q$, 
compute in polynomial time the subsets $\Bmc_1,\dots,\Bmc_n$ of $\factset$ which are images of $q$ by some homomorphism (there are polynomially many such subsets since their size is bounded by the number of relational atoms in $q$) and guess $\Rmc_1,\dots,\Rmc_n\in\deltaxreps{\Dmc^\constraints_\succ}$ such that $\Bmc_i\not\subseteq\Rmc_i$. 
Since $\bigcap_{\Rmc\in\deltaxreps{\Dmc^\constraints_\succ}}\Rmc\subseteq \bigcap_{i=1}^n\Rmc_i$, this implies that $\Dmc^\constraints_\succ\not\iarmodels{X}q$.%
\end{itemize}
For Pareto- and globally-optimal repairs, since X-repair checking is in \conp and query answering is in \ptime, the \sigmaptwo and \piptwo upper bounds follow immediately. 

For completion-optimal repairs, since C-repair checking is in \sigmaptwo, for each guessed repair $\Rmc$ we can guess together with $\Rmc$ a certificate that $\Rmc\in \deltacreps{\Dmc^\constraints_\succ}$ that can be verified in \deltaptwo, and obtain the \sigmaptwo and \piptwo upper bounds. 
\smallskip
 
The lower bounds follows from the proof of \piptwo-hardness for query answering under S-CQA given in Theorem 6 in \cite{DBLP:journals/is/StaworkoC10}.  The query used in this reduction is $q=\bar{r}$ where $\bar{r}\in\Dmc$ is such that $\bar{r}$ does not belong to a $\Delta$-repair $\Rmc$ iff a fact $r$ originally not in the database is in $\Rmc$. 
Using the same database and set of constraints as \citeauthor{DBLP:journals/is/StaworkoC10} and an empty priority relation, since $\Dmc^\constraints_\succ\armodels{X}q $ iff $\Dmc^\constraints_\succ\iarmodels{X}q $ iff $\Dmc^\constraints_\succ\bravemodels{X} r $, we obtained the \piptwo and \sigmaptwo hardness results. 
\end{proof}

\ComplexityWithConflictsGiven*
\begin{proof}
We use the reduction of Proposition~\ref{prop:reduction-UC-denials} from universal constraints to ground denial constraints, which defines $\Dmc_d$ and $\constraints_{d,\Dmc}$ such that $\conflicts{\Dmc_d,\constraints_{d,\Dmc}}=\{ \tofacts(\conf)\mid \conf\in\conflicts{\Dmc,\constraints}\}$ and $\deltareps{\Dmc_d,\constraints_{d,\Dmc}}=\{\tofacts(\comp{\Dmc}{\Rmc})\mid \Rmc\in\deltareps{\Dmc,\constraints}\}$. 
If $\conflicts{\Dmc,\constraints}$ is given, the construction of $\Dmc_d$ and $\constraints_{d,\Dmc}$ can be done in polynomial time \wrt the size of $\Dmc$ and $\conflicts{\Dmc,\constraints}$. 

The priority relation $\succ$ over the literals of $\conflicts{\Dmc, \constraints}$ corresponds straightforwardly to a priority relation   $\succ_d$ over the facts of $\conflicts{\Dmc_d, \constraints_{d,\Dmc}}$: $\alpha\succ_d\beta$ iff $\lambda\succ\mu$ and $\tofacts(\{\lambda\})=\{\alpha\}$, $\tofacts(\{\mu\})=\{\beta\}$. 
Let $\Rmc\in\deltareps{\Dmc,\constraints}$. 
\begin{itemize}
\item There exists a Pareto improvement of $\Rmc$ (\wrt $\Dmc^\constraints_\succ$) iff there is a Pareto improvement of $\tofacts(\comp{\Dmc}{\Rmc})$ (\wrt ${\Dmc_d}^{\constraints_{d,\Dmc}}_{\succ_d}$).
\begin{itemize}
\item If $\Bmc$ is a Pareto improvement of $\Rmc$:
\begin{itemize}
\item $\Bmc\models\constraints$ so for every $\conf\in\conflicts{\Dmc,\constraints}$, $\Bmc\not\models\conf$, so that $\conf\not\subseteq\comp{\Dmc}{\Bmc}$, \ie  $\tofacts(\conf)\not\subseteq\tofacts(\comp{\Dmc}{\Bmc})$. It follows that $\tofacts(\comp{\Dmc}{\Bmc})\models \constraints_{d,\Dmc}$.
\item There is $\mu\in\comp{\Dmc}{\Bmc}\setminus\comp{\Dmc}{\Rmc}$ with $\mu\succ\lambda$ for every $\lambda\in\comp{\Dmc}{\Rmc}\setminus\comp{\Dmc}{\Bmc}$, \ie there is $\alpha\in \tofacts(\comp{\Dmc}{\Bmc})\setminus\tofacts(\comp{\Dmc}{\Rmc})$ such that $\alpha\succ_d\beta$ for every $\beta\in\tofacts(\comp{\Dmc}{\Bmc})\setminus\tofacts(\comp{\Dmc}{\Rmc})$.  
\end{itemize}
Hence $\tofacts(\comp{\Dmc}{\Bmc})$ is a Pareto improvement of $\tofacts(\comp{\Dmc}{\Rmc})$.
\item If there is a Pareto improvement of $\tofacts(\comp{\Dmc}{\Rmc})$, it is of the form $\tofacts(\Bmc)\subseteq \Dmc_d=\tofacts(\litset)$ for some $\Bmc\subseteq\litset$:
\begin{itemize}
\item $\tofacts(\Bmc)\models \constraints_{d,\Dmc}$, so for every $\conf\in\conflicts{\Dmc,\constraints}$, $\tofacts(\conf)\not\subseteq\tofacts(\Bmc)$, \ie $\conf\not\subseteq\Bmc$. Let $\Bmc'$ be a maximal subset of $\litset$ that includes $\Bmc$ and does not include any conflict. By Proposition \ref{prop:characterizations-repairs-comp} and Lemma \ref{lem:compl-restr}, $\restr{\Dmc}{\Bmc'}\in\deltareps{\Dmc,\constraints}$, so $\restr{\Dmc}{\Bmc'}\models\constraints$.
\item There is $\alpha\in \tofacts(\Bmc)\setminus\tofacts(\comp{\Dmc}{\Rmc})$ such that $\alpha\succ_d\beta$ for every $\beta\in\tofacts(\Bmc)\setminus\tofacts(\comp{\Dmc}{\Rmc})$, so there is $\mu\in\Bmc\subseteq\Bmc'=\comp{\Dmc}{\restr{\Dmc}{\Bmc'}}$ such that $\mu\succ\lambda$ for every $\lambda\in\comp{\Dmc}{\Rmc}\setminus\comp{\Dmc}{\restr{\Dmc}{\Bmc'}}$.
\end{itemize}
Hence $\restr{\Dmc}{\Bmc'}$ is a Pareto improvement of $\Rmc$.
\end{itemize}
\item There exists a global improvement of $\Rmc$ (\wrt $\Dmc^\constraints_\succ$) iff there is a global improvement of $\tofacts(\comp{\Dmc}{\Rmc})$ (\wrt ${\Dmc_d}^{\constraints_{d,\Dmc}}_{\succ_d}$). The proof is similar to the Pareto improvement case.
\end{itemize}

It follows that for every $\Rmc\in\deltareps{\Dmc,\constraints}$, $\Rmc\in\deltaxreps{\Dmc^\constraints_\succ}$ iff $\tofacts(\comp{\Dmc}{\Rmc})\in\deltaxreps{{\Dmc_d}^{\constraints_{d,\Dmc}}_{\succ_d}}$ for $X\in\{G,P,C\}$. 
Hence, to decide whether a database $\Rmc$ belongs to $\deltaxreps{\Dmc^\constraints_\succ}$, we can check in polynomial time that $\tofacts(\comp{\Dmc}{\Rmc})\in\deltareps{{\Dmc_d}^{\constraints_{d,\Dmc}}_{\succ_d}}$ then check that $\tofacts(\comp{\Dmc}{\Rmc})\in\deltaxreps{{\Dmc_d}^{\constraints_{d,\Dmc}}_{\succ_d}}$ in \conp for $X=G$ or  in \ptime   for $X\in\{P,C\}$. 

The complexity upper bounds for the query answering problems follow from the complexity of X-repair checking and query answering as in the proof of Theorem \ref{th:complexityQueryAnswering}. 
\smallskip

We obtain the lower bounds by adapting proofs of complexity hardness of repair checking and query answering with denial constraints. First note that with denial constraints, the conflicts can be computed in polynomial time \wrt data complexity. Hence, they can be assumed to be given without changing the data complexity. Second, note that for every fact $\alpha\in\Dmc$, (1) $\Dmc^\constraints_\succ\armodels{X} \alpha$ iff $\Dmc^\constraints_\succ\iarmodels{X} \alpha$ and (2) if we add to $\Dmc$ a fact $\beta$ with a fresh predicate that does not occur in $\constraints$, to $\constraints$ a denial constraint $\alpha\wedge\beta\rightarrow\bot$ and let $\alpha\succ\beta$, then $\Dmc^\constraints_\succ\armodels{X} \alpha$ iff $\Dmc^\constraints_\succ\not\bravemodels{X} \beta$. It follows that a reduction of a hard problem to query answering under X-CQA (resp.\ X-intersection) semantics that uses a ground atomic query can be adapted into a reduction of the same problem to query answering under X-intersection (resp.\ X-CQA) or X-brave semantics. 
\begin{itemize}
\item The proof of Theorem 2 in \cite{DBLP:journals/amai/StaworkoCM12} uses functional dependencies and a ground atomic query to show \conp-hardness of G-repair checking and \piptwo-hardness of query answering under G-CQA semantics. We can thus obtain the lower bounds in the globally-optimal repair case.
\item The proof of Proposition 6.2.8 in \cite{DBLP:phd/hal/Bourgaux16} uses a DL-Lite TBox which actually consists of denial constraints and a ground atomic ground query to show \conp-hardness of query answering under $\subseteq_P$-intersection semantics (with a score-structured priority). Since with score-structured priority relations, Pareto-optimal and completion-optimal repairs coincide with $\subseteq_P$-repairs, we can obtain the lower bounds for the Pareto- and completion-optimal repair cases.\qedhere
\end{itemize}
\end{proof}

\BoundedConflictsUndec*
\begin{proof}
The proof is by reduction from the Datalog boundedness problem \cite{DBLP:journals/jacm/GaifmanMSV93}: Given a Datalog program $\Pi$, \ie a finite set of rules of the form $R_1(\vec{x_1})\wedge\dots\wedge R_n(\vec{x_n})\rightarrow P(\vec{y})$ with $\vec{y}\subseteq \vec{x_1}\cup\dots\cup\vec{x_n}$, decide whether there exists a bound $k$ such that for every database instance $\Dmc$, $\Pi^k(\Dmc)=\Pi^\infty(\Dmc)$, where for a set of facts $\Smc$, $\Pi(\Smc)$ extends $\Smc$ with all facts that can be obtained by applying some rule of $\Pi$ to $\Smc$, $\Pi^0(\Dmc)=\Dmc$ and $\Pi^{i+1}(\Dmc)=\Pi(\Pi^i(\Dmc))$. 

It is well-known that each fact from $\Pi^i(\Dmc)$ can be associated with at least one proof tree whose leaves are facts from $\Dmc$ and whose inner nodes are facts obtained from their children by applying some rule of $\Pi$. If $\alpha\in \Pi^i(\Dmc)\setminus\Pi^{i-1}(\Dmc)$, all its proof trees have height at least $i$ and $\alpha$ has a proof tree of height $i$. 

Let $\Pi$ be a Datalog program and define $\constraints_\Pi=\Pi$. It follows from the form of Datalog rules that $\constraints_\Pi$ is a set of full tuple-generating dependencies. We show that there exists a bound $k$ such that for every database instance $\Dmc$, $\Pi^k(\Dmc)=\Pi^\infty(\Dmc)$, if and only if there exists a bound $M$ such that for every $\Dmc$, $\max_{\conf\in\conflicts{\Dmc,\constraints_\Pi}}(|\conf|)\leq M$. 

\noindent($\Rightarrow$) Assume for a contradiction that there exists a bound $k$ such that for every database instance $\Imc$, $\Pi^k(\Imc)=\Pi^\infty(\Imc)$ while for every integer $M$, there exists a database instance $\Dmc$ and a conflict $\conf_M\in\conflicts{\Dmc,\constraints_\Pi}$ such that $|\conf_M|>M$.  

Let $n_{max}$ be the maximal number of atoms in the bodies of the Datalog rules in $\Pi$, set $M=n_{max}^k+1$ and let $\Dmc$ and $\conf_M\in\conflicts{\Dmc,\constraints_\Pi}$ be such that $|\conf_M|>M$.

\begin{itemize}
\item By Proposition \ref{prop:defconflicts}, $\conf_M$ corresponds to some prime implicant of the disjunction of the ground contraint bodies $\bigvee_{\varphi\rightarrow \bot\in gr_\Dmc(\constraints_\Pi)} \varphi $. Since the constraints in $\constraints_\Pi$ all contain exactly one negative literal, it follows that $\conf_M$ contains exactly one negative literal $\neg \alpha_M$. Let $\Smc_M=\conf_M\setminus\{\neg\alpha_M\}$ be the set of facts from $\conf_M$. 

\item Since $\conf_M\in\conflicts{\Dmc,\constraints_\Pi}$, $\conf_M$ is a minimal subset of $\litset$ such that for every database instance $\Imc$, 
if $\Imc\models \conf_M$, then $\Imc\not\models\constraints_\Pi$. 
Hence every database instance $\Imc$ such that $\Imc\models\constraints_\Pi$ and $\Imc\models \conf_M\setminus\{\neg\alpha_M\}$ is such that $\Imc\models\alpha_M$. It follows that $\Pi,\Smc_M\models \alpha_M$, \ie $\alpha_M\in\Pi^\infty(\Smc_M)=\Pi^k(\Smc_M)$. 

\item Since $\alpha_M\in\Pi^k(\Smc_M)$, $\alpha_M$ has a proof tree $\tau$ of height at most $k$. Moreover, the degree of any proof tree is bounded by $n_{max}$ so the number of leaves of $\tau$ is at most $n_{max}^k<M<|\conf_M|$. Since $|\Smc_M|=|\conf_M|-1$, it follows that $n_{max}^k<|\Smc_M|$. Hence there exists $\beta\in\Smc_M$ such that $\Pi,\Smc_M\setminus\{\beta\}\models \alpha_M$. 

\item It follows that for every database instance $\Imc$ such that $\Imc\models\constraints_\Pi$ and $\Imc\models \Smc_M\setminus\{\beta\}$, $\Imc\models \alpha_M$. Hence for every database instance such that $\Imc\models \Smc_M\setminus\{\beta\}\cup\{\neg\alpha_M\}$, $\Imc\not\models\constraints_\Pi$. Since $\Smc_M\setminus\{\beta\}\cup\{\neg\alpha_M\}=\conf_M\setminus\{\beta\}$, this contradicts the fact that $\conf_M\in \conflicts{\Dmc,\constraints_\Pi}$.
\end{itemize}
It follows that the boundedness of $\Pi$ implies that the size of the conflicts \wrt $\constraints_\Pi$ can be bounded independently from the database.

\noindent($\Leftarrow$) Assume for a contradiction that 
there exists a bound $M$ such that for every database instance $\Dmc$ and conflict $\conf\in\conflicts{\Dmc,\constraints_\Pi}$, $|\conf|\leq M$ while for every integer $k$, there exists a database instance $\Imc_k$ such that $\Pi^k(\Imc_k)\neq\Pi^\infty(\Imc_k)$.  
Let $p$ be the number of predicates that occur in $\Pi$ and  $a_{max}$ be the maximal arity of such predicates, and set $k= p\times (M\times a_{max})^{a_{max}}$. 

\begin{itemize}
\item Since $\Pi^k(\Imc_k)\neq\Pi^\infty(\Imc_k)$, there exists $\alpha_k\in \Pi^{k+1}(\Imc_k)\setminus\Pi^k(\Imc_k)$. Hence all proof trees of $\alpha_k$ have height at least $k+1$. 

\item Let $\tau$ be a proof tree of $\alpha_k$ such that (i) $\tau$ is non-recursive, \ie does not contain two nodes labelled with the same fact such that one node is the descendant of the other, and (ii) its set of leaves $\Smc_k\subseteq \Imc_k$ is set-minimal among the sets of leaves of proof trees for $\alpha_k$. 
Such a proof tree exists because for every proof tree $\tau'$, there exists a non-recursive proof tree whose leaves are a subset of the leaves of $\tau'$ 
(\cf proof of Proposition 12 in \cite{DBLP:conf/kr/BourgauxBPT22}).

\item $\conf=\Smc_k\cup\{\neg\alpha_k\}$ is a conflict of $\Imc_k$ \wrt $\constraints_\Pi$: Indeed, since $\Pi, \Smc_k\models \alpha_k$, every database instance $\Imc$ such that $\Imc\models \conf$ is such that $\Imc\not\models \constraints_\Pi$ and we can check that $\conf$ is minimal, so that $\conf$ is a conflict.
\begin{itemize} 
\item For every $\beta\in\conf$ such that $\beta\neq\neg\alpha_k$, there is no proof tree for $\alpha_k$ whose leaves are a subset of $\Smc_k\setminus\{\beta\}$ so the database instance $\Imc$ obtained from $\Smc_k\setminus\{\beta\}$ by adding all facts entailed by $\Pi$ and $\Smc_k\setminus\{\beta\}$ is such that $\Imc\models\constraints_\Pi$ and $\alpha_k\notin\Imc$, \ie $\Imc\models \conf\setminus\{\beta\}$.

\item The database instance $\Imc$ obtained from $\Smc_k$ by adding all facts entailed by $\Pi$ and $\Smc_k$ is such that $\Imc\models\constraints_\Pi$ and $\Imc\models \conf\setminus\{\neg\alpha_k\}$.
\end{itemize}
It follows that $|\conf|\leq M$, so $|\Smc_k|< M$.

\item The number of facts over $p$ predicates that can be derived from $\Smc_k$ is bounded by $p\times (M\times a_{max})^{a_{max}}=k$ (since there are at most $M\times a_{max}$ constants in $\Smc_k$), so since we assume that $\tau$ is non-recursive, there cannot be repetition of the same fact on a path from root to leaf in $\tau$ and the height of $\tau$ is bounded by $k$, contradicting the fact that all proof trees of $\alpha_k$ have height at least $k+1$.
\end{itemize}
It follows that the size of the conflicts \wrt $\constraints_\Pi$ being bounded implies the boundedness of $\Pi$.
\end{proof}

\section{Proofs for Section \ref{sec:aics}}

\subsection{Proofs for Section \ref{subsec:aics}}

\NewFoundedGrounded*
\begin{proof}
\noindent$(\Rightarrow)$ Let $\Umc\in\groundups{\Dmc,\eta}$ and assume for a contradiction that $\Umc\notin\ups{\Dmc,\eta[\Umc]}$.  
\begin{itemize}
\item Since $\Umc\in\ups{\Dmc,\eta}$, then $\Dmc\circ\Umc\models r$ for every $r\in gr_\Dmc(\eta)$, so $\Dmc\circ\Umc\models r$ for every $r\in \eta[\Umc]$. Hence, $\Umc\notin\ups{\Dmc,\eta[\Umc]}$ means that there is a proper subset $\Umc'\subsetneq\Umc$ which is such that $\Dmc\circ\Umc'\models r$ for every $r\in\eta[\Umc]$. 

\item Since $\Umc\in\groundups{\Dmc,\eta}$, there exists $r_0\in gr_\Dmc(N(\eta))=N(gr_\Dmc(\eta))$ such that $\Dmc\circ\Umc'\not\models r_0$ and the only update action of $r_0$ is in $\Umc\setminus\Umc'$. 

\item Let $r$ be the AIC from $gr_\Dmc(\eta)$ such that $r_0\in N(r)$. Since $r$ has an update action in $\Umc\setminus\Umc'$ (hence in $\Umc$), then $r\in \eta[\Umc]$. Moreover, since $\aicbody(r)=\aicbody(r_0)$, $\Dmc\circ\Umc'\not\models r$. This contradicts the definition of $\Umc'$. 
\end{itemize}
Hence $\Umc\in\ups{\Dmc,\eta[\Umc]}$. 
\smallskip

\noindent$(\Leftarrow)$ 
Let $\Umc\in\ups{\Dmc,\eta}$ be such that $\Umc\in\ups{\Dmc,\eta[\Umc]}$ and assume for a contradiction that $\Umc\notin\groundups{\Dmc,\eta}$. 
\begin{itemize}
\item Since $\Umc$ is an r-update for $\Dmc$ \wrt $\eta$, $\Umc\notin\groundups{\Dmc,\eta}$ means that there exists $\Umc'\subsetneq\Umc$ such that for every $r\in gr_\Dmc(N(\eta))=N(gr_\Dmc(\eta))$, either $\Dmc\circ\Umc'\models r$ or the only update action of $r$ is not in $\Umc\setminus\Umc'$. 

\item Let $r\in\eta[\Umc]$: $r$ has an update action $A$ such that $A\in\Umc$. Hence there exists $r'\in N(r)$ such that $A$ is the only update action of $r'$. 
If $A\in\Umc'$, $\Dmc\circ\Umc'\models r$. 
Otherwise, $A\in\Umc\setminus\Umc'$ so $\Dmc\circ\Umc'\models r'$, which implies $\Dmc\circ\Umc'\models r$ since $\aicbody(r)=\aicbody(r')$. 

\item It follows that $\Dmc\circ\Umc'\models \eta[\Umc]$, contradicting $\Umc\in\ups{\Dmc,\eta[\Umc]}$. 
\end{itemize}
Hence $\Umc\in\groundups{\Dmc,\eta}$.
\end{proof}

\subsection{Proofs for Section \ref{subsec:prio-to-aics}}

We provide here the mentioned result showing that several repair notions coincide for monotone AICs: 

\begin{proposition}\label{prop:monotone-founded-justified}
For every monotone set $\eta$ of ground AICs and database $\Dmc$, $\justifreps{\Dmc,\eta}=$ $\groundreps{\Dmc,\eta}=$ $\foundreps{\Dmc,\eta}$ $\subseteq\wellfoundreps{\Dmc,\eta}$. 
\end{proposition}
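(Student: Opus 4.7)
My plan rests on a monotonicity lemma tailored to monotone $\eta$: if $\Umc_2 \subseteq \Umc_1$ are consistent sets of update actions and the body of some $r \in gr_\Dmc(\eta)$ holds in $\Dmc \circ \Umc_1$, then it also holds in $\Dmc \circ \Umc_2$. For a positive literal $\lambda = \alpha$ in the body, monotonicity of $\eta$ ensures $+\alpha$ is not an update action of any rule, so $+\alpha \notin \Umc_1$; thus $\alpha \in \Dmc \circ \Umc_1$ forces $\alpha \in \Dmc$ and $-\alpha \notin \Umc_1 \supseteq \Umc_2$, giving $\alpha \in \Dmc \circ \Umc_2$. The case $\lambda = \neg \alpha$ is symmetric. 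Identifying the literal $\alpha$ (resp.\ $\neg \alpha$) with the update action $+\alpha$ (resp.\ $-\alpha$) yields a useful corollary: for every literal $\lambda$ occurring in a body of $\eta$, $\lambda$ holds in $\Dmc \circ \Umc$ iff $\lambda \in ne(\Dmc, \Dmc \circ \Umc) \cup \Umc$.

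Since $\deltaxreps{\Dmc, \eta} = \{\Dmc \circ \Umc \mid \Umc \in \xups{\Dmc, \eta}\}$, it suffices to prove the corresponding containments at the level of r-updates. Combined with the standard inclusion $\groundups{\Dmc, \eta} \subseteq \foundups{\Dmc, \eta}$, I aim to show (i) $\foundups{\Dmc, \eta} \subseteq \wellfoundups{\Dmc, \eta}$, (ii) $\foundups{\Dmc, \eta} \subseteq \groundups{\Dmc, \eta}$, and (iii) $\foundups{\Dmc, \eta} = \justifups{\Dmc, \eta}$. For (i), I enumerate a founded $\Umc = \{A_1, \dots, A_n\}$ in any order; the lemma transports the witness $r_i$ from foundedness of $A_i$ (body true in $\Dmc \circ (\Umc \setminus \{A_i\})$) down to $\Dmc \circ \{A_1, \dots, A_{i-1}\}$. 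For (ii), I invoke Proposition~\ref{prop:new-founded-grounded}: $\Dmc \circ \Umc \models \eta[\Umc]$ holds trivially, and for any $\Vmc \subsetneq \Umc$, picking $A \in \Umc \setminus \Vmc$ and combining foundedness with the lemma yields an AIC of $\eta[\Umc]$ (sharing its body with the foundedness witness) that is violated by $\Dmc \circ \Vmc$.

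The delicate step is (iii), for which I need both directions. For $\foundups{\Dmc, \eta} \subseteq \justifups{\Dmc, \eta}$, I set $\Wmc = ne(\Dmc, \Dmc \circ \Umc) \cup \Umc$: closure of $\Wmc$ under $\eta$ follows from the corollary of the lemma together with $\Dmc \circ \Umc \models \eta$. For minimality, assuming some closed $\Wmc' = ne(\Dmc, \Dmc \circ \Umc) \cup \Umc' \subsetneq \Wmc$ existed, I fix $A \in \Umc \setminus \Umc'$, let $r_A$ be the foundedness witness for $A$, and use the lemma to transport the true body of $r_A$ from $\Dmc \circ (\Umc \setminus \{A\})$ to $\Dmc \circ \Umc'$. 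A literal-by-literal analysis (crucially using that any element of $\aicup(r_A) \cap \Umc$ must coincide with $A$, since $A$ is the unique action whose removal from $\Umc$ can restore the body of $r_A$) shows that every non-updatable literal of $r_A$ lies in $\Wmc'$ while $\aicup(r_A) \cap \Wmc' = \emptyset$, contradicting closure. For $\justifups{\Dmc, \eta} \subseteq \foundups{\Dmc, \eta}$, given justified $\Umc$ and $A \in \Umc$, the proper subset $\Wmc \setminus \{A\} \supseteq ne(\Dmc, \Dmc \circ \Umc)$ is not closed, so some $r \in gr_\Dmc(\eta)$ has all non-updatable literals in $\Wmc \setminus \{A\}$ but no update action there; closure of $\Wmc$ on $r$ then forces $\aicup(r) \cap \Wmc = \{A\}$, and in particular $A \in \aicup(r)$. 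A final literal-level monotonicity argument, using that each literal $\lambda$ of $r$ satisfies either $\fix(\lambda) = A$ or $\fix(\lambda) \notin \Umc$, confirms that the body of $r$ is true in $\Dmc \circ (\Umc \setminus \{A\})$, witnessing foundedness of $A$. The main obstacle is threading the literal-level monotonicity arguments cleanly through both directions of (iii); the rest of the plan is a routine application of the lemma.
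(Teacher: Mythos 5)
Your argument is essentially correct, but it takes a genuinely different route from the paper. The paper obtains this proposition as an immediate corollary of Proposition~\ref{prop:founded-grounded-closed-min-faith}: a monotone set of ground AICs is trivially closed under resolution and preserves actions under resolution (there is simply no pair of rules to resolve), so the general collapse result applies. That general result in turn rests on Lemma~\ref{lem:founded-conf-closed-min-faith}, whose proof goes through the hitting-set characterization of conflicts and explicit resolution steps. Your direct monotonicity lemma short-circuits all of this in the monotone case: taking $\Umc_2=\emptyset$, it says precisely that the body of any foundedness witness already holds in $\Dmc$, which is the whole content of Lemma~\ref{lem:founded-conf-closed-min-faith} specialized to monotone $\eta$. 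From there your derivations of (i), (ii) via Proposition~\ref{prop:new-founded-grounded}, and the forward direction of (iii) mirror the structure of the paper's proof of Proposition~\ref{prop:founded-grounded-closed-min-faith}, but are self-contained and considerably more elementary. What the paper's route buys is generality (the same work also proves the statement for the larger class of well-behaved AICs); what yours buys is a short, resolution-free proof of the special case. Note also that the reverse direction of (iii) and the inclusion $\groundups{\Dmc,\eta}\subseteq\foundups{\Dmc,\eta}$ hold for arbitrary AICs and could simply be cited, as the paper does.

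One point needs repair: your monotonicity lemma is false as stated for \emph{arbitrary} consistent sets of update actions. Take $\eta=\{\alpha\wedge\beta\rightarrow\{-\beta\}\}$, $\Dmc=\emptyset$, $\Umc_1=\{+\alpha,+\beta\}$ and $\Umc_2=\emptyset$: the body holds in $\Dmc\circ\Umc_1$ but not in $\Dmc\circ\Umc_2$. The step ``$+\alpha$ is not an update action of any rule, so $+\alpha\notin\Umc_1$'' is a non sequitur unless you additionally assume that every action in $\Umc_1$ occurs in $\aicup(r)$ for some $r\in gr_\Dmc(\eta)$. This hypothesis does hold wherever you apply the lemma — for founded r-updates it is part of the definition, and for justified r-updates it follows from minimality of the closed set (an action that is not an update action of any rule could be dropped without breaking closure) or from the known inclusion of justified into founded r-updates — but the restriction must be added to the lemma statement, and the justified case must be argued explicitly before the lemma is invoked in that direction.
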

\begin{proof}
This proposition is actually a corollary of Proposition \ref{prop:founded-grounded-closed-min-faith}. Indeed, every monotone set $\eta$ of ground AICs is consistent (since $\{\alpha\mid\neg\alpha\text{ occurs in }\eta\}\models\eta$) and such that there is no  pair of AICs $r_1, r_2\in \eta$ with $\alpha \in \aiclits(r_1)$ and $\neg \alpha \in \aiclits(r_2)$, thus is trivially closed under resolution and preserves actions under resolution. 
\end{proof}

\paragraph{Reduction from prioritized databases to ground AICs}
Recall that given a prioritized database $\Dmc^\constraints_\succ$, 
$$\eta^\constraints_\succ=\{r_\conf \mid \conf\in\conflicts{\Dmc,\constraints}\}\text{ where }r_\conf:=\bigwedge_{\lambda\in\conf}\lambda \rightarrow\{\mi{fix}(\lambda)\mid \lambda\in\conf,\forall\mu\in\conf, \lambda\not\succ\mu\}$$ and $\mi{fix}$ is such that $\mi{fix}(\alpha)=-\alpha$ and $\mi{fix}(\no\alpha)=+\alpha$. 

\ReductionPrioToAICs*
\begin{proof}
For every $\alpha\in\factset$, either $\alpha$ or $\no\alpha$ does not belong to $\litset$, hence does not belong to any conflict. It follows that $\eta^\constraints_\succ$ is monotone so by Proposition~\ref{prop:monotone-founded-justified}, $\justifreps{\Dmc,\eta^\constraints_\succ}=\groundreps{\Dmc,\eta^\constraints_\succ}=\foundreps{\Dmc,\eta^\constraints_\succ}\subseteq\wellfoundreps{\Dmc,\eta^\constraints_\succ}$. 
We show that $\deltapreps{\Dmc^\constraints_\succ}=\foundreps{\Dmc,\eta^\constraints_\succ}$. 
\smallskip

\noindent$(\foundreps{\Dmc,\eta^\constraints_\succ}\subseteq\deltapreps{\Dmc^\constraints_\succ})$ Let $\Umc\in\foundups{\Dmc,\eta^\constraints_\succ}$ and $\Rmc=\Dmc\circ\Umc$. We show that $\Rmc\in\deltapreps{\Dmc^\constraints_\succ}$. 

\begin{itemize}
\item Since $\Umc\in\ups{\Dmc,\eta^\constraints_\succ}$, $\Dmc\circ\Umc\models\eta^\constraints_\succ$, \ie $\Dmc\circ\Umc\models r_\conf$ for every $\conf\in\conflicts{\Dmc,\constraints}$. Hence $\Rmc\not\models\conf$, so  
$\conf\not\subseteq\comp{\Dmc}{\Rmc}$. 
Since $\Umc$ is a minimal such set of update actions, $\comp{\Dmc}{\Rmc}$ is a maximal subset of $\litset$ such that there is no conflict $\conf\in\conflicts{\Dmc,\constraints}$ such that $\conf\subseteq\comp{\Dmc}{\Rmc}$. 
By Proposition~\ref{prop:characterizations-repairs-comp}, $\Rmc\in\deltareps{\Dmc,\constraints}$.

\item Assume for a contradiction that $\Rmc\notin\deltapreps{\Dmc^\constraints_\succ}$. There exists $\Rmc'$ such that $\Rmc'\models\constraints$ and there is $\lambda\in\comp{\Dmc}{\Rmc'}\setminus\comp{\Dmc}{\Rmc}$ with $\lambda\succ\mu$ for every $\mu\in \comp{\Dmc}{\Rmc}\setminus\comp{\Dmc}{\Rmc'}$. 
If $\lambda=\alpha\in\Dmc$, $\alpha\notin\comp{\Dmc}{\Rmc}$ means that $\alpha\notin\Rmc$ so $\mi{fix}(\lambda)=-\alpha$ is in $\Umc$. If $\lambda=\no\alpha$ for some $\alpha\notin\Dmc$, $\no\alpha\notin\comp{\Dmc}{\Rmc}$ means that $\alpha\in\Rmc$ so $\mi{fix}(\lambda)=+\alpha$ is in $\Umc$. 

Since $\Umc$ is founded and $\mi{fix}(\lambda)\in\Umc$, there exists $r_\conf\in\eta^\constraints_\succ$ (that corresponds to $\conf\in\deltaconflicts{\Dmc,\constraints}$) such that $\mi{fix}(\lambda)$ is an update action of $r_\conf$ and $\Dmc\circ\Umc\setminus\{\mi{fix}(\lambda)\}\not\models r_\conf$, \ie $\Rmc_\lambda\models  \conf$ where $\Rmc_\lambda=\Dmc\circ\Umc\setminus\{\mi{fix}(\lambda)\}$.  
Hence $\conf\subseteq\comp{\Dmc}{\Rmc_\lambda}$ and for every $\mu\in\conf$, $\lambda\not\succ\mu$ by definition of the update actions of $r_\conf$. 

It follows that $\conf\subseteq\comp{\Dmc}{\Rmc_\lambda}\setminus\{\mu\mid\lambda\succ\mu\}=\comp{\Dmc}{\Rmc}\cup\{\lambda\}\setminus\{\mu\mid\lambda\succ\mu\}$.  

Since $\lambda\in\comp{\Dmc}{\Rmc'}$ and $\comp{\Dmc}{\Rmc}\setminus\comp{\Dmc}{\Rmc'}\subseteq \{\mu\mid\lambda\succ\mu\}$, then $\comp{\Dmc}{\Rmc}\cup\{\lambda\}\setminus\{\mu\mid\lambda\succ\mu\}\subseteq\comp{\Dmc}{\Rmc'}$. 
It follows that $\conf\subseteq\comp{\Dmc}{\Rmc'}$, which contradicts $\Rmc'\models\constraints$. 
Hence $\Rmc\in\deltapreps{\Dmc^\constraints_\succ}$.
\end{itemize}

\noindent$(\deltapreps{\Dmc^\constraints_\succ}\subseteq\foundreps{\Dmc,\eta^\constraints_\succ})$ Let $\Rmc\in\deltapreps{\Dmc^\constraints_\succ}$ and let $\Umc=\{-\alpha\mid \alpha\in\Dmc\setminus\Rmc\}\cup\{+\alpha\mid \alpha\in\Rmc\setminus\Dmc\}$ be the consistent set of update actions such that $\Dmc\circ\Umc=\Rmc$. 
We show that $\Umc\in\foundups{\Dmc,\eta^\constraints_\succ}$. 
\begin{itemize}
\item Since $\Rmc\in\deltareps{\Dmc,\constraints}$, 
by Proposition~\ref{prop:characterizations-repairs-comp}, $\comp{\Dmc}{\Rmc}$ is a maximal subset of $\litset$ such that there is no conflict $\conf\in\conflicts{\Dmc,\constraints}$ such that $\conf\subseteq\comp{\Dmc}{\Rmc}$. 
In particular, for every $\conf\in\conflicts{\Dmc,\constraints}$, $\Rmc\not\models \conf$ so for every $r_\conf\in\eta^\constraints_\succ$, $\Dmc\circ\Umc\models r_\conf$. Thus $\Dmc\circ\Umc\models\eta^\constraints_\succ$. 
The minimality of $\Umc$ follows from the maximality of $\comp{\Dmc}{\Rmc}$. Hence $\Umc\in\ups{\Dmc,\eta^\constraints_\succ}$.

\item Let $A\in\Umc$. If $A=-\alpha$ for some $\alpha\in\Dmc\setminus\Rmc$, let $\lambda=\alpha$, 
and if $A=+\alpha$ for some $\alpha\in\Rmc\setminus\Dmc$, let $\lambda=\no\alpha$. 
Let $\Rmc_\lambda=\restr{\Dmc}{\comp{\Dmc}{\Rmc}\cup\{\lambda\}\setminus\{\mu\mid\lambda\succ\mu\}}$, so that $\comp{\Dmc}{\Rmc_\lambda}=\comp{\Dmc}{\Rmc}\cup\{\lambda\}\setminus\{\mu\mid\lambda\succ\mu\}$ by Lemma~\ref{lem:compl-restr}. 
By Lemma~\ref{lem:pareto-improvement-if-one-beat-all}, there must be some $\conf\in\deltaconflicts{\Dmc,\constraints}$ such that $\conf\subseteq\comp{\Dmc}{\Rmc_\lambda}$ (otherwise $\Rmc\notin\deltapreps{\Dmc^\constraints_\succ}$).

For every $\mu\in\conf$, $\lambda\not\succ\mu$ (otherwise, $\mu\notin\comp{\Dmc}{\Rmc_\lambda}$ so $\conf\not\subseteq\comp{\Dmc}{\Rmc_\lambda}$). 
Hence, $A=\mi{fix}(\lambda)$ is an update action of the AIC $r_\conf\in\eta^\constraints_\succ$ that corresponds to $\conf$. 

Moreover, $\comp{\Dmc}{\Dmc\circ\Umc\setminus\{A\}}=\comp{\Dmc}{\Rmc}\cup\{\lambda\}\supseteq\comp{\Dmc}{\Rmc_\lambda}$, so since $\conf\subseteq\comp{\Dmc}{\Rmc_\lambda}$, then $\conf\subseteq\comp{\Dmc}{\Dmc\circ\Umc\setminus\{A\}}$. Thus $\Dmc\circ\Umc\setminus\{A\}\not\models r_\conf$. 
Hence $\Umc\in\foundups{\Dmc,\eta^\constraints_\succ}$ .\qedhere
\end{itemize}
\end{proof}

\paragraph{Data-independent reduction in denial constraints case}
\def\const{\mathsf{const}}
\def\refine{\textit{refine}}
\def\bodyvars{\mathsf{vars}}
\def\terms{\mathsf{terms}}

Recall that for the next result, we assume that the 
priority relation $\succ$ is specified in the database. 
Concretely, we add a fresh predicate $P_\succ$ to $\preds$,
and increase the arity of the predicates in $\preds\setminus\{P_\succ\}$ by 1, so that the 
first argument of each $R \in \preds\setminus\{P_\succ\}$ now stores a unique fact identifier,
while $P_\succ$ stores pairs of such identifiers. 

We now take a set of denial constraints $\constraints$ over $\preds\setminus\{P_\succ\}$
and explain how to build the new set of constraints $\minnongr(\constraints)$ that allow us to more easily 
identify conflicts. The construction of $\minnongr(\constraints)$ is done in two steps. 

First, we shall transform every $\tau \in \constraints$ into a set of more specific constraints. It will be
convenient here to have some notations for referring to parts of (sets of) denial constraints: 
we will use $\aicbody(\tau)$ and $\aiclits(\tau)$ for the body and set of literals of $\tau$,
$\bodyvars(\tau)$ and $\terms(\tau)$ for the variables and terms occurring in 
$\tau$, and $\const(\constraints)$ for the constants occurring in $\Cmc$. 
Then the set $\refine(\tau)$ of \emph{refinements} of $\tau$ contains 
all denial constraints that can be obtained 
from $\tau$ by applying the following operations in order:
\begin{enumerate}
\item choose some partition $T_1, \ldots, T_p$ of $\bodyvars(\tau) \cup \const(\constraints)$ such that each set $T_i$ 
in the partition contains at most one constant 
\item for each $T_i$ that contains a constant $c$, replace all occurrences of variables in $T_i$ with $c$
\item for each $T_i$ that does not contain any constant, choose some variable $v \in T_i$ and replace all occurrences of variables in $T_i \setminus \{v\}$
by $v$
\item for every pair of distinct variables $v,z$ in the modified $\tau$, add the inequality atom $v \neq z$
\item for every variable $v$ in the modified $\tau$, and every $c \in \const(\constraints)$, add the inequality atom $v \neq c$
\end{enumerate}
We then let $\refine(\constraints)= \bigcup_{\tau \in \Cmc} \refine(\tau)$. 
By construction, for every constraint $\tau' \in \refine(\constraints)$, there is an inequality atom between every 
pair of distinct variables in $\bodyvars(\tau')$ and between every variable in $\bodyvars(\tau')$ and every constant in $\const(\constraints)$.
It follows that if $\Dmc \models \aicbody(\tau')$, then there is an \emph{injective} homomorphism $h: \terms(\tau') \rightarrow \domain{\Dmc}$
such that $h(c) =c$ for constants $c \in \terms(\tau')$ and $P(h(t_1), \ldots, h(t_k)) \in \Dmc$ for every $P(t_1, \ldots, t_k) \in \aiclits(\tau')$. 
The \emph{image} of $\aicbody(\tau')$ under $h$ on $\Dmc$, denoted $h(\tau')$, is the set of facts 
$P(h(t_1), \ldots, h(t_k)) \in \Dmc$ such that $P(t_1, \ldots, t_k) \in \aiclits(\tau')$. Due to the injectivity of $h$, 
$h(\tau')$ is isomorphic to $\aiclits(\tau')$. 

We say that a constraint $\tau_1 \in \refine(\constraints)$ is \emph{subsumed} by another constraint  $\tau_2 \in \refine(\constraints)$ if there is an injective function $h: \terms(\tau_2) \rightarrow \terms(\tau_1)$ such that $h(c)=c$ for all constants in $\terms(\tau_2)$
and $h(\aiclits(\tau_2)) \subsetneq \aiclits(\tau_1)$. For example, $R(x,x) \wedge A(x) \rightarrow \bot$ is subsumed by $R(x,x) \rightarrow \bot$. 
The set $\minnongr(\constraints)$ contains precisely those constraints from $\refine(\constraints)$ which are not subsumed by
any other constraint in $\refine(\constraints)$. The following lemma resumes the key properties of $\minnongr(\constraints)$. 

\begin{lemma}\label{lem:data-indep-reduction}
Let $\constraints$ be a set of denial constraints  over $\preds\setminus\{P_\succ\}$. 
Then for any database $\Dmc$ over $\preds$, the following are equivalent:
\begin{enumerate}
\item  $\Emc \in \conflicts{\Dmc,\constraints}$ 
\item  $\Emc\in \conflicts{\Dmc,\minnongr(\constraints)}$ 
\item there exists an injective homomorphism 
$h: \terms(\tau') \rightarrow \domain{\Dmc}$ such that $\Emc = h(\tau')$ for some $\tau' \in \minnongr(\constraints)$ and $\conf\subseteq\Dmc$. 
\end{enumerate}
\end{lemma}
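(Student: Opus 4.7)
The strategy is to first establish $(1)\Leftrightarrow(2)$ by showing that $\constraints$ and $\minnongr(\constraints)$ are logically equivalent, and then to establish $(2)\Leftrightarrow(3)$ by exploiting the explicit inequality atoms that we added to each constraint in $\refine(\constraints)$.

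For the logical equivalence, I would first argue that every $\tau'\in\refine(\tau)$ is a logical consequence of $\tau$: the rewriting in steps (2)--(3) is a specialization of variables, while the inequalities added in (4)--(5) can only strengthen the body. In the other direction, given any database $\Dmc$ satisfying $\refine(\constraints)$ and any assignment $h:\bodyvars(\tau)\to\domain{\Dmc}$, one can read off the partition of $\bodyvars(\tau)\cup\const(\constraints)$ induced by which variables $h$ identifies with each other and with constants of $\const(\constraints)$; the corresponding refinement $\tau'\in\refine(\tau)$ is satisfied by $\Dmc$, hence $h$ does not violate $\tau$. Finally, removing subsumed constraints is safe because if $\tau_1$ is subsumed by $\tau_2$ via $h$, then any homomorphism witnessing a violation of $\tau_1$ composed with $h$ yields a violation of $\tau_2$, so $\tau_2\models\tau_1$. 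Since $\constraints\equiv\minnongr(\constraints)$, $\conflicts{\Dmc,\constraints}=\conflicts{\Dmc,\minnongr(\constraints)}$, giving $(1)\Leftrightarrow(2)$.

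For $(2)\Rightarrow(3)$, take $\Emc\in\conflicts{\Dmc,\minnongr(\constraints)}$. Since denial constraints can only be violated by positive facts, $\Emc\subseteq\Dmc$ and there exists some $\tau'\in\minnongr(\constraints)$ and homomorphism $h$ such that $h(\tau')\subseteq\Emc$. By the abundance of inequality atoms in $\tau'$, this $h$ must be injective and must map every variable outside $\const(\constraints)$, so $h(\tau')$ is isomorphic to $\aiclits(\tau')$ and is itself inconsistent with $\minnongr(\constraints)$; minimality of $\Emc$ then forces $\Emc=h(\tau')$. For $(3)\Rightarrow(2)$, given $\Emc=h(\tau')\subseteq\Dmc$ with $h$ injective and $\tau'\in\minnongr(\constraints)$, the set $\Emc$ clearly violates $\tau'$, so $\Emc\not\models\minnongr(\constraints)$. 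To see minimality, suppose for contradiction that $\Emc'\subsetneq\Emc$ is still inconsistent with $\minnongr(\constraints)$; then some $\tau''\in\minnongr(\constraints)$ has a homomorphism $g$ with $g(\tau'')\subseteq\Emc'\subsetneq h(\tau')$. Using that $h$ is an isomorphism between $\aiclits(\tau')$ and $h(\tau')$, the composition $h^{-1}\circ g$ (extended to fix constants) is an injective map $\terms(\tau'')\to\terms(\tau')$ with $h^{-1}(g(\aiclits(\tau'')))\subsetneq\aiclits(\tau')$, which means $\tau'$ is subsumed by $\tau''$, contradicting $\tau'\in\minnongr(\constraints)$.

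The main obstacle is the minimality argument in $(3)\Rightarrow(2)$: one has to be careful that the hypothetical smaller conflict actually yields a subsumption witness, which requires the injectivity of $h$ and the fact that $h$ respects constants, both of which are guaranteed by the explicit inequalities added in steps (4)--(5) of the refinement. Similarly, care is needed in $(2)\Rightarrow(3)$ to ensure that the witnessing $h$ is injective on all of $\terms(\tau')$ (including avoiding collisions with $\const(\constraints)$), which again relies on those added inequalities. Once this is handled, the rest of the argument is routine case analysis.
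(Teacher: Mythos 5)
Your plan is correct and follows essentially the same route as the paper's proof: establish $\conflicts{\Dmc,\constraints}=\conflicts{\Dmc,\minnongr(\constraints)}$ by showing the two constraint sets agree on which fact sets they rule out (composing homomorphisms with the quotient map induced by a refinement partition in one direction, and reading off the induced partition in the other), then use the exhaustive inequality atoms to force injectivity of the witnessing homomorphism for $(2)\Leftrightarrow(3)$, with minimality of the conflict handled exactly by the subsumption argument you describe. The two subtleties you flag (injectivity including separation from constants, and turning a hypothetical smaller violation into a subsumption witness) are precisely the points the paper's proof also hinges on, so no gap remains.
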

\begin{proof}
\begin{itemize}
\item We first show that $\conflicts{\Dmc,\constraints}=\conflicts{\Dmc,\minnongr(\constraints)}$, by showing that for every $\conf\subseteq\Dmc$, $\conf\not\models\constraints$ iff $\conf\not\models\minnongr(\constraints)$. 

If $\conf\not\models\minnongr(\constraints)$, there is $\tau'\in\minnongr(\constraints)$ such that $\conf\models\aicbody(\tau')$, \ie there is a homomorphism $h$ from $\aicbody(\tau')$ to $\conf$. Let $\tau\in\constraints$ be such that $\tau'\in\refine(\tau)$ and $g: \terms(\tau) \rightarrow \terms(\tau')$ be such that $g(t)$ is equal to the variable or constant that has been chosen to represent $T_i$ such that $t\in T_i$ in the refinement sequence from $\tau$ to $\tau'$. Then $h\circ g$ is a homomorphism from $\aicbody(\tau)$ to $\conf$ so that $\conf\not\models \tau$ and $\conf\not\models\constraints$.

If $\conf\not\models\constraints$, there is $\tau\in\constraints$ such that $\conf\not\models\tau$, \ie there is a homomorphism $h$ from $\aicbody(\tau)$ to $\conf$. Let $\tau'$ be the constraint obtained from $\tau$ by the  refinement sequence defined by the partition $T_1,\dots, T_p$ such that two terms $t_1,t_2$ of $\tau$ are in the same $T_i$ iff $h( t_1)=h(t_2)$. The restriction of $h$ to $\terms(\tau')$ is a homomorphism  from $\aicbody(\tau')$ to $\conf$ so that $\conf\not\models \tau'$ and $\conf\not\models\minnongr(\constraints)$.

\item Let $\conf\in\conflicts{\Dmc,\minnongr(\constraints)}$. 
There is some $\tau\in\minnongr(\constraints)$ such that $\conf\not\models\tau$, \ie $\conf\models \aicbody(\tau)$. Since there is an inequality atom between every 
pair of distinct variables in $\bodyvars(\tau)$ and between every variable in $\bodyvars(\tau)$ and every constant in $\const(\constraints)$, it follows that there is an injective homomorphism $h: \terms(\tau') \rightarrow \domain{\conf}\subseteq\domain{\Dmc}$. 
Moreover, since for every proper subset $\conf'$ of $\conf$, $\conf'\not\models \aicbody(\tau)$, it follows that $\conf=h(\tau')$.

\item Assume that there exists an injective homomorphism 
$h: \terms(\tau) \rightarrow \domain{\Dmc}$ such that $\Emc = h(\tau)$ for some $\tau \in \minnongr(\constraints)$ and $\conf\subseteq\Dmc$. 
Since $\conf\models h(\tau)$, $\conf\not\models\tau$ so $\conf\not\models\minnongr(\constraints)$. 
Assume for a contradiction that there exists $\conf'\subsetneq\conf$ such that $\conf'\not\models\minnongr(\constraints)$. There is $\tau'\in\minnongr(\constraints)$ such that $\conf'\not\models\tau'$, so there is an injective homomorphism $h'$ from $\tau'$ to $\conf'$. Since $h'(\tau')\subseteq\conf'\subsetneq\conf$ is isomorphic to $\aiclits(\tau')$, and $\conf=h(\tau)$ is isomorphic to $\aiclits(\tau)$, it follows that $\tau\notin\minnongr(\constraints)$. Hence there does not exist $\conf'\subsetneq\conf$ such that $\conf'\not\models\minnongr(\constraints)$ and $\conf\in \conflicts{\Dmc,\minnongr(\constraints)}$.\qedhere
\end{itemize}

\end{proof}

We then define $\eta^\constraints$ as the set of all AICs
$$r_{\tau,i}:=\bigl(\ell_1 \wedge \ldots \wedge \ell_n \wedge \varepsilon \wedge   \bigwedge_{\ell_j \neq \ell_i}   \neg P_\succ (id_i,id_j)\bigr)  \rightarrow \{-\ell_i\}.$$
such that $\tau=(\ell_1 \wedge \ldots \wedge \ell_n \wedge \varepsilon \rightarrow \bot)$ is in $\minnongr(\constraints)$, $i \in \{1, \ldots, n\}$, 
and for every $1 \leq k \leq n$, 
$\ell_k=R(id_k,\vec{t})$ for some $R,\vec{t}$. 

\ReductionDenialPrioToAICs*
\begin{proof}
First note that $\eta^\constraints$ is monotone: All negative literals in $\eta^\constraints$ have predicate $P_\succ$ which does not occur in $\constraints$, hence does not occur in the positive literals in $\eta^\constraints$.  Hence for every database $\Dmc$, $gr_\Dmc(\eta^\constraints)$ is monotone and by Proposition~\ref{prop:monotone-founded-justified}, $\justifreps{\Dmc,gr_\Dmc(\eta^\constraints)}=\groundreps{\Dmc,gr_\Dmc(\eta^\constraints)}=\foundreps{\Dmc,gr_\Dmc(\eta^\constraints)}\subseteq\wellfoundreps{\Dmc,gr_\Dmc(\eta^\constraints)}$. It follows that $\justifreps{\Dmc,\eta^\constraints}=\groundreps{\Dmc,\eta^\constraints}=\foundreps{\Dmc,\eta^\constraints}\subseteq\wellfoundreps{\Dmc,\eta^\constraints}$. 

We show that for every prioritized database $\Dmc^\constraints_\succ$, $\deltapreps{\Dmc^\constraints_\succ}=\foundreps{\Dmc,\eta^\constraints}$. Let $\Bmc_\succ\subseteq\Dmc$ be the set of facts from $\Dmc$ whose predicate is $P_\succ$ and $P_\succ^\Dmc$ the set of all possible facts on predicate $\Dmc$ and domain $\domain{\Dmc}$ (\ie the set of facts from $\factset$ with predicate $P_\succ$). 
Note that for every $\tau\in\constraints$ and $r_{\tau,i}\in\eta^\constraints$ built from $\tau$, $\terms(\tau)=\terms(r_{\tau,i})$. 
\smallskip

\noindent$(\foundreps{\Dmc,\eta^\constraints}\subseteq\deltapreps{\Dmc^\constraints_\succ})$ Let $\Umc\in\foundups{\Dmc,\eta^\constraints}$ and $\Rmc=\Dmc\circ\Umc$. We show that $\Rmc\in\deltapreps{\Dmc^\constraints_\succ}$. 

\begin{itemize}
\item 
Let $\conf\in\conflicts{\Dmc,\constraints}$. 
By Lemma \ref{lem:data-indep-reduction}, there exists an injective homomorphism 
$h: \terms(\tau) \rightarrow \domain{\Dmc}$ such that $\Emc = h(\tau)$ for some $\tau \in \minnongr(\constraints)$.  
Moreover, since $\succ$ is acyclic, there exists $\alpha\in\conf$ such that for every $\beta\in\conf$, $\alpha\not\succ\beta$. 

Let $\ell_i$ be the literal of $\tau$ such that $h(\ell_i)=\alpha$. 
Then $\conf\cup\{\no\beta\mid \beta\in P_\succ^\Dmc, \beta\notin\Bmc_\succ\}\models \aicbody(r_{\tau,i})$.

Since $\Umc\in\ups{\Dmc,\eta^\constraints}$, $\Dmc\circ\Umc\models\eta^\constraints$. In particular, $\Dmc\circ\Umc\models r_{\tau,i}$ so $\Dmc\circ\Umc\not\models \conf\cup\{\no\beta\mid \beta\in P_\succ^\Dmc, \beta\notin\Bmc_\succ\}$. 

Since $\Umc$ is founded, $\Umc$ does not contain any update action with predicate $P_\succ$ since update atoms on predicate $P_\succ$ do not occur in $\eta^\constraints$. It follows that $\conf\not\subseteq\Dmc\circ\Umc$. 

Hence $\Rmc$ does not contain any conflict in $\conflicts{\Dmc,\constraints}$, so $\Rmc\models\constraints$.

\item Assume for a contradiction that $\Rmc\notin\deltapreps{\Dmc^\constraints_\succ}$. There exists $\Rmc'\subseteq\Dmc$ such that $\Rmc'\models\constraints$ and there is $\alpha\in\Rmc'\setminus\Rmc$ with $\alpha\succ\beta$ for every $\beta\in \Rmc\setminus\Rmc'$. 
Since $\alpha\notin\Rmc$, then $-\alpha\in\Umc$. 
Since $\Umc$ is founded, it follows that there exists $r_{\tau,i}=\bigl(\ell_1 \wedge \ldots \wedge \ell_n \wedge \varepsilon \wedge   \bigwedge_{\ell_j \neq \ell_i}   \neg P_\succ (id_i,id_j)\bigr)  \rightarrow \{-\ell_i\}$ in $\eta^\constraints$ such that $-\alpha$ is an update action of some $r_g\in gr_\Dmc(r_{\tau,i})$ and $\Dmc\circ\Umc\setminus\{-\alpha\}\not\models r_g$, \ie $\Rmc\cup\{\alpha\}\not\models r_g$. 

Let $h: \terms(\tau) \rightarrow \domain{\Dmc}$ be such that $\aiclits(r_g)=h(r_{\tau,i})$. 
Since there is an inequality atom between every 
pair of distinct variables in $\bodyvars(\tau)$ and between every variable in $\bodyvars(\tau)$ and every constant in $\const(\constraints)$ in $\tau$, hence in $r_{\tau,i}$, it follows that $h$ is injective. 
Since $\Rmc\cup\{\alpha\}\not\models r_g$, the set of positive literals of $r_g$ is included in $\Rmc\cup\{\alpha\}$, \ie $ h(\tau)\subseteq \Rmc\cup\{\alpha\}\subseteq\Dmc$. 
By Lemma~\ref{lem:data-indep-reduction}, $\conf=h(\tau)$ is in $\deltaconflicts{\Dmc,\constraints}$.

Moreover, by definition of $h$, $h(\ell_i)=\alpha$ so $\Rmc\cup\{\alpha\}\not\models r_g$ implies that for every $\beta\in\conf$, $\alpha\not\succ\beta$.

Hence there exists $\conf\in\deltaconflicts{\Dmc,\constraints}$ such that $\conf\subseteq\Rmc\cup\{\alpha\}$ and for every $\beta\in\conf$, $\alpha\not\succ\beta$. 
It follows that $\conf\subseteq\Rmc'$, which contradicts $\Rmc'\models\constraints$. 
Hence $\Rmc\in\deltapreps{\Dmc^\constraints_\succ}$.
\end{itemize}

\noindent$(\deltapreps{\Dmc^\constraints_\succ}\subseteq\foundreps{\Dmc,\eta^\constraints})$ Let $\Rmc\in\deltapreps{\Dmc^\constraints_\succ}$ and let $\Umc=\{-\alpha\mid \alpha\in\Dmc\setminus\Rmc\}$ 
be the consistent set of update actions such that $\Dmc\circ\Umc=\Rmc$. We show that $\Umc\in\foundups{\Dmc,\eta^\constraints}$. 

\begin{itemize}
\item Since $\Rmc\in\deltareps{\Dmc,\constraints}$, $\Rmc$ is a maximal subset of $\Dmc$ such that $\Rmc\models\constraints$, hence a maximal subset of $\Dmc$ such that $\Rmc\models\tau$ for every $\tau\in\minnongr(\constraints)$ (\cf proof of Lemma~\ref{lem:data-indep-reduction}). 
It follows from the construction of $\eta^\constraints$ that $\Rmc\models r_{\tau,i}$ for every $\tau\in\minnongr(\constraints)$ and $\ell_i$ literal of $\tau$ (since the body of $r_{\tau,i}$ extends that of $\tau$). Thus $\Dmc\circ\Umc\models\eta^\constraints$.

Since $\Rmc\in\deltareps{\Dmc,\constraints}$, $\Rmc$ is a maximal subset of $\Dmc$ that does not contain any $\conf\in\conflicts{\Dmc,\constraints}$. It follows that for every proper subset $\Umc'\subsetneq\Umc$, $\Dmc\circ\Umc'$ contains some $\conf\in\conflicts{\Dmc,\constraints}$.
By Lemma \ref{lem:data-indep-reduction} there exists an injective homomorphism $h: \terms(\tau) \rightarrow \domain{\Dmc}$ such that $\conf = h(\tau)$ for some $\tau\in \minnongr(\constraints)$, 
and by acyclicity of $\succ$, there exists $\alpha\in\conf$, which is such that $\alpha=h(\ell_i)$ for some literal $\ell_i$ of $\tau$, such that for every $\beta\in\conf$, $\alpha\not\succ\beta$. 
Hence  $\conf\cup\{\no\beta\mid \beta\in P_\succ^\Dmc, \beta\notin\Bmc_\succ\}\models \aicbody(r_{\tau,i})$, so since $\Umc$ hence $\Umc'$ does not add any $P_\succ$ fact, $\Dmc\circ\Umc'\not\models r_{\tau,i}$.  

It follows that $\Umc$ is a minimal set of update actions such that $\Dmc\circ\Umc\models\eta^\constraints$. Hence $\Umc\in\ups{\Dmc,\eta^\constraints}$.

\item Let $A\in\Umc$, $A=-\alpha$ for some $\alpha\in\Dmc\setminus\Rmc$. 
By maximality of $\Rmc\in\deltareps{\Dmc,\constraints}$, there exists $\conf\in\deltaconflicts{\Dmc,\constraints}$ such that $\conf\subseteq\Rmc\cup\{\alpha\}$. 
Moreover, since $\Rmc\in\deltapreps{\Dmc,\constraints}$, there exists such $\conf$ such that for every $\beta\in\conf$, $\alpha\not\succ\beta$ (otherwise, $\Rmc'=\Rmc\cup\{\alpha\}\setminus\{\beta\mid\alpha\succ\beta\}$ would be a Pareto-improvement of $\Rmc$). 

Since $\conf\in\deltaconflicts{\Dmc,\constraints}$, by Lemma \ref{lem:data-indep-reduction} there exists an injective homomorphism $h: \terms(\tau) \rightarrow \domain{\Dmc}$ such that $\conf = h(\tau)$ for some $\tau\in \minnongr(\constraints)$. 
Let $\ell_i$ be the literal of $\tau$ such that $h(\ell_i)=\alpha$. 
Then $\conf\cup\{\no\beta\mid \beta\in P_\succ^\Dmc, \beta\notin\Bmc_\succ\}\models \aicbody(r_{\tau,i})$. 

Moreover, $-\ell_i$ is the update action of $r_{\tau,i}$ so $A=-\alpha=h(\ell_i)$ is the update action of the ground AIC $r_\conf\in gr_\Dmc(r_{\tau,i})$ whose body is $h(\aicbody(r_{\tau,i}))$. 

Finally, $\Dmc\circ\Umc\setminus\{A\}=\Rmc\cup\{\alpha\}$, so since $\conf\subseteq\Rmc\cup\{\alpha\}$ and $\Umc$ does not add any $P_\succ$ fact, then $\Dmc\circ\Umc\setminus\{A\}\not\models r_\conf$. 
Hence $\Umc\in\foundups{\Dmc,\eta^\constraints}$.\qedhere
\end{itemize}
\end{proof}

\subsection{Proofs for Section \ref{subsec:well-behaved}}

The following lemmas are useful to prove Propositions \ref{prop:founded-grounded-closed-min-faith}, \ref{prop:reduction-AICs-prio-binary} and \ref{prop:AICs-prio-non-binary}. 

\begin{lemma}\label{lem:update-repair-closed}
If $\Dmc\circ\Umc\models\eta$, then $ ne(\Dmc,\Dmc\circ\Umc)\cup\Umc$ is closed under $\eta$.
\end{lemma}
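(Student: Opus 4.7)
The plan is to unpack the definitions directly. Fix an arbitrary $r = \ell_1 \wedge \dots \wedge \ell_n \to \{A_1,\dots,A_k\}$ in $gr_\Dmc(\eta)$ and suppose $\Vmc := ne(\Dmc,\Dmc\circ\Umc) \cup \Umc$ satisfies every non-updatable literal of $r$; I must exhibit some $A \in \aicup(r)$ with $A \in \Vmc$.

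First I would establish the auxiliary fact that $\Vmc$ is a complete and consistent description of $\Dmc\circ\Umc$ in action form: for every $\alpha \in \factset$, exactly one of $+\alpha, -\alpha$ belongs to $\Vmc$, and $+\alpha \in \Vmc$ iff $\alpha \in \Dmc\circ\Umc$ (dually, $-\alpha \in \Vmc$ iff $\alpha \notin \Dmc\circ\Umc$). This is a straightforward case split on the four possibilities for $\alpha$'s membership in $\Dmc$ versus $\Dmc\circ\Umc$, using only the definitions of $ne$ and of $\Dmc\circ\Umc$. In particular, the somewhat subtle case $\alpha \notin \Dmc \cup (\Dmc\circ\Umc)$ is handled by the second clause of $ne$, which is why $\alpha$ needs to range over $\factset$ rather than just over $\Dmc$.

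With this auxiliary fact, satisfaction of a ground literal by $\Vmc$ (reading ``$\Vmc$ satisfies $\alpha$'' as $+\alpha \in \Vmc$ and ``$\Vmc$ satisfies $\neg\alpha$'' as $-\alpha \in \Vmc$) coincides exactly with truth in $\Dmc\circ\Umc$. Since $\Dmc\circ\Umc \models \eta$, in particular $\Dmc\circ\Umc \models \tau_r$, so some body literal $\ell_{i_0}$ is false in $\Dmc\circ\Umc$, hence not satisfied by $\Vmc$. By hypothesis every non-updatable literal of $r$ is satisfied by $\Vmc$, so $\ell_{i_0}$ must be updatable, i.e., $\fix(\ell_{i_0}) \in \aicup(r)$.

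It remains to check that $\fix(\ell_{i_0})$ actually lies in $\Vmc$. If $\ell_{i_0}=\alpha$, then $\alpha \notin \Dmc\circ\Umc$, so by the auxiliary fact $-\alpha \in \Vmc$, and $-\alpha = \fix(\ell_{i_0})$. If $\ell_{i_0}=\neg\alpha$, the symmetric argument yields $+\alpha = \fix(\ell_{i_0}) \in \Vmc$. Either way $\Vmc \cap \aicup(r) \neq \emptyset$, which is exactly the closure condition. No real obstacle is expected here; the entire argument is careful bookkeeping on definitions, and the only mild care needed is in the auxiliary fact, where one must treat the ``both absent'' case via $ne$ rather than via $\Umc$.
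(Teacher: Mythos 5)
Your proof is correct and takes essentially the same route as the paper's: deduce that $\Dmc\circ\Umc$ satisfies every non-updatable literal of $r$, use $\Dmc\circ\Umc\models r$ to find a falsified updatable literal $\ell$, and check that $\fix(\ell)$ lies in $ne(\Dmc,\Dmc\circ\Umc)\cup\Umc$. The only cosmetic difference is that you factor the correspondence between $\Vmc$ and $\Dmc\circ\Umc$ into an explicit auxiliary fact, whereas the paper carries out the same case analysis (splitting on whether the relevant fact belongs to $\Dmc$, so that the action comes from $ne$ or from $\Umc$) inline.
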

\begin{proof}
Let $r\in gr_\Dmc(\eta)$ be such that $ ne(\Dmc,\Dmc\circ\Umc)\cup\Umc$ satisfies all non-updatable literals of $r$. 
By definition of $ne(\Dmc,\Dmc\circ\Umc)$, this implies that $\Dmc\circ\Umc$ satisfies all non-updatable literals of $r$. 
Since $\Dmc\circ\Umc\models r$, it follows that $\Dmc\circ\Umc$ does not satisfy some updatable literal $\ell$ of $r$. 
 If $\ell$ is positive, $\ell=\beta$, then since $\ell$ is updatable $-\beta\in\aicup(r)$. Since $\Dmc\circ\Umc$ does not satisfy $\ell$, $\beta\notin\Dmc\circ\Umc$. If $\beta\notin\Dmc$, then $-\beta\in ne(\Dmc,\Dmc\circ\Umc)$. Otherwise $-\beta\in\Umc$. In both cases, $-\beta\in  ne(\Dmc,\Dmc\circ\Umc)\cup\Umc$.  If $\ell$ is negative, $\ell=\neg\beta$, then since $\ell$ is updatable $+\beta\in\aicup(r)$. Since $\Dmc\circ\Umc$ does not satisfy $\ell$, $\beta\in\Dmc\circ\Umc$. If $\beta\in\Dmc$, then $+\beta\in ne(\Dmc,\Dmc\circ\Umc)$. Otherwise $+\beta\in\Umc$. In both cases, $+\beta\in  ne(\Dmc,\Dmc\circ\Umc)\cup\Umc$. 
Hence $ ne(\Dmc,\Dmc\circ\Umc)\cup\Umc$ is closed under $\eta$.
\end{proof}

\begin{lemma}\label{lem:closed-under-res-min-conflicts}
If $\eta$ is closed under resolution, then for every database $\Dmc$, $\deltaconflicts{\Dmc,\constraints_\eta}=\{\aiclits(r)\mid r\in \mingr(\eta), \Dmc\not\models r\}$ where $\constraints_\eta$ is the set of universal constraints that correspond to AICs of $\eta$ and $\mingr(\eta)$ is the subset of $gr_\Dmc(\eta)$  that contains AICs whose bodies are subset-minimal.
\end{lemma}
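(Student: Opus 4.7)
The strategy is to establish a single auxiliary claim from which both inclusions follow:
\emph{($\ast$)~If $S\subseteq\litset$ is such that every database $\Imc$ with $\Imc\models S$ satisfies $\Imc\not\models\constraints_\eta$, then $\aiclits(r)\subseteq S$ for some $r\in gr_\Dmc(\eta)$.}
To prove ($\ast$), I would pass to the clausal reformulation: associate to each $r\in gr_\Dmc(\eta)$ the clause $C_r:=\bigvee_{\ell\in\aiclits(r)}\neg\ell$ (the negation of $\aicbody(r)$) and write $C_S:=\bigvee_{\lambda\in S}\neg\lambda$. Since $S\subseteq\litset$ is consistent, $C_S$ is non-tautological, and the hypothesis rewrites as the propositional entailment $\{C_r\}_{r\in gr_\Dmc(\eta)}\models C_S$. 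Resolution completeness then yields a resolution derivation from $\{C_r\}$ of some $C'\subseteq C_S$, and we may assume it avoids tautological intermediate resolvents. A straightforward induction on the length of the derivation shows that every derived clause equals $C_{r'}$ for some $r'\in gr_\Dmc(\eta)$: the inductive step is precisely the closure-under-resolution hypothesis, whose consistency proviso matches the non-tautology of the resolvent. Translating $C'=C_{r'}\subseteq C_S$ back to literal sets gives $\aiclits(r')\subseteq S$.

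The two inclusions of the lemma then reduce to short minimality arguments. For ($\supseteq$), let $r\in\mingr(\eta)$ with $\Dmc\not\models r$: then $\Dmc\models\aicbody(r)$ implies $\aiclits(r)\subseteq\litset$, every model of $\aiclits(r)$ violates $r$ and hence $\constraints_\eta$, and a strict subset $\conf'\subsetneq\aiclits(r)$ with the same property would, by ($\ast$), produce some $r''\in gr_\Dmc(\eta)$ with $\aiclits(r'')\subsetneq\aiclits(r)$, contradicting $r\in\mingr(\eta)$. For ($\subseteq$), given $\conf\in\conflicts{\Dmc,\constraints_\eta}$, apply ($\ast$) to obtain $r'\in gr_\Dmc(\eta)$ with $\aiclits(r')\subseteq\conf$; minimality of $\conf$ forbids both $\aiclits(r')\subsetneq\conf$ and the existence of any $r''\in gr_\Dmc(\eta)$ with $\aiclits(r'')\subsetneq\aiclits(r')$, so $\aiclits(r')=\conf$ and $r'\in\mingr(\eta)$, while $\aiclits(r')\subseteq\litset$ forces $\Dmc\models\aicbody(r')$, i.e., $\Dmc\not\models r'$.

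The main obstacle will be executing ($\ast$) cleanly. Its substance is a textbook resolution-completeness argument, but care is required to verify that the derivation of $C'\subseteq C_S$ can be taken to use only non-tautological intermediate clauses, since the paper's definition of closure under resolution only guarantees a witnessing AIC when $\aiclits(r_1)\cup\aiclits(r_2)\setminus\{\alpha,\neg\alpha\}$ is a consistent set of literals---exactly when the corresponding clausal resolvent is not a tautology. Once this matching is secured, the rest of the proof is bookkeeping between the syntactic description of conflicts via $\mingr(\eta)$ and the semantic prime-implicant characterization supplied by Proposition~\ref{prop:defconflicts}(2).
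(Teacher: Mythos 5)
Your proposal is correct and follows essentially the same route as the paper: the paper also reduces the lemma to the prime-implicant characterization of conflicts (Proposition~\ref{prop:defconflicts}(2)) and then invokes closure under resolution to conclude that every prime implicant of $\bigvee_{\varphi\rightarrow\bot\in gr_\Dmc(\constraints_\eta)}\varphi$ occurs as the body of some AIC in $gr_\Dmc(\eta)$, with $\mingr(\eta)$ handling minimality. The only difference is that the paper asserts this last step in one line, whereas your claim ($\ast$) makes explicit the underlying consequence-finding/subsumption argument for resolution (including the tautology-avoidance point needed to match the consistency proviso in the definition of closure under resolution), which is a faithful elaboration rather than a different proof.
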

\begin{proof}
Observe that $\Dmc\not\models r$ iff $\Dmc\models \aicbody(r)$ iff $\aiclits(r)\subseteq\litset$. 
Hence by Proposition~\ref{prop:defconflicts}, it is sufficient to show that $\{ \aiclits(r)\mid r\in \mingr(\eta)\} = \{\aiclits(r) \mid \aicbody(r)\text{ is a prime implicant of} \bigvee_{\varphi\rightarrow \bot\in gr_\Dmc(\constraints_\eta)} \varphi\}$. 

Since $\eta$ is closed under resolution, for every implicant $\psi$ of $\bigvee_{\varphi\rightarrow \bot\in gr_\Dmc(\constraints_\eta)} \varphi$, there is an AIC $r\in gr_\Dmc(\eta)$ such that $\aicbody(r)=\psi$. Since $\mingr(\eta)$ retains only the body minimal AICs in $gr_\Dmc(\eta)$, the result follows. 
\end{proof}

\begin{lemma}\label{lem:symdifminhittingset}
For every set of constraints $\constraints$, database $\Dmc$, and $\Bmc\in\deltareps{\Dmc,\constraints}$, $\Bmc\Delta\Dmc$ is a minimal hitting set of $\mhs{\Dmc,\constraints}$, where $\mhs{\Dmc,\constraints}$ is the set of all minimal hitting sets of $\{\Rmc\Delta\Dmc\mid \Rmc \in\deltareps{\Dmc,\constraints}\}$. 
\end{lemma}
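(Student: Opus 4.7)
The statement is the standard duality between a family of minimal sets and its minimal transversals, specialized to the antichain $\mathcal{F}=\{\Rmc\Delta\Dmc\mid \Rmc\in\deltareps{\Dmc,\constraints}\}$. So the plan is to verify both the hitting property and the minimality property directly, the only nontrivial ingredient being that distinct $\Delta$-repairs give symmetric differences that are $\subseteq$-incomparable (which follows immediately from clause (ii) of the definition of $\Delta$-repair).

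First I would check that $\Bmc\Delta\Dmc$ is a hitting set of $\mhs{\Dmc,\constraints}$. Take any $\Hmc\in\mhs{\Dmc,\constraints}$. By definition $\Hmc$ is a hitting set of $\mathcal{F}$, and in particular it hits $\Bmc\Delta\Dmc$; thus $(\Bmc\Delta\Dmc)\cap\Hmc\neq\emptyset$, i.e., $\Bmc\Delta\Dmc$ hits $\Hmc$.

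The main step is minimality. Fix $\alpha\in\Bmc\Delta\Dmc$; I want to exhibit some $\Hmc\in\mhs{\Dmc,\constraints}$ such that $\Hmc\cap(\Bmc\Delta\Dmc)=\{\alpha\}$, so that $(\Bmc\Delta\Dmc)\setminus\{\alpha\}$ is not a hitting set of $\mhs{\Dmc,\constraints}$. For every $\Rmc\in\deltareps{\Dmc,\constraints}$ with $\Rmc\neq\Bmc$, the incomparability of symmetric differences gives us some $x_\Rmc\in(\Rmc\Delta\Dmc)\setminus(\Bmc\Delta\Dmc)$. I then set
\[
\Hmc'=\{\alpha\}\cup\{x_\Rmc\mid \Rmc\in\deltareps{\Dmc,\constraints},\ \Rmc\neq\Bmc\}.
\]
By construction $\Hmc'$ hits every set in $\mathcal{F}$ (it hits $\Bmc\Delta\Dmc$ at $\alpha$ and every other $\Rmc\Delta\Dmc$ at $x_\Rmc$), and $\Hmc'\cap(\Bmc\Delta\Dmc)=\{\alpha\}$.

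To finish, take any $\subseteq$-minimal hitting set $\Hmc\subseteq\Hmc'$ of $\mathcal{F}$; then $\Hmc\in\mhs{\Dmc,\constraints}$ and $\Hmc\cap(\Bmc\Delta\Dmc)\subseteq\{\alpha\}$, but $\Hmc$ must hit $\Bmc\Delta\Dmc$, so this intersection equals $\{\alpha\}$. Consequently no proper subset of $\Bmc\Delta\Dmc$ can hit $\Hmc$, which proves the required minimality. The only subtle point throughout is the antichain property of $\{\Rmc\Delta\Dmc\mid\Rmc\in\deltareps{\Dmc,\constraints}\}$, which is immediate from the definition of $\Delta$-repair and guarantees that each $x_\Rmc$ can be chosen.
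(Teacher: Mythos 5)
Your proof is correct, but it takes a genuinely different route from the paper's. You prove the statement purely combinatorially: the only structural fact you use is that $\{\Rmc\Delta\Dmc\mid\Rmc\in\deltareps{\Dmc,\constraints}\}$ is an antichain (immediate from clause (ii) of the definition of $\Delta$-repair), and then you run the classical argument that every member of a Sperner family is a minimal transversal of the family of its minimal transversals --- for each $\alpha\in\Bmc\Delta\Dmc$ you build a hitting set $\Hmc'$ meeting $\Bmc\Delta\Dmc$ only in $\alpha$, shrink it to a minimal one, and conclude that $(\Bmc\Delta\Dmc)\setminus\{\alpha\}$ misses it. The paper instead routes the minimality step through the conflict machinery: it writes an arbitrary proper subset of $\Bmc\Delta\Dmc$ as $\Bmc'\Delta\Dmc$, uses Lemma~\ref{lem:compl-delta-1} and Proposition~\ref{prop:characterizations-repairs-comp} to find a conflict contained in $\comp{\Dmc}{\Bmc'}$, and then uses Proposition~\ref{prop:defconflicts} to turn that conflict into a minimal hitting set disjoint from $\Bmc'\Delta\Dmc$. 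Your argument is more elementary and more general (it applies to any antichain and needs none of the earlier propositions), while the paper's version reuses results already established and makes explicit which witness in $\mhs{\Dmc,\constraints}$ is produced, namely a conflict. One cosmetic remark: your closing sentence ``no proper subset of $\Bmc\Delta\Dmc$ can hit $\Hmc$'' is loose --- a proper subset containing $\alpha$ does hit the $\Hmc$ built for $\alpha$; what you mean (and what your per-$\alpha$ construction delivers) is that every proper subset omits some $\alpha$ and therefore fails to hit the corresponding $\Hmc_\alpha$.
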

\begin{proof}
Let $\Bmc\in\deltareps{\Dmc,\constraints}$. 
For every $\Hmc\in\mhs{\Dmc,\constraints}$, since $\Hmc$ is a hitting set of $\{\Rmc\Delta\Dmc\mid \Rmc \in\deltareps{\Dmc,\constraints}\}$, $(\Bmc\Delta\Dmc)\cap\Hmc\neq\emptyset$ so $\Bmc\Delta\Dmc$ is a  hitting set of $\mhs{\Dmc,\constraints}$. 

Let $\Bmc'$ be a database such that $\Bmc'\Delta\Dmc\subsetneq\Bmc\Delta\Dmc$. 
By Lemma \ref{lem:compl-delta-1}, $\comp{\Dmc}{\Bmc}\subsetneq\comp{\Dmc}{\Bmc'}$. 
Moreover, by Proposition~\ref{prop:characterizations-repairs-comp}, $\comp{\Dmc}{\Bmc}$ is a maximal subset of $\litset$ that does not include any conflicts of $\conflicts{\Dmc,\constraints}$. Hence there exists $\conf\in\conflicts{\Dmc,\constraints}$ such that $\conf\subseteq\comp{\Dmc}{\Bmc'}$. 
By Proposition~\ref{prop:defconflicts}, $\conflicts{\Dmc,\constraints}=
\{ \Hmc\cap\Dmc\cup\{\no\alpha \mid \alpha\in\Hmc\setminus\Dmc\} \mid \Hmc\in \mhs{\Dmc,\constraints} \}$ so there is some $\Hmc\in\mhs{\Dmc,\constraints}$ that corresponds to $\conf$. 
It is easy to check that $(\Bmc'\Delta\Dmc)\cap\Hmc=\emptyset$: otherwise, if $\alpha\in(\Bmc'\Delta\Dmc)\cap\Hmc$, then either $\alpha\in\Dmc$, $\alpha\notin\Bmc'$ and $\alpha\in\conf$, or  $\alpha\notin\Dmc$, $\alpha\in\Bmc'$ and $\no\alpha\in\conf$, and in both cases we would get $\Bmc'\not\models\conf$, contradicting $\conf\subseteq\comp{\Dmc}{\Bmc'}$. 
It follows that $\Bmc\Delta\Dmc$ is a minimal hitting set of $\mhs{\Dmc,\constraints}$. 
\end{proof}

\begin{lemma}\label{lem:founded-conf-closed-min-faith}
If $\eta$ is closed under resolution and preserves actions under resolution, then for every database $\Dmc$ and $\Umc\in\foundups{\Dmc,\eta}$, 
for every $A\in\Umc$, there exists $r_A\in gr_\Dmc(\eta)$ such that $A\in\aicup(r_A)$, $\Dmc\circ\Umc\setminus\{A\}\not\models r_A$ and $\Dmc\not\models r_A$.
\end{lemma}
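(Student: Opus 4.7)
The plan is to prove the lemma by induction on $k := |\aiclits(r) \setminus \litset|$, where $r \in gr_\Dmc(\eta)$ is any foundedness witness for $A$, i.e.\ $A \in \aicup(r)$ and $\Dmc \circ \Umc \setminus \{A\} \models \aicbody(r)$. Writing $A = \fix(\mu)$, we have $\mu \in \aiclits(r) \cap \litset$. If $k = 0$, then $\aiclits(r) \subseteq \litset$, so $\Dmc \models \aicbody(r)$ and $r_A := r$ satisfies all three required properties.

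For the inductive step, pick some $\ell \in \aiclits(r) \setminus \litset$. Body consistency of $r$ together with $\mu \in \litset$ and $\ell \notin \litset$ forces $\mu \neq \ell$ and $\mu \neq \neg \ell$, so $A \notin \{\fix(\ell), \fix(\neg \ell)\}$. Since $\Dmc \not\models \ell$ while $\Dmc \circ \Umc \setminus \{A\} \models \ell$, the update $\fix(\neg \ell)$ must lie in $\Umc \setminus \{A\}$; foundedness then yields $r_\ell \in gr_\Dmc(\eta)$ with $\fix(\neg \ell) \in \aicup(r_\ell)$ and $\Dmc \circ \Umc \setminus \{\fix(\neg \ell)\} \models \aicbody(r_\ell)$. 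I would then try to resolve $r$ and $r_\ell$ on $\ell$: the candidate resolvent body is $L := \aiclits(r) \cup \aiclits(r_\ell) \setminus \{\ell, \neg \ell\}$. Since $\Dmc \circ \Umc \setminus \{A\}$ and $\Dmc \circ \Umc \setminus \{\fix(\neg \ell)\}$ differ only on $\mu$ and $\neg \ell$, checking $\Dmc \circ \Umc \setminus \{A\} \models L$ boils down to verifying that neither $\mu$ nor $\neg \mu$ lies in $\aiclits(r_\ell)$; $\mu \notin \aiclits(r_\ell)$ is immediate, since otherwise $A$ being applied in $\Dmc \circ \Umc \setminus \{\fix(\neg \ell)\}$ would falsify $\mu$ there. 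Once $L$ is satisfied by a database it is a fortiori consistent, so closure under resolution supplies $r'' \in gr_\Dmc(\eta)$ with $\aiclits(r'') = L$, and preservation of actions gives $A \in \aicup(r'')$; since $r''$ has strictly fewer bad literals than $r$, the induction hypothesis applied to $r''$ finishes the argument.

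The remaining obstacle, which I expect to be the main difficulty, is the case $\neg \mu \in \aiclits(r_\ell)$: then $L$ contains the complementary pair $\{\mu, \neg \mu\}$, blocking the closure argument. My plan to handle this is to strengthen the induction by applying the lemma itself—via a suitable outer induction, for instance on $|\Umc|$ or on a well-founded order among the updates of $\Umc$ read off from the bad-literal dependencies of their foundedness witnesses—to $\fix(\neg \ell)$, producing a witness $r_\ell^\star$ with $\aiclits(r_\ell^\star) \subseteq \litset$. Because $\neg \mu \notin \litset$, substituting $r_\ell^\star$ for $r_\ell$ removes the obstruction, and the resolvent's bad-literal count drops to at most $k - 1$, letting the inner induction conclude. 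Verifying that this outer induction is genuinely well-founded—ruling out or breaking, using closure under resolution, cycles of bad-literal dependencies between the updates of $\Umc$—is the delicate step I expect to require the most care.
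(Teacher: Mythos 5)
There is a genuine gap, and you have correctly located it yourself: the case $\neg\mu\in\aiclits(r_\ell)$ is not handled, and the proposed ``outer induction'' is not well-founded. Indeed, since $\mu\in\litset$ implies $\neg\mu\notin\litset$, the literal $\neg\mu$ is itself a bad literal of $r_\ell$, and the unique update in $\Umc$ that makes $\neg\mu$ true is precisely $A=\fix(\mu)$. So the dependency you would need to recurse on contains a two-cycle out of the box: a good witness for $A$ requires a good witness for $\fix(\neg\ell)$, whose bad literal $\neg\mu$ in turn requires a good witness for $A$. Nothing in closure under resolution obviously breaks such cycles, so the inner induction cannot be bootstrapped this way; the argument as sketched does not go through.

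The paper's proof avoids foundedness-based recursion for the auxiliary updates altogether. The key device you are missing is the \emph{minimality of $\Umc$ as an r-update}: since $\Dmc\circ\Umc\in\deltareps{\Dmc,\constraints_\eta}$, the set $\{\lambda\mid\fix(\lambda)\in\Umc\}$ is a minimal hitting set of $\conflicts{\Dmc,\constraints_\eta}$ (Proposition~\ref{prop:defconflicts} and Lemma~\ref{lem:symdifminhittingset}). Hence for each update $B_i$ that repairs a bad literal $\ell_i$ of $r$, minimality yields a conflict $\conf_i$ with $\Umc\cap\{\fix(\lambda)\mid\lambda\in\conf_i\}=\{B_i\}$. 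By closure under resolution (Lemma~\ref{lem:closed-under-res-min-conflicts}), $\conf_i=\aiclits(r_i)$ for some $r_i\in gr_\Dmc(\eta)$ with $\Dmc\not\models r_i$ and $\overline{\ell_i}\in\aiclits(r_i)$. These $r_i$ have \emph{no} bad literals at all, so resolving $r$ against $r_1,\dots,r_n$ never introduces a complementary pair (all their literals are satisfied by $\Dmc$), and preservation of actions under resolution carries $A$ into the final resolvent $r_A$, for which $\Dmc\not\models r_A$ and $\Dmc\circ\Umc\setminus\{A\}\not\models r_A$ are then direct checks. Your base case, your choice of induction measure, and your observation that $\mu\notin\aiclits(r_\ell)$ are all fine; to repair the proof, replace the foundedness witness $r_\ell$ for $\fix(\neg\ell)$ by the conflict-based witness obtained from minimality of $\Umc$.
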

\begin{proof}
We denote by $\constraints_\eta$ the set of universal constraints that corresponds to $\eta$. 
Let $\Umc\in\foundups{\Dmc,\eta}$. 
Let $A\in\Umc$ and let $\ell_A$ be the literal that is set to true by $A$.
\begin{itemize}
\item Since $\Umc$ is founded and $A\in\Umc$, there exists $r\in gr_\Dmc(\eta)$ such that $A\in\aicup(r)$ (hence $ \overline{\ell_A}$ is a literal of $r$) and $\Dmc\circ\Umc\setminus\{A\}\not\models r$. 

\item If $\Dmc\not\models r$, $r_A=r$ is as required. 
We next consider the case where $\Dmc\models r$. 

\item Since $\Dmc\models r$ while $\Dmc\circ\Umc\setminus\{A\}\not\models r$, then $\Dmc$ does not satisfy all literals of $r$  while $\Dmc\circ\Umc\setminus\{A\}$ does. 
Let $\ell_1,\dots, \ell_n$ be the literals of $r$ that are not satisfied by $\Dmc$. Since $\ell_1,\dots, \ell_n$ are satisfied by $\Dmc\circ\Umc\setminus\{A\}$, there are $B_1,\dots, B_n\in\Umc\setminus\{A\}$ that set $\ell_1,\dots, \ell_n$ to true respectively.

\item Since $\Umc\in\ups{\Dmc,\eta}$, $\Dmc\circ\Umc\in\deltareps{\Dmc,\constraints_\eta}$ so by Lemma \ref{lem:symdifminhittingset}, $(\Dmc\circ\Umc)\Delta\Dmc=\{\alpha\mid -\alpha\text{ or } +\alpha\in\Umc\}$ is a minimal hitting set of $\mhs{\Dmc,\constraints_\eta}$. 
Since all update actions in $\Umc$ modify $\Dmc$, $\{\alpha\mid-\alpha\in\Umc \}\cup\{\no\alpha\mid+\alpha\in\Umc\}=\{\lambda\mid\mi{fix}(\lambda)\in\Umc\}$ is thus a minimal hitting set of $\deltaconflicts{\Dmc,\constraints_\eta}=
\{ \Hmc\cap\Dmc\cup\{\no\alpha \mid \alpha\in\Hmc\setminus\Dmc\} \mid \Hmc\in \mhs{\Dmc,\constraints_\eta} \}$.

Hence, for every $1\leq i\leq n$, there exists $\conf_i\in\deltaconflicts{\Dmc,\constraints_\eta}$ such that $\Umc\cap\{\mi{fix}(\lambda)\mid \lambda\in\conf_i\}=\{B_i\}$ (otherwise $\{\lambda\mid\mi{fix}(\lambda)\in\Umc\setminus\{B_i\}\}$ would be a smaller hitting set of $\deltaconflicts{\Dmc,\constraints_\eta}$).

\item Since $\eta$ is closed under resolution, by Lemma~\ref{lem:closed-under-res-min-conflicts}, $\deltaconflicts{\Dmc,\constraints_\eta}=\{\aiclits(r)\mid r\in \mingr(\eta), \Dmc\not\models r\}$ where $\mingr(\eta)$ is the subset of $gr_\Dmc(\eta)$ that contains AICs whose bodies are subset-minimal. 
Hence, for every $1\leq i\leq n$, there exists $r_i\in \mingr(\eta)$ such that $\conf_i=\aiclits(r_i)$ and $\Dmc\not\models r_i$. 
Moreover, since $\Umc\cap\{\mi{fix}(\lambda)\mid \lambda\in\conf_i\}=\{B_i\}$ and $B_i$ sets $\ell_i$ to true, it follows that $\overline{\ell_i}$ is a literal of $r_i$ .

\item Since $\Dmc\not\models r_i$ for every $1\leq i\leq n$, $\Dmc$ satisfies all literals of $r_1,\dots,r_n$. Thus there is no literal $\ell$ such that both $\ell$ and $\overline{\ell}$ occur in some $r_i, r_j$, $1\leq i,j\leq n$. 

Moreover, by definition of $\ell_1,\dots,\ell_n$, all literals of $r$ that are not among $\ell_1,\dots,\ell_n$ are satisfied by $\Dmc$. Thus there is no literal $\ell$ different from $\ell_1,\dots,\ell_n$ that occurs in $r$ and such that $\overline{\ell}$ occurs in some $r_i$. 

Finally, since $\Umc\cap\{\mi{fix}(\lambda)\mid \lambda\in\conf_i\}=\{B_i\}$, $r_i$ does not contain any $\overline{\ell_j}$ with $1\leq j\neq i\leq n$ (otherwise $B_j\in \Umc\cap\{\mi{fix}(\lambda)\mid \lambda\in\conf_i\}=\{B_i\}$).

\item Since $\eta$ is closed under resolution, $gr_\Dmc(\eta)$ contains AICs $s_1,\dots, s_n$ with the following literals:
\begin{itemize}
\item  $\aiclits(s_1)=(\aiclits(r)\cup \aiclits(r_1))\setminus\{\ell_1,\overline{\ell_1}\}$
\item $\dots$
\item $\aiclits(s_n)=(\aiclits(r)\cup \aiclits(r_1)\cup\dots\cup \aiclits(r_n))\setminus\{\ell_1, \overline{\ell_1},\dots, \ell_n,\overline{\ell_n}\}$
\end{itemize} 
Moreover, since $\eta$ preserves actions under resolution and $A$ is an update action of $r$, then $A$ is an update action of each of the $s_i$ ($1\leq i\leq n$). 
Let $r_A=s_n$. 
 
\item Since $\Dmc$ satisfies all literals in $\aiclits(r)\setminus\{\ell_1,\dots, \ell_n\}$ and all literals in $\aiclits(r_1),\dots, \aiclits(r_n)$, it follows that $\Dmc\not\models r_A$.

\item Since $\Dmc\circ\Umc\setminus\{A\}\not\models r$, then $\Dmc\circ\Umc\setminus\{A\}$ satisfies all literals in $\aiclits(r)$.

Moreover, for every $1\leq i\leq n$, $\Dmc$ satisfies all literals in $\aiclits(r_i)$ and $\Umc\cap\{\mi{fix}(\lambda)\mid \lambda\in \aiclits(r_i)\}=\{B_i\}$, so that $\Umc$ does not change the value of the literals in $\aiclits(r_i)\setminus\overline{\ell_i}$. Hence $\Dmc\circ\Umc\setminus\{A\}$ satisfies all literals in $\aiclits(r_i)\setminus\{\overline{\ell_i}\}$.

It follows that $\Dmc\circ\Umc\setminus\{A\}$ satisfies all literals of $r_A$. Hence $\Dmc\circ\Umc\setminus\{A\}\not\models r_A$. 
\end{itemize}
 We conclude that there exists $r_A\in gr_\Dmc(\eta)$ such that $A$ is an update action of~$r_A$, $\Dmc\circ\Umc\setminus\{A\}\not\models r_A$, and $\Dmc\not\models r_A$. 
\end{proof}

\ClosedFaithfulCollapse*
\begin{proof}
\noindent$(\foundups{\Dmc,\eta}=\justifups{\Dmc,\eta})$ 
The inclusion $\justifups{\Dmc,\eta}\subseteq\foundups{\Dmc,\eta}$ is known in general. 
We show that when $\eta$ is closed under resolution  and preserves actions under resolution, $\foundups{\Dmc,\eta}\subseteq\justifups{\Dmc,\eta}$. 
Let $\Umc\in\foundups{\Dmc,\eta}$. 
\begin{itemize}
\item Since $\Umc$ is an r-update for $\Dmc$ \wrt $\eta$, by Lemma~\ref{lem:update-repair-closed}, $ ne(\Dmc,\Dmc\circ\Umc)\cup\Umc$ is closed under $\eta$.

\item Let $\Umc'\subsetneq\Umc$ and assume for a contradiction that $ ne(\Dmc,\Dmc\circ\Umc)\cup\Umc'$ is closed under $\eta$. 
\begin{itemize}
\item Let $A\in\Umc\setminus\Umc'$ and let $\ell_A$ be the literal that is fixed by $A$ (\ie $\ell_A=\alpha$ if $A=-\alpha$ and $\ell_A=\neg\alpha$ if $A=+\alpha$). 
\item Since $\Umc$ is founded and $\eta$ is closed under resolution  and preserves actions under resolution, by Lemma~\ref{lem:founded-conf-closed-min-faith}, there exists $r\in gr_\Dmc(\eta)$ such that $A$ is an update action of~$r$, $\Dmc\circ\Umc\setminus\{A\}\not\models r$ and $\Dmc\not\models r$. 
\item Let $\ell$ be a non-updatable literal of $r$. Since $\ell_A$ is an updatable literal of $r$, $\ell\neq \ell_A$, and since both $\Dmc$ and $\Dmc\circ\Umc\setminus\{A\}$ satisfies all literals of $r$, it follows that both $\Dmc$ and $\Dmc\circ\Umc$ satisfies $\ell$. 
It follows that both $\Dmc\circ\Umc$ and $\Dmc$ satisfy all non-updatable literals of $r$.

\item Hence $ne(\Dmc,\Dmc\circ\Umc)=\{+\alpha\mid \alpha\in\Dmc\cap(\Dmc\circ\Umc)\}\cup\{-\alpha \mid \alpha\notin\Dmc\cup (\Dmc\circ\Umc), \alpha\in\factset\}$, satisfies all non-updatable literals of $r$, and so does $ne(\Dmc,\Dmc\circ\Umc)\cup\Umc'$ (recall that $\Umc'\subseteq\Umc$ so that $ne(\Dmc,\Dmc\circ\Umc)\cup\Umc'$ is consistent). 
\item Since we assumed that $ ne(\Dmc,\Dmc\circ\Umc)\cup\Umc'$ is closed under $\eta$, then $ ne(\Dmc,\Dmc\circ\Umc)\cup\Umc'$ must contain an update action $B$ of $r$. Moreover, since $A\notin\Umc'$ and $A\notin ne(\Dmc,\Dmc\circ\Umc)$ (by minimality of the r-update $\Umc$), then $B\neq A$. 
\item Since $ne(\Dmc,\Dmc\circ\Umc)\cup\Umc'\subseteq ne(\Dmc,\Dmc\circ\Umc)\cup\Umc$, then $B\in ne(\Dmc,\Dmc\circ\Umc)\cup\Umc$ which contradicts the fact that $\Dmc\circ\Umc\setminus\{A\}\not\models r$. 
\end{itemize}
Thus $ ne(\Dmc,\Dmc\circ\Umc)\cup\Umc$ is a minimal set of update actions that is closed under $\eta$ and contains $ne(\Dmc,\Dmc\circ\Umc)$.
\end{itemize}
Hence $\Umc\in\justifups{\Dmc,\eta}$.
\smallskip

\noindent$(\foundups{\Dmc,\eta}=\groundups{\Dmc,\eta})$ 
The inclusion $\groundups{\Dmc,\eta}\subseteq\foundups{\Dmc,\eta}$ is known in general. 
We show that when $\eta$ is closed under resolution and preserves actions under resolution, $\foundups{\Dmc,\eta}\subseteq\groundups{\Dmc,\eta}$. 
Let $\Umc\in\foundups{\Dmc,\eta}$. 
By Proposition~\ref{prop:new-founded-grounded}, $\Umc$  is grounded if and only if  
it is an r-update for $\Dmc$ \wrt $\eta[\Umc]$ where $\eta[\Umc]$ is the set of AICs derived from $gr_\Dmc(\eta)$ by deleting update actions not occurring in $\Umc$ and AICs whose update actions have all been deleted. 
Assume for a contradiction that $\Umc$ is not grounded. 
\begin{itemize}
\item Since $\Dmc\circ\Umc\models \eta[\Umc]$, this means that there exists $\Umc'\subsetneq\Umc$ such that $\Dmc\circ\Umc'\models \eta[\Umc]$. 

\item Let $A\in \Umc\setminus\Umc'$. Since $\Umc$ is founded and $\eta$ is closed under resolution and preserves actions under resolution, by Lemma~\ref{lem:founded-conf-closed-min-faith}, there exists $r_A\in gr_\Dmc(\eta)$ such that $A$ is an update action of~$r_A$, $\Dmc\circ\Umc\setminus\{A\}\not\models r_A$ and $\Dmc\not\models r_A$. 
\item Since $A\in\Umc$, it follows that $r_A\in \eta[\Umc]$. 
\item Moreover, since $\Dmc$ satisfies every literal of $r_A$ and so does $\Dmc\circ\Umc\setminus\{A\}$, $A$ is the only update action of $\Umc$ that falsifies a literal of $r_A$. Hence $\Dmc\circ\Umc'\not\models r_A$. This contradicts  $\Dmc\circ\Umc'\models \eta[\Umc]$.
\end{itemize}
Hence $\Umc\in \groundups{\Dmc,\eta}$.
\smallskip

\noindent$(\foundups{\Dmc,\eta}\subseteq\wellfoundups{\Dmc,\eta})$ 
We show that when $\eta$ is closed under resolution and preserves actions under resolution, $\foundups{\Dmc,\eta}\subseteq\wellfoundups{\Dmc,\eta}$. 
Let $\Umc\in\foundups{\Dmc,\eta}$ and $\Umc=\{A_1,\dots,A_n\}$. 
\begin{itemize}
\item For every $1\leq i\leq n$, since $\Umc$ is founded and $\eta$ is closed under resolution and preserves actions under resolution, by Lemma~\ref{lem:founded-conf-closed-min-faith}, there exists $r_i\in gr_\Dmc(\eta)$ such that $A_i$ is an update action of~$r_i$, $\Dmc\circ\Umc\setminus\{A_i\}\not\models r_i$ and $\Dmc\not\models r_i$. 
\begin{itemize}
\item Since $\Dmc\not\models r_i$, $\Dmc$ satisfies every literal of $r_i$. 
\item Since $\Dmc\circ\Umc\setminus\{A_i\}\not\models r_i$, it follows that $\{A_1,\dots, A_{i-1}\}\subseteq\Umc\setminus\{A_i\}$ does not contain any update action that falsifies a literal of~$r_i$. 
\end{itemize}
Thus $\Dmc\circ\{A_1,\dots, A_{i-1}\}\not\models r_i$.
\end{itemize}
Hence $\Umc\in\wellfoundups{\Dmc,\eta}$.
\end{proof}

\StrongerMinBod*
\begin{proof}
Observe that $\ups{\Dmc, \minnongr(\eta)}=\ups{\Dmc,AN(\eta)}$: Since $ \minnongr(\eta)\subseteq AN(\eta)$, $\Dmc\circ\Umc\models AN(\eta)$ implies $\Dmc\circ\Umc\models  \minnongr(\eta)$, and for every $r\in AN(\eta)\setminus \minnongr(\eta)$, there exists $r'\in \minnongr(\eta)$ such that $\aiclits(r')\subseteq\aiclits(r)$, so $\Dmc\circ\Umc\models  \minnongr(\eta)$ implies $\Dmc\circ\Umc\models AN(\eta)$. 

We first show that for $X\in\{\mi{Found},\mi{WellFound},\mi{Ground}, \mi{Just}\}$,  
$\xups{\Dmc,AN(\eta)}=\xups{\Dmc,\minnongr(\eta)}$. 
\begin{itemize}
\item Let $\Umc\in\foundups{\Dmc,AN(\eta)}$. 
Let $A\in\Umc$. There exists $r\in AN(\eta)$ such that $A\in\aicup(r)$ and $\Dmc\circ\Umc\setminus\{A\}\not\models r$, hence for every $\ell\in\aiclits(r)$, $\Dmc\circ\Umc\setminus\{A\}\models\ell$. 
There exists $r'\in \minnongr(\eta)$ such that $\aiclits(r')\subseteq\aiclits(r)$, so that $\Dmc\circ\Umc\setminus\{A\}\models\ell$ for every $\ell\in\aiclits(r')$, \ie $\Dmc\circ\Umc\setminus\{A\}\not\models r'$. 
Since $\eta$ preserves actions under strengthening, $A\in\aicup(r')$. Hence $\Umc\in\foundups{\Dmc, \minnongr(\eta)}$. 

Let $\Umc\in\foundups{\Dmc, \minnongr(\eta)}$. 
Let $A\in\Umc$. There exists $r\in \minnongr(\eta)\subseteq AN(\eta)$ such that $A\in\aicup(r)$ and $\Dmc\circ\Umc\setminus\{A\}\not\models r$. Hence $\Umc\in\foundups{\Dmc,AN(\eta)}$.

\item Let $\Umc\in\wellfoundups{\Dmc,AN(\eta)}$. 
There exists a sequence of actions $A_1,\dots,A_n$ such that $\Umc=\{A_1,\dots,A_n\}$, and for every $1\leq i\leq n$, there exists $r_i\in AN(\eta)$ such that $A_i\in\aicup(r_i)$ and $\Dmc\circ\{A_1,\dots, A_{i-1}\}\not\models r_i$. 
For $1\leq i\leq n$, there exists $r_i'\in \minnongr(\eta)$ such that $\aiclits(r_i')\subseteq\aiclits(r_i)$, so that $\Dmc\circ\{A_1,\dots, A_{i-1}\}\not\models r_i'$. 
Since $\eta$ preserves actions under strengthening, $A_i\in\aicup(r_i')$. Hence $\Umc\in\wellfoundups{\Dmc, \minnongr(\eta)}$. 

Let $\Umc\in\wellfoundups{\Dmc, \minnongr(\eta)}$. 
There exists a sequence of actions $A_1,\dots,A_n$ such that $\Umc=\{A_1,\dots,A_n\}$, and for every $1\leq i\leq n$, there exists $r_i\in  \minnongr(\eta)\subseteq AN(\eta)$ such that $A_i\in\aicup(r_i)$ and $\Dmc\circ\{A_1,\dots, A_{i-1}\}\not\models r_i$. 
Hence $\Umc\in\wellfoundups{\Dmc,AN(\eta)}$. 

\item Let $\Umc\in\groundups{\Dmc,AN(\eta)}$. 
For every $\Vmc\subsetneq\Umc$, there exists $r_N\in N(\eta)$ such that $\Dmc\circ\Vmc\not\models r_N$ and the (only) update action $A$ of $r_N$ is in $\Umc\setminus\Vmc$, \ie  there exists $r\in AN(\eta)$ such that $\Dmc\circ\Vmc\not\models r$, $A\in\aicup(r)$ and $A\in\Umc\setminus\Vmc$. 
There exists $r'\in \minnongr(\eta)$ such that $\aiclits(r')\subseteq\aiclits(r)$, so that  $\Dmc\circ\Vmc\not\models r'$. Since $\eta$ preserves actions under strengthening, $A\in\aicup(r')$. 
It follows that there exists $r'_N\in N( \minnongr(\eta))$ such that $\Dmc\circ\Vmc\not\models r'_N$ and the (only) update action $A$ of $r'_N$ is in $\Umc\setminus\Vmc$. 
Hence $\Umc\in\groundups{\Dmc, \minnongr(\eta)}$. 

Let $\Umc\in\groundups{\Dmc, \minnongr(\eta)}$. 
For every $\Vmc\subsetneq\Umc$, there exists $r\in  \minnongr(\eta)\subseteq AN(\eta)$ such that $\Dmc\circ\Vmc\not\models r$ and an update action of $r$ is in $\Umc\setminus\Vmc$.
Hence $\Umc\in\groundups{\Dmc,AN(\eta)}$. 

\item Let $\Umc\in\justifups{\Dmc,AN(\eta)}$. 
Since $\Dmc\circ\Umc\models  \minnongr(\eta)$, by Lemma~\ref{lem:update-repair-closed}, $ne(\Dmc,\Dmc\circ\Umc)\cup\Umc$ is closed under $ \minnongr(\eta)$. 
Let $\Umc'\subsetneq\Umc$ and assume for a contradiction that $ne(\Dmc,\Dmc\circ\Umc)\cup\Umc'$ is closed under $ \minnongr(\eta)$. Let $r\in AN(\eta)$ such that $ne(\Dmc,\Dmc\circ\Umc)\cup\Umc'$ satisfy every non-updatable literal of $r$. Since $\eta$ preserves actions under strengthening, there exists $r'\in  \minnongr(\eta)$ such that $\aiclits(r')\subseteq\aiclits(r)$ and $\aicup(r)\subseteq\aicup(r')$, so that the non-updatable literals of $r'$ are also non-updatable literals of $r$. Since $ne(\Dmc,\Dmc\circ\Umc)\cup\Umc'$ is closed under $ \minnongr(\eta)$, it contains an update action $A$ of $r'$. If $A$ is not an update action of $r$, the literal $\ell_A$ set to false by $A$ is a non-updatable literal of $r$ not satisfied by $ne(\Dmc,\Dmc\circ\Umc)\cup\Umc'$, which contradicts the definition of $r$. Hence $A$ is an update action of $r$. It follows that $ne(\Dmc,\Dmc\circ\Umc)\cup\Umc'$ is closed under $AN(\eta)$. This contradicts the fact that $ ne(\Dmc,\Dmc\circ\Umc)\cup\Umc$ is a minimal set of update actions that is closed under $AN(\eta)$ and contains $ne(\Dmc,\Dmc\circ\Umc)$. 
We conclude that $ne(\Dmc,\Dmc\circ\Umc)\cup\Umc'$ is not closed under $ \minnongr(\eta)$. Hence $\Umc\in\justifups{\Dmc, \minnongr(\eta)}$. 

Let $\Umc\in\justifups{\Dmc, \minnongr(\eta)}$. 
Since $\Dmc\circ\Umc\models AN(\eta)$, by Lemma~\ref{lem:update-repair-closed}, $ne(\Dmc,\Dmc\circ\Umc)\cup\Umc$ is closed under $AN(\eta)$. 
Let $\Umc'\subsetneq\Umc$ and assume for a contradiction that $ne(\Dmc,\Dmc\circ\Umc)\cup\Umc'$ is closed under $AN(\eta)$. 
 Let $r\in  \minnongr(\eta)\subseteq AN(\eta)$ such that $ne(\Dmc,\Dmc\circ\Umc)\cup\Umc'$ satisfy every non-updatable literal of $r$. Since $ne(\Dmc,\Dmc\circ\Umc)\cup\Umc'$ is closed under $AN(\eta)$, it contains an update action of $r$. 
 Hence $ne(\Dmc,\Dmc\circ\Umc)\cup\Umc'$ is closed under $ \minnongr(\eta)$, which contradicts $\Umc\in\justifups{\Dmc, \minnongr(\eta)}$. 
 It follows that $ne(\Dmc,\Dmc\circ\Umc)\cup\Umc'$ is not closed under $AN(\eta)$. Hence $\Umc\in\justifups{\Dmc,AN(\eta)}$.
\end{itemize}

Finally, for $X\in\{\mi{Found},\mi{WellFound},\mi{Ground}\}$, since $\xups{\Dmc,\eta}=\xups{\Dmc,N(\eta)}$, $\xups{\Dmc,AN(\eta)}=\xups{\Dmc,N(AN(\eta))}$ and $N(AN(\eta))=N(\eta)$, then $\xups{\Dmc,\eta}=\xups{\Dmc,AN(\eta)}=\xups{\Dmc,\minnongr(\eta)}$. 
\end{proof}

\subsection{Proofs for Section \ref{subsec:aics-to-prio}}

\paragraph{Reduction from AICs to prioritized databases} 
Recall that given a set $\eta$ of AICs \emph{closed under resolution that preserves actions under resolution and under strengthening} and a database $\Dmc$, we take $\constraints_\eta=\{\tau_r \mid r \in \eta\}$ and define $\succ_\eta$ so that
$\lambda\succ_\eta\mu$ iff 
\begin{itemize}
\item there exists $r\in \mingr(\eta)$ such that $\Dmc\not\models r$, $\{\lambda,\mu\}\subseteq  \aiclits(r)$, and $\mi{fix}(\mu) \in \aicup(r)$; 
and
\item for every $r\in \mingr(\eta)$ such that $\Dmc\not\models r$ and $\{\lambda,\mu\}\subseteq \aiclits(r)$, $\mi{fix}(\lambda) \not \in \aicup(r)$,
\end{itemize} 
where $\mingr(\eta) = \{r \in gr_\Dmc(\eta) \mid \text{ there is no }   r'\in gr_\Dmc(\eta) $ $ \text{with } \aiclits(r') \subsetneq \aiclits(r)\}$.
As $\eta$ is closed under resolution, $\conflicts{\Dmc,\constraints_\eta}=\{\aiclits(r)\mid r\in \mingr(\eta), \Dmc\not\models r\}$.

We start with the proof of the inclusion that holds between founded repairs of $\Dmc$ \wrt $\eta$ and Pareto-optimal repairs of $\Dmc^{\constraints_\eta}_{\succ_\eta}$ in the general case (with potentially non-binary conflicts).

\ReductionAICsPrioGeneral*
\begin{proof}
By Proposition~\ref{prop:founded-grounded-closed-min-faith}, $\justifreps{\Dmc,\eta}=\groundreps{\Dmc,\eta}=\foundreps{\Dmc,\eta}$. We show that $\foundreps{\Dmc,\eta}\subseteq\deltapreps{\Dmc^{\constraints_\eta}_{\succ_\eta}}$. 
Let $\Umc\in\foundups{\Dmc,\eta}$ and $\Rmc=\Dmc\circ\Umc$. 

\begin{itemize}
\item Since $\Umc\in\ups{\Dmc,\eta}$ and ${\constraints_\eta}$ contains the universal constraints that correspond to the AICs in $\eta$, then $\Rmc\in\deltareps{\Dmc,{\constraints_\eta}}$.

\item Assume for a contradiction that $\Rmc\notin\deltapreps{\Dmc^{\constraints_\eta}_{\succ_\eta}}$: 
There exists $\Rmc'$ consistent \wrt ${\constraints_\eta}$ such that there is $\lambda\in\comp{\Dmc}{\Rmc'}\setminus\comp{\Dmc}{\Rmc}$ with $\lambda{\succ_\eta}\mu$ for every $\mu\in \comp{\Dmc}{\Rmc}\setminus\comp{\Dmc}{\Rmc'}$. 
\begin{itemize}
\item If $\lambda=\alpha\in\Dmc$, since $\alpha\notin\comp{\Dmc}{\Rmc}$, then $\alpha\notin\Rmc$ so $\mi{fix}(\lambda)=-\alpha$ is in $\Umc$. If $\lambda=\no\alpha$ for some $\alpha\notin\Dmc$, since $\no\alpha\notin\comp{\Dmc}{\Rmc}$, then $\alpha\in\Rmc$ so $\mi{fix}(\lambda)=+\alpha$ is in $\Umc$. Hence $\mi{fix}(\lambda)\in\Umc$.

\item By Lemma~\ref{lem:founded-conf-closed-min-faith}, since $\eta$ is closed under resolution and preserves actions under resolution, $\Umc\in \foundups{\Dmc,\eta}$ and $\mi{fix}(\lambda)\in\Umc$, then there exists $r\in gr_\Dmc(\eta)$ such that $\Dmc\not\models r$, $\Dmc\circ\Umc\setminus\{\mi{fix}(\lambda)\}\not\models r$ and $\mi{fix}(\lambda)\in\aicup(r)$.

\item Since $\eta$ preserves actions under strengthening, we can choose $r$ such that $r\in  \mingr(\eta)$: Indeed, if $r\notin  \mingr(\eta)$, there exists $r'\in  \mingr(\eta)$ such that $\aiclits(r')\subseteq\aiclits(r)$, so that $\Dmc\not\models r'$ and $\Dmc\circ\Umc\setminus\{\mi{fix}(\lambda)\}\not\models r'$, and by preservation of actions under strengthening, $\mi{fix}(\lambda)\in\aicup(r')$.

\item By Lemma~\ref{lem:closed-under-res-min-conflicts}, since $\eta$ is closed under resolution, $r\in \mingr(\eta)$, and $\Dmc\not\models r$, then $\aiclits(r)\in\deltaconflicts{\Dmc,{\constraints_\eta}}$.
Note that this implies that $\aiclits(r)\subseteq\litset$.  
  
\item Since $\Dmc\circ\Umc\setminus\{\mi{fix}(\lambda)\}\not\models r$ and $\aiclits(r)\subseteq\litset$, then $\aiclits(r)\subseteq\comp{\Dmc}{\Dmc\circ\Umc\setminus\{\mi{fix}(\lambda)\}}=\comp{\Dmc}{\Rmc}\cup\{\lambda\}$. 

\item Since $\mi{fix}(\lambda)\in\aicup(r)$, $r\in \mingr(\eta)$ and $\Dmc\not\models r$, then by construction of ${\succ_\eta}$, for every $\mu\in \aiclits(r)$, $\lambda\not{\succ_\eta}\mu$ (since $\lambda{\succ_\eta}\mu$ implies that for every $r\in \mingr(\eta)$ such that $\{\lambda,\mu\}\subseteq \aiclits(r)$ and $\Dmc\not\models r$, $\mi{fix}(\lambda)\notin\aicup(r)$). 
\item Since $ \comp{\Dmc}{\Rmc}\setminus\comp{\Dmc}{\Rmc'}\subseteq\{\mu\mid\lambda{\succ_\eta}\mu\}$, 
it follows that $\aiclits(r)\subseteq \comp{\Dmc}{\Rmc'}$.  Hence $\Rmc'\not\models r$, which contradicts $\Rmc'\models\Cmc_\eta$. 
\end{itemize}
It follows that $\Rmc\in\deltapreps{\Dmc^{\constraints_\eta}_{\succ_\eta}}$. \qedhere
\end{itemize}
\end{proof}

We show the inverse direction in the restricted case where the size of the conflicts is bounded by $2$.

\ReductionAICsPrioBinary*
\begin{proof}
By Proposition~\ref{prop:founded-grounded-closed-min-faith}, $\justifreps{\Dmc,\eta}=\groundreps{\Dmc,\eta}=\foundreps{\Dmc,\eta}\subseteq\wellfoundreps{\Dmc,\eta}$ and  
by Proposition~\ref{prop:AICs-prio-non-binary}, $\foundreps{\Dmc,\eta}\subseteq\deltapreps{\Dmc^{\constraints_\eta}_{\succ_\eta}}$. 
It remains to show that $\deltapreps{\Dmc^{\constraints_\eta}_{\succ_\eta}}\subseteq\foundreps{\Dmc,\eta}$. Let $\Rmc\in\deltapreps{\Dmc^{\constraints_\eta}_{\succ_\eta}}$ and $\Umc=\{-\alpha\mid \alpha\in\Dmc\setminus\Rmc\}\cup\{+\alpha\mid \alpha\in\Rmc\setminus\Dmc\}$ be the consistent set of update actions such that $\Dmc\circ\Umc=\Rmc$. 

\begin{itemize}
\item  Since $\Rmc\in\deltareps{\Dmc,{\constraints_\eta}}$ and ${\constraints_\eta}$ contains the universal constraints that correspond to the AICs in $\eta$, then $\Umc\in\ups{\Dmc,\eta}$.

\item Assume for a contradiction that $\Umc$ is not founded. There exists $A\in\Umc$ such that for every $r\in gr_\Dmc(\eta)$ such that $A\in\aicup(r)$, $\Dmc\circ\Umc\setminus\{A\}\models r$. Let $\ell_A$ denote the literal that $A$ falsifies.
\begin{itemize}
\item  Let $\conf\in\deltaconflicts{\Dmc,{\constraints_\eta}}$ such that $\ell_A\in\conf$ and $\ell_A\not{\succ_\eta}\lambda$ for every $\lambda\in\conf$. Since $\eta$ is closed under resolution, by Lemma~\ref{lem:closed-under-res-min-conflicts}, $\conf=\aiclits(r_\conf)$ for some $r_\conf\in  \mingr(\eta)$ such that $\Dmc\not\models r_\conf$. 
\begin{itemize}
\item If $A\in\aicup(r_\conf)$, then $\Dmc\circ\Umc\setminus\{A\}\models r_\conf$ by assumption on $A$. 
\item Otherwise, if $A\notin\aicup(r_\conf)$, let $\lambda\in\conf$ such that $\mi{fix}(\lambda)\in\aicup(r_\conf)$ (there must be at least one such $\lambda$ otherwise $r_\conf$ has no update actions).
Since $\ell_A\not{\succ_\eta}\lambda$, by construction of ${\succ_\eta}$:   
\begin{enumerate}
\item either for every $r\in  \mingr(\eta)$ such that $\{\ell_A,\lambda\}\subseteq \aiclits(r)$ and $\Dmc\not\models r$, 
$\mi{fix}(\lambda)\notin\aicup(r)$; 
\item or there is $r\in  \mingr(\eta)$ such that $\{\ell_A,\lambda\}\subseteq \aiclits(r)$, $\Dmc\not\models r$, and $A\in\aicup(r)$.
\end{enumerate}
Since $\{\ell_A,\lambda\}\subseteq \aiclits(r_\conf)$, $\Dmc\not\models r_\conf$, and $\mi{fix}(\lambda)\in\aicup(r_\conf$), we are not in case (1) so we are in case (2). 
Hence, there is $r_\lambda\in  \mingr(\eta)$ such that $\{\ell_A,\lambda\}\subseteq \aiclits(r_\lambda)$, $\Dmc\not\models r_\lambda$, and $A\in\aicup(r_\lambda)$. 

Since the size of the conflicts is bounded by $2$, $r_\lambda$ and $r_\conf$ have the same body: $\ell_A\wedge\lambda$. 
Hence, since $A\in\aicup(r_\lambda)$, $\Dmc\circ\Umc\setminus\{A\}\models r_\lambda$ by assumption on $A$. 
Hence $\Dmc\circ\Umc\setminus\{A\}\models r_\conf$. 
\end{itemize}
 It follows that $\conf=\aiclits(r_\conf)\not\subseteq \comp{\Dmc}{\Dmc\circ\Umc\setminus\{A\}}=\comp{\Dmc}{\Rmc}\cup\{\ell_A\}$. 

\item We have thus shown that for every $\conf\in\deltaconflicts{\Dmc,{\constraints_\eta}}$ such that $\ell_A\in\conf$, either $\conf$ contains some $\lambda$ such that $\ell_A{\succ_\eta}\lambda$, or $\conf\not\subseteq\comp{\Dmc}{\Rmc}\cup\{\ell_A\}$. 

\item Let $\Rmc'=\restr{\Dmc}{\comp{\Dmc}{\Rmc}\cup\{\ell_A\}\setminus\{\mu\mid\ell_A{\succ_\eta}\mu\}}$, so that $\comp{\Dmc}{\Rmc'}=\comp{\Dmc}{\Rmc}\cup\{\ell_A\}\setminus\{\mu\mid\ell_A{\succ_\eta}\mu\}$ by~Lemma~\ref{lem:compl-restr}. 
Since $\{\mu\mid\ell_A{\succ_\eta}\mu\}$ intersects every conflict $\conf\in\deltaconflicts{\Dmc,{\constraints_\eta}}$ such that $\ell_A\in\conf$ and $\conf\subseteq\comp{\Dmc}{\Rmc}\cup\{\ell_A\}$, then there is no conflict included in $\comp{\Dmc}{\Rmc'}$. 
 By Lemma~\ref{lem:pareto-improvement-if-one-beat-all}, it follows that $\Rmc\notin\deltapreps{\Dmc^{\constraints_\eta}_{\succ_\eta}}$. 
\end{itemize}
Hence $\Umc\in\foundups{\Dmc,\eta}$.\qedhere
\end{itemize}
\end{proof}

\end{document}